\title{Exact Label Recovery in Euclidean Random Graphs \footnote{Part of this work appeared in the Symposium on Discrete Algorithms (SODA) 2024 \cite{Gaudio2024}.}}
\author{Julia Gaudio\thanks{(\url{julia.gaudio@northwestern.edu}) Department of Industrial Engineering and Management Sciences, Northwestern
University} \and Charlie Guan\thanks{(\url{charlie.guan@northwestern.edu}) Department of Industrial Engineering and Management Sciences, Northwestern
University}
\and Xiaochun Niu\thanks{(\url{xiaochunniu2024@u.northwestern.edu}) Department of Industrial Engineering and Management Sciences, Northwestern
University} \and Ermin Wei\thanks{(\url{ermin.wei@northwestern.edu}) Department of Electrical and Computer Engineering and Department of Industrial Engineering and Management Sciences, Northwestern
University}}
\date{}
\begin{document}
\maketitle

\begin{abstract}
In this paper, we propose a family of label recovery problems on weighted Euclidean random graphs. The vertices of a graph are embedded in $\mathbb{R}^d$ according to a Poisson point process, and are assigned to a discrete community label. Our goal is to infer the vertex labels, given edge weights whose distributions depend on the vertex labels as well as their geometric positions. Our general model provides a geometric extension of popular graph and matrix problems, including submatrix localization and $\mathbb{Z}_2$-synchronization, and includes the Geometric Stochastic Block Model (proposed by Sankararaman and Baccelli) as a special case. We study the fundamental limits of exact recovery of the vertex labels. Under a mild distinctness of distributions assumption, we determine the information-theoretic threshold for exact label recovery, in terms of a Chernoff--Hellinger divergence criterion. Impossibility of recovery below the threshold is proven by a unified analysis using a Cram\'er lower bound. Achievability above the threshold is proven via an efficient two-phase algorithm, where the first phase computes an almost-exact labeling through a local propagation scheme, while the second phase refines the labels. The information-theoretic threshold is dictated by the performance of the so-called \emph{genie estimator}, which decodes the label of a single vertex given all the other labels. This shows that our proposed models exhibit \emph{the local-to-global amplification} phenomenon.
\end{abstract}

\section{Introduction}
Inference problems on graphs and matrices are well-studied in the statistics, machine learning, and information-theoretic literature. Some prominent examples include the Stochastic Block Model (SBM), group synchronization, and the spiked Wigner model. In these models, graph vertices (or matrix indices) are associated with a hidden, discrete community label which we wish to estimate, and we are given observations for each pair of vertices. %For example, in the SBM, for each pair of vertices we observe whether there is an edge that connects them. 
These models are related by a common property: the observations are independent, conditioned on the community labels, and the distribution of an observation on vertices $(u,v)$ is determined by the labels of $u$ and $v$.

%While the conditional independence of observations is an appealing property, it fails to capture \emph{transitive} behavior which is seen in real-world inference problems. To fix ideas, 
Specifically, consider the SBM, which was introduced by Holland,
Laskey, and Leinhardt \citep{Holland1983}. The SBM is a probabilistic model that generates graphs with community structure, where edges are generated independently conditioned on community labels. Since its introduction, the SBM has been intensively studied, with many exciting developments over the past decade. Many community recovery problems are now well-understood; for example, the fundamental limits of the exact recovery problem are known, and there is a corresponding efficient algorithm that achieves those limits \citep{Abbe2015}. For an overview of theoretical developments and open questions, see the survey of Abbe \citep{Abbe2017}. 

While the SBM is a powerful model, its simplicity fails to capture \emph{transitive} behavior that occurs in real-world networks. In particular, social networks typically contain many triangles; a given pair of people are more likely to be friends if they have a friend in common \citep{rapoport1953spread}. The SBM does not capture this transitive behavior, since edges are formed independently, conditioned on the community assignments. To address this shortcoming,  Sankararaman and Baccelli \citep{Sankararaman2018} introduced a spatial random graph model, which we refer to as the \emph{Geometric Stochastic Block Model} (GSBM).

\subsection{The GSBM}
 In the GSBM, vertices are generated according to a Poisson point process in a bounded region of $\mathbb{R}^d$. Each vertex is randomly assigned one of two community labels, with equal probability. A given pair of vertices $(u,v)$ is connected by an edge with a probability that depends on  both the community labels of $u$ and $v$ as well as their distance. Edges are formed independently, conditioned on the community assignments and locations. The geometric embedding thus governs the transitive edge behavior. The goal is to determine the communities of the vertices, observing the edges and the locations. In a follow-up work, Abbe, Baccelli, and Sankararaman \citep{Abbe2021} studied both exact recovery in logarithmic-degree graphs and partial recovery in sparse graphs. Their work established a phase transition for both partial and exact recovery, in terms of the Poisson intensity parameter $\lambda$. The critical value of $\lambda$ was identified in some special cases of the sparse model, but a precise characterization of the information-theoretic threshold for exact recovery in the logarithmic degree regime was left open.
  
Our work resolves this gap, by identifying the information-theoretic threshold for exact recovery in the logarithmic degree regime (and confirming a conjecture of Abbe et al \citep{Abbe2021}). Additionally, we propose a polynomial-time algorithm achieving the information-theoretic threshold. The algorithm consists of two phases: the first phase produces a preliminary almost-exact labeling through a local label propagation scheme, while the second phase refines the initial labels to achieve exact recovery. At a high level, the algorithm bears some similarity to prior works on the SBM using a two-phase approach \citep{Abbe2015,Mossel2015}. Our work therefore shows that just like the SBM, the GSBM exhibits the so-called \emph{local-to-global amplification} phenomenon \citep{Abbe2017}. This means that exact recovery is achievable whenever the probability of misclassifying an individual vertex by the so-called \emph{genie} estimator, which labels a single vertex given the labels of the remaining $n-1$ vertices, is $o(1/n)$. However, the GSBM is qualitatively very different from the SBM, and it is not apparent at the outset that it should exhibit local-to-global amplification. In particular, the GSBM is not a low-rank model, suggesting that approaches such as spectral methods \citep{Abbe2021} and semidefinite programming \citep{Hajek2016}, which exploit the low-rank structure of the SBM, may fail in the GSBM. In order to achieve almost exact recovery in the GSBM, we instead use the density of local subgraphs to propagate labels. Our propagation scheme allows us to achieve almost exact recovery, and also ensures that no local region has too many misclassified vertices. The dispersion of errors is crucial to showing that labels can be correctly refined in the second phase.
  
Notably, our algorithm runs in linear time (where the input size is the number of edges). This is in contrast with the SBM, for which no statistically optimal linear-time algorithm for exact recovery has been proposed. To our knowledge, the best-known runtime for the SBM in the logarithmic degree regime is achieved by the spectral algorithm of Abbe et al \citep{Abbe2020}, which runs in $O(n \log^2 n)$ time, while the number of edges is $\Theta(n \log n)$. More recent work of Cohen--Addad et al \citep{Cohen2022} proposed a linear-time algorithm for the SBM, but the algorithm was not shown to achieve the information-theoretic threshold for exact recovery. Intuitively, the strong local interactions in the GSBM enable more efficient algorithms than what seems to be possible in the SBM. 

  Finally, we study the robustness of our algorithm to monotone corruptions, in the case of two communities. As in prior works on semirandom SBMs \citep{Feige2001,Hajek2016,Moitra2016}, we allow an adversary to add intra-community edges and delete inter-community edges. While these changes appear to be helpful, it is known that some algorithms are not robust to such monotone changes in the standard SBM \citep{Feige2001}. We show that our algorithm continues to be statistically optimal, in the presence of a monotone adversary.

\subsection{General Inference Problems on Geometric Graphs}
  In fact, our algorithm can be adapted to capture a wide variety of inference problems beyond the SBM. For example, consider the problem of $\mathbb{Z}_2$ synchronization, itself a simplification of group synchronization. In the standard (non-geometric) version of the problem, each vertex has a label $x^{\star}(u) \in \{\pm 1\}$, and we observe $x^{\star}(u) \cdot x^{\star}(v)+ Z_{uv}$, where $Z_{uv} = Z_{vu} \sim \mathcal{N}(0, \sigma^2)$ is independent noise. Synchronization problems arise in applications to cryo-electron microscopy \citep{Shkolnisky2012}, sensor, clock, and camera calibration \citep{Cucuringu2012,Tron2009,Giridhar2006}, and robotics \citep{Singer2011,Rosen2020}. We introduce a geometric version of $\mathbb{Z}_2$ synchronization, in which observations are only available for vertices $(u,v)$ that are within a prescribed distance, to model synchronization problems with distance limitations, such as in large-scale networks with physical sensors.

  For another example, consider submatrix localization (also referred to as the \emph{spiked Wigner model} or \emph{biclustering}). In the standard version of the problem, we observe a symmetric matrix $A \in \mathbb{R}^{n \times n}$. There is an unknown subset $S^{\star} \subset [n]$, and the matrix entries are sampled $A_{ij} \sim \mathcal{N}(\mu,1)$ if both $i,j \in S^{\star}$, and $A_{ij} \sim \mathcal{N}(0,1)$ otherwise. Submatrix localization has found applications in bioinformatics \citep{ChengChurch2000, shabalin2009}, text mining \citep{Dhillon2001}, and customer segmentation \citep{Hofmann1999}. We propose a geometric variant in which only some observations are available, dictated again by a geometric embedding in $\mathbb{R}^d$, which can be interpreted as a feature embedding. Interpreted this way, this new problem models submatrix localization tasks with limited observations, for which observations are only available between pairs of matrix indices with similar features.

  \xn{All of these geometric inference problems, namely the GSBM, $\mathbb{Z}_2$-synchronization, and submatrix localization, share a common feature: each vertex is associated with both a geometric embedding and an independent discrete community label. Observations for pairs of vertices which are sufficiently close are sampled according to their community labels. To unify and study this family of problems, we introduce the generalized \emph{Geometric Hidden Community Model} (GHCM), which encompasses these three geometric inference problems as special cases. See Definition \ref{def:ghcm} for a formal definition.}
  %We introduce an abstraction of these geometric inference problems, the \emph{Geometric Hidden Community Model} (GHCM), for which the GSBM, $\mathbb{Z}_2$-synchronization, and submatrix localization are special cases. 
  
  We identify the information-theoretic threshold for the general problem, which is stated in terms of a Chernoff--Hellinger divergence \citep{Abbe2015} of certain labeled mixture distributions, one for each label. We show that all of these inference problems can be handled by a single algorithmic framework, using a local propagation procedure to arrive at an almost exact labeling, followed by a refinement step to arrive at an exact labeling. The information-theoretic threshold is intimately tied with the success of the refinement phase; our proof of the refinement phase amounts to analyzing the genie estimator, which estimates the label of a single vertex given the labels of all remaining vertices. 

  To show the negative side of the information-theoretic threshold (namely, that no algorithm succeeds below the purported threshold), we employ the framework of Abbe et al \citep{Abbe2021}. Essentially, it is enough to show that a given vertex is misclassified by the genie estimator with probability $\omega(1/n)$. In turn, the probability of misclassification can be lower-bounded in a unified way by employing a Cram\'er lower bound. The bound is tight enough to yield a sharp information-theoretic threshold for exact recovery. 

\textcolor{black}{However, there is one important caveat in our results: we assume a certain distinctness of the distributions that govern the pairwise observations between vertices. This assumption is crucial for the success of our algorithm. Unfortunately, there are many interesting special cases of our geometric inference problem which fail to satisfy the distinctness assumption, such as the GSBM with $k \geq 3$ communities, where all intra-community edge probabilities are equal to $p$ and all inter-community edge probabilities are equal to $q$. Notably, this special case of the SBM (without underlying geometry) also precludes spectral algorithms; see \cite{Abbe2020} for a discussion of this difficulty.}

  % However, there is one important caveat in our results: we assume a certain distinctness of the distributions that govern the pairwise observations between vertices. Namely, for each vertex of community $i$, the underlying distributions of its pairwise observations with a vertex of community $r$ and community $s$, for $r\neq s$, must be distinct.
  % The distinctness assumption is crucial for the success of the propagation procedure. Unfortunately, there are many interesting special cases of our geometric inference problem which fail to satisfy the distinctness assumption, such as the GSBM with $k \geq 3$ communities, where all intra-community edge probabilities are equal to $p$ and all inter-community edge probabilities are equal to $q$. Notably, this special case of the SBM (without underlying geometry) also precludes spectral algorithms; see \cite{Abbe2020} for a discussion of this difficulty.

\begin{comment}  
\subsection{Contributions}
In summary, our main contributions are:
\begin{enumerate}
    \item We introduce a general model for inference problems with an underlying geometry.
    \item We characterize the information-theoretic threshold for exact recovery in the logarithmic-degree regime.
    \item Under a distinctness of distributions condition, we provide an efficient algorithm for exact recovery. The algorithm runs in linear time in the number of pairwise observations.
    \item We show that in the two-community GSBM, our algorithm is robust to a monotone adversary.
\end{enumerate}
\end{comment}
  
\subsection{Further Related Work}
The GHCM contributes to several lines of research, which we highlight here.
\paragraph*{Inference problems without a geometric structure}
Inference problems without a geometric structure have been extensively studied. In particular, the SBM has attracted significant attention in the probability, statistics, information theory, and machine learning literature \citep{Blondel2008,Decelle2011,Mossel2015,Abbe2015,Hajek2016}; see also \cite{Abbe2017} for a comprehensive survey. While our work focuses on the exact recovery problem and hence the logarithmic degree regime, the SBM has also been studied in other regimes, with the goal of achieving either almost exact or partial recovery \citep{Yun2014,Mossel2015,Abbe2016KS,Mossel2023}. 

There is also an extensive literature on community recovery problems with Gaussian observations. The $\mathbb{Z}_2$-synchronization problem was studied by \citep{Bandeira2017} and \citep{Javanmard2016}, particularly the ability of semidefinite programming in recovering the hidden labels. We note that the synchronization problem has also been studied in spatial graphs; Abbe et al 
\citep{Abbe2018} proposed a version of the more general group synchronization problem, in which vertices reside on a grid. Finally, submatrix localization has been studied from both a theoretical and applied perspective (e.g., \cite{Hartigan1972,shabalin2009,Ma2015,Chen2016,Cai2017}).

\paragraph*{Community recovery in geometric graphs} Our work contributes to the growing literature on community recovery in random geometric graphs, beginning with latent space models proposed in the network science and sociology literature (see, for example, \cite{Handcock2007,Hoff2002}). There have been several models for community detection in geometric graphs. \textcolor{black}{There are two branches in the literature: those where locations and community assignments are independent (e.g., \cite{Avrachenkov2021, Avrachenkov2024, Galhotra2018}) and those where locations and community assignments are dependent (e.g., \cite{Abbe2020b, Li2023, peche2020robustness}). Our model contributes to the former.} The most similar to the one we study is the Geometric Kernel Block Model (GKBM), proposed by Avrachenkov et al \cite{Avrachenkov2024} in response to the preliminary version of this work \citep{Gaudio2024}. The GKBM considers the two-community GSBM only in the one-dimensional setting, but models the geometric dependence of edge probabilities by any function, as opposed to the indicator function. They modified the algorithm of \cite{Gaudio2024} on the exact recovery of the two-community GSBM to show exact recovery is possible above the information-theoretic threshold of GKBM. Another similar model to the one we study is the Soft Geometric Block Model (Soft GBM), proposed by Avrachenkov et al \citep{Avrachenkov2021}. The main difference between the Soft GBM and the GSBM is that the positions of the vertices are unknown. Avrachenkov et al \citep{Avrachenkov2021} proposed a spectral algorithm for almost exact recovery, clustering communities using a higher-order eigenvector of the adjacency matrix. Using a refinement procedure similar to ours, Avrachenkov et al 
 \cite{Avrachenkov2021} also achieved exact recovery, though only in the denser linear average degree regime.  
  
  Another related model is the Geometric Block Model (GBM), proposed by Galhotra et al \cite{Galhotra2018} with follow-up work including \cite{Galhotra2022} and 
  \cite{Chien2020}. In the GBM, community assignments are generated independently, and latent vertex positions are generated uniformly at random on the unit sphere. Edges are then formed according to parameters $\{\beta_{i,j}\}$, where a pair of vertices $u,v$ in communities $i,j$ with locations $Z_u, Z_v$ is connected if $\langle Z_u, Z_v \rangle \leq \beta_{i,j}$.

  Community recovery in geometric graphs is also related to the so-called \emph{Censored} SBM, in which only some edge observations are known \citep{Abbe2014,Dhara2022,Hajek2016b}. Typically, one considers independent measurements, where we discover the edge status of any given pair $(u,v)$ independently, with some probability $p$. Thus, the \emph{measurement graph} is Erd\H{o}s--R\'enyi. In contrast, Chen et al \cite{Chen2016b} proposed measurement graphs with local structure: line graphs, grid graphs, rings, and small-world graphs, thus introducing local structure into the observed censored graph.
  
  In the previously mentioned models, the vertex positions do not depend on the community assignments. In contrast, Abbe et al \cite{Abbe2020b} proposed the Gaussian Mixture Block Model (GMBM), where (latent) vertex positions are determined according to a mixture of Gaussians, one for each community. Edges are formed between all pairs of vertices whose distance falls below a threshold. A similar model was recently studied by Li and Schramm \cite{Li2023} in the high-dimensional setting. Additionally, P\'eche and Perchet \cite{peche2020robustness} studied a geometric perturbation of the SBM, where vertices are generated according to a mixture of Gaussians, and the probability of connecting a pair of vertices is given by the sum of the SBM parameter and a function of the latent positions. 
  
  In addition, some works \citep{araya2019latent,eldan2022community} consider the task of recovering the geometric representation (locations) of the vertices in random geometric graphs as a form of community detection. Their setting differs significantly from ours. We refer to the survey by Duchemiin and De Castro \cite{duchemin2023random} for an overview of the recent developments in non-parametric inference in random geometric graphs. 

  \paragraph*{Geometric random graphs} 
  
  Another line of work studies geometric random graphs without community structure. A prominent example is the Random Geometric Graph (RGG)  \citep{Dall2002,Penrose2003,Erba2020}, in which vertices are distributed uniformly in a region, and edges connect vertices which are within a prescribed distance of each other. A related model is the Soft RGG \citep{Penrose2016,Bubeck2016,Liu2023}, where vertices are connected by an edge with probability $p$, independently, if they are within a prescribed distance of each other. %\ew{move to later} 
  While the input in our model has community structure, our ability to propagate labels depends on the connectivity of an auxiliary graph, which connects a pair of vertices whenever they are ``visible'' to each other (to be defined precisely in Section \ref{sec:connectivity}). Conditioned on the total number of vertices, this (vertex) visibility graph has the same distribution as a RGG. In fact, we show that the visibility graph constructed from sufficiently occupied \emph{regions} is also connected, thus enabling a propagation scheme. 

%\ew{Shall we have a subsection to highlight our contributions?} 

  \subsection{Notation and Organization} We write $[n]=\{1,\cdots, n\}$. We use Bachmann--Landau notation with respect to the parameter $n$; For example, $o(1)$ means $o_n(1)$. $\text{Bern}$ denotes the Bernoulli distribution. $\text{Bin}$ denotes the binomial distribution. %$\text{Poisson}$ denotes the Poisson distribution. 
  %For $\mu \in \mathbb{R}^m$, $\text{Poisson}(\mu)$ denotes the $m$-type Poisson distribution. \julia{do we use this?} That is, $X \sim \text{Poisson}(\mu)$ indicates that $X_i \sim \text{Poisson}(\mu_i)$, with $\{X_i\}_{i \in m}$ mutually independent. \charlie{I don't think we use this vector version of Poisson anywhere in our paper. We should probably just eliminate it, or denote only the 1-dimensional Poisson.} \xn{Agreed!}
  The constant $\nu_d$ is the volume of a unit Euclidean ball in $d$ dimensions. %\ew{Can we change $\nu$ to another letter? Slightly conflicts with $\nu_d$ on the previous line.}\julia{let's go with $p,q$ below to keep $\nu_d$ consistent with \cite{Abbe2021}} 

%\julia{since the content on CH divergence is pretty involved, we could consider migrating this content to a preliminaries section which follows. The section could start with, ``Our results are stated in terms of Chernoff--Hellinger divergences, ....''}
Let $D_+(\cdot, \cdot)$ be the \emph{Chernoff-Hellinger (CH) divergence} between two distributions. As introduced by \cite{Abbe2015}, the CH divergence between two discrete distributions with probability mass functions (PMF) $p$ and $q$ on $\mathcal{X}$ is given by 
\begin{align}
    D_+(p,q) = \sup_{t \in [0,1]} \sum_{x \in \mathcal{X}} tp(x) + (1-t)q(x) - p(x)^t q(x)^{1-t} = 1 - \inf_{t \in [0,1]} \sum_{x \in \mathcal{X}} p(x)^t q(x)^{1-t}. \nonumber
\end{align}
Here, the second equality comes from the fact that $\sum_{x \in \mathcal{X}} tp(x) + (1-t)q(x) = 1$ for all $t \in \mathbb{R}$.
%\ew{What is the intuition behind the second equality. If it were inequality, then it's quite obvious. If it's a nontrivial argument, then we might want to introduce them as lemmas instead of notations.} \julia{it comes from $\sum_{x \in \mathcal{X}} t\mu(x) + (1-t) \nu(x) = 1$ for all $t$}
We extend this definition to continuous distributions with densities $p$ and $q$, where the CH divergence is defined as 
\$D_+(p,q) &= \sup_{t \in [0,1]} \int_{x \in \cX} tp(x) + (1-t)q(x) - p(x)^t q(x)^{1-t} dx \\
&= 1 - \inf_{t \in [0,1]} \int_{x \in \cX} p(x)^t q(x)^{1-t} dx.\$
Moreover, if $p=(p_1, \hdots, p_k), q=(q_1, \hdots, q_k)$ are vectors of distributions associated with a vector of prior probabilities $\pi=(\pi_1, \hdots, \pi_k)$, we define the CH divergence between $p$ and $q$ in the discrete case to be
\#\label{def:ch-discrete}
    D_+(p,q) &= \sup_{t \in [0,1]} \sum_{i=1}^k \pi_i \left(\sum_{x \in \mathcal{X}} tp_i(x) + (1-t)q_i(x) - p(x)^t q_i(x)^{1-t}\right) \notag\\ &= 1 - \inf_{t \in [0,1]} \sum_{i=1}^k \pi_i \sum_{x \in \mathcal{X}} p_i(x)^t q_i(x)^{1-t},  %\label{eq:ch-div-disc}  
\#
and in the continuous case to be
\#\label{def:ch-continuous}
    D_+(p,q) &= \sup_{t \in [0,1]} \sum_{i=1}^k \pi_i \left(\int_{x} tp_i(x) + (1-t)q_i(x) - p_i(x)^t q_i(x)^{1-t} dx\right) \notag\\ &= 1 - \inf_{t \in [0,1]} \sum_{i=1}^k \pi_i \int_{x \in \mathcal{X}} p_i(x)^t q_i(x)^{1-t} dx.  %\label{eq:ch-div-cont}  
\#

\paragraph*{Organization} The rest of the paper is organized as follows. Section \ref{sec:models} formulates the Geometric Hidden Community Model and provides our main result and its application to specific network inference problems. Section \ref{sec:algorithm} provides a two-phase algorithm for achieving exact recovery. Section \ref{sec:proof-outline} provides an overview of proof strategies and intuition, with full proofs deferred to the appendix. We discuss future directions in Section \ref{sec:discussion}. Appendix \ref{sec:impossibility} proves the impossibility result in exact recovery. In Appendix \ref{sec:connectivity}, we explore the connectivity of the ``visibility graph,'' a key property in proving success of our algorithm. Appendices \ref{sec:almost_exact_recovery} and \ref{sec:exact_recovery} prove our algorithm achieves almost exact recovery in Phase I and exact recovery in Phase II, respectively. Appendix \ref{sec:monotone} shows our algorithm is robust to monotone adversarial corruptions in the two-community GSBM case. Finally, Appendix \ref{sec:gaussian_proofs} shows that our algorithm achieves almost exact recovery in Phase I and exact recovery in Phase II for the Gaussian case. 

% TODO: add simulation section in the organization paragraph when it is added

\section{Models and Main Results}
\label{sec:models}
This section presents our main results. First, we propose the Geometric Hidden Community Model (Section \ref{sect:ghcm}) and study its corresponding information-theoretic threshold for exact recovery (Section \ref{sect:limits}). We summarize the lower and upper bounds for exact recovery in Table \ref{tab:impossibility-achievability}. Then, we examine the model's applications to specific spatial graph inference problems, including the geometric stochastic block model (Corollaries \ref{cor:gsbm_2_com} and \ref{cor:gsbm}), the geometric $\mathbb{Z}_2$ synchronization problem (Corollary \ref{cor:gsync}), and the monotone adversarial version of the geometric stochastic block model (Theorem \ref{thm:monotone}).

\subsection{Geometric Hidden Community Model}\label{sect:ghcm}
We formulate a new spatial graph model, the \emph{Geometric Hidden Community Model (GHCM)}, in the logarithmic-degree regime, which generates vertices on a torus and samples observations between vertices only if they are sufficiently close.
\begin{definition}[Geometric Hidden Community Model]\label{def:ghcm}
Let $n\in\mathbb{N}$ be a scaling parameter. Fix $\lambda > 0$ as the intensity parameter, $d \in \mathbb{N}$ as the dimension, and $k\in\N$ as the number of communities. Denote $Z\subset\Z$ as the subset of integers with cardinality $|Z|=k$ to represent the set of discrete community labels; and $\pi\in\R^{k}$ as the prior probabilities on the $k$ communities. For each $i,j \in Z$, let $P_{ij}$\footnote{Or equivalently, sometimes we use $P_{i,j}$.} be a distribution that models either a discrete random variable with
probability mass function (PMF) or a continuous random variable with probability density function (PDF) on $\mathbb{R}$,\footnote{%By assuming either discrete or continuous random variable, 
We avoid subtleties that arise from allowing general measures.} and $P_{ij} \stackrel{d}{=} P_{ji}$. Let $p_{ij}$ be the PMF/PDF of $P_{ij}$. %\ew{What is the difference between $p_{ij}$ and $P_{ij}$?}

%\julia{We need to be more specific- I think we should restrict to distributions which are specified by a PMF/PDF. Most things should work for more generic measures (after rewriting certain things) but getting to that level of generality seems like it would obfuscate what's going on.}  
A graph $G$ is sampled from $\text{GHCM}(\lambda, n, \pi, P, d)$, with observations $\{Y_{uv}\}\subset\R$ over the undirected edges, according to the following steps:
\begin{enumerate}
    \item The locations of vertices are generated according to a homogeneous Poisson point process\footnote{The definition and construction of a homogeneous Poisson point process are provided in Definition \ref{def:PPP}.} with intensity $\lambda$ in the region $\cS_{d,n} := [-n^{1/d}/2, n^{1/d}/2 ]^d \subset \mathbb{R}^d$. Let $V\subset\cS_{d,n}$ denote the vertex set. \label{sample:step-1}
    \item Community labels are assigned independently by the probability vector $\pi$. The ground truth label of vertex $u \in V$ is given by $x^\star(u) \in Z$, with $\mathbb{P}(x^\star(u) = i) = \pi_i$ for $i\in Z$.
    \item Conditioned on locations and community labels, pairwise observations $Y_{uv}$ are sampled independently. For $u, v \in V$ and $u \neq v$, we have $Y_{uv} \sim P_{x^{\star}(u), x^{\star}(v)}$ if $\|u-v\|\le (\log n)^{1/d}$; otherwise $Y_{uv} = 0$. These observations are symmetric, with $Y_{uv} = Y_{vu}$ for all $u,v$.
    %\ew{Need to introduce $Y$.}
\end{enumerate}
Here $\|u-v\|$ denotes the toroidal metric with $\|\cdot\|_2$ denoting the standard Euclidean metric:
\$
\|u-v\| = \big\|\min\{|u_1-v_1|, n^{1/d}-|u_1-v_1|\}, \cdots, \min\{|u_d-\nu_d|, n^{1/d}-|u_d-\nu_d|\} \big\|_2.
\$
\end{definition}

\textcolor{black}{ Observe that in any region of
unit volume, the number of vertices is distributed as $\text{Poisson}(\lambda)$; hence, there exist $\lambda$ vertices in
expectation in any unit-volume region. The region $\mathcal{S}_{d, n}$ has volume $n$ so that the total number of vertices %in the torus $\mathcal{S}_{d,n}$ 
is $|V| \sim \text{Poisson}(\lambda n)$ with $E[|V|] = \lambda n$. The visibility radius of $(\log n)^{1/d}$ ensures the expected number of vertices within the visibility radius of each vertex is $\lambda \nu_d\log n$, so that the vertices are of logarithmic degree.}
An observation $Y_{uv}$ is only sampled for a pair of vertices $\{u,v\}$ if their locations are within a distance of $(\log n)^{1/d}$; in that case, we say they are \emph{mutually visible}, denoted by $u\sim v$. The sampled $Y_{uv}$ then depends on the community labels of $u$ and $v$.

The GHCM generalizes several spatial network inference problems, including:
\begin{itemize}
    \item \emph{Geometric Stochastic Block Model (GSBM)}. By specifying $P_{ij} = \text{Bern}(a_{ij})$ for $i,j\in Z$ with $a_{ij} = a_{ji}$, the model reduces to the GSBM \citep{Sankararaman2018,Abbe2021}. The observation $Y_{uv}$ represents an edge or non-edge between vertices $u$ and $v$. In particular, \cite{Sankararaman2018} introduced a spatial random graph model, which is equivalent to the GHCM with $k=2, Z=\{\pm 1\}, \pi_{-1} = \pi_1 = 1/2, P_{ij}=\text{Bern}(p\mathds{1}\{i=j\}+q\mathds{1}\{i\neq j\})$ for $i \in \{\pm 1\}$. We refer to their model as the two-community symmetric GSBM.
    \item \emph{Geometric $\Z_2$ synchronization}. The GHCM reduces to geometric $\Z_2$ synchronization when $k=2$, $Z=\{\pm 1\}$, $P_{i,i} = \cN(1, \sigma^2)$, and $P_{i,-i} = \cN(-1, \sigma^2)$ for $i \in \{\pm 1\}$. Equivalently, the observation $Y_{uv}$ is sampled as $Y_{uv} = x^\star(u) \cdot x^\star(v) + Z_{uv}$,  where $Z_{uv} \sim \mathcal{N}(0, \sigma^2)$ is independent Gaussian noise. The geometric structure restricts observations to only vertex pairs within a limited distance, which models synchronization problems with distance constraints, such as large-scale networks of sensors.
    \item  \emph{Geometric submatrix localization}. The GHCM corresponds to geometric submatrix localization when $k=2$, $Z=\{\pm 1\}$, $P_{1,1} = \cN(\mu,1)$, and $P_{-1,-1} = P_{-1,1} = P_{1,-1} = \cN(0,1)$. %\julia{It would be good to cast this into our model by specifying $Z$ and the $P$ distributions. Let's use the parametrization of \url{https://arxiv.org/pdf/1602.06410} so the variance is $1$ and the mean is either $0$ or $\mu$} 
    Equivalently, $Y_{uv}$ is sampled as $Y_{uv} = \mu\mathds{1}\{x^\star(u)=1\} \cdot \mathds{1}\{x^\star(v)=1\} + Z_{uv}$  with independent Gaussian noise $Z_{uv} \sim \cN(0, 1)$. The geometric structure can be interpreted as a feature embedding in $\mathbb{R}^d$, yielding submatrix localization problems where observations are only available between data points nearby sharing similar features.
\end{itemize}

\textcolor{black}{We now illustrate that the GHCM exhibits much stronger edge transitive behavior compared to the standard SBM, thereby more faithfully capturing the transitivity observed in real-world networks. For the GHCM, let $\lambda$ be the Poisson intensity parameter, and let $a, b$ denote the intra- and inter-community edge probabilities for vertices which are mutually visible, namely located at a distance of at most $(\log n)^{1/d}$. For the SBM, suppose that the intra- and inter-community edge probabilities are respectively ${a \lambda \nu_d \log n}/{n}$ and ${b\lambda \nu_d \log n}/{n}$, where $\nu_d$ is the volume of a $d$-dimensional unit ball. With these parametrizations, both models are such that a given vertex has $a \lambda \nu_d \log n/2$ neighbors in the same community, and $b \lambda \nu_d \log n/2$ neighbors in the opposite community, in expectation. However, the two models have very different transitive behavior. Specifically, we can calculate the conditional probability that two vertices $(u,v)$ are connected by an edge, given that both are connected to another vertex $w$. In the SBM, this conditional probability is only $\Theta(\log n/n)$ (and has the same asymptotics if we additionally condition on the community assignments of $u$, $v$, and/or $w$). On the other hand, in the GHCM, the conditional probability that $u,v$ are connected %by an edge 
given that both are connected to %another vertex 
$w$ is $\Theta(1)$. To see this, observe that for $u,v$ to be connected to $w$, these vertices must be within distance $(\log n)^{1/d}$ of $w$. Given that this placement occurs, there is a constant probability that $u,v$ themselves are within distance $(\log n)^{1/d}$ of each other, and a further constant conditional probability that they are connected by an edge. Given the dramatically different ``transitive closure'' probabilities, we conclude that the GHCM models the desired transitive behavior.}

\subsection{Fundamental Limits for Exact Recovery}\label{sect:limits}
Our goal for community recovery is to estimate the ground truth labeling $x^\star$, up to some level of accuracy, by observing $\{Y_{uv}\}$ and $G$, including the geometric locations of the vertices. However, when symmetries are present, a labeling is only correct up to a permutation. For example, in the two-community symmetric GSBM, one can only estimate the correct labeling up to a global flip. To account for such symmetries, we first define the notion of permissible relabeling.
\begin{definition}[Permissible relabeling] \label{def:permissible} A permutation $\omega\colon Z\to Z$ is a \emph{permissible relabeling} if $\pi_i = \pi_{\omega(i)}$ for any $i\in Z$ and $P_{ij} = P_{\omega(i), \omega(j)}$ for any $i,j\in Z$. Let $\Omega_{\pi, P}$ be the set of permissible relabelings.
\end{definition}
Given an estimator $\widetilde{x}$, we define the agreement of $\widetilde{x}$ and $x^\star$ as
\$
A(\widetilde{x}, x^\star) = \frac{1}{|V|} \max_{\omega\in\Omega_{\pi, P}} \sum_{u\in V}\mathds{1}\{\widetilde x(u) = \textcolor{black}{\omega( x^\star(u))}\}.
\$ 
\xn{Throughout the paper, we interchangeably use the notation $\omega \circ x^\star(u) = \omega( x^\star(u))$ to denote the composition of the true label $x^\star(u)$ under a permissible relabeling $\omega$. }
Next, we define different levels of accuracy including \emph{exact recovery}, our ultimate goal, as follows.
  \begin{itemize}%[topsep=0.4em]\setlength\itemsep{-0.1em}
      \item \emph{Exact recovery:} $\lim\limits_{n \to \infty} \pr(A(\widetilde{x}, x^\star)=1) = 1$,
      \item \emph{Almost exact recovery:} $\lim\limits_{n \to \infty} \pr(A(\widetilde{x}, x^\star)\ge1-\epsilon) = 1$, for all $\epsilon>0$,
      \item \emph{Partial recovery:} $\lim\limits_{n \to \infty} \pr(A(\widetilde{x}, x^\star)\ge\alpha) = 1$, for some $\alpha \in (0,1)$. %\julia{The baseline accuracy may not be $1/k$. Actually, I think it's $\max_i \pi_i$- we can just label everything with the most popular label to achieve that accuracy. I think it's ok not to specify what the interesting values of $\alpha$ are.}
  \end{itemize}
\textcolor{black}{Recall that $|V| \sim \text{Poisson}(\lambda n)$ and $n$ is a scaling parameter. Taking the limit as $n \to \infty$ is equivalent to letting the graph size approach infinity.}
An exact recovery estimator must recover all labels up to a permissible relabeling with probability tending to $1$ as the graph size goes to infinity. An almost exact recovery estimator recovers all labels but a vanishing fraction of vertices, and a partial recovery estimator recovers the labelings of a constant fraction of vertices as the graph size goes to infinity.

 In the following, we present the impossibility and the achievability results for exact recovery in the GHCM. There exists an information-theoretic threshold below which no estimator succeeds in exact recovery. Above this threshold, exact recovery is efficiently possible with some assumptions. First, we present the negative result. Recall the definition of CH divergence $D_+(\cdot, \cdot)$ between two vectors of distributions defined in \eqref{def:ch-discrete} and \eqref{def:ch-continuous}. %We provide an information-theoretic threshold below which no estimator succeeds in exact recovery. 
 \xn{Let $\theta_i = (p_{i1}, \cdots, p_{ik})$ be the vector of distributions $P_{i\cdot}$ associated with the prior $\pi$.}

\begin{theorem}[Impossibility]\label{theorem:impossibility}  %\textcolor{black}{For $G\sim \text{GHCM}(\lambda, n, \pi, P, d)$, let $\theta_i = (p_{i1}, \cdots, p_{ik})$ be the vector of distributions $P_{i\cdot}$ associated with the prior $\pi$. 
Any estimator fails to achieve exact recovery for $G\sim \text{GHCM}(\lambda, n, \pi, P, d)$ if
\begin{enumerate}
    \item[(a)] $\lambda \nu_d \min_{i \neq j}  D_+(\theta_i, \theta_j) < 1$; or
    \item[(b)] $d=1$, $\lambda<1$, and $|\Omega_{\pi, P}|\ge 2$.
\end{enumerate}
\end{theorem}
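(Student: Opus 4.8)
<br>

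**Proof proposal for Theorem (Impossibility).**

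The plan is to reduce the impossibility of exact recovery to showing that, with probability bounded away from zero, there is at least one vertex that even the genie estimator — which knows all labels except that of a single vertex — would misclassify, and to handle the two cases by two different mechanisms for producing such a ``bad'' vertex.

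For part (a), I would follow the framework of Abbe et al.\ \citep{Abbe2021}: it suffices to exhibit $\omega(1/n)$-many vertices whose conditional-on-everything-else log-likelihood test fails. Concretely, condition on the Poisson point process and the labels of all vertices; for a fixed vertex $u$ with true label $i$, the genie compares the likelihood of the observed pairwise data under label $i$ versus each alternative label $j$. The relevant data for $u$ are the observations $\{Y_{uv}\}$ with $v\sim u$; conditioned on $u$ having $m$ visible neighbors with a given empirical profile of neighbor-labels, the error event in distinguishing $i$ from $j$ is a large-deviations event whose exponent is governed exactly by the Chernoff information between the product distribution $\prod_v p_{i,x^\star(v)}$ and $\prod_v p_{j,x^\star(v)}$. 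The number of visible neighbors of a typical vertex concentrates around $\lambda\nu_d\log n$, and the fraction of them in each community concentrates around $\pi$. A Cram\'er (Chernoff) lower bound on this error probability then yields that the per-vertex misclassification probability is at least $n^{-\lambda\nu_d D_+(\theta_i,\theta_j) + o(1)}$; taking the minimizing pair $i\neq j$ and using $\lambda\nu_d\min_{i\neq j}D_+(\theta_i,\theta_j)<1$ makes this $\omega(1/n)$. A second-moment / Poisson-thinning argument over the $\Theta(\lambda n)$ vertices — using that events at well-separated vertices are independent, which is exactly where the geometry helps — shows that with probability $1-o(1)$ at least one such vertex is misclassified by the genie, and since the genie is the optimal estimator of a single label given the rest, exact recovery is impossible.

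For part (b), the mechanism is a connectivity/symmetry obstruction rather than a local large-deviations one. When $d=1$ and $\lambda<1$, the expected number of vertices per unit length is subcritical in the sense that, with probability bounded away from $0$, there is an empty interval of length $\geq(\log n)^{1/d}$ somewhere in $\mathcal S_{1,n}$; an empty gap of that width disconnects the visibility graph into two nonempty pieces (a vertex in one piece is never mutually visible with a vertex in the other). Indeed, tiling $[-n^{1/2},n^{1/2}]$ into $\Theta(n/\log n)$ consecutive blocks of length $(\log n)$, each block is empty independently with probability $e^{-\lambda\log n}=n^{-\lambda}$, so the probability that \emph{some} block is empty is $1-(1-n^{-\lambda})^{\Theta(n/\log n)}\to 1$ when $\lambda<1$. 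On the complement of a single such gap, the data split into two independent sub-instances, and applying a nontrivial permissible relabeling $\omega$ (which exists since $|\Omega_{\pi,P}|\geq 2$) to just one side produces a labeling that is statistically indistinguishable from $x^\star$ but not a global permissible relabeling of it, so $A(\widetilde x,x^\star)<1$ for any estimator on a constant-probability event. (A small refinement: one should argue an empty gap actually separates two \emph{occupied} regions, which holds w.h.p.\ since the two outer stretches each contain $\Omega(n)$ expected vertices.)

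The main obstacle is the large-deviations lower bound in part (a): I need a Cram\'er-type lower bound on the probability that a sum of independent (but non-identically distributed, since neighbors lie in different communities) log-likelihood-ratio increments exceeds a threshold, and I need the resulting exponent to match $\lambda\nu_d D_+(\theta_i,\theta_j)$ exactly — not just up to constants — so that the threshold in the theorem is sharp. This requires (i) handling the random number and random label-profile of visible neighbors and showing the dominant contribution comes from profiles near $(\lambda\nu_d\log n)\cdot\pi$, (ii) a Bahadur--Rao / exact-asymptotics version of Cram\'er rather than the crude exponential bound, to control the polynomial prefactor well enough that $n^{-\lambda\nu_d D_+}$ times a $\mathrm{poly}(\log n)$ factor is still $\omega(1/n)$, and (iii) assembling the union/second-moment argument so that ``some vertex fails'' has probability $1-o(1)$ rather than merely positive probability. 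Parts (ii)–(iii) are where the genuine work lies; the geometric independence of far-apart vertices is what makes the second-moment step go through cleanly, in contrast to the SBM where one argues more carefully about near-independence.
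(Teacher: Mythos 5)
Your part (a) is essentially the paper's route: the paper also reduces to showing a single ``flip-bad'' vertex exists with probability $\omega(1/n)$ (first moment) plus a second-moment condition, invoking Proposition 8.1 of Abbe et al.\ and handling the second moment exactly as you suggest, via independence of vertices at distance $>2(\log n)^{1/d}$ on the torus. One comment on your worry (ii): the Bahadur--Rao sharp asymptotics you think you need are not needed. Since labels are i.i.d.\ from $\pi$, the neighbor log-likelihood increments are exactly i.i.d.\ draws from a mixture (no conditioning on the empirical label profile is required), and the paper conditions only on the Poisson neighbor count $N(0)=m$, applies the plain Cram\'er lower bound uniformly for $m\ge\delta\log n$ with small slack parameters, and then sums against the Poisson PMF---effectively evaluating the Poisson moment generating function---to get a bound of the form $n^{\lambda\nu_d\left((1-D_+(\theta_i,\theta_j))e^{-\varepsilon-(t_{ij}+\eta)\alpha}-1\right)}\ge n^{-1+\beta}$. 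The strict inequality $\lambda\nu_d D_+<1$ absorbs all the slack, so the crude exponential bound suffices and no polynomial-prefactor control is ever needed; this Poissonization step is the main simplification relative to your plan.

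Part (b) has a concrete gap: $\mathcal S_{1,n}$ carries the toroidal metric, i.e.\ it is a circle, so a single empty gap of length $\log n$ does \emph{not} disconnect the visibility structure---the complement of one arc is still connected, and your ``two outer stretches'' picture implicitly treats the region as an interval with endpoints (also, for $d=1$ the region is $[-n/2,n/2]$, not $[-n^{1/2},n^{1/2}]$). What you need, and what the paper's event $\mathcal X$ encodes, is the existence of \emph{two non-adjacent} empty blocks with occupied blocks interleaved between them, which splits the circle into at least two segments; only then does the per-segment relabeling invariance (valid since $|\Omega_{\pi,P}|\ge 2$) force any estimator, in particular the MAP, to fail with constant conditional probability. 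The fix costs nothing probabilistically: when $\lambda<1$ the expected number of empty blocks is $n^{1-\lambda}/\log n\to\infty$, and the paper's binomial computation shows two non-adjacent empty blocks exist with probability $1-o(1)$; but as written your disconnection claim is false and must be repaired along these lines.
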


\begin{comment}
\begin{theorem}\label{theorem:impossibility}
    \begin{enumerate}
        \item[(a)] For any $\lambda>0, d\in\mathbb{N}$, $\pi\in\mathbb{R}^k$, and $P$ such that $\lambda \nu_d \min_{i \neq j}  D_+(\theta_i, \theta_j) < 1$, every estimator fails at exact recovery with probability $1-o(1)$ for $G\sim \text{GHCM}(\lambda, n, \pi, P, d)$.
        \item[(b)] For $G\sim \text{GHCM}(\lambda, n, \pi, P, d)$ with $d=1$, $0 < \lambda < 1$, and any $\pi\in\mathbb{R}^k$ and $P$, if $|\Omega_{\pi, P}|\ge 2$, then every estimator fails at exact recovery.
    \end{enumerate}
\end{theorem}
\end{comment}

Theorem \ref{theorem:impossibility}(a) identifies the information-theoretic threshold as
\begin{align}
    \label{eqn:threshold}
    \lambda \nu_d \min_{i \neq j}  D_+(\theta_i, \theta_j) = 1.
\end{align}
The parameters of the GHCM lie below the threshold \eqref{eqn:threshold} if either $\lambda$ or $\min_{i \neq j}  D_+(\theta_i, \theta_j)$ is sufficiently small. A smaller value of the intensity parameter $\lambda$ results in fewer vertices in the torus, while a smaller value of $\min_{i \neq j}  D_+(\theta_i, \theta_j)$ indicates that there are two communities $i$ and $j$ such that their distributions $P_{i\cdot}$ and $P_{j\cdot}$ are too similar. Consequently, Theorem \ref{theorem:impossibility}(a) implies that exact recovery is impossible if we do not observe enough data or if there exist two communities with indistinguishable pairwise observations.
% Intuitively, exact recovery is impossible if we do not observe enough data ($\lambda$, the intensity parameter of the Poisson point process, is too weak to generate a sufficient number of points) or if there exist two different communities that give indistinguishable edge information ($P_{i\cdot}$ and $P_{j\cdot}$ are too similar to each other). 
Moreover, Theorem \ref{theorem:impossibility}(a) implies exact recovery becomes more difficult in higher dimensions because $\nu_d$, the volume of the Euclidean ball in $d$ dimensions, tends to zero as $d$ tends to infinity. We show the exact recovery problem can be reduced to a hypothesis testing problem between pairs of vectors of distributions, with each community associated with one vector. The CH divergence quantifies the error rate in this hypothesis test. %In Appendix \ref{sec:impossibility}, we prove the impossibility result by establishing the failure of the Maximum a Posteriori (MAP) estimator below the threshold, which in turn implies failure of any estimator.
Theorem \ref{theorem:impossibility}(b) identifies additional impossible regimes in terms of $\lambda$ in the one-dimensional setting with multiple permissible relabelings. 
%Regardless of the value of $\min_{i \neq j}  D_+(\theta_i, \theta_j)$, Theorem \ref{theorem:impossibility}(b) states exact recovery is impossible if the graph has too few vertices while at the same time there exist multiple permissible relabelings. 

We conjecture that when the model parameters are above the threshold \eqref{eqn:threshold}, exact recovery is possible. The following conjecture, together with Theorem \ref{theorem:impossibility}, would imply \eqref{eqn:threshold} is the information-theoretic threshold of the GHCM.
\begin{conjecture}\label{conjecture}
For $G\sim \text{GHCM}(\lambda, n, \pi, P, d)$, the MAP estimator achieves exact recovery %with probability $1-o(1)$ 
when $\lambda \nu_d \min_{i \neq j}  D_+(\theta_i, \theta_j) > 1$ and one of the following holds: (1) $d\ge2$; or (2) $d = 1$, $\lambda>1$; or (3) $d=1$, $|\Omega_{\pi, P}|=1$.
\end{conjecture}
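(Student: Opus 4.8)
The plan is to prove the equivalent statement that exact recovery is information-theoretically possible above the threshold \eqref{eqn:threshold} (under the dimension hypotheses), since the MAP estimator -- read over labelings modulo $\Omega_{\pi,P}$ -- is Bayes-optimal, so it minimizes $\pr(A(\widetilde x,x^\star)<1)$. Write $L(y)=\prod_{u\in V}\pi_{y(u)}\prod_{u\sim v}p_{y(u),y(v)}(Y_{uv})$ for the posterior-proportional likelihood; since $P_{\omega(i),\omega(j)}=P_{ij}$ and $\pi_{\omega(i)}=\pi_i$ for $\omega\in\Omega_{\pi,P}$, one has $L(\omega\circ x^\star)=L(x^\star)$ identically, so the MAP estimator fails only if some labeling $y$ that is not a permissible relabeling of $x^\star$ satisfies $L(y)\ge L(x^\star)$. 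We would union bound over such $y$, indexed by triples $(\omega,T,\sigma)$ -- a relabeling $\omega\in\Omega_{\pi,P}$ minimizing the disagreement with $y$, the nonempty disagreement set $T=\{u:y(u)\neq\omega(x^\star(u))\}$, and the pattern $\sigma=y|_T$ -- using that, by conditional independence of the observations,
\[
\log\frac{L(y)}{L(\omega\circ x^\star)}=\sum_{u\in T}\log\frac{\pi_{\sigma(u)}}{\pi_{\omega(x^\star(u))}}+\sum_{u\sim v:\,\{u,v\}\cap T\neq\emptyset}\log\frac{p_{y(u),y(v)}(Y_{uv})}{p_{\omega x^\star(u),\,\omega x^\star(v)}(Y_{uv})},
\]
where the first sum is deterministic and an edge is \emph{informative} when the two densities in the second sum differ. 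We would condition throughout on a probability-$(1-o(1))$ regularity event on which every vertex has $(1\pm o(1))\lambda\nu_d\log n$ visible neighbors, $(1\pm o(1))\pi_\ell\lambda\nu_d\log n$ of them of each true label $\ell$, and the visibility graph on sufficiently occupied regions (Appendix \ref{sec:connectivity}) is connected. This connectivity is exactly where (1)--(3) enter: being above \eqref{eqn:threshold} forces $\lambda\nu_d>1$ (as $D_+\le1$), which places $d\ge2$ above the random-geometric-graph connectivity threshold; for $d=1$ the hypothesis $\lambda>1$ is the sharp no-long-gap threshold; and for $d=1,\ \lambda\le1$, being above \eqref{eqn:threshold} still forces $\lambda>1/2$, so all visibility components have size $\omega(1)$ and one argues component-by-component, with $|\Omega_{\pi,P}|=1$ removing the obstruction, present in Theorem \ref{theorem:impossibility}(b), of applying distinct symmetries to distinct components.

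For triples whose disagreement set $T$ has only $O(1)$-size connected components in the visibility graph, the deviation splits over those components into independent genie-type contributions. The atomic case is flipping a single vertex $u\in T$ from $x^\star(u)=i$ to $\sigma(u)=j\neq i$ while keeping $\omega\circ x^\star$ elsewhere: its contribution is a compound-Poisson log-likelihood ratio over $u$'s visible neighbors, partitioned by community according to $\pi$, for the test of label $i$ against $j$, and a Chernoff bound shows that this contribution is nonnegative with probability at most $n^{-(1+o(1))\lambda\nu_d D_+(\theta_i,\theta_j)}$; crucially it is the Poisson law of the neighbor counts that makes the exponent the divergence $D_+$ of \eqref{def:ch-continuous}, matching the Cram\'er lower bound behind Theorem \ref{theorem:impossibility}(a). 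Since a small component of size $s$ contributes exponent $(1-o(1))s\,\lambda\nu_d\min_{i\neq j}D_+(\theta_i,\theta_j)$ -- each of its vertices being decodable by a genie from its neighbors outside the component -- and is located and patterned in at most $n\cdot n^{o(s)}$ ways, the union bound over all such triples reduces to $\sum_{m\ge1}\big(n^{\,1-\lambda\nu_d\min_{i\neq j}D_+(\theta_i,\theta_j)+o(1)}\big)^m=o(1)$, which holds precisely because $\lambda\nu_d\min_{i\neq j}D_+(\theta_i,\theta_j)>1$ (in particular this minimum is positive). A separate, super-exponentially small, bound handles the finitely many ``global'' $y=\omega\circ x^\star$ with $\omega$ preserving $P$ but not $\pi$, whose prior term is deterministic and at most $-cn$.

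The crux -- and the step we expect to be the main obstacle -- is the case where $T$ contains a large visibility-connected cluster. For a connected $m$-vertex cluster, of which there are at most $n(C\lambda\nu_d\log n)^m$ when $d\ge2$ and at most $n^2$ intervals when $d=1$, one must show that any $\sigma$ not induced by a permissible relabeling makes $\Omega(m\log n)$ edges informative, so that the Chernoff bound $n^{-\Omega(m\log n)}$ beats the enumeration and, in particular, controls even $|T|=\Theta(n)$ against the crude count $k^n$. When the distinctness assumption holds every boundary edge of the cluster is informative, and this reduces to the dispersion-and-refinement analysis already carried out for the two-phase algorithm of Section \ref{sec:algorithm}; the content of the conjecture is to remove distinctness, for which the efficient propagation scheme genuinely fails (e.g.\ the symmetric GSBM with $k\ge3$), though the non-constructive MAP bound need not.

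Without distinctness, an edge from a relabeled cluster vertex $u$ (true label $i$, reassigned to $i'$) to a neighbor $v$ of true label $\ell$ is informative only when $P_{i\ell}\neq P_{i'\ell}$, so one must argue that for \emph{every} non-permissible $(\omega,T,\sigma)$ the informative edges remain abundant. On the regularity event each cluster-boundary vertex sees $\Theta(\pi_\ell\log n)$ outside vertices of each community $\ell$, and since $\min_{i\neq j}D_+(\theta_i,\theta_j)>0$ forces all rows of $P$ to be distinct, for the relabeled pair $(i,i')$ occurring on the boundary there is a community $\ell$ with $P_{i\ell}\neq P_{i'\ell}$, hence $\Theta(\log n)$ informative edges at that vertex; combined with the isoperimetry of connected sets in the visibility graph this should produce $\Omega(m\log n)$ informative edges overall. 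The genuinely delicate points, where we expect the real work to lie, are: (i) making ``non-permissible $\Rightarrow$ locally distinguishable'' quantitative and uniform over all clusters and patterns, including clusters that are ``almost'' a global symmetry; and (ii) the regime $|T|=\Theta(n)$, where the complement of $T$ is small, the boundary-vertex regularity fails, and one must instead use connectivity -- and, inconveniently, connectivity of the induced single-community subgraphs, which is \emph{not} implied by being above \eqref{eqn:threshold} -- to pin $y$ down to a global relabeling and then exclude ``phantom'' labelings, i.e.\ non-permissible $y$ with no informative edges at all, whose existence would render the model non-identifiable and contradict being above the threshold. Turning this rigidity into a clean, fully general argument is, we expect, the main obstacle, and is presumably why the statement is posed as a conjecture rather than a theorem.
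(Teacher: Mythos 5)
There is no proof in the paper to compare against: the statement you are proving is Conjecture~\ref{conjecture}, which the paper explicitly leaves open. The authors only prove achievability (Theorem~\ref{theorem:exact-recovery}) under the additional Assumptions~\ref{ass:bounded-ratio} and \ref{ass:distinguishable} (or \ref{ass:distinguishable-strong}), via the two-phase algorithm, and they state in Section~\ref{sec:discussion} that establishing MAP achievability above the threshold of \eqref{eqn:threshold} without these assumptions is an open problem. So the relevant question is whether your argument closes that gap, and it does not.

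Your proposal is a reasonable union-bound program in the spirit of the paper's local-to-global picture: the single-vertex (genie) flip analysis with Poisson neighbor counts does yield the exponent $\lambda\nu_d D_+(\theta_i,\theta_j)$, matching the Cram\'er lower bound of Appendix~\ref{sec:impossibility}, and the small-component bookkeeping is plausible. But the decisive steps are exactly the ones you yourself flag as ``expected obstacles'' rather than resolve: (i) a quantitative, uniform statement that every non-permissible pattern on a large visibility-connected cluster forces $\Omega(m\log n)$ informative edges \emph{without} the distinctness assumption (including patterns that are nearly a global symmetry), and (ii) the $|T|=\Theta(n)$ regime, where boundary regularity fails, single-community connectivity is not implied by being above the threshold, and ``phantom'' non-permissible labelings must be excluded. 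In addition, the conjecture drops Assumption~\ref{ass:bounded-ratio}, yet your treatment of the within-$T$ interaction terms (the analogue of the $B_2(x)$ terms in Proposition~\ref{prop:low-discrepancy}) implicitly needs some control of the log-likelihood ratio; the paper requires the bounded-ratio assumption even under distinctness and only removes it for the Gaussian case by a separate concentration argument (Appendix~\ref{sec:gaussian_proofs}). As written, your text is an outline of an attack on an open problem, with the hard core acknowledged but not supplied, so it cannot be accepted as a proof of the conjecture.
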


Our linear-time algorithm (Algorithm \ref{alg:almost-exact}, Section \ref{sec:algorithm}) succeeds in exact recovery whenever the model parameters are above the information-theoretic threshold \eqref{eqn:threshold}. However, the algorithm requires the following assumptions:
\begin{assumption}[Bounded log-likelihood difference]\label{ass:bounded-ratio}
There exists $\eta>0$ such that $\log(p_{ij}(y)/p_{ab}(y))<\eta$ for any $i,j,a,b\in Z$ and any $y\in \R$.    
\end{assumption}
  
\begin{assumption}[Distinctness]\label{ass:distinguishable}
  For any $i\in Z$ and any $r\neq s \in Z$, $P_{ir} \neq P_{is}$.
\end{assumption}

Assumption \ref{ass:distinguishable} enables the differentiation of any two communities, $r$ and $s$, based on vertices from any community $i$. This assumption is satisfied in the two-community GSBM %when intra- and inter-community edge probabilities are different
and the geometric $\Z_2$ synchronization. However, there are notable exceptions where this assumption does not hold, such as the GSBM with $k \geq 3$ communities where all intra- and inter-community probabilities are $a$ and $b$, respectively; or geometric submatrix localization. To achieve exact recovery in the one-dimensional GHCM with one permissible relabeling, we require the following stronger distinctness assumption that all distributions in $P$ are distinct.  
\begin{assumption}[Strong distinctness]\label{ass:distinguishable-strong} 
For any $i, j, r, s \in Z$, $P_{ij} \neq P_{rs}$.
\end{assumption}

The following theorem states the achievability results.
\begin{theorem}[Achievability]\label{theorem:exact-recovery}
Under Assumptions \ref{ass:bounded-ratio} and \ref{ass:distinguishable}, exact recovery is achievable with efficient runtime if $\lambda \nu_d \min_{i \neq j}  D_+(\theta_i, \theta_j) > 1$ and one of the following holds: (1) $d\ge2$; or (2) $d = 1$ and $\lambda>1$. %; or (3) $d=1$, $|\Omega_{\pi, P}|=1$. 
Under Assumptions \ref{ass:bounded-ratio} and \ref{ass:distinguishable-strong}, exact recovery is achievable with efficient runtime if $\lambda \nu_d \min_{i \neq j}  D_+(\theta_i, \theta_j) > 1$, $d=1$, and $|\Omega_{\pi, P}|=1$.
\end{theorem}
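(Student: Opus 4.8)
The plan is to analyze the two-phase algorithm (Algorithm~\ref{alg:almost-exact}) and show that, above the threshold, Phase~I produces an almost-exact labeling with a crucial \emph{spatial dispersion} of the remaining errors, and that Phase~II then corrects every vertex via a genie-style test. I would organize the argument around three pillars: (i) connectivity of an auxiliary ``visibility graph'' built from occupied regions; (ii) correctness of local label propagation under the Distinctness Assumption~\ref{ass:distinguishable} (resp.\ Strong Distinctness~\ref{ass:distinguishable-strong} in the $d=1$, $|\Omega_{\pi,P}|=1$ case); and (iii) a refinement analysis whose success probability is controlled by the CH divergence $\min_{i\ne j}D_+(\theta_i,\theta_j)$.

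\emph{Step 1 (Connectivity and seeding).} Partition $\cS_{d,n}$ into small hypercubic cells of side $\Theta((\log n)^{1/d})$, rescaled so that any two vertices in adjacent cells are mutually visible. A Poisson concentration / union-bound argument (expected occupancy $\Theta(\log n)$ per cell) shows that with probability $1-o(1)$ every cell is ``densely occupied'' and the cell-adjacency graph is connected; this is exactly the visibility-graph connectivity established in Section~\ref{sec:connectivity}. Within one densely occupied seed cell, I would show that a brute-force MAP labeling of the induced subgraph on $\Theta(\log n)$ vertices is correct up to a permissible relabeling with probability $1-o(1)$ --- here is where $\lambda\nu_d\min_{i\ne j}D_+(\theta_i,\theta_j)>1$ first enters, guaranteeing the within-cell MAP is consistent; in the degenerate $d=1$, $\lambda<1$ sub-case this step can fail, which is why that regime is excluded.

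\emph{Step 2 (Phase~I: propagation to almost-exact recovery).} Propagate labels cell-by-cell along a spanning structure of the connected cell graph: given correctly-labeled neighboring cells, classify each new vertex $u$ by a plug-in likelihood test using its observations $\{Y_{uv}\}$ to already-labeled vertices $v$. Assumption~\ref{ass:distinguishable} is what makes this test well-posed --- for a vertex of (unknown) community $i$, its observation distributions against communities $r\ne s$ differ, so the pooled log-likelihood separates the true label from every alternative; Assumption~\ref{ass:bounded-ratio} bounds increments so Bernstein-type tail bounds apply to the log-likelihood-ratio sums over $\Theta(\log n)$ neighbors. The key deliverable is not merely $o(n)$ total errors but that \emph{no visibility-ball contains more than a small constant number of misclassified vertices} --- a local sparsity of errors --- obtained by a union bound over balls together with the exponential tail of each vertex's misclassification event. (In the $d=1$, $|\Omega_{\pi,P}|=1$ case, the absence of symmetry plus Strong Distinctness~\ref{ass:distinguishable-strong} is needed to pin down a single global labeling during propagation rather than a locally-consistent but globally-inconsistent one.)

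\emph{Step 3 (Phase~II: refinement = genie analysis).} Re-classify every vertex $u$ by the MAP/likelihood test against the Phase~I labels of all its visible neighbors. Because Phase~I errors are locally sparse, the neighbor labels fed into the test are correct except for $O(1)$ of them, which perturbs each log-likelihood-ratio by only $O(1)$ and does not change the outcome. Thus Phase~II succeeds iff the \emph{genie estimator} (true labels of all neighbors) succeeds at every vertex simultaneously. The genie's per-vertex failure probability is, up to $n^{o(1)}$ factors, $n^{-\lambda\nu_d\min_{i\ne j}D_+(\theta_i,\theta_j)}$ --- this is the matching upper bound to the Cramér lower bound used in Theorem~\ref{theorem:impossibility}, and is exactly the ``local-to-global amplification'' computation: the number of visible neighbors concentrates at $\lambda\nu_d\log n$, and the optimal test between the two closest community-profiles $\theta_i,\theta_j$ has error exponent $D_+(\theta_i,\theta_j)$ per neighbor. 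A union bound over the $\Theta(\lambda n)$ vertices then gives total failure probability $n^{1-\lambda\nu_d\min_{i\ne j}D_+(\theta_i,\theta_j)+o(1)}=o(1)$ precisely above the threshold. Finally, the runtime is linear in the number of observations: Phase~I visits each vertex and its $\Theta(\log n)$ incident observations a bounded number of times, and Phase~II does one pass; the only non-trivial cost, the brute-force seed labeling, is over $O(\log n)$ vertices and hence $\mathrm{poly}(n)$-free, i.e.\ $n^{o(1)}$, absorbed into the $\Theta(n\log n)$ edge count.

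\emph{Main obstacle.} The delicate part is Step~2's \emph{dispersion} guarantee: a vanishing error fraction alone is not enough, because Phase~II's per-vertex test must see mostly-correct neighbors; I expect the bulk of the work to be a careful inductive/union-bound argument over the propagation order showing that misclassifications never cluster --- handling the dependence between a vertex's fate and the (already partly erroneous) labels it is tested against, and doing so uniformly over all $\Theta(n/\log n)$ cells. The interplay of Assumptions~\ref{ass:bounded-ratio}--\ref{ass:distinguishable} with the Poisson geometry is exactly what tames this dependence, and the $d=1$ low-intensity exclusions are the price paid when the geometry is too sparse to seed or propagate reliably.
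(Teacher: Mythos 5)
There is a genuine gap in your Step~1. You assert that, after partitioning $\mathcal{S}_{d,n}$ into cells of volume $\Theta(\log n)$ whose adjacent cells are mutually visible, \emph{every} cell is densely occupied with probability $1-o(1)$, so the cell-adjacency graph is trivially connected. This fails in precisely the regime the theorem covers: mutual visibility of adjacent cells forces the cell volume to be $\chi\log n$ with $\chi$ a dimension-dependent constant strictly less than what would make $\lambda\chi>1$ near the threshold (e.g.\ $d=1$ with $1<\lambda<2$ and cells of length $\log n/2$, or $d\ge 2$ with $\lambda\nu_d$ just above $1$), so each cell is empty with probability $n^{-\lambda\chi}$ and w.h.p.\ polynomially many cells are empty. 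The paper's Phase~I is organized around this obstruction: it only labels $\delta$-occupied blocks, uses a \emph{data-dependent} propagation schedule given by a spanning tree of the visibility graph on occupied blocks, assigns label $*$ to vertices in unoccupied blocks (they are only fixed in Phase~II), and proves connectivity of that graph not by ``all cells occupied'' but by showing that clusters of unoccupied blocks are smaller than $K$ (Lemma~\ref{lem:unoccupied-cluster}, Proposition~\ref{lemma:connectivity}; Proposition~\ref{lem:visibility-d1-small-lambda} for $d=1$). Without this machinery your propagation can be interrupted, and your argument only covers a strictly smaller parameter range (roughly $\lambda$ large) than the statement requires.

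Two secondary misplacements. First, the condition $\lambda\nu_d\min_{i\neq j}D_+(\theta_i,\theta_j)>1$ does not enter the seed-cell analysis at all: Theorem~\ref{thm:initial_block} needs only Assumption~\ref{ass:bounded-ratio} together with distinctness (through $\rho<1$ and $\Phi_t<1$), and the threshold is used exclusively in Phase~II via Lemma~\ref{lem:exact-bnd}; likewise the exclusion of $d=1$, $\lambda\le 1$ (when $|\Omega_{\pi,P}|\ge 2$) comes from disconnection into segments, not from seed MAP failure. Second, your dispersion claim of $O(1)$ mislabeled neighbors per visibility ball is stronger than what the propagation plus the starred vertices in unoccupied blocks can give; the paper proves at most $\eta\log n$ errors per ball for arbitrarily small $\eta$ (Theorem~\ref{thm:phase1-summary}) and then absorbs the resulting $2\eta\beta\log n$ perturbation of the log-likelihood gap into the $\epsilon\log n$ slack of Lemma~\ref{lem:exact-bnd} using Assumption~\ref{ass:bounded-ratio} — that slack argument is what you would need to spell out. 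Finally, in the $d=1$, $|\Omega_{\pi,P}|=1$ case, strong distinctness is not about resolving a global relabeling ambiguity (there is none); it is what makes the initial-block MLE analysis valid simultaneously for the possibly polynomially many disconnected segments, each requiring failure probability $n^{-\Omega(\log n)}$ so that a union bound over segments closes.
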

Under Assumptions \ref{ass:bounded-ratio} and \ref{ass:distinguishable}, Algorithm \ref{alg:almost-exact} succeeds in exact recovery via a two-phase procedure. In the first phase, it partitions the torus $\mathcal{S}_{d,n}$ into sufficiently small blocks. We show that, with high probability, there exists a search order of these blocks in which the vertices occupying a particular block are all visible to the vertices occupying the preceding block. Using this search order, Phase I constructs a preliminary labeling, which achieves almost exact recovery. In Phase II, the algorithm achieves exact recovery by refining the preliminary labeling of Phase I. The refinement step mimics the genie-aided estimator, which labels a vertex given the correct labels of its visible vertices. The runtime of Algorithm \ref{alg:almost-exact} is $O(n \log n)$, which is linear time with respect to the number of edges of the graph. Therefore, the GHCM does not exhibit a computational-statistical gap under these two assumptions. 

In the one-dimensional case with one permissible relabeling, under the strong distinctness assumption, our algorithm achieves exact recovery at smaller values of $\lambda$, even when multiple segments of occupied blocks are disconnected in the torus. The modified algorithm %\ref{alg:almost-exact-1d} 
applies the two-phase approach to each segment of occupied blocks, ensuring exact recovery. %for additional parameterizations of the GHCM. 

\begin{table}[htbp]
\centering
\begin{tabular}{c|p{0.38\textwidth}|p{0.48\textwidth}}
\hline
 & \textbf{Impossibility} & \textbf{Achievability} \\
\hline
$d=1$
&
\begin{itemize}
    \item $\lambda \nu_d \min_{i \neq j} D_+(\theta_i,\theta_j) < 1$, or
    \item $\lambda < 1,\ |\Omega_{\pi,P}| \geq 2$
\end{itemize}
&
\begin{itemize}
    \item $\lambda \nu_d \min_{i \neq j} D_+(\theta_i,\theta_j) > 1$ and $\lambda > 1^\star$, or 
    \item $\lambda \nu_d \min_{i \neq j} D_+(\theta_i,\theta_j) > 1$ and $|\Omega_{\pi,P}| = 1^\dagger$
\end{itemize}
\\
\hline
$d\geq 2$
&
$\lambda \nu_d \min_{i \neq j} D_+(\theta_i,\theta_j) < 1$
&
$\lambda \nu_d \min_{i \neq j} D_+(\theta_i,\theta_j) > 1^\star$
\\
\hline
\end{tabular}
\caption{Summary of lower and upper bounds for exact recovery in the GHCM. In the achievability results, $^\star$ indicates that Assumptions \ref{ass:bounded-ratio} and \ref{ass:distinguishable} are required, and $^\dagger$ indicates that Assumptions \ref{ass:bounded-ratio} and \ref{ass:distinguishable-strong} are required.}
\label{tab:impossibility-achievability}
\end{table}

Phase I of Algorithm \ref{alg:almost-exact} achieves almost exact recovery for a wider range of parameterizations than the conditions required for exact recovery in Theorem \ref{theorem:exact-recovery}. 
  \begin{theorem}\label{theorem:almost-exact-recovery} Under Assumptions \ref{ass:bounded-ratio} and \ref{ass:distinguishable},
  almost exact recovery is achievable with efficient runtime in $G\sim \text{GHCM}(\lambda, n, \pi, P, d)$ whenever (1) $d \geq 2$ and $\lambda \nu_d > 1$;\footnote{Since the CH divergence is bounded as $0\le D_+(\cdot, \cdot)\le 1$, this condition here is weaker than that for the exact recovery stated in Theorem \ref{theorem:exact-recovery}.} or (2) $d = 1$ and $\lambda>1$\footnote{When $d=1$, $\nu_1 = 2$ and thus $\lambda \nu_1>1$ gives $\lambda>1/2$. The condition $\lambda>1$ here is hence stronger than the previous one $\lambda \nu_1>1$ for $d=1$.}. Under Assumptions \ref{ass:bounded-ratio} and \ref{ass:distinguishable-strong}, almost exact recovery is achievable with efficient runtime if $d=1$ and $|\Omega_{\pi, P}|=1$. 
  %\begin{enumerate}%[topsep=0.4em]\setlength\itemsep{-0.1em} 
  %\item $d \geq 2$ and $\lambda \nu_d > 1$; or
  %    \item $d = 1$ and $\lambda>1$\footnote{When $d=1$, $\nu_1 = 2$ and thus $\lambda \nu_1>1$ gives $\lambda>1/2$.}; or \item $d = 1$ and $|\Omega_{\pi, P}| = 1$. \end{enumerate}
  %\xn{Please check the conditions here again, thanks!} \julia{yes, I agree with this. The condition is essentially connectivity of the visibility graph, except $d=1$, $|\Omega_{\pi, P}| = 1$ for which we add Assumption 3 to escape the connectivity requirement.}
\end{theorem}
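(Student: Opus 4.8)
The plan is to analyze Phase~I of Algorithm~\ref{alg:almost-exact} directly via a local propagation argument on a block decomposition of the torus $\mathcal{S}_{d,n}$. First I would partition $\mathcal{S}_{d,n}$ into axis-aligned blocks of side length $c(\log n)^{1/d}$ for a small constant $c$, chosen so that any two vertices lying in the same block, or in two adjacent blocks, are mutually visible (this uses that the diameter of a union of two adjacent blocks is below $(\log n)^{1/d}$). Call a block \emph{occupied} if it contains at least one vertex; since each block has volume $\Theta(\log n)$ and the point process has intensity $\lambda$, a Poisson tail bound plus a union bound over the $\Theta(n/\log n)$ blocks shows that, with high probability, every block contains $\Theta(\lambda \log n)$ vertices when $\lambda\nu_d>1$ (resp.\ $\lambda>1$ in $d=1$), so in fact \emph{all} blocks are occupied and, moreover, each pair of vertices in a block sees $\Theta(\log n)$ common neighbors in the adjacent blocks.

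The core of the argument is a label-propagation scheme on the \emph{block visibility graph} (whose connectivity is guaranteed by the results of Section~\ref{sec:connectivity} — this is exactly the RGG-connectivity input I would invoke). I would fix a spanning structure (a search order / breadth-first order of the occupied blocks in which every block's vertices are all visible to the previously labeled block's vertices), seed the labeling on an initial block by brute force over all $k^{O(\log n)}$ labelings of that block (choosing the labeling maximizing an appropriate likelihood), and then, block by block in the search order, label each new vertex $v$ by a plurality/likelihood vote over its already-labeled visible neighbors in the preceding block. The key estimate is that, conditioned on the preceding block being \emph{correctly} labeled up to a permissible relabeling, Assumption~\ref{ass:distinguishable} guarantees that the vote correctly identifies $x^\star(v)$ with probability $1-o(1)$: the log-likelihood of $v$'s observations against the competing hypothesis $r$ vs.\ $s$ is a sum of $\Theta(\log n)$ i.i.d.\ bounded terms (boundedness from Assumption~\ref{ass:bounded-ratio}) whose mean gap is a positive constant times the KL-type separation between $P_{ir}$ and $P_{is}$, so a Chernoff bound makes the per-vertex error $o(1)$ — indeed $n^{-\Omega(1)}$ if the block is large enough, which is more than enough. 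Summing the per-vertex error probabilities over all $\Theta(\lambda n)$ vertices shows the expected number of misclassified vertices is $o(n)$, giving almost exact recovery; the permissible-relabeling issue is handled by fixing the permutation once on the seed block and propagating it consistently. For the $d=1$, $|\Omega_{\pi,P}|=1$ case under strong distinctness (Assumption~\ref{ass:distinguishable-strong}), the occupied blocks may form several disconnected segments, so I would run the above propagation \emph{within each segment} and then stitch segments together: strong distinctness means the label of every vertex is individually identifiable from its observation distributions (no residual symmetry), so no global sign/permutation ambiguity can accumulate across segments, and the same voting analysis applies segment-by-segment.

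The main obstacle I anticipate is controlling \emph{error accumulation along the propagation}: a naive union bound over blocks is fine for the per-vertex vote \emph{given} that the previous block is perfectly labeled, but Phase~I does not actually label the previous block perfectly — it only labels it almost perfectly, and a few stray misclassified voters could in principle cascade. The right way to handle this is to show that with high probability \emph{no block has more than a small constant fraction of its vertices misclassified} (dispersion of errors), and then observe that a plurality vote over $\Theta(\log n)$ voters tolerates a small constant fraction of corrupted votes — i.e.\ the vote margin from the $\Theta(\log n)$ correct voters dominates the at most $o(\log n)$ (or small-constant-fraction) adversarially-set ones. Establishing this "few errors per block, hence self-correcting" invariant inductively along the search order — rather than a blunt union bound — is the technically delicate part, and it is also precisely the property that Theorem~\ref{theorem:exact-recovery}'s Phase~II refinement will later rely on.
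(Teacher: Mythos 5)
There is a genuine gap at the very first step: your claim that, under only $\lambda\nu_d>1$ (or $\lambda>1$ for $d=1$), \emph{every} block of volume $\Theta(\log n)$ contains $\Theta(\log n)$ vertices with high probability is false in exactly the regime the theorem is about. For adjacent blocks to be mutually visible the block volume must be $\chi\log n$ with $\chi$ bounded away from $1$ (e.g.\ $\chi\le 1/2$ in $d=1$), and a block is then empty with probability $n^{-\lambda\chi}$; a union bound over the $\Theta(n/\log n)$ blocks only works when $\lambda\chi>1$, which for $d=1$ forces $\lambda>2$, not $\lambda>1$, and similarly fails for $d\ge 2$ when $\lambda\nu_d>1$ but $\lambda$ is small. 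So near the threshold there \emph{are} underoccupied blocks, your fixed lexicographic/BFS-over-all-blocks schedule can be interrupted, and the statement that each vertex automatically has $\Theta(\log n)$ labeled voters in the preceding block breaks down. This is precisely why the paper labels only $\delta$-occupied blocks (assigning $*$ elsewhere, which is harmless for almost exact recovery since such blocks carry $o(n)$ vertices in total), uses a data-dependent spanning-tree schedule on the $(\chi,\delta)$-visibility graph, and for $d\ge2$ proves connectivity via the "no cluster of $K$ unoccupied blocks" argument rather than full occupancy; your proposal invokes the connectivity results of Section~\ref{sec:connectivity} but then asserts a stronger occupancy property that would make them unnecessary and that does not hold.

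Two further points are flagged but not actually carried out. First, the error-propagation invariant you correctly identify as the delicate part is the core of the paper's proof: one conditions on the parent block's \emph{configuration} having at most a constant number $M$ of mistakes, uses Assumption~\ref{ass:bounded-ratio} to bound the contribution of the (adversarially placed) mislabeled voters by $e^{\eta M}$, gets a per-vertex error $O(n^{-c_2})$ uniformly over such configurations, and then uses independence of the votes within a block to stochastically dominate the block's error count by a binomial, yielding failure probability $O(n^{-9/8})$ per block so that a union bound over $\Theta(n/\log n)$ blocks closes the induction; your "sum per-vertex errors over all vertices" step does not by itself control the cascade. Second, in the $d=1$, $|\Omega_{\pi,P}|=1$ case there can be polynomially many ($\approx n^{1-\lambda/2}$) segments, so each segment's seed labeling must succeed with probability $1-n^{-\Omega(\log n)}$, not merely $1-o(1)$; moreover community-size concentration in a $\Theta(\log n)$-size seed block cannot be guaranteed simultaneously across all segments, which is the real reason Assumption~\ref{ass:distinguishable-strong} is invoked (it lets the MLE on each seed block be shown to make at most $c\log n$ errors with probability $n^{-\Omega(\log n)}$, and the propagation is then run tolerating $c\log n$ rather than $O(1)$ mistakes). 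Your remark that strong distinctness removes label ambiguity is true but is not the role it plays in the argument.
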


%Since Assumptions \ref{ass:bounded-ratio} and \ref{ass:distinguishable} are satisfied in both the two-community GSBM and the geometric $\mathbb{Z}_2$ synchronization scenarios, Algorithm \ref{alg:almost-exact} proves particularly effective. Accordingly, we present corollaries on the information-theoretic thresholds specific to these cases.

\subsubsection{Special Cases of the GHCM}

%\hfill\\
Next, we explore the applications of Theorem \ref{theorem:exact-recovery} and Algorithm \ref{alg:almost-exact} to specific spatial graph inference problems that satisfy Assumptions \ref{ass:bounded-ratio} and \ref{ass:distinguishable}. In each application, the information-theoretic threshold is characterized using the CH divergence computed by the specific parameters.
\begin{comment}
\begin{conjecture}\label{conjecture}
For any $\lambda>0, d\in\mathbb{N}$, $\pi\in\mathbb{R}^k$, and $P$ such that $\lambda \nu_d \min_{i \neq j}  D_+(\theta_i, \theta_j) > 1$, the MAP estimator achieves exact recovery with probability $1-o(1)$ for $G\sim \text{GHCM}(\lambda, n, \pi, P, d)$
\end{conjecture}
\end{comment}

\paragraph*{Geometric Stochastic Block Model}
By specifying $k=2$, $Z=\{\pm 1\}$, $\pi_{-1} = \pi_1 = 1/2$, and $P_{i,i} = \text{Bern}(a)$ and $P_{i,-i} = \text{Bern}(b)$ for $i\in \{\pm 1\}$, the GHCM in Definition \ref{def:ghcm} reduces to the two-community symmetric GSBM proposed by \cite{Sankararaman2018}. A follow-up work by \cite{Abbe2021} identified a parameter regime in which exact recovery is impossible. 
  \begin{theorem}[Theorem 3.7 in \cite{Abbe2021}]
  Let $\lambda > 0$, $d \in \mathbb{N}$, and $0 \leq b < a \leq 1$ satisfy 
  \$
  \lambda \nu_d \big(1-\sqrt{ab} - \sqrt{(1-a)(1-b)} \big) < 1, \$
  and let $G \sim \text{GSBM}(\lambda, n, a, b, d)$. Then any estimator $\widetilde{\sigma}$ fails to achieve exact recovery.
  \end{theorem}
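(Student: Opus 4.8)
The plan is to obtain this statement as an immediate corollary of Theorem~\ref{theorem:impossibility}(a); all that is needed is to evaluate (or merely upper bound) the Chernoff--Hellinger divergence appearing in the threshold \eqref{eqn:threshold} for the two-community symmetric GSBM. Recall that this model is the special case of the GHCM with $k=2$, $Z=\{\pm1\}$, $\pi_{\pm1}=1/2$, $P_{i,i}=\mathrm{Bern}(a)$ and $P_{i,-i}=\mathrm{Bern}(b)$. Consequently the labeled mixture vectors are $\theta_1=(p_{1,1},p_{1,-1})=(\mathrm{Bern}(a),\mathrm{Bern}(b))$ and $\theta_{-1}=(p_{-1,1},p_{-1,-1})=(\mathrm{Bern}(b),\mathrm{Bern}(a))$, and since $\{1,-1\}$ contains only one unordered pair of distinct labels, $\min_{i\neq j}D_+(\theta_i,\theta_j)=D_+(\theta_1,\theta_{-1})$.

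Next I would unfold the definition~\eqref{def:ch-discrete} of the CH divergence between vectors of distributions with prior $\pi=(1/2,1/2)$ and Bernoulli coordinates, which gives
\[
D_+(\theta_1,\theta_{-1}) \;=\; 1-\inf_{t\in[0,1]}\frac12\Big[\big(a^tb^{1-t}+(1-a)^t(1-b)^{1-t}\big)+\big(b^ta^{1-t}+(1-b)^t(1-a)^{1-t}\big)\Big].
\]
Plugging the feasible choice $t=1/2$ into the bracket yields $2\big(\sqrt{ab}+\sqrt{(1-a)(1-b)}\big)$, so the infimum is at most $\sqrt{ab}+\sqrt{(1-a)(1-b)}$ and therefore
\[
D_+(\theta_1,\theta_{-1}) \;\le\; 1-\sqrt{ab}-\sqrt{(1-a)(1-b)}.
\]
In fact equality holds, since each summand such as $a^tb^{1-t}$ is the exponential of an affine function of $t$ and hence convex, so the bracket is convex in $t$ and symmetric under $t\mapsto 1-t$, which forces its minimum to lie at $t=1/2$; but only the displayed one-sided bound is actually needed.

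Finally, combining the hypothesis $\lambda\nu_d\big(1-\sqrt{ab}-\sqrt{(1-a)(1-b)}\big)<1$ with the bound above gives $\lambda\nu_d\min_{i\neq j}D_+(\theta_i,\theta_j)\le \lambda\nu_d\big(1-\sqrt{ab}-\sqrt{(1-a)(1-b)}\big)<1$, and Theorem~\ref{theorem:impossibility}(a) then directly yields that every estimator $\widetilde{\sigma}$ fails to achieve exact recovery. There is essentially no obstacle here: the only points requiring care are the routine reduction of \eqref{def:ch-discrete} to the Bernoulli expression above and, if one prefers the exact identity over the inequality, the elementary convexity argument pinning the optimizer at $t=1/2$.
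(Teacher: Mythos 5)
Your proposal is correct and matches how the paper itself treats this statement: the paper cites it from Abbe--Baccelli--Sankararaman but recovers it (Corollary \ref{cor:gsbm_2_com}, part 1) exactly as you do, by specializing Theorem \ref{theorem:impossibility}(a) to $k=2$, $\pi_{\pm1}=1/2$, $P_{i,i}=\mathrm{Bern}(a)$, $P_{i,-i}=\mathrm{Bern}(b)$ and evaluating the CH divergence, where the symmetric Bernoulli mixture gives $D_+(\theta_1,\theta_{-1})=1-\sqrt{ab}-\sqrt{(1-a)(1-b)}$ with the optimizer at $t=1/2$. Your observation that only the one-sided bound (via plugging in $t=1/2$) is needed, and that no circularity arises since the paper's Theorem \ref{theorem:impossibility} is proved from Cram\'er's theorem and the second-moment conditions rather than from the cited GSBM result, is also accurate.
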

  Abbe, Baccelli, and Sankararaman in \cite{Abbe2021} conjectured that the above result is tight, but only established that exact recovery is achievable for sufficiently large $\lambda > \lambda(a,b,d)$ \cite[Theorem 3.9]{Abbe2021}. In this regime, they provided a polynomial-time algorithm based on the observation that the relative community labels of two nearby vertices can be determined with high accuracy by counting their common neighbors. By taking $\lambda > 0$ large enough to amplify the density of vertices, the failure probability of pairwise classification 
  %\ew{is pairwise classification here the same as counting common neighbors in the previous sentence?} \julia{Yes, it is. I think it's clear since ``pairwise classification'' echoes ``relative community labels.''} 
  can be taken arbitrarily small. Our results in Theorem \ref{theorem:exact-recovery} using the $\text{GHCM}$ confirms the conjecture of \cite{Abbe2021}, yielding the second part of the following corollary. 
  %and extends the exact recovery result of the two-community symmetric GSBM of \cite{Gaudio2024}\ew{as in the following corollary}. 
  
\begin{corollary}[Two-community symmetric GSBM]\label{cor:gsbm_2_com}
    For $\textnormal{GHCM}(\lambda, n, \pi, P, d)$ with $Z=\{\pm 1\}$, $\pi_{-1} = \pi_{1}=1/2$, $P_{1, 1}=P_{-1, -1}=\text{Bern}(a)$, and $P_{-1, 1}=P_{1, -1}=\text{Bern}(b)$,
    \begin{enumerate}
        \item any estimator fails at exact recovery whenever
        \[  \lambda \nu_d \big(1 - \sqrt{ab} - \sqrt{(1-a)(1-b)}\big) < 1, \]
        or whenever $d=1$ and $\lambda < 1.$
        \item there exists a polynomial-time algorithm achieving exact recovery whenever 
        \[  \lambda \nu_d \big(1 - \sqrt{ab} - \sqrt{(1-a)(1-b)}\big) > 1, \]
        and either (1) $d\ge2$; or (2) $d = 1$ and $\lambda>1$.
    \end{enumerate}
\end{corollary}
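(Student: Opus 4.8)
The plan is to obtain Corollary~\ref{cor:gsbm_2_com} as a direct specialization of the general theory: the impossibility part from Theorem~\ref{theorem:impossibility} and the achievability part from Theorem~\ref{theorem:exact-recovery}. The only real work is verifying that the GHCM with the stated Bernoulli parameters satisfies the hypotheses of the achievability theorem, and computing the CH divergence $\min_{i\neq j} D_+(\theta_i,\theta_j)$ explicitly for this two-community symmetric instance so that the abstract threshold $\lambda\nu_d \min_{i\neq j} D_+(\theta_i,\theta_j)=1$ becomes the concrete condition $\lambda\nu_d\bigl(1-\sqrt{ab}-\sqrt{(1-a)(1-b)}\bigr)=1$.

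For the impossibility direction, I would invoke Theorem~\ref{theorem:impossibility}(a) together with the CH divergence computation below, and note that part~(b) of Theorem~\ref{theorem:impossibility} applies because $Z=\{\pm1\}$ with $\pi_{-1}=\pi_1=1/2$ and the global flip $\omega(i)=-i$ is a permissible relabeling (it preserves $\pi$ since the priors are equal, and preserves $P$ since $P_{1,1}=P_{-1,-1}$ and $P_{1,-1}=P_{-1,1}$), so $|\Omega_{\pi,P}|\geq 2$; hence $d=1$, $\lambda<1$ forces failure. For the achievability direction, I would check Assumption~\ref{ass:bounded-ratio} and Assumption~\ref{ass:distinguishable}: the bounded log-likelihood difference holds because all four distributions $P_{ij}$ are Bernoulli with parameters in $\{a,b\}$ and (implicitly, as in prior GSBM work) $0<b<a<1$, so the finitely many likelihood ratios $p_{ij}(y)/p_{ab}(y)$ for $y\in\{0,1\}$ are bounded; distinctness, Assumption~\ref{ass:distinguishable}, holds because for each $i\in\{\pm1\}$ we need $P_{i,1}\neq P_{i,-1}$, which is exactly $\mathrm{Bern}(a)\neq\mathrm{Bern}(b)$, true since $a\neq b$. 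Then Theorem~\ref{theorem:exact-recovery}(1)--(2) gives a polynomial-time (indeed $O(n\log n)$) algorithm achieving exact recovery whenever $\lambda\nu_d\min_{i\neq j}D_+(\theta_i,\theta_j)>1$ and either $d\geq2$, or $d=1$ and $\lambda>1$.

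It remains to evaluate the CH divergence. Here $k=2$, the priors are $\pi=(1/2,1/2)$, and $\theta_1=(p_{1,1},p_{1,-1})=(\mathrm{Bern}(a),\mathrm{Bern}(b))$ while $\theta_{-1}=(p_{-1,1},p_{-1,-1})=(\mathrm{Bern}(b),\mathrm{Bern}(a))$. By the discrete definition~\eqref{def:ch-discrete},
\[
D_+(\theta_1,\theta_{-1}) = 1 - \inf_{t\in[0,1]} \Bigl[ \tfrac12 \sum_{y\in\{0,1\}} p_a(y)^t p_b(y)^{1-t} + \tfrac12 \sum_{y\in\{0,1\}} p_b(y)^t p_a(y)^{1-t} \Bigr],
\]
where $p_a,p_b$ are the $\mathrm{Bern}(a),\mathrm{Bern}(b)$ PMFs. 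The function inside the infimum is symmetric about $t=1/2$ and convex in $t$, so the infimum is attained at $t=1/2$, giving value $\sum_{y\in\{0,1\}}\sqrt{p_a(y)p_b(y)} = \sqrt{ab}+\sqrt{(1-a)(1-b)}$; hence $D_+(\theta_1,\theta_{-1}) = 1-\sqrt{ab}-\sqrt{(1-a)(1-b)}$. Since there are only two communities, this single value is $\min_{i\neq j}D_+(\theta_i,\theta_j)$, and substituting into~\eqref{eqn:threshold} yields the stated threshold. The main (and only mildly delicate) obstacle is justifying that the infimum over $t$ is pinned to $t=1/2$; this follows from the symmetry $\theta_1\leftrightarrow\theta_{-1}$ under $t\leftrightarrow 1-t$ combined with convexity of $t\mapsto x^t y^{1-t}$, but one should state it cleanly rather than wave at it.
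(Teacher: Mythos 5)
Your proposal is correct and follows essentially the same route as the paper: the corollary is obtained by specializing Theorem \ref{theorem:impossibility} and Theorem \ref{theorem:exact-recovery}, with the global flip giving $|\Omega_{\pi,P}|\ge 2$, the two assumptions verified exactly as you describe (with the implicit nondegeneracy $0<b<a<1$ for the bounded log-likelihood ratio), and the CH divergence evaluated in closed form. Your symmetry-plus-convexity argument pinning the infimum to $t=1/2$, yielding $D_+(\theta_1,\theta_{-1})=1-\sqrt{ab}-\sqrt{(1-a)(1-b)}$, is precisely the computation the paper relies on.
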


%\ew{Did we say the edge is symmetric/graph is undirected in the GSBM case?} \julia{I added $Y_{uv} = Y_{vu}$ to the description of the GHCM.}
Additionally, Theorem \ref{theorem:exact-recovery} extends the result to the general GSBM with $k$ communities. 
\begin{corollary}[General GSBM]\label{cor:gsbm}
    Consider $\textnormal{GHCM}(\lambda, n, \pi, P, d)$ with $Z = [k]$, $P_{ij} = \text{Bern}(a_{ij})$ with $a_{ij} = a_{ji}$ for $i,j \in Z$. Then,
    \begin{enumerate}
        \item any estimator fails at exact recovery  whenever
        \[  \lambda \nu_d \Big( 1 - \max_{i\neq j}\inf_{t\in [0, 1]}\sum_{r=1}^k \pi_r \big(a_{ir}^t a_{jr}^{1-t} + (1-a_{ir})^t (1-a_{jr})^{1-t}\big) \Big) < 1, \]
         or whenever $d=1$, $\lambda<1$, and $|\Omega_{\pi, P}|\ge 2$.%}\julia{agreed}
        \item there exists a polynomial-time algorithm achieving exact recovery whenever Assumption \ref{ass:distinguishable} holds,
        \[  \lambda \nu_d \Big( 1 - \max_{i\neq j}\inf_{t\in [0, 1]}\sum_{r=1}^k \pi_r \big(a_{ir}^t a_{jr}^{1-t} + (1-a_{ir})^t (1-a_{jr})^{1-t}\big) \Big) > 1, \]
        and either (1) $d\ge2$; or (2) $d = 1$, $\lambda>1$; or (3) $d=1$, $|\Omega_{\pi, P}|=1$, and Assumption \ref{ass:distinguishable-strong} holds. %\xn{I guess here the case $d=1$, $|\Omega_{\pi, P}|=1$ requires Assumption \ref{ass:distinguishable-strong}?} \julia{yes}
    \end{enumerate}

\end{corollary}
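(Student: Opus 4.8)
The plan is to obtain Corollary \ref{cor:gsbm} as a direct consequence of Theorems \ref{theorem:impossibility} and \ref{theorem:exact-recovery}, since the $k$-community GSBM in the statement is precisely the instance of $\mathrm{GHCM}(\lambda,n,\pi,P,d)$ with $Z=[k]$ and $P_{ij}=\mathrm{Bern}(a_{ij})$: each observation $Y_{uv}$ over a visible pair is an independent $\mathrm{Bern}(a_{x^\star(u),x^\star(v)})$ edge indicator, and $Y_{uv}=0$ otherwise. So the only real work is to evaluate the CH-divergence quantity $\lambda\nu_d\min_{i\neq j}D_+(\theta_i,\theta_j)$ appearing in those theorems for the Bernoulli choice of $P$, and to check that Assumptions \ref{ass:bounded-ratio}, \ref{ass:distinguishable}, and (where needed) \ref{ass:distinguishable-strong} reduce to the hypotheses stated in the corollary.

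First I would carry out the CH-divergence computation. Here $p_{ir}$ is the PMF of $\mathrm{Bern}(a_{ir})$ on $\{0,1\}$, so for every $i\neq j$ and every $t\in[0,1]$,
\[
\sum_{r=1}^k \pi_r \sum_{x\in\{0,1\}} p_{ir}(x)^t p_{jr}(x)^{1-t} \;=\; \sum_{r=1}^k \pi_r\bigl(a_{ir}^t a_{jr}^{1-t} + (1-a_{ir})^t(1-a_{jr})^{1-t}\bigr).
\]
Plugging this into the definition \eqref{def:ch-discrete} of the vector CH divergence gives $D_+(\theta_i,\theta_j) = 1 - \inf_{t\in[0,1]}\sum_{r}\pi_r(a_{ir}^t a_{jr}^{1-t}+(1-a_{ir})^t(1-a_{jr})^{1-t})$, and taking the minimum over $i\neq j$ turns the outer $\min_{i\ne j}$ of the $(1-\cdot)$ expression into $1$ minus the $\max_{i\ne j}$; thus $\lambda\nu_d\min_{i\neq j}D_+(\theta_i,\theta_j)$ coincides exactly with the bracketed quantity in the corollary.

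With this identity in hand, Part 1 is immediate: the first displayed condition is exactly $\lambda\nu_d\min_{i\ne j}D_+(\theta_i,\theta_j)<1$, so Theorem \ref{theorem:impossibility}(a) applies, and the second condition ($d=1$, $\lambda<1$, $|\Omega_{\pi,P}|\ge2$) is verbatim the hypothesis of Theorem \ref{theorem:impossibility}(b). For Part 2 I would first check Assumption \ref{ass:bounded-ratio} — for Bernoulli PMFs bounded away from $0$ and $1$ the log-likelihood differences are uniformly bounded — and then invoke Theorem \ref{theorem:exact-recovery}: Assumption \ref{ass:distinguishable} is assumed in the statement and yields exact recovery in polynomial (indeed $O(n\log n)$) time when $\lambda\nu_d\min_{i\ne j}D_+(\theta_i,\theta_j)>1$ and either $d\ge2$ or ($d=1$, $\lambda>1$); adding Assumption \ref{ass:distinguishable-strong} covers the remaining case $d=1$, $|\Omega_{\pi,P}|=1$.

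The substantive content lives entirely in the already-proved Theorems \ref{theorem:impossibility} and \ref{theorem:exact-recovery}, so I do not expect a genuine obstacle; the two points needing slight care are (i) pushing $\min_{i\ne j}$ through the $1-\inf_t(\cdot)$ expression without sign errors, and (ii) the degenerate parameter choices $a_{ij}\in\{0,1\}$, which make some likelihood ratios infinite and so violate Assumption \ref{ass:bounded-ratio} as literally stated; these can be handled by a separate elementary argument (an observed edge with $a_{ij}=1$, or a non-edge with $a_{ij}=0$, deterministically constrains the relative labels) or simply by restricting to $a_{ij}\in(0,1)$.
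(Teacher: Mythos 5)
Your proposal matches the paper's (implicit) proof exactly: the paper derives Corollary \ref{cor:gsbm} by specializing Theorems \ref{theorem:impossibility} and \ref{theorem:exact-recovery} to $P_{ij}=\mathrm{Bern}(a_{ij})$, where the Bernoulli computation of $\phi_t$ turns $\lambda\nu_d\min_{i\neq j}D_+(\theta_i,\theta_j)$ into precisely the bracketed expression in the statement, and Assumption \ref{ass:bounded-ratio} holds automatically for $a_{ij}\in(0,1)$. Your added remark about the degenerate values $a_{ij}\in\{0,1\}$ is a fair observation the paper does not address, but it does not change the argument.
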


\paragraph*{Geometric $\mathbb{Z}_2$ synchronization}
Taking $P_{ij}$ to be a Gaussian distribution for all $i,j$ yields the geometric versions of Gaussian network inference problems such as $\mathbb{Z}_2$-synchronization. Note that such Gaussian problems violate the bounded log-likelihood ratio condition of Assumption \ref{ass:bounded-ratio}. However, as we will show in Appendix \ref{sec:gaussian_proofs}, our algorithm still achieves exact recovery in the Gaussian case as long as the distinctness assumption is satisfied. As a result, we have the following results on the geometric $\Z_2$ synchronization.

\begin{proposition}[Geometric $\mathbb{Z}_2$ synchronization]\label{cor:gsync}
Consider $\textnormal{GHCM}(\lambda, n, \pi, P, d)$ with $Z=\{\pm 1\}$ and  $P_{ij} = \text{N}(ij, \sigma^2)$. Observe $|\Omega_{\pi, P}|= 2$ if and only if the prior is symmetric, i.e., $\pi_{-1}=\pi_{1}$. Otherwise, $|\Omega_{\pi, P}|= 1$. Therefore,
\begin{enumerate}
    \item any estimator fails at exact recovery with high probability whenever \[ \lambda \nu_d \Big( 1 - \exp\Big(-\frac{1}{2\sigma^2}\Big)\Big) < 1,\]
    or whenever $d=1$, $\lambda < 1$, and $\pi_{-1}=\pi_{1}$.
    \item there exists a polynomial-time algorithm achieving exact recovery whenever
    \[ \lambda \nu_d \Big( 1 - \exp\Big(-\frac{1}{2\sigma^2}\Big)\Big) > 1, \]
    and either (1) $d\geq 2$; or (2) $d=1$ and $\lambda > 1$. %; or(3) $ d=1, \pi_{-1}=\pi_{1}.$ \xn{Here (3) cannot happen since Assumption \ref{ass:distinguishable-strong} does not hold? Not all distributions are different.} \julia{yes, we aren't able to say anything about the balanced symmetric case with $\lambda \leq 1$ with current techniques}
\end{enumerate}
    
\end{proposition}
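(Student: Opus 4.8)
The plan is to obtain Proposition~\ref{cor:gsync} as a direct instantiation of the two master results, Theorem~\ref{theorem:impossibility} (impossibility) and Theorem~\ref{theorem:exact-recovery} (achievability), specialized to the Gaussian parameters $P_{ij}=\cN(ij,\sigma^2)$. To do this I first need to (i) evaluate $\min_{i\neq j}D_+(\theta_i,\theta_j)$ for these parameters and verify it equals $1-\exp(-1/(2\sigma^2))$, and (ii) determine $\Omega_{\pi,P}$, so that the hypotheses involving $|\Omega_{\pi,P}|$ in Theorems~\ref{theorem:impossibility} and~\ref{theorem:exact-recovery} translate into the stated conditions on the prior.

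For (i): writing $Z=\{\pm1\}$, the relevant vectors of distributions are $\theta_1=(\cN(-1,\sigma^2),\cN(1,\sigma^2))$ and $\theta_{-1}=(\cN(1,\sigma^2),\cN(-1,\sigma^2))$, and since $k=2$ there is a single pair $\{1,-1\}$ to minimize over. By the continuous definition~\eqref{def:ch-continuous}, $D_+(\theta_1,\theta_{-1})=1-\inf_{t\in[0,1]}\sum_{r\in\{\pm1\}}\pi_r\int_{\R} p_{1,r}(x)^t p_{-1,r}(x)^{1-t}\,dx$. Each integral is a Rényi-type integral of two Gaussians of common variance $\sigma^2$ whose means differ by $2$, which evaluates in closed form to $\exp(-2t(1-t)/\sigma^2)$, independent of $r$; hence the prior-weighted sum collapses to $\exp(-2t(1-t)/\sigma^2)$ for every choice of $\pi$, and its infimum over $t\in[0,1]$ is attained at $t=1/2$, giving $1-\exp(-1/(2\sigma^2))$. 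For (ii): the only nontrivial permutation of $\{\pm1\}$ is the flip $\omega(i)=-i$, and since $\cN(ij,\sigma^2)=\cN((-i)(-j),\sigma^2)$ the requirement $P_{ij}=P_{\omega(i),\omega(j)}$ in Definition~\ref{def:permissible} always holds; thus $\omega$ is permissible iff $\pi_{-1}=\pi_1$, so $|\Omega_{\pi,P}|=2$ exactly for the symmetric prior and $|\Omega_{\pi,P}|=1$ otherwise.

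With these facts, Part~1 is immediate: Theorem~\ref{theorem:impossibility}(a) gives failure of exact recovery when $\lambda\nu_d(1-\exp(-1/(2\sigma^2)))<1$, and Theorem~\ref{theorem:impossibility}(b) gives failure when $d=1$, $\lambda<1$ and $|\Omega_{\pi,P}|\geq2$, the latter being equivalent to $\pi_{-1}=\pi_1$. For Part~2 I would appeal to Theorem~\ref{theorem:exact-recovery}; the distinctness Assumption~\ref{ass:distinguishable} is easily checked, since $P_{i,i}=\cN(i,\sigma^2)\neq\cN(-i,\sigma^2)=P_{i,-i}$ for each $i\in\{\pm1\}$. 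The obstacle — and the reason this is a separate proposition rather than a one-line corollary — is that the bounded log-likelihood-difference Assumption~\ref{ass:bounded-ratio} \emph{fails} for Gaussians, because $\log(p_{ij}(y)/p_{ab}(y))$ is affine in $y$ and hence unbounded. So the main work, deferred to Appendix~\ref{sec:gaussian_proofs}, is to re-run the analysis of both phases of Algorithm~\ref{alg:almost-exact} using Gaussian (sub-Gaussian) concentration for sums of log-likelihood ratios in place of the uniform bound of Assumption~\ref{ass:bounded-ratio}: showing Phase~I still achieves almost exact recovery with sufficiently dispersed errors, and that the Phase~II genie-style refinement still corrects all remaining errors with high probability. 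Granting that adaptation, exact recovery holds whenever $\lambda\nu_d(1-\exp(-1/(2\sigma^2)))>1$ and either $d\geq2$ or ($d=1$ and $\lambda>1$), which is Part~2. The regime "$d=1$, small $\lambda$, asymmetric prior'' is (correctly) not claimed, since the strong distinctness Assumption~\ref{ass:distinguishable-strong} fails here as $P_{1,1}=P_{-1,-1}=\cN(1,\sigma^2)$.
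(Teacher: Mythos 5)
Your proposal is correct and follows essentially the same route as the paper: impossibility is read off directly from Theorem \ref{theorem:impossibility} (which needs neither Assumption \ref{ass:bounded-ratio} nor \ref{ass:distinguishable}), the permissible-relabeling count and the CH-divergence value $1-\exp(-1/(2\sigma^2))$ are computed exactly as you do, and achievability rests on the Gaussian re-analysis of the three places where Assumption \ref{ass:bounded-ratio} was used, which is precisely the content of Appendix \ref{sec:gaussian_proofs}. The only nit is the harmless typo $P_{i,i}=\cN(i,\sigma^2)$ (it should be $\cN(1,\sigma^2)$ for both $i\in\{\pm1\}$), which does not affect the distinctness check.
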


%\ew{Does +/- 1 show up anywhere in the bound? Does the same bound apply for other classes of Z?} \julia{$\mathbb{Z}_2$ synchronization is defined according to $\pm 1$ labels. Other Gaussian problems of the same type would be instances of the GHCM.}

\subsubsection{Robustness Under Monotone Adversaries}
%\hfill\\
Finally, we study the robustness of our algorithm to monotone corruptions, in the case of two-community GSBM. In order to consider the ``monotone'' adversarial errors, we impose the following assumption that the intra-community probabilities are larger than the inter-community probabilities.
\begin{assumption}\label{ass:monotone}
    In the GSBM, we assume $a_{ii} > a_{jr}$ for any $i,j,r\in Z$ and $j\neq r$.
\end{assumption}

Then, we define the semi-random graph model with monotone adversaries as follows.
\begin{definition}[Semi-random GSBM]
    In the semi-random GSBM$(\lambda, n, \pi, P, d)$, a random graph $G = (V, E)$ is first sampled from GSBM$(\lambda, n, \pi, P, d)$. Next, an adversary adds any edges within communities and removes any edges between communities, and returns the graph $G^\prime= (V, E^\prime)$.
\end{definition} 
% \begin{definition}[Semi-random GSBM]
%     In the semi-random GSBM$(\lambda, n, \pi, P, d)$, a graph $G^\prime$ is generated as follows. First, a random graph $G = (V, E)$ is sampled from GSBM$(\lambda, n, \pi, P, d)$. Then an adversary adds any edges within the communities and removes any edges between the communities, and obtains the graph $G^\prime= (V, E^\prime)$.
% \end{definition}

While these changes seem helpful, some algorithms, like spectral methods, are not robust to such monotone corruptions in the standard SBM \citep{Feige2001}. We show that our algorithm remains statistically optimal in the two-community GSBM, in the presence of a monotone adversary.
%\ew{what does semi-random mean?} \julia{An object that is part random, part deterministic/adversarial}

\begin{theorem}\label{thm:monotone}
Suppose that Assumption \ref{ass:monotone} holds. Let $G\sim \mathrm{GSBM}(\lambda, n, \pi, P, d)$ with $k=2$ communities. Let $G'$ be any monotone perturbation of $G$ that adds intra-community edges and removes inter-community edges. Then, there exists an efficient algorithm achieving exact recovery under the conditions in Theorem \ref{theorem:exact-recovery}.
\end{theorem}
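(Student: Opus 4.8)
The plan is to reuse the two-phase algorithm and its analysis from Theorem \ref{theorem:exact-recovery} essentially verbatim, exploiting the special structure of the monotone adversary in the two-community symmetric GSBM. Throughout, write the two communities as $Z = \{\pm 1\}$ with intra-community edge probability $a = a_{11} = a_{-1,-1}$ and inter-community edge probability $b = a_{1,-1}$, where Assumption \ref{ass:monotone} gives $a > b$. The key observation is that in the two-community case, the genie estimator for vertex $u$, given the true labels of all visible vertices, reduces to a \emph{majority vote}: letting $m_+(u)$ and $m_-(u)$ be the number of edges from $u$ to visible vertices labeled $+1$ and $-1$ respectively (among the $N(u)$ visible vertices, split into $N_+(u)$ with label $+1$ and $N_-(u)$ with label $-1$), the log-likelihood ratio is monotone increasing in $m_{x^\star(u)}(u) - m_{-x^\star(u)}(u)$. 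Now a monotone adversary only \emph{adds} edges among visible pairs in the same community and only \emph{deletes} edges among visible pairs in different communities. Hence, if $u$ has true label $+1$, every adversarial modification either increases $m_+(u)$ (adding a same-community edge to a $+1$ vertex) or decreases $m_-(u)$ (deleting a cross-community edge to a $-1$ vertex) — in both cases the genie's log-likelihood ratio computed \emph{as if the graph were unmodified} moves in the correct direction. Therefore, if the genie would have correctly classified $u$ on the original graph $G$, it still correctly classifies $u$ on $G'$ when run with the parameters $(a,b)$ of $G$. This is the monotonicity property that makes the refinement phase robust.

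The steps, in order, are as follows. First, I would record the visibility graph connectivity and occupancy results from Appendix \ref{sec:connectivity}; these are purely geometric and depend only on $\lambda, n, d$, hence are untouched by the adversary (the adversary modifies edge \emph{observations}, not vertex locations or the visibility relation). Second, I would verify that Phase I (local propagation to almost-exact recovery) still succeeds on $G'$. This requires a slightly more careful argument than Phase II, because the propagation scheme compares empirical subgraph statistics against the expected behavior under $(a,b)$, and the adversary can distort these; the fix is that almost-exact recovery only needs the propagation to be correct on all but a vanishing fraction of blocks, and one shows that the adversary, being monotone, can only create a controlled bias — the same-community density goes up and cross-community density goes down, which if anything \emph{strengthens} the signal used to align adjacent blocks. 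I would formalize this by re-deriving the concentration bounds from Appendix \ref{sec:almost_exact_recovery} with one-sided adversarial slack, using Assumption \ref{ass:monotone} ($a > b$, and more precisely the margin) to ensure the decision rule still has the correct sign. Third, for Phase II I would invoke the monotonicity property above: since Theorem \ref{theorem:exact-recovery} already establishes that above the threshold the genie misclassifies each vertex with probability $o(1/n)$ on $G$, and since any adversarial modification only helps the genie (pointwise, for each vertex), a union bound over the $\Theta(\lambda n)$ vertices shows the refined labeling is exactly correct on $G'$ with high probability. Finally, I would note the runtime is unchanged, since the adversary only alters which edges are present and the algorithm's complexity is linear in the number of edges (the adversary can at most make the graph denser by a constant factor in expectation within each visibility ball, which does not change the $O(n\log n)$ bound — or one simply observes the algorithm reads each present edge once).

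The main obstacle I anticipate is Phase I robustness, not Phase II. Phase II is clean because of the exact reduction to a monotone majority vote. Phase I, by contrast, relies on matching \emph{both} the intra- and inter-community empirical statistics in each block to distinguish the correct alignment of a new block against the already-labeled region, and a monotone adversary can make an inter-community pair of blocks look more like an intra-community pair only in the ``wrong'' direction — wait, actually it makes intra look more intra and inter look more inter, so the alignment test's margin only grows; the real subtlety is that the adversary can add many intra-community edges to a small set of vertices, creating locally atypical high-degree vertices that could confuse a threshold-based propagation rule. I would handle this by ensuring the Phase I rule depends only on aggregate block-to-block edge counts (which concentrate, with one-sided adversarial drift in the favorable direction) rather than on individual vertex degrees, and by re-checking that the ``error dispersion'' guarantee of Phase I — no local region has too many misclassified vertices — survives, since that dispersion is exactly what Phase II needs as input. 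If the original Phase I analysis is already written in terms of block-aggregated quantities with two-sided concentration, then replacing two-sided bounds by the favorable one-sided bounds is routine, and Assumption \ref{ass:monotone} is precisely the hypothesis that makes the favorable direction the correct one.
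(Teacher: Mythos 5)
Your Phase~II argument is sound and is essentially the paper's: the monotone changes shift each per-vertex log-likelihood difference deterministically in the favorable direction (this is Lemma~\ref{lem:exact-bnd-monotone}, $S'\le S$), so the refinement analysis of Theorem~\ref{thm:exact-reovery-result} carries over. Your Phase~I propagation sketch is also in the right spirit, though the specific fix you propose does not quite make sense in this model: blocks contain an i.i.d.\ mixture of both communities, so there is no block-level ``alignment'' to decide, and labels must be assigned per vertex. The paper handles exactly your worry by replacing the log-likelihood propagation with a per-vertex degree threshold at $(a+b)/2$ relative to the reference community (Algorithm~\ref{alg:propagation-monotone}) and proving a one-sided stochastic-domination plus Hoeffding bound (Lemma~\ref{lem:degree-profile-monotone}); also note the paper first reduces, WLOG, to $a_{11}=a_{-1,-1}=\min\{a_1,a_{-1}\}$ via an adversary-simulation argument, whereas you silently assume symmetric intra-community probabilities from the start, and you miss the caveat that when $\pi_{-1}=\pi_1$ only the partition (not the labeling) is recoverable, since the adversary can simulate a model with $P'_{-1,-1}=P'_{1,1}$.

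The genuine gap is the initial block. Your plan says to reuse the two-phase analysis ``essentially verbatim,'' but you never address the \texttt{Maximum a Posteriori} step on $V_{i_0}$, and its clean analysis (Theorem~\ref{thm:initial_block}, via the restricted MLE and the low/high-discrepancy propositions) does not transfer: those proofs are Chernoff bounds computed under the product distribution $P$, and after corruption the observations in $G'$ are neither independent nor distributed according to $P$, with the adversary free to correlate its changes arbitrarily. One could try a pointwise monotonicity argument for the initial-block likelihood (in the symmetric two-community case each monotone change only decreases $\ell_0(G,x)-\ell_0(G,x^\star)$ for every competing $x$), but you neither state nor prove such a claim, and it is delicate once $a_1\neq a_{-1}$ or priors enter the MAP. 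The paper instead replaces this subroutine entirely: it labels the initial block with the monotone-robust SBM algorithm of \cite{Liu2022} (after subsampling edges to fit that theorem's sparsity regime), obtaining at most $\delta_1\log n$ errors rather than exact recovery of the block, and then reworks the propagation lemma so that it tolerates $\delta_1\log n$ reference errors instead of the constant $M$. Without some treatment of this step, your argument for almost-exact recovery in Phase~I --- and hence the input guarantee your Phase~II relies on --- is incomplete.
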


\section{Exact Recovery Algorithm}\label{sec:algorithm}
This section presents our two-phase algorithm to achieve exact recovery in Theorem \ref{theorem:exact-recovery}. Phase I constructs an almost-exact labeling $\widehat{x} \colon V \to Z\cup\{*\}$, where the label $*$ indicates uncertainty. %A label of $0$ may be meaningful, whereas we don't really want to assign a meaning to an uncertain label. We could use a special character like $*$ instead, and then in the refinement phase we avoid the special characters. It would also work to give an arbitrary labeling (due to the robustness of the genie), but it feels cleaner to not label at all. 
%\julia{We should remark that refinement will work whether we ignore the unoccupied blocks or give them an arbitrary labeling.} 
Phase I is based on the following observation: for any $\delta > 0$, if we know the correct labels of some $\delta \log n$ vertices visible to a given vertex $v$, then we can correctly determine the label of $v$ with probability $1-n^{-c}$, for some $c(\delta)> 0$, by computing statistics using their pairwise observations. The algorithm partitions the torus into hypercubes of volume $\Theta (\log n)$ % (called \emph{blocks})
and produces an almost exact labeling of all blocks that contain at least $\delta \log n$ vertices %(called \emph{occupied blocks})
by an iterative label propagation scheme. Next, Phase II refines the labeling $\widehat{x}$ to $\widetilde x$ by mimicking the genie-aided estimator. Phase II builds upon a well-established approach in the SBM literature \citep{Abbe2015, Mossel2015} to refine an almost-exact labeling with dispersed errors into an exact labeling. Therefore, the main novelty of our algorithm lies in Phase I, which exploits the dense local structure of the geometric model.

We first describe the algorithm specialized to the case $d=1$ in Section \ref{sec:d_1}, in which the torus, $\mathcal{S}_{1,n}$, is an interval of length $n$. Then in Section \ref{sec:general_d}, we study the general $d\geq 2$ case, where several additional ideas are required to ensure uninterrupted propagation of label estimates over all occupied blocks.
 
\subsection{Exact Recovery for \texorpdfstring{$d=1$}{}}\label{sec:d_1}
 We first describe the simplest case when $d=1$ and $\lambda >2$.
  
  \vskip6pt
  \paragraph*{Algorithm for $\lambda > 2$} 
  The algorithm is presented in Algorithm \ref{alg:almost-exact-eg}. 
  In Phase I, we first partition the interval into blocks of length ${\log n}/{2}$ (Line \ref{line:step1-eg}) and define $V_i$ as the set of vertices in the $i$th block for $i\in [2n/\log n]$. In this way, any pair of vertices in adjacent blocks are within a distance of $\log n$ and visible to each other.
  The density $\lambda>2$ ensures a high probability that all blocks have $\Omega(\log n)$  
  vertices, as we later show in \eqref{eq:connected-H-lambda2}. 
  
  Next, we label %\julia{actually we label $\varepsilon_0 \log n$ vertices, taking care that $\varepsilon_0$ is small enough so that any occupied block contains at least $\varepsilon_0 \log n$ vertices} fewer than 
  $\varepsilon_0 \log n$ vertices from the initial block using the \texttt{Maximum a Posteriori (MAP)} subroutine (Line \ref{line:step2-eg}). If $|V_1| > \varepsilon_0 \log n$, we sample a subset $V_0\subset V_1$ with $|V_0| = \varepsilon_0 \log n$ to ensure the number of vertices labeled by the MAP estimator is small enough so that the runtime is $o(n \log n)$. In particular, we set $\varepsilon_0 \le 1/(2\log k)$ in Line \ref{line:epsilon_0-eg}. Note that there are $k^{\varepsilon_0\log n}$ possible labelings. Evaluating the posterior probability of each labeling requires all pairwise observations among vertices in $V_0$ and thus has runtime $O(\binom{\epsilon_0 \log n}{2}) = O(\log^2n)$. Thus, for sufficiently large $n$, the runtime of the MAP subroutine is 
\#\label{eq:map-runtime}
O\big(\log^2n \cdot k^{\epsilon_0 \log n} \big) = O\big(\log^2n \cdot n^{\epsilon_0 \log k}\big) \le O\big(\log^2n \cdot \sqrt{n} \big) = o(n\log n), 
\#
which is sublinear with respect to the number of edges in $G$.
  
Then the labeling of $V_0$ is propagated to $V_1^\prime$ and other blocks $V_i$ for $i\ge 2$ using the \texttt{Propagate} subroutine in Algorithm \ref{alg:propagation} (Lines \ref{line:step3-eg}-\ref{line:step3-end-eg}). The reference set $S$ in Algorithm \ref{alg:propagation} plays the role of $V_{i-1}$, and $S'$ plays the role of $V_i$. The \texttt{Propagate} subroutine labels the vertices in $V_i$ using the observations between $V_i$ and $V_{i-1}$ and the estimated labeling on $V_{i-1}$, by maximizing the log-likelihood, where $p_{jr}(y_{uv})$ is the PMF/PDF evaluated at $y_{uv}$. \textcolor{black}{This subroutine assigns labels to new nodes based on their observations to the vertices of most frequent label in the parent block.} In Theorem \ref{thm:phase1-summary}, we will demonstrate that Phase I achieves almost-exact recovery on $G$ under appropriate conditions. \textcolor{black}{The choice to propagate labels by referring to vertices of the most frequent label is for convenience. It is also possible to achieve almost exact recovery by labeling nodes relative to all nodes in the seed region, which we analyze in Appendix \ref{sect:propagation-alt}.}

   %\julia{This makes it sound like the distinctness assumption is to make the algorithm simpler, but in reality, there is a fundamental obstruction. How about: ``In order to classify the vertices in $V_i$, we use observations between $V_i$ and $V_{i-1}$.'' Then below the propagate algorithm, we can note the reason for the assumption, and hence the choice to find labels relative to the largest community in the reference block.}\xn{We will add a remark on why we use the largest community and thus need the distinctness assumption.}

\begin{algorithm}
\caption{Exact recovery for the GHCM ($d=1$ and $\lambda>2$)} \label{alg:almost-exact-eg}
\begin{algorithmic}[1]
    \Require{$G \sim \text{GHCM}(\lambda, n, \pi, P, 1)$ where $\lambda > 2$.}
    \Ensure{An estimated community labeling $\widetilde{x}: V \to Z$.}
    \vspace{5pt}
    \State{{\bf Phase I:}} 
    \State{Partition the interval $[-n/2, n/2]$ into $2n/\log n$ blocks\footnotemark \ of volume $\log n/2$ each. Let $B_i$ be the $i$th block and $V_i$ be the set of vertices in $B_i$ for $i\in [2n/\log n]$.} \label{line:step1-eg}
    \State{Set $\varepsilon_0 \leq \min\{1/(2\log k), |V_1|/\log n\}$. Sample $V_{0}\subset V_{1}$ such that $|V_{0}| = \varepsilon_0 \log n$. Set $V_{1}^\prime \leftarrow V_{1} \setminus V_{0}$.} \label{line:epsilon_0-eg} 
  \State{Apply \texttt{Maximum a Posteriori} (Algorithm \ref{alg:initial-block}) on input $G, V_{0}$ to obtain a labeling $\widehat{x}$ of $V_{0}$.}\label{line:step2-eg}  
  \State{If $V_1^\prime \neq \emptyset$, apply \texttt{Propagate} (Algorithm \ref{alg:propagation}) on input $G, V_{0}, V_{1}^\prime$ to determine the labeling $\widehat{x}$ on $V_{1}^\prime$.} \label{line:step3-eg}
      \For{$i=2,\cdots, 2n/\log n$}
      \State Apply \texttt{Propagate} (Algorithm \ref{alg:propagation}) on input $G, V_{i-1}, V_i$ to determine the labeling $\widehat{x}$ on $V_i$. 
      \EndFor \label{line:step3-end-eg}
      \vspace{5pt}
      \State{{\bf Phase II:}}
      \For{$u\in V$}  
      \State Apply \texttt{Refine} (Algorithm \ref{alg:refine}) on input $G, \widehat{x}, u$ to obtain $\widetilde{x}(u)$.
      \EndFor
      \end{algorithmic}
  \end{algorithm}
  \footnotetext[1]{The number of blocks is $\ceil{2n/\log n}$ if $2n/\log n$ is not an integer.}
  
  \begin{algorithm}
      \caption{\texttt{Maximum a Posteriori (MAP) Estimate}} 
      \label{alg:initial-block}
      \begin{algorithmic}[1]
      \Require{ Graph $G = (V,E)$ and vertex set $S \subset V$.}
      \Ensure{An estimated labeling $\widehat x_S \colon S\to Z$.} 
      \State Set
      \$
      %\widehat x_S = \argmax_{x\colon S\to Z} \sum_{u\in S}\sum_{\substack{v\in S \\ v\neq u, v\sim u}} \log \big(p_{x(u), x(v)}(y_{uv})\big). % This is MLE
      \widehat x_S = \argmax_{x\colon S\to Z} \mathbb{P}(x^\star=x | G).
      \$
      \end{algorithmic}    
  \end{algorithm}
  
 \begin{algorithm}
    \caption{\texttt{Propagate}} \label{alg:propagation}
    \begin{algorithmic}[1]
    \Require{ Graph $G = (V,E)$, mutually visible vertex sets $S, S' \subset V$, $S \cap S' = \emptyset$, and $\widehat{x}_S\colon S\to Z$.}
    \Ensure{An estimated labeling $\widehat x_{S'} \colon S'\to Z$.} 
    \State{Take $j=\argmax_{r\in Z} |\{v\in S\colon \widehat x_S(v)=r\}|$, so that $j$ represents the largest community in $S$.}\label{line:largest-community}
    \For{$u \in S'$}
        \$
        \widehat x_{S'}(u) = \argmax_{r\in Z}  \sum_{v\in S,\, %v\sim u, 
         \widehat x_S(v)=j} \log p_{jr}(y_{uv}).
        \$
    \EndFor
    \end{algorithmic}
\end{algorithm}

\begin{remark}
  The distinctness of distributions in Assumption \ref{ass:distinguishable} is crucial in allowing Algorithm \ref{alg:propagation} to succeed. To motivate the need for the distinctness assumption, we consider the case where two communities $i$ and $j$ are such that $P_{ir} = P_{j r}$ for all $r \in Z \setminus \{m\}$, while $P_{im} \neq P_{j m}$. Suppose that we are determining the labels of a set $S'$ relative to a reference set $S$. If $S$ does not contain vertices from community $m$, then communities $i$ and $j$ cannot be distinguished in the set $S'$. %Section \ref{subsec:distinctness} provides formal arguments to show that such a case can happen without the distinctness assumption.
  %On the other hand, 
  Assumption \ref{ass:distinguishable} precludes such situation; any community $r \in Z$ distinguishes communities $i$ and $j$. Therefore, we simply choose the largest community in $S$ according to the initial labeling, as shown in Line \ref{line:largest-community} of Algorithm \ref{alg:propagation}.
  \end{remark}
  
  \begin{algorithm}
      \caption{\texttt{Refine}} \label{alg:refine}
      \begin{algorithmic}[1]
      \Require{ Graph $G \sim \text{GHCM}(\lambda, n, \pi, P, d)$, vertex $u \in V$, labeling $\widehat{x} \colon V \to Z\cup\{*\}$.}
      \Ensure{ An estimated labeling $\widetilde{x}(u) \in Z$.}  
      \State{Set}
      \$
      \widetilde{x}(u) = \argmax_{i\in Z}\sum_{v\in V\setminus\{u\}, v\sim u} \log p_{i,\widehat x(v)} (y_{uv}).
      \$
      \end{algorithmic}
  \end{algorithm}
  
  Phase II refines the almost-exact labeling $\widehat{x}$ obtained from Phase I. Our refinement procedure mimics the so-called \emph{genie-aided} estimator \citep{Abbe2017}, which labels a vertex $u$ given the correct labels of all other vertices (i.e., $\{x^\star(v)\colon v\in V \setminus\{u\}\}$).
  The genie-aided MLE estimator gives that
\$%#\label{eq:genie}
x_{\textsf{genie}}(u)
 & = \argmax_{i\in Z} \log \pr(G=g\given x(u)=i, x_{-u} = x^\star_{-u}) \notag\\
 & = \argmax_{i\in Z} \sum_{v\in V\setminus\{u\}, v\sim u}\log p_{i, x^\star(v)} (y_{u v}).
 \$
Equivalently, by defining the likelihood function of class $i$ with reference labeling $x$ at a vertex $u\in V$ as
\#\label{eq:ell_global}
\ell_i(u, x) = \sum_{v\in V\setminus\{u\}, v\sim u}\log p_{i, x(v)} (y_{u v}),
\#
  the genie-aided estimator is $x_{\textsf{genie}}(u) = \argmax_{i\in Z}\{\ell_i(u, x^\star)\}$ for any $u\in V$. In contrast, the \texttt{Refine} subroutine (Algorithm \ref{alg:refine}) in Phase II uses the preliminary labeling from Phase I instead of the ground truth labeling. It assigns $\widetilde x(u) = \argmax_{i\in Z}(\ell_i(u,\widehat x))$ for any $u\in V$. 
  Since $\widehat{x}$ makes few errors compared with $x^\star$, we expect the log-likelihood $\ell_i(u,\widehat x)$ to be close to $\ell_i(u, x^\star)$ for any $u$. 

  \vskip6pt
  \paragraph*{Modified algorithm for general $\lambda > 1$} 
  If $1<\lambda<2$, partitioning $\cS_{1,n}$ into blocks of length $\log n/2$, as done in Line \ref{line:step1-eg} of Algorithm \ref{alg:almost-exact-eg}, prevents the algorithm from labeling all vertices. With smaller values of $\lambda$, there is a higher probability that a block is empty; by definition of the Poisson point process, each of the $2n/\log n$ blocks is independently empty with probability $e^{-\lambda\log n/2} = n^{-\lambda/2}$.
  %and $-\lambda/2>-1$.
  Existence of an empty block prematurely terminates the propagation procedure, resulting in an incomplete labeling. To overcome this bottleneck, we instead adopt smaller blocks of length $\chi\log n$, where $\chi < (1-1/\lambda)/2$, for any $\lambda>1$. Then, the algorithm only labels blocks with sufficiently many vertices, according to the following definition. For the rest of the paper, let $V(B)\subset V$ denote the set of vertices in a subregion $B\subset\cS_{d,n}$. 
  \begin{definition}[Occupied block]
      Given any $\delta>0$, a block $B \subset \cS_{d,n}$ is $\delta$-occupied if $|V(B)| > \delta\log n$. Otherwise, $B$ is $\delta$-unoccupied.
  \end{definition}
  
  We show that for sufficiently small $\delta > 0$, all but a negligible fraction of blocks are $\delta$-occupied. As a result, achieving almost-exact recovery in Phase I only requires labeling the vertices within the occupied blocks. To ensure successful propagation, we introduce the notion of visibility. Two blocks $B_i, B_j \in \cS_{d,n}$ are \emph{mutually visible}, defined as $B_i \sim B_j$, if 
  \[
      \sup_{x \in B_i, y \in B_j} \Vert x - y \Vert \leq (\log n)^{1/d}.\]
  Thus, if $B_i \sim B_j$, then the distance between any pair of vertices $u \in B_i$ and $v \in B_j$ is at most $(\log n)^{1/d}$. In particular, if $B_j$ is labeled and $B_i \sim B_j$, then we can propagate labels to $B_i$.
  
  Similar to the case of $\lambda > 2$, we propagate labels from left to right (Figure \ref{fig:propagate}). Despite the presence of unoccupied blocks, we establish that if $\lambda>1$ and $\chi$ is chosen appropriately, each block $B_i$ following the initial block $B_1$ has a corresponding block $B_j$ ($j<i$) to its left that is occupied and satisfies $B_i\sim B_j$. We thus modify Lines \ref{line:step3-eg}-\ref{line:step3-end-eg} of Algorithm \ref{alg:almost-exact-eg} so that a given block $B_i$ is propagated from one of the visible, occupied blocks to its left (Figure \ref{fig:propagate}). The modification is formalized in the general algorithm (Algorithm \ref{alg:almost-exact}) for any dimension $d$.

    \vskip6pt
  \paragraph*{Modified algorithm for $\frac{1}{2} < \lambda < 1$}
Under the case that $|\Omega_{\pi, P}|=1$ and Assumption \ref{ass:distinguishable-strong} (which imposes a stronger distinctness-of-distributions) holds, we can allow for smaller values of $\lambda$. When $\frac{1}{2} < \lambda < 1$, the visibility graph is disconnected, but at the same time does not have any isolated vertices (see \cite{Penrose2003}). As a result, since there exists only one permissible relabeling, we can apply the initial block labeling and the propagation algorithm to each segment of occupied blocks to obtain an unambiguous labeling, before proceeding to the refinement step to achieve exact recovery. Note that for $d=1$, the condition $\lambda \nu_d \min_{i \neq j}  D_+(\theta_i, \theta_j) > 1$ implies $\lambda > 1/2$, so this discussion completes the picture of Theorem \ref{theorem:exact-recovery} for $d=1$. This regime where the graph is disconnected but does not have isolated vertices is special to the $d=1$ case; for $d\geq 2$, a random geometric graph has the same threshold for connectivity and for the presence of isolated vertices \cite{Penrose2003}.
  
  \subsection{Exact Recovery for General \texorpdfstring{$d$}{}}\label{sec:general_d} 
  The propagation scheme is more intricate for $d \geq 2$. It is crucial that the vertices form a connected graph when $|\Omega_{\pi, P}|\geq 2$, so that the algorithm can generate a search order to propagate labels. We first present a result from Penrose \cite{Penrose1997}, showing that in the GHCM with a radius $(\log n)^{1/d}$, the condition $\lambda \nu_d > 1$ guarantees the connectivity of vertices with high probability. We consider a related model: the random geometric graph, which fixes $n$ as the number of vertices and independently generates $n$ vertices on the unit hypercube in $\mathbb{R}^d$ uniformly at random. Then, an edge is drawn between each pair of vertices of distance at most $r_n$. Observe that, conditioned on the number of vertices, the graph obtained from the vertices of GHCM and connecting any mutually visible vertices is a random geometric graph with $r_n = (\log n/n)^{1/d}$, where the scaling is due to GHCM generating points on a hypercube of volume $n$; we denote this graph as the \emph{vertex visibility graph}. The result of Penrose \cite{Penrose1997} provides a sufficient condition under which random geometric graphs are connected with high probability. 

  %\julia{Theorem 1.2 of \cite{Penrose1999} is hard to follow because $r_n$ there is defined implicitly.  Result (1) of \cite{Penrose1997} does the job!}
  \begin{theorem}[%Theorem 1.2 of \cite{Penrose1999} and 
  Result (1) of \cite{Penrose1997}] 
  \label{thm:penrose}
  Consider a random geometric graph $G_n(r_n)$ with $n$ vertices generated on the unit cube in $\mathbb{R}^d, d\geq 2,$ with the toroidal metric and visibility radius $r_n$. Let $\rho_n = \min\{s\colon G_n(s) \text{ is connected}\}$ be the minimum visibility radius at which $G_n$ is connected.  %as the minimum $r_n$ at which $G$ is connected. 
  For any constant $\alpha\in \mathbb{R}$ , if $r_n$ satisfies $ n\nu_dr_n^d - \log n = \alpha$, then
  \$
      \lim_{n\to\infty} \mathbb{P}(\rho_n \leq r_n ) = \exp(-e^{-\alpha}). %\label{eq:rgg_connectivity}
  \$
  \end{theorem}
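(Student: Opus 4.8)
The plan is to reduce connectivity of $G_n(r_n)$ to the absence of isolated vertices. Since graph connectivity is monotone nondecreasing in the radius, $\{\rho_n\le r_n\}=\{G_n(r_n)\text{ is connected}\}$, so it suffices to prove $\mathbb{P}\big(G_n(r_n)\text{ connected}\big)\to\exp(-e^{-\alpha})$. Let $W_n$ denote the number of isolated vertices of $G_n(r_n)$. A connected graph on $n\ge2$ vertices has no isolated vertex, so $\{G_n(r_n)\text{ connected}\}\subseteq\{W_n=0\}$, whence
\[
\mathbb{P}\big(G_n(r_n)\text{ connected}\big)=\mathbb{P}(W_n=0)-\mathbb{P}\big(W_n=0,\ G_n(r_n)\text{ disconnected}\big).
\]
I would establish two facts: (i) $\mathbb{P}(W_n=0)\to\exp(-e^{-\alpha})$, obtained by showing $W_n$ converges in distribution to a $\mathrm{Poisson}(e^{-\alpha})$ random variable; and (ii) $\mathbb{P}\big(W_n=0,\ G_n(r_n)\text{ disconnected}\big)=o(1)$. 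Together these give the claim.

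For (i) I would compute the first moment. On the torus the ball of radius $r_n$ has volume exactly $\nu_d r_n^d$, so each vertex is isolated with probability $(1-\nu_d r_n^d)^{n-1}$, and the hypothesis $n\nu_d r_n^d=\log n+\alpha$ gives $\mathbb{E}[W_n]=n(1-\nu_d r_n^d)^{n-1}\to e^{-\alpha}$. The Poisson limit then follows by the method of moments (checking $\mathbb{E}[(W_n)_m]\to e^{-m\alpha}$ for every $m$, the only non-negligible correction coming from $m$-tuples that contain a pair of vertices within distance $r_n$, whose contribution is $o(1)$), or equivalently by the Chen--Stein method, with the dependency neighbourhood of a vertex taken to be the vertices within distance $2r_n$ of it; both Stein error terms are then $o(1)$.

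For (ii), since a disconnected graph with no isolated vertex has a connected component of order between $2$ and $\lfloor n/2\rfloor$, it suffices to show $\sum_{k=2}^{\lfloor n/2\rfloor}\mathbb{E}[N_{n,k}]=o(1)$, where $N_{n,k}$ is the number of connected components of $G_n(r_n)$ of order exactly $k$. A component $C$ of order $k$ both spans a connected subgraph (whose probability, via a union bound over the $\le k^{k-2}$ labelled spanning trees and then over each tree, is at most $k^{k-2}(\nu_d r_n^d)^{k-1}$) and forces the region $N(C)=\bigcup_{v\in C}B(v,r_n)$ to be empty of the remaining $n-k$ vertices; a first-moment bound over the $\binom nk$ vertex subsets, integrated over their positions, controls $\mathbb{E}[N_{n,k}]$. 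For bounded $k$ the delicate point is that the extremal configurations are tightly clustered, so $\mathrm{vol}(N(C))$ barely exceeds $\nu_d r_n^d$; but integrating the emptiness probability $(1-\mathrm{vol}(N(C)))^{n-k}$ over such near-coincident configurations produces an extra factor of order $(\log n)^{1-d}$, and this is precisely where the hypothesis $d\ge2$ is used, since for $d=1$ this factor is $\Theta(1)$ rather than $o(1)$. For larger $k$ one replaces the spanning-tree bound by a geometric argument: tessellate the torus into cells of side $\Theta(r_n)$, encode $C$ by the set of cells it meets, use the standard bound $c^{k}$ on the number of such polyomino-like shapes of a given size, and exploit that the outer boundary layer of cells around $C$ must contain none of the other $n-k$ points---a super-polynomially unlikely event---to kill the combinatorial count; summing over all $k$ gives $o(1)$.

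The main obstacle is the large-component part of (ii): one must simultaneously control the number of shapes a big component can take and the size of the empty collar it forces, and balance these across the full range $k\in[2,\lfloor n/2\rfloor]$, with genuinely different arguments for small $k$ (first moment, crucially using $d\ge2$), for large $k$ (tessellation/contour counting), and at the interface between the two regimes. Everything else---the monotonicity reduction, the first-moment computation in (i), and the Poisson limit---is routine. (This is a known theorem of Penrose, quoted rather than proved in the paper; the above indicates how its proof runs.)
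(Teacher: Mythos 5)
The paper does not prove this statement at all: it is imported verbatim as Result (1) of Penrose (1997), so there is no in-paper argument to compare against. Your sketch is a correct outline of the standard (indeed Penrose's own) proof route: the monotonicity reduction $\{\rho_n\le r_n\}=\{G_n(r_n)\text{ connected}\}$, the Poisson limit for the number of isolated vertices with mean $n(1-\nu_d r_n^d)^{n-1}\to e^{-\alpha}$ via Chen--Stein or moments, and the first-moment exclusion of components of order $2\le k\le \lfloor n/2\rfloor$; in particular your observation that the clustered small-$k$ configurations contribute an extra factor of order $(n r_n^d)^{-(d-1)}=\Theta((\log n)^{1-d})$ is exactly where the hypothesis $d\ge 2$ enters (and why the scaling fails on the circle). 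The large-$k$ tessellation/contour step is only indicated, but you flag that honestly, and it is the standard discretization argument; as a reconstruction of the cited theorem the proposal is sound.
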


Applying Theorem \ref{thm:penrose} with $\lambda n$ in place of n yields a critical radius of 
\[\left(\frac{\alpha + \log n}{\lambda \nu_d n}\right)^{1/d}\]
in the unit hypercube, which scales to a radius of 
\[\left(\frac{\alpha + \log n}{\lambda \nu_d}\right)^{1/d}\]
in the volume-$n$ hypercube. Therefore, the condition $\lambda \nu_d > 1$ in the GHCM implies that the radius $(\log n)^{1/d}$ guarantees connectivity of the vertex visibility graph with high probability; this is a necessary condition for exact recovery when $|\Omega_{\pi, P}|\geq 2$. 

Next, we extend the notion of %\julia{connectivity} connectedness 
connectivity from vertices to blocks. For $d\geq 2$, our exact recovery algorithm divides the torus $\cS_{d,n}$ into hypercubes\footnote{For $d=1,2,3$, the hypercubes represent line segments, squares, and cubes respectively.} with volume $\chi \log n$. We show the condition $\lambda \nu_d>1$ ensures the vertex visibility graph is connected with high probability in the GHCM. Moreover, the condition ensures that every vertex has $\Omega(\log n)$ vertices within its visibility radius of $(\log n)^{1/d}$. It turns out that the condition $\lambda \nu_d>1$ also ensures that blocks of volume $\chi \log n$ for $\chi > 0$ sufficiently small satisfy the same connectivity properties. 
 
  Propagation of the labels requires an ordering to visit all occupied blocks. However, the existence of unoccupied blocks precludes the use of a predefined schedule, such as a lexicographic order scan. Instead, we employ a data-dependent schedule. The schedule is determined by the set of occupied blocks, which in turn is determined by Step \ref{sample:step-1} of Definition \ref{def:ghcm}. Crucially, the schedule is thus independent of the community labels and edges, conditioned on the number of vertices in each block. We first introduce an auxiliary graph $H = (V^{\dagger}, E^{\dagger})$, which records the connectivity relation among occupied blocks.
  \begin{definition}[Visibility graph]
  Consider a Poisson point process $V\subset\cS_{d,n}$, the $(\chi\log n)$-block partition of $\cS_{d,n}$, $\{B_i\}_{i=1}^{n/(\chi\log n)}$, corresponding vertex sets $\{V_i\}_{i=1}^{n/(\chi\log n)}$, and a constant $\delta>0$. The $(\chi, \delta)$-visibility graph is denoted by $H = (V^{\dagger}, E^{\dagger})$, where the vertex set $V^{\dagger} = \{i \in [n/(\chi\log n)] : |V_i| \geq \delta \log n\}$ consists of all $\delta$-occupied blocks and the edge set is given by $E^{\dagger}=\{\{i,j\}\colon i,j \in V^{\dagger}, B_i \sim B_j\}$.
  \end{definition}
  We adopt the standard graph connectivity definition on the visibility graph. Lemma \ref{lemma:connectivity} shows that the visibility graph of the Poisson point process underlying the GHCM is connected with high probability. Based on this connectivity property, we establish a propagation schedule as follows. We construct a spanning tree of the visibility graph and designate a root block as the initial block. We specify an ordering of $V^{\dagger} = \{i_1, i_2, \dots \}$ according to a tree traversal (e.g., breadth-first search). The algorithm propagates labels according to this ordering, thus labeling vertex sets $V_{i_1}, V_{i_2}, \cdots$ (see Figure \ref{fig:propagate}). Letting $p(i)$ denote the parent of vertex $i \in V^{\dagger}$ according to the rooted tree, we label $V_{i_j}$ using $V_{p(i_j)}$ as reference.
  Importantly, the visibility graph and thus the propagation schedule is determined only by the locations of vertices, independent of the labels and edges between mutually visible blocks.

  \begin{figure}[!ht]
  \centering
  \includegraphics[width=.95\linewidth]{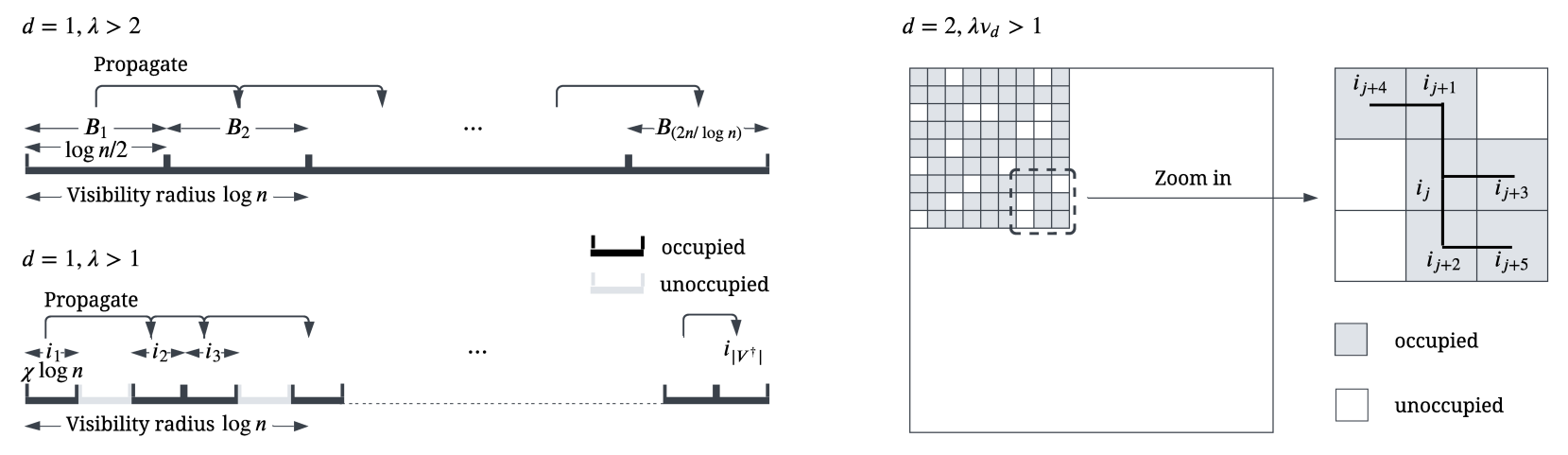}
    \caption{Propagation schedule for $d=1$ and $d=2$.}
    \label{fig:propagate}
  \end{figure}

  \begin{algorithm}
      \caption{Exact recovery for the GHCM} \label{alg:almost-exact}
      \begin{algorithmic}[1]
      \Require $G \sim \text{GHCM}(\lambda, n, \pi, P, d)$.
      \Ensure{ An estimated community labeling $\widetilde{x}: V \to Z$.}
      \vspace{5pt}
      \State{{\bf Phase I:}} 
      \State Take small enough $\chi,\delta>0$, satisfying the conditions to be specified in \eqref{eq:chi-fomula} and \eqref{eq:delta-formula} respectively. 
      \State Partition the region $\cS_{d,n}$ into $n/(\chi\log n)$ 
   blocks of volume $\chi\log n$ each. Let $B_i$ be the $i$th block and $V_i$ be the set of vertices in $B_i$ for $i\in [n/(\chi\log n)]$. \label{line:partition}
   \State Form the associated visibility graph $H = (V^{\dagger}, E^{\dagger})$. \label{line:visibility-graph}
      \If{$H$ is disconnected}\label{line:H}
      \State Return \texttt{FAIL}.
      \EndIf
      \State Find a rooted spanning tree of $H$, ordering $V^{\dagger} = \{i_1, i_2, \cdots\}$ in breadth-first order. \label{line:tree-order}
      \State Set $\varepsilon_0 \leq \min\{1/(2\log k), \delta\}$. Sample $V_{i_0}\subset V_{i_1}$ such that $|V_{i_0}| = \varepsilon_0 \log n$. Set $V_{i_1}^\prime \leftarrow V_{i_1} \setminus V_{i_0}$. %\label{line:epsilon_0} 
  \State{Apply \texttt{Maximum a Posteriori} (Algorithm \ref{alg:initial-block}) on input $G, V_{i_0}$ to determine the labeling $\widehat{x}$ on $V_{i_0}$.}\label{line:step2}  
  \State{If $V_1^\prime \neq \emptyset$, apply \texttt{Propagate} (Algorithm \ref{alg:propagation}) on input $G, V_{0}, V_{1}^\prime$ to determine the labeling $\widehat{x}$ on $V_{1}^\prime$.} 
      \For{$j=2,\cdots, |V^\dagger|$} \label{line:step3}
      \State{Apply \texttt{Propagate} (Algorithm \ref{alg:propagation}) on input $G, V_{p(i_j)}, V_{i_j}$ to determine the labeling $\widehat{x}$ on $V_{i_j}$. }
      \EndFor \label{line:step3-end}
      \For{$u \in V \setminus \left(\cup_{i \in V^{\dagger}} V_i\right)$}
      \State Set $\widehat{x}(u) = *$.
      \EndFor
      \vspace{5pt}
      \State{{\bf Phase II:}}
      \For{$u\in V$}  
      \State Apply \texttt{Refine} (Algorithm \ref{alg:refine}) on input $G, \widehat{x}, u$ to determine $\widetilde{x}(u)$.
      \EndFor
      \end{algorithmic}
  \end{algorithm}
  
  Algorithm \ref{alg:almost-exact} presents our exact recovery algorithm for the general case. We partition the torus $\cS_{d,n}$ into blocks with volume $\chi \log n$, for a suitably chosen $\chi > 0$. A threshold level of occupancy $\delta>0$ is specified. The value of $\chi$ is carefully chosen to ensure that the visibility graph $H$ is connected with high probability in Line \ref{line:H}. In Line \ref{line:step2}, we label a subset of an initial $\delta$-occupied block, corresponding to the root of $H$, using the \texttt{Maximum a Posteriori} subroutine. Here $\epsilon_0\le 1/(2\log k)$ is chosen so that the number of vertices labeled by the MAP estimator is small enough to ensure the runtime of this step is $o(n \log n)$, as shown in \eqref{eq:map-runtime}. 
  In Lines \ref{line:step3}-\ref{line:step3-end}, we propagate labels of the occupied blocks along the tree order determined in Line \ref{line:tree-order}, using the \texttt{Propagate} subroutine. Vertices appearing in unoccupied blocks are assigned a label of $*$. At the end of Phase I, we obtain a first-stage labeling $\widehat{x}\colon V \to Z\cup\{*\}$, such that with high probability, all occupied blocks are labeled with few mistakes. Finally, Phase II refines the almost-exact labeling $\widehat{x}$ to an exact labeling $\widetilde{x}$ 
  according to Algorithm \ref{alg:refine}.
  
  To analyze the runtime, note that the input size, i.e., the number of edges, is $\Theta(n \log n)$ with high probability.
  The algorithm forms the visibility graph $H = (V^{\dagger}, E^{\dagger})$ in $O(n/\log n)$ time, since $|V^{\dagger}| = O(n/\log n)$ and each vertex has at most $\Theta(1)$ possible neighbors. 
  If $H$ is connected, a spanning tree can be constructed in $O(|E^{\dagger}| \log(|E^{\dagger}|))$ time using Kruskal's algorithm, in which $|E^{\dagger}| = O(n/\log n)$. Then, the \texttt{Maximum a Posteriori} subroutine labels a subset of the initial block with a runtime of $o(n \log n)$, as shown in \eqref{eq:map-runtime}. 
  Next, the \texttt{Propagation} subroutine examines all pairwise observations between any given vertex in an occupied block and the vertices in its reference block, yielding a runtime of $O(n \log n)$.
  Finally, \texttt{Refine} runs in $O(n \log n)$ time, since each visible neighborhood contains $O(\log n)$ vertices. In summary, we conclude that Algorithm \ref{alg:almost-exact} runs in $O(n \log n)$ time, which is linear in the number of edges.

  \section{Proof Outline}\label{sec:proof-outline}
  %\julia{this paragraph needs to be reworded since there is no proof or analysis here}
  This section outlines the analysis of the main results in Section \ref{sect:limits}. First, we discuss the approach to prove the impossibility result of Theorem \ref{theorem:impossibility}. Then, we sketch the analysis of Algorithm \ref{alg:almost-exact}. For Phase I, we show that in addition to achieving almost exact recovery stated in Theorem \ref{theorem:almost-exact-recovery}, the obtained labeling also satisfies the following error dispersion property, as presented in Theorem \ref{thm:phase1-summary}. Let $\cN(u) = \{v\in V, \|u-v\|\le (\log n)^{1/d}\}$ be the set of visible vertices for a vertex $u$. For any $\eta>0$, we can take suitable $\chi, \delta >0$ such that, with high probability, every vertex has at most $\eta \log n$ incorrectly classified vertices in $\cN(u)$. Next, we outline the analysis of Theorem \ref{theorem:exact-recovery} to show Phase II of Algorithm \ref{alg:almost-exact} achieves exact recovery. Finally, we discuss the main ideas to extend Algorithm \ref{alg:almost-exact} to achieve robustness against monotone adversaries in the GSBM.

  \vskip6pt
  \paragraph*{Impossibility} We first show the impossibility result in Theorem \ref{theorem:impossibility} by generalizing the proof techniques from \cite{Abbe2021} for the two-community symmetric GSBM to the general GHCM. Since the Maximum a Posteriori (MAP) estimator is Bayes optimal, it suffices to show the MAP estimator fails with high probability. The MAP estimator fails when there exists a vertex that increases the posterior probability when labeled with an incorrect community.
  This reduces to show that a given vertex is misclassified by the genie estimator with probability $\omega(1/n)$. The misclassification probability is lower-bounded using Cram\'er's theorem of large deviations. We defer the details to Appendix \ref{sec:impossibility}.
  
  \vskip6pt
  \paragraph*{Achievability: Phase I. Connectivity of the visibility graph}
  We establish that the block partitions of the torus $\mathcal{S}_{d, n}$ specified in Algorithm \ref{alg:almost-exact} ensure the resulting visibility graph $H = (V^{\dagger}, E^{\dagger})$ is connected.
  Elementary analysis shows that any fixed subregion of $\mathcal{S}_{d, n}$ with volume $\nu \log n$ contains $\Omega(\log n)$ vertices with probability $1-o(n^{-1})$, whenever $\nu > 1/{\lambda}$.
  A union bound over vertices then implies all vertices have $\Omega(\log n)$ visible vertices. In the special case of $d=1$, the preceding block of a given vertex has volume $\log n / 2$. The observation with $\nu = 1/2$ implies when $\lambda > 2$, the preceding block of every vertex has $\Omega(\log n)$ points, guaranteeing connectivity. For smaller values of $\lambda$, we provide a stronger claim: if the block lengths are chosen sufficiently small by \eqref{eq:chi-fomula}, then we can ensure there are $\Omega(\log n)$ vertices among $\{V_j : B_j \sim B_i, j \neq i\}$ for a given vertex $v \in V_i$. As a result, an appropriate choice of $\delta$ given by \eqref{eq:delta-formula} guarantees the existence of at least one $\delta$-occupied, visible block preceding a given block $B_i$. Hence, the visibility graph is connected, as shown in Proposition \ref{lem:visibility-d1-small-lambda}.
  
  However, the analysis becomes more intricate when $d\geq 2$. In particular, while a propagation schedule based on lexicographic order succeeds for $d = 1$, it fails for $d \geq 2$. For example, when $d = 2$, we cannot claim for every vertex, its visibility region contains $\Omega(\log n)$ vertices in the top left quadrant, since the volume of the quadrant is only $\nu_d \log n/4$. Therefore, the algorithm cannot propagate labels from left to right and from top to bottom along the torus. 
  We therefore establish connectivity of $H$ using the fact that if $H$ is disconnected, then $H$ must contain an isolated connected component. If there exists an isolated connected component in $H$, then the corresponding occupied blocks in $\mathcal{S}_{d,n}$ must be surrounded by sufficiently many unoccupied blocks. However, Lemma \ref{lem:unoccupied-cluster} shows there cannot be numerous adjacent unoccupied blocks, which prevents the existence of isolated connected components. As a result, the visibility graph is connected, as shown in Lemma \ref{lemma:connectivity}.
  
  \vskip6pt
  \paragraph*{Achievability: Phase I. Labeling the initial block} We show the \texttt{MAP} subroutine in Algorithm \ref{alg:initial-block} achieves exact recovery on a subset of the first block, $V_{i_0} \subseteq V_{i_1}$, with high probability. It suffices to prove another estimator also achieves exact recovery due to the Bayes optimality of the MAP estimator. We establish this result by adapting the proof ideas from \cite[Section III.B]{Dhara2022}. Specifically, we consider the restricted MLE estimator, which maximizes the log-likelihood over $X_0^\star(\varepsilon)$, the restricted set of labelings in which the number of labels for each community is concentrated near its mean. We show all labelings in $X_0^\star(\varepsilon)$ that do not achieve exact recovery have a smaller log-likelihood than the ground-truth labeling $x^{\star}$, with high probability. We consider two cases based on the \emph{discrepancy} between a labeling and the ground truth $x^{\star}$, using a different analysis for low discrepancy and high discrepancy labelings. %The first case is the set of labelings with low discrepancy, and the second case is the set of labelings with high discrepancy. 
  %In either case, we show any unsuccessful labeling's log-likelihood is strictly less than that of the ground truth. \xn{Do we need more details here?}
  
  \vskip6pt
  \paragraph*{Achievability: Phase I. Propagating labels among occupied blocks} 
  We show the \texttt{Propagate} subroutine ensures $\widehat{x}$ makes at most $M$ mistakes in each occupied block, where $M$ is a suitable constant. The analysis reduces to bounding the error probability that the estimator $\widehat{x}$ makes more than $M$ mistakes on a given occupied block $V_i$ for $i \in V^{\dagger}$, conditioned on the event that $\widehat x$ makes no more than $M$ mistakes on the preceding block $V_{p(i)}$. To consider the probability that a given vertex $v \in V_i$ is misclassified, we condition on the \emph{label configuration} of $V_{p(i)}$, which fixes the number of vertices correctly classified by $\widehat x$ as $j$ and incorrectly classified as $j$, for all possible labels $j \in Z$. We derive an uniform upper bound on the probability of misclassifying an individual vertex $v \in V_i$ when maximizing the log-likelihood given in Algorithm \ref{alg:propagation}, over all label configurations of $V_{p(i)}$ with at most $M$ mistakes. To bound the total number of mistakes in $V_i$, we observe the algorithm computes the labels of all vertices in $V_i$ based on disjoint subsets of pairwise observations between vertices in $V_i$ and $V_{p(i)}$. Therefore, conditioned on the label configuration of $V_{p(i)}$, the number of mistakes in $V_i$ can be stochastically dominated by a binomial random variable. Appendix \ref{sec:almost_exact_recovery} provides the detailed analysis to conclude the number of mistakes in $V_i$ is at most $M$ with probability $1-o(n^{-1})$, as long as $M$ is a suitably large constant.

  We extend Algorithm \ref{alg:almost-exact} to the case when $d=1$ and $|\Omega_{\pi, P}|=1$, i.e., the GHCM is one-dimensional with only one permissible relabeling, to achieve almost exact recovery at smaller values of $\lambda$, in which multiple segments of occupied blocks may become disconnected in the torus. We impose a stronger distinctness assumption (Assumption \ref{ass:distinguishable-strong}) so that the labelings are consistent across segments. We apply our algorithm to each segment, that is, conduct an initial labeling of a subset of the "leftmost" interval in each segment, and then propagate.
  
  \vskip6pt
  \paragraph*{Achievability: Phase II. Refining the labels} 
  The final step of Algorithm \ref{alg:almost-exact} is to refine the preliminary labeling $\widehat{x}$ from Phase I into a final labeling $\widetilde{x}$. Unfortunately, conditioning on a successful labeling $\widehat{x}$ destroys the independence of observations between pairs of vertices and makes bounding the error probability of $\widetilde{x}$ difficult. This issue can be remedied by the \emph{graph splitting} technique, used in the two-round procedure of \cite{Abbe2015}. Graph splitting forms two graphs, $G_1$ and $G_2$, from the original input graph $G$. A given edge in $G$ is independently assigned to $G_1$ with probability $p$, and $G_2$ with probability $1-p$, for $p$ chosen so that almost exact recovery can be achieved on $G_1$, while exact recovery can be achieved on $G_2$. Since the two graphs are nearly independent, conditioning on the success of almost exact recovery in $G_1$ maintains the independence of edges in $G_2$. 
  
  While we believe our Phase I algorithm, along with graph splitting, would achieve the information-theoretic threshold in the GHCM, we instead directly analyze the robustness of the Phase II procedure when refining a labeling that contains some mistakes. Specifically, we bound the error probability of classifying a given vertex $v \in V$ using the worst-case labeling among labelings that differ from the ground truth $x^\star$ on at most $\eta \log n$ vertices in the visible radius of $v$. Since $\widehat x$ makes at most $\eta \log n$ errors with probability $1-o(1/n)$ (Theorem \ref{thm:phase1-summary}), we immediately obtain a bound on the error probability of $\widetilde{x}(v)$. The proof in Section \ref{sec:exact_recovery} bounds the worst-case error probability. %We define $x = \lambda \nu_d [a, 1-a,b,1-b ]/2$ and $y = \lambda \nu_d [b, 1-b,a,1-a ]/2$, so that $D\given \{\sigma_0(u)=1\}\sim\text{Poisson}(x)$ and $D\given \{\sigma_0(u)=-1\}\sim\text{Poisson}(y)$. 
  To provide intuition for bounding the error probability at a given vertex $u\in V$, suppose that $x^\star(u)=i$, and consider 
  the genie-aided estimator $x_{\textsf{genie}}(u)$. Recalling the definition of $\ell_i(u,x)$ in \eqref{eq:ell_global}, the estimator $x_{\textsf{genie}}(u)$ makes a mistake when there exists $j\neq i$ such that $\ell_i(u,x^\star) - \ell_j(u,x^\star)\le 0$. We show this event occurs with probability at most $n^{-\lambda\nu_d D_+(\theta_i,\theta_j)}$. By formulating the worst-case labeling $x$ as a perturbation of $x^\star$ that differs from $x^\star$ on at most $\eta\log n$ vertices, we show $\ell_i(u,x)- \ell_j(u,x)\le  0$ implies $\ell_i(u,x^\star)- \ell_j(u,x^\star) \le \rho\eta\log n$ for a certain constant $\rho$. Similarly, the probability of such a mistake is at most $n^{-\lambda\nu_d D_+(\theta_i,\theta_j)+\rho\eta/2}$. 
  Thus, for small $\eta>0$, the condition $\lambda\nu_d D_+(\theta_i,\theta_j) > 1$ and a union bound over all vertices yields an error probability of $o(1)$, which certifies exact recovery.

  \vskip6pt
  \paragraph*{Robustness under monotone adversaries} We first show that for
$k \ge 3$, the monotone adversary can easily break the distinctness of distributions required in Assumption \ref{ass:distinguishable}. Therefore, we focus on the $k=2$ case with the following recovery goal. When $\pi_{-1} = \pi_{1} = 1/2$, we can only recover the correct partition, not the labeling, since the adversary can simulate $\text{GSBM}(\lambda, n, \pi, P^\prime,d)$ with $P_{-1,-1}^\prime = P_{11}^\prime$ by adding random edges.
When $\pi_{-1} \neq \pi_{1}$, the goal is to find the correct labeling. We adapt our local propagation scheme to establish Theorem \ref{thm:monotone}. For the initial block labeling, we use techniques from \cite[Algorithm 3]{Liu2022} that guarantee fewer than $\delta_1\log n$ errors in the semi-random model for any $\delta_1>0$. We then prove the propagation and refinement procedures still succeed despite the existence of monotone corruptions.

\vskip6pt
\paragraph*{Gaussian GHCM}
In addition to the distinctness requirement of Assumptions \ref{ass:distinguishable} and \ref{ass:distinguishable-strong}, our results require the log-likelihood ratio to be bounded by some $\eta$ (Assumption \ref{ass:bounded-ratio}) for any $i, j\in Z$. This assumption is used in three parts. First, in the exact recovery result of the \texttt{MAP} subroutine on the initial block, the assumption is needed to show that low-discrepancy labelings have a smaller likelihood than $x^\star$, with high probability. Second, we use the assumption to bound the error probability of misclassifying a vertex using the \texttt{Propagate} subroutine. Third, the assumption is needed to bound the error rate of the \texttt{Refine} procedure of Phase II. However, we can relax this assumption in certain cases of the GHCM, highlighting a special example when all distributions are Gaussian but still satisfy the distinctness assumption. Such a case gives rise to the geometric $\Z_2$ synchronization. In Appendix \ref{sec:gaussian_proofs}, we show that the above three results still hold in the Gaussian case; essentially, instead of providing bounds using $\eta$, we provide sufficiently strong probabilistic bounds using Gaussian concentration. As a result, our algorithm still achieves exact recovery even in the Gaussian case. 
 
  \section{Discussion and Future Work}
  \label{sec:discussion}
  In this paper, we have introduced the GHCM as a model of hidden communities in a spatial network. We have established a threshold below which no algorithm achieves exact recovery of the labels (in fact, all algorithms fail with high probability), and above which there is an efficient algorithm for exact recovery, under a mild distinctness condition. A key open problem is whether Theorem \ref{theorem:exact-recovery} continues to hold without the need for Assumption \ref{ass:distinguishable}. %As outlined in Section \ref{subsec:distinctness}, the propagation scheme crucially needs the assumption, so a new strategy must be developed. 
  One promising approach is to again divide the torus into blocks containing $\Theta(\log n)$ points in expectation, but perform a local label estimation in each block via the MLE. Such a procedure is computationally tractable, as running the MLE on $O(\log n)$ points requires $\text{poly}(n)$ time. The difficulty with such an approach lies in aggregating the local labels, when there are sparsely populated blocks. A related question is regarding statistical achievability: can we show exact recovery is achieved by the MAP estimator above the threshold conjectured in Conjecture \ref{conjecture}? Further, what can we say about exact recovery when pairwise observations have a more general dependence on distance, as in \cite{Abbe2021,Avrachenkov2024}?
  In addition, it is valuable to explore the robustness of these methods in the presence of geometric corruptions. For example, is there an algorithm that can recover the community labels if some entries of the location vectors are hidden, or if we only have access to a low-dimensional projection of the locations?
  
  %\julia{Open question: can we drop the distinctness assumption, e.g. by local MLE which gets rectified. Interestingly, we have proven the information-theoretic threshold for all but a measure-zero set. Another question: can we prove statistical achievability above the threshold without the distinctness assumption (that would pin down information-theoretic threshold fully)?}
  
  Another direction is to investigate the model beyond the logarithmic degree regime. We expect that almost exact recovery is possible in sublogarithmic regimes, though possibly only when $|\Omega_{\pi, P}| = 1$. In sparse regimes, where each vertex sees only a constant number of other vertices, can we obtain results analogous to \cite{Abbe2021}? What can we say about the exact recovery problem when the observation signal strength is decreased while the visibility radius is increased? How does the information-theoretic threshold change, and does some variant of our algorithm achieve the information-theoretic threshold? Since the analysis of the propagation scheme is intimately connected to the connectivity properties of the visibility graph formed from the underlying Poisson point process, this question is related to work on connectivity of random geometric graphs \citep{Penrose2016}.
    
\paragraph*{Acknowledgements}
X.N. and E.W. were supported in
part by NSF ECCS-2030251 and CMMI-2024774. X.N. was supported in part by a Northwestern University Terminal Year Fellowship. C.G. and X.N. were supported in part by  NSF HDR TRIPODS 2216970. J.G. and C. G. were supported in part by NSF CCF-2154100, and J.G. was supported in part by NSF CAREER CCF-2440539.
 Thank you to Liren Shan for suggesting the study of the monotone adversary. Thank you to Colin Sandon for helpful comments in the proof of Proposition \ref{prop:high-discrepancy}, and to Andrew Jin for pointing out several errors in an earlier version. Thank you to Nicolas Fraiman for suggesting the work of Penrose \citep{Penrose2016}, and to Jiaming Xu for making us aware of \cite{Chen2016b}.

\bibliographystyle{plain} 
\bibliography{references}   

@misc{Poisson,
  title={A short note on {Poisson} tail bounds},
  author={Canonne, Cl{\'e}ment},
  note={http://www.cs.columbia.edu/~ccanonne/files/misc/2017-poissonconcentration.pdf}
}

@article{Abbe2021,
  title={Community detection on {Euclidean} random graphs},
  author={Abbe, Emmanuel and Baccelli, Francois and Sankararaman, Abishek},
  journal={Information and Inference: A Journal of the IMA},
  volume={10},
  number={1},
  pages={109--160},
  year={2021},
  publisher={Oxford University Press}
}

@article{Holland1983,
  title={Stochastic blockmodels: First steps},
  author={Holland, Paul W and Laskey, Kathryn Blackmond and Leinhardt, Samuel},
  journal={Social Networks},
  volume={5},
  number={2},
  pages={109--137},
  year={1983},
  publisher={Elsevier}
}

@inproceedings{Abbe2015,
  title={Community detection in general stochastic block models: Fundamental limits and efficient algorithms for recovery},
  author={Abbe, Emmanuel and Sandon, Colin},
  booktitle={2015 IEEE 56th Annual Symposium on Foundations of Computer Science},
  pages={670--688},
  year={2015},
  organization={IEEE}
}

@article{Abbe2017,
  title={Community detection and stochastic block models: recent developments},
  author={Abbe, Emmanuel},
  journal={The Journal of Machine Learning Research},
  volume={18},
  number={1},
  pages={6446--6531},
  year={2017},
  publisher={JMLR. org}
}

@inproceedings{Sankararaman2018,
  title={Community detection on {Euclidean} random graphs},
  author={Sankararaman, Abishek and Baccelli, Fran{\c{c}}ois},
  booktitle={Proceedings of the Twenty-Ninth Annual ACM-SIAM Symposium on Discrete Algorithms},
  pages={2181--2200},
  year={2018},
  organization={SIAM}
}

@book{boucheron2013concentration,
  title={Concentration inequalities: A nonasymptotic theory of independence},
  author={Boucheron, St{\'e}phane and Lugosi, G{\'a}bor and Massart, Pascal},
  year={2013},
  publisher={Oxford University Press}
}

@misc{Avrachenkov2024,
      title={Community Detection on Block Models with Geometric Kernels}, 
      author={Konstantin Avrachenkov and B. R. Vinay Kumar and Lasse Leskelä},
      year={2024},
      eprint={2403.02802},
      archivePrefix={arXiv},
      primaryClass={math.PR}
}

@inproceedings{Mossel2015,
  title={Consistency thresholds for the planted bisection model},
  author={Mossel, Elchanan and Neeman, Joe and Sly, Allan},
  booktitle={Proceedings of the Forty-Seventh Annual ACM Symposium on Theory of Computing},
  pages={69--75},
  year={2015}
}

@article{ivic2006lattice,
  title={Lattice points in large regions and related arithmetic functions: recent developments in a very classic topic},
  author={Ivi{\'c}, A and Kr{\"a}tzel, E and K{\"u}hleitner, M and Nowak, WG},
  journal={Elementare und analytische Zahlentheorie, Franz Steiner},
  pages={89--128},
  year={2006},
  publisher={Verlag Stuttgart}
}

@article{rapoport1953spread,
  title={Spread of information through a population with socio-structural bias: {I. Assumption} of transitivity},
  author={Rapoport, Anatol},
  journal={The Bulletin of Mathematical Biophysics},
  volume={15},
  pages={523--533},
  year={1953},
  publisher={Springer}
}

@article{Hajek2016,
  title={Achieving exact cluster recovery threshold via semidefinite programming},
  author={Hajek, Bruce and Wu, Yihong and Xu, Jiaming},
  journal={IEEE Transactions on Information Theory},
  volume={62},
  number={5},
  pages={2788--2797},
  year={2016},
  publisher={IEEE}
}

@article{Abbe2020,
  title={Entrywise eigenvector analysis of random matrices with low expected rank},
  author={Abbe, Emmanuel and Fan, Jianqing and Wang, Kaizheng and Zhong, Yiqiao},
  journal={Annals of Statistics},
  volume={48},
  number={3},
  pages={1452},
  year={2020},
  publisher={NIH Public Access}
}

@inproceedings{Cohen2022,
  title={Community Recovery in the Degree-Heterogeneous Stochastic Block Model},
  author={Cohen-Addad, Vincent and Mallmann-Trenn, Frederik and Saulpic, David},
  booktitle={Conference on Learning Theory},
  pages={1662--1692},
  year={2022},
  organization={PMLR}
}

@article{Bandeira2017,
  title={Tightness of the maximum likelihood semidefinite relaxation for angular synchronization},
  author={Bandeira, Afonso S and Boumal, Nicolas and Singer, Amit},
  journal={Mathematical Programming},
  volume={163},
  pages={145--167},
  year={2017},
  publisher={Springer}
}

@article{Javanmard2016,
  title={Phase transitions in semidefinite relaxations},
  author={Javanmard, Adel and Montanari, Andrea and Ricci-Tersenghi, Federico},
  journal={Proceedings of the National Academy of Sciences},
  volume={113},
  number={16},
  pages={E2218--E2223},
  year={2016},
  publisher={National Acad Sciences}
}

@article{Avrachenkov2021,
  title={Higher-order spectral clustering for geometric graphs},
  author={Avrachenkov, Konstantin and Bobu, Andrei and Dreveton, Maximilien},
  journal={Journal of Fourier Analysis and Applications},
  volume={27},
  number={2},
  pages={22},
  year={2021},
  publisher={Springer}
}

@article{Abbe2020b,
  title={Graph powering and spectral robustness},
  author={Abbe, Emmanuel and Boix-Adsera, Enric and Ralli, Peter and Sandon, Colin},
  journal={SIAM Journal on Mathematics of Data Science},
  volume={2},
  number={1},
  pages={132--157},
  year={2020},
  publisher={SIAM}
}

@inproceedings{Galhotra2018,
  title={The geometric block model},
  author={Galhotra, Sainyam and Mazumdar, Arya and Pal, Soumyabrata and Saha, Barna},
  booktitle={Proceedings of the AAAI Conference on Artificial Intelligence},
  volume={32},
  issue={1},
  year={2018}
}

@article{Galhotra2022,
  title={Community recovery in the geometric block model},
  author={Galhotra, Sainyam and Mazumdar, Arya and Pal, Soumyabrata and Saha, Barna},
  journal={Journal of Machine Learning Research},
  volume={24},
  number={338},
  pages={1--53},
  year={2023}
}

@article{duchemin2023random,
  title={Random Geometric Graph: Some recent developments and perspectives},
  author={Duchemin, Quentin and De Castro, Yohann},
  journal={High Dimensional Probability IX: The Ethereal Volume},
  pages={347--392},
  year={2023},
  publisher={Springer}
}

@article{peche2020robustness,
  title={Robustness of community detection to random geometric perturbations},
  author={P{\'e}ch{\'e}, Sandrine and Perchet, Vianney},
  journal={Advances in Neural Information Processing Systems},
  volume={33},
  pages={17827--17837},
  year={2020}
}

@article{eldan2022community,
  title={Community detection and percolation of information in a geometric setting},
  author={Eldan, Ronen and Mikulincer, Dan and Pieters, Hester},
  journal={Combinatorics, Probability and Computing},
  volume={31},
  number={6},
  pages={1048--1069},
  year={2022},
  publisher={Cambridge University Press}
}

@article{araya2019latent,
  title={Latent distance estimation for random geometric graphs},
  author={Araya Valdivia, Ernesto and Yohann, De Castro},
  journal={Advances in Neural Information Processing Systems},
  volume={32},
  year={2019}
}

@article{Li2023,
  title={Spectral clustering in the {Gaussian} mixture block model},
  author={Li, Shuangping and Schramm, Tselil},
  journal={arXiv preprint arXiv:2305.00979},
  year={2023}
}

@inproceedings{Chien2020,
  title={Active learning in the {Geometric Block Model}},
  author={Chien, Eli and Tulino, Antonia and Llorca, Jaime},
  booktitle={Proceedings of the AAAI Conference on Artificial Intelligence},
  volume={34},
  issue={04},
  pages={3641--3648},
  year={2020}
}

@article{Handcock2007,
  title={Model-based clustering for social networks},
  author={Handcock, Mark S and Raftery, Adrian E and Tantrum, Jeremy M},
  journal={Journal of the Royal Statistical Society: Series A (Statistics in Society)},
  volume={170},
  number={2},
  pages={301--354},
  year={2007},
  publisher={Wiley Online Library}
}

@article{Hoff2002,
  title={Latent space approaches to social network analysis},
  author={Hoff, Peter D and Raftery, Adrian E and Handcock, Mark S},
  journal={Journal of the American Statistical Association},
  volume={97},
  number={460},
  pages={1090--1098},
  year={2002},
  publisher={Taylor \& Francis}
}

@inproceedings{Hofmann1999,
author = {Hofmann, Thomas and Puzicha, Jan},
title = {Latent class models for collaborative filtering},
year = {1999},
publisher = {Morgan Kaufmann Publishers Inc.},
address = {San Francisco, CA, USA},
abstract = {This paper presents a statistical approach to collaborative filtering and investigates the use of latent class models for predicting individual choices and preferences based on observed preference behavior. Two models are discussed and compared: the aspect model, a probabilistic latent space model which models individual preferences as a convex combination of preference factors, and the two-sided clustering model, which simultaneously partitions persons and objects into clusters. We present EM algorithms for different variants of the aspect model and derive an approximate EM algorithm based on a variational principle for the two-sided clustering model. The benefits of the different models are experimentally investigated on a large movie data set.},
booktitle = {Proceedings of the 16th International Joint Conference on Artificial Intelligence - Volume 2},
pages = {688–693},
numpages = {6},
location = {Stockholm, Sweden},
series = {IJCAI'99}
}

@inproceedings{Rosen2020,
  title={A certifiably correct algorithm for synchronization over the special {Euclidean} group},
  author={Rosen, David M and Carlone, Luca and Bandeira, Afonso S and Leonard, John J},
  booktitle={Algorithmic Foundations of Robotics XII: Proceedings of the Twelfth Workshop on the Algorithmic Foundations of Robotics},
  pages={64--79},
  year={2020},
  organization={Springer}
}

@article{Singer2011,
  title={Angular synchronization by eigenvectors and semidefinite programming},
  author={Singer, Amit},
  journal={Applied and Computational Harmonic Analysis},
  volume={30},
  number={1},
  pages={20--36},
  year={2011},
  publisher={Elsevier}
}

@article {ChengChurch2000,
	Title = {Biclustering of expression data},
	Author = {Cheng, Y and Church, GM},
	Volume = {8},
	Year = {2000},
	Journal = {Proceedings. International Conference on Intelligent Systems for Molecular Biology},
	ISSN = {1553-0833},
	Pages = {93—103},
	URL = {http://europepmc.org/abstract/MED/10977070}
}

@article{shabalin2009,
 ISSN = {19326157, 19417330},
 URL = {http://www.jstor.org/stable/30242874},
 abstract = {The search for sample-variable associations is an important problem in the exploratory analysis of high dimensional data. Biclustering methods search for sample-variable associations in the form of distinguished submatrices of the data matrix. (The rows and columns of a submatrix need not be contiguous.) In this paper we propose and evaluate a statistically motivated biclustering procedure (LAS) that finds large average submatrices within a given real-valued data matrix. The procedure operates in an iterative-residual fashion, and is driven by a Bonferroni-based significance score that effectively trades off between submatrix size and average value. We examine the performance and potential utility of LAS, and compare it with a number of existing methods, through an extensive three-part validation study using two gene expression datasets. The validation study examines quantitative properties of biclusters, biological and clinical assessments using auxiliary information, and classification of disease subtypes using bicluster membership. In addition, we carry out a simulation study to assess the effectiveness and noise sensitivity of the LAS search procedure. These results suggest that LAS is an effective exploratory tool for the discovery of biologically relevant structures in high dimensional data.},
 author = {Andrey A. Shabalin and Victor J. Weigman and Charles M. Perou and Andrew B. Nobel},
 journal = {The Annals of Applied Statistics},
 number = {3},
 pages = {985--1012},
 publisher = {Institute of Mathematical Statistics},
 title = {Finding Large Average Submatrices in High Dimensional Data},
 urldate = {2024-05-26},
 volume = {3},
 year = {2009}
}

@inproceedings{Dhillon2001,
author = {Dhillon, Inderjit S.},
title = {Co-clustering documents and words using bipartite spectral graph partitioning},
year = {2001},
isbn = {158113391X},
publisher = {Association for Computing Machinery},
address = {New York, NY, USA},
url = {https://doi.org/10.1145/502512.502550},
doi = {10.1145/502512.502550},
pages = {269–274},
numpages = {6},
location = {San Francisco, California},
series = {KDD '01}
}

@article{Dhara2022,
  title={The Power of Two Matrices in Spectral Algorithms for Community Recovery},
  author={Dhara, Souvik and Gaudio, Julia and Mossel, Elchanan and Sandon, Colin},
  journal={IEEE Transactions on Information Theory},
  year={2023},
  publisher={IEEE}
}

@article{Cucuringu2012,
  title={Sensor network localization by eigenvector synchronization over the {Euclidean} group},
  author={Cucuringu, Mihai and Lipman, Yaron and Singer, Amit},
  journal={ACM Transactions on Sensor Networks (TOSN)},
  volume={8},
  number={3},
  pages={1--42},
  year={2012},
  publisher={ACM New York, NY, USA}
}

@inproceedings{Tron2009,
  title={Distributed image-based 3-d localization of camera sensor networks},
  author={Tron, Roberto and Vidal, Ren{\'e}},
  booktitle={Proceedings of the 48h IEEE Conference on Decision and Control (CDC) held jointly with 2009 28th Chinese Control Conference},
  pages={901--908},
  year={2009},
  organization={IEEE}
}

@article{Decelle2011,
  title={Asymptotic analysis of the stochastic block model for modular networks and its algorithmic applications},
  author={Decelle, Aurelien and Krzakala, Florent and Moore, Cristopher and Zdeborov{\'a}, Lenka},
  journal={Physical Review E},
  volume={84},
  number={6},
  pages={066106},
  year={2011},
  publisher={APS}
}

@inproceedings{Yun2014,
  title={Community detection via random and adaptive sampling},
  author={Yun, Se-Young and Proutiere, Alexandre},
  booktitle={Conference on Learning Theory},
  pages={138--175},
  year={2014},
  organization={PMLR}
}

@article{Cai2017,
  title={Computational and statistical boundaries for submatrix localization in a large noisy matrix},
  author={Cai, T Tony and Liang, Tengyuan and Rakhlin, Alexander},
  journal={The Annals of Statistics},
  volume={45},
  number={4},
  pages={1403--1430},
  year={2017}
}

@inproceedings{Giridhar2006,
  title={Distributed clock synchronization over wireless networks: Algorithms and analysis},
  author={Giridhar, Arvind and Kumar, Praveen R},
  booktitle={Proceedings of the 45th IEEE Conference on Decision and Control},
  pages={4915--4920},
  year={2006},
  organization={IEEE}
}

@article{Shkolnisky2012,
  title={Viewing direction estimation in cryo-{EM} using synchronization},
  author={Shkolnisky, Yoel and Singer, Amit},
  journal={SIAM Journal on Imaging Sciences},
  volume={5},
  number={3},
  pages={1088--1110},
  year={2012},
  publisher={SIAM}
}

@inproceedings{Gaudio2024,
  title={Exact Community Recovery in the {Geometric SBM}},
  author={Gaudio, Julia and Niu, Xiaochun and Wei, Ermin},
  booktitle={Proceedings of the 2024 Annual ACM-SIAM Symposium on Discrete Algorithms (SODA)},
  pages={2158--2184},
  year={2024},
  organization={SIAM}
}

@book{dembo_zeitouni_LDP,
publisher = {Springer},
series = {Applications of Mathematics ; 38},
title = {Large deviations techniques and applications },
year = {1998},
isbn = {0387984062},
author = {Dembo, Amir and Zeitouni, Ofer},
address = {New York},
booktitle = {Large deviations techniques and applications},
edition = {2nd ed.},
keywords = {Large deviations},
language = {eng},
lccn = {97045236},
}

@article{Blondel2008,
  title={Fast unfolding of communities in large networks},
  author={Blondel, Vincent D and Guillaume, Jean-Loup and Lambiotte, Renaud and Lefebvre, Etienne},
  journal={Journal of Statistical Mechanics: Theory and Experiment},
  volume={2008},
  number={10},
  pages={P10008},
  year={2008},
  publisher={IOP Publishing}
}

@book{kingman1992poisson,
  title={{Poisson Processes}},
  author={Kingman, John Frank Charles},
  volume={3},
  year={1992},
  publisher={Clarendon Press}
}

@inproceedings{Mossel2023,
  title={Exact phase transitions for stochastic block models and reconstruction on trees},
  author={Mossel, Elchanan and Sly, Allan and Sohn, Youngtak},
  booktitle={Proceedings of the 55th Annual ACM Symposium on Theory of Computing},
  pages={96--102},
  year={2023}
}

@article{Abbe2016KS,
  title={Achieving the KS threshold in the general stochastic block model with linearized acyclic belief propagation},
  author={Abbe, Emmanuel and Sandon, Colin},
  journal={Advances in Neural Information Processing Systems},
  volume={29},
  year={2016}
}

@article{Abbe2018,
  title={Group synchronization on grids},
  author={Abbe, Emmanuel and Massoulie, Laurent and Montanari, Andrea and Sly, Allan and Srivastava, Nikhil},
  journal={Mathematical Statistics and Learning},
  volume={1},
  number={3},
  pages={227--256},
  year={2018}
}

@article{Hartigan1972,
  title={Direct clustering of a data matrix},
  author={Hartigan, John A},
  journal={Journal of the American Statistical Association},
  volume={67},
  number={337},
  pages={123--129},
  year={1972},
  publisher={Taylor \& Francis}
}

@article{Ma2015,
  title={Computational barriers in minimax submatrix detection},
  author={Ma, Zongming and Wu, Yihong},
  journal={The Annals of Statistics},
  pages={1089--1116},
  year={2015},
  publisher={JSTOR}
}

@article{Chen2016,
  title={Statistical-computational tradeoffs in planted problems and submatrix localization with a growing number of clusters and submatrices},
  author={Chen, Yudong and Xu, Jiaming},
  journal={The Journal of Machine Learning Research},
  volume={17},
  number={1},
  pages={882--938},
  year={2016},
  publisher={JMLR. org}
}

@article{Penrose2016,
  title={Connectivity of soft random geometric graphs},
  author={Penrose, Mathew D},
  journal={The Annals of Applied Probability},
  volume={26},
  number={2},
  pages={986--1028},
  year={2016}
}

@article{Dall2002,
  title={Random geometric graphs},
  author={Dall, Jesper and Christensen, Michael},
  journal={Physical Review E},
  volume={66},
  number={1},
  pages={016121},
  year={2002},
  publisher={APS}
}

@article{Penrose1997,
author = {Mathew D. Penrose},
title = {{The longest edge of the random minimal spanning tree}},
volume = {7},
journal = {The Annals of Applied Probability},
number = {2},
publisher = {Institute of Mathematical Statistics},
pages = {340 -- 361},
keywords = {Chen-Stein method, continuum percolation, Extreme values, geometric probability, Minimal spanning tree, nearest neighbor graph, Poisson process},
year = {1997},
doi = {10.1214/aoap/1034625335},
URL = {https://doi.org/10.1214/aoap/1034625335}
}

@book{Penrose2003,
  title={Random geometric graphs},
  author={Penrose, Mathew},
  volume={5},
  year={2003},
  publisher={Oxford University Press}
}

@article{Erba2020,
  title={Random geometric graphs in high dimension},
  author={Erba, Vittorio and Ariosto, Sebastiano and Gherardi, Marco and Rotondo, Pietro},
  journal={Physical Review E},
  volume={102},
  number={1},
  pages={012306},
  year={2020},
  publisher={APS}
}

@article{Liu2023,
  title={Phase transition in noisy high-dimensional random geometric graphs},
  author={Liu, Suqi and R{\'a}cz, Mikl{\'o}s Z},
  journal={Electronic Journal of Statistics},
  volume={17},
  number={2},
  pages={3512--3574},
  year={2023},
  publisher={The Institute of Mathematical Statistics and the Bernoulli Society}
}

@article{Bubeck2016,
  title={Testing for high-dimensional geometry in random graphs},
  author={Bubeck, S{\'e}bastien and Ding, Jian and Eldan, Ronen and R{\'a}cz, Mikl{\'o}s Z},
  journal={Random Structures \& Algorithms},
  volume={49},
  number={3},
  pages={503--532},
  year={2016},
  publisher={Wiley Online Library}
}

@inproceedings{Liu2022,
  title={Minimax rates for robust community detection},
  author={Liu, Allen and Moitra, Ankur},
  booktitle={2022 IEEE 63rd Annual Symposium on Foundations of Computer Science (FOCS)},
  pages={823--831},
  year={2022},
  organization={IEEE}
}

@inproceedings{Chen2016b,
  title={Community recovery in graphs with locality},
  author={Chen, Yuxin and Kamath, Govinda and Suh, Changho and Tse, David},
  booktitle={International Conference on Machine Learning},
  pages={689--698},
  year={2016},
  organization={PMLR}
}

@article{Feige2001,
  title={Heuristics for semirandom graph problems},
  author={Feige, Uriel and Kilian, Joe},
  journal={Journal of Computer and System Sciences},
  volume={63},
  number={4},
  pages={639--671},
  year={2001},
  publisher={Elsevier}
}

@inproceedings{Moitra2016,
  title={How robust are reconstruction thresholds for community detection?},
  author={Moitra, Ankur and Perry, William and Wein, Alexander S},
  booktitle={Proceedings of the Forty-Eighth annual ACM Symposium on Theory of Computing},
  pages={828--841},
  year={2016}
}

@article{Hajek2016b,
  title={Achieving exact cluster recovery threshold via semidefinite programming: Extensions},
  author={Hajek, Bruce and Wu, Yihong and Xu, Jiaming},
  journal={IEEE Transactions on Information Theory},
  volume={62},
  number={10},
  pages={5918--5937},
  year={2016},
  publisher={IEEE}
}

@article{Abbe2014,
  title={Decoding binary node labels from censored edge measurements: Phase transition and efficient recovery},
  author={Abbe, Emmanuel and Bandeira, Afonso S and Bracher, Annina and Singer, Amit},
  journal={IEEE Transactions on Network Science and Engineering},
  volume={1},
  number={1},
  pages={10--22},
  year={2014},
  publisher={IEEE}
}

\appendix

\section{Additional Notation}
Throughout the analysis in the remaining sections, we rely on the following notations. For any $t\in [0,1]$, we define $\phi_t(\cdot, \cdot)$ on distributions $p,q\colon \cX\to \R$ as the moment-generating function of $\log(p(X)/q(X))$ evaluated at $t$, where  $X\sim q(\cdot)$, %\julia{I switched from $\mu,\nu$ to $p,q$ to address the conflict with $\nu_d$. There may be similar changes required downstream.}
\#\label{eq:define-phi-t}
\phi_t(p,q) & = \E_q\Big[\exp\Big(t\log\big(\frac{p(X)}{q(X)}\big)\Big)\Big] = \E_q\Big[\Big(\frac{p(X)}{q(X)}\Big)^t\Big] \notag\\
& = \begin{cases}
\displaystyle\sum_{x\in\cX}p(x)^tq(x)^{1-t} & \mbox{for discrete } p, q; \\
\displaystyle\int_{x\in \cX} p(x)^tq(x)^{1-t} \, dx & \mbox{for continuous } p, q.
\end{cases}
\#
Let $t_{ij}$ be any arbitrary value in the following minimizing set
\#\label{eq:t_ij}
t_{ij} \in \argmin_{t \in [0,1]} \sum_{r=1}^k \pi_r \cdot \phi_t(p_{ir}, p_{jr}).
\# 
Then, the CH divergence can be expressed as
\#\label{eq:CH-phi}
D_+(\theta_i, \theta_j) = 1 - \sum_{r=1}^k \pi_r \cdot \phi_{t_{ij}}(p_{ir}, p_{jr}) =\sum_{r=1}^k \pi_r \big(1-\phi_{t_{ij}}(p_{ir}, p_{jr}) \big)>0.
\#
Thus, there exists $r_{ij}\in Z$ such that 
\#\label{eq:r_ij}
\phi_{t_{ij}}(p_{i,r_{ij}}, p_{j,r_{ij}}) <1.
\#
Moreover, we define $\rho$ as 
\#\label{eq:rho}
\rho = \max_{i\neq j} \phi_{t_{ij}}(p_{i,r_{ij}}, p_{j,r_{ij}}),
\#
and note that $\rho<1$ under the distinctness assumption (Assumption \ref{ass:distinguishable}).

\section{Proof of Impossibility (Theorem \ref{theorem:impossibility})}
\label{sec:impossibility}
In this section, we show exact recovery is impossible under the conditions in Theorem \ref{theorem:impossibility}. We first establish part (a), which states that any estimator fails at exact recovery below the information-theoretic threshold. \cite{Abbe2021} proved this result holds for the two-community symmetric GSBM. We generalize their proof techniques to the GHCM.
Since the Maximum a Posteriori (MAP) estimator is Bayes optimal, it suffices to show the MAP estimator fails with high probability. We further split the proof into two cases: $\lambda\nu_d < 1$ and $\lambda \nu_d \geq 1$. 

When $\lambda\nu_d < 1$, recall that the vertex visibility graph is a random geometric graph. The random geometric graph $RGG(d, n, x)$ is defined by distributing vertices within the $d$-dimensional unit torus by a Poisson point process of intensity $n$ and forming an edge between any pair of vertices whose distance is less than $x$. Re-scaling the volume $n$ torus by $1/n$ shows that the vertex visibility graph of $G$ is $RGG(d, \lambda n, (\frac{\log n}{n})^{1/d})$. We use the following result on isolated vertices in random geometric graphs. 

\begin{lemma}[Theorem 7.1 in \cite{Penrose2003}]
\label{lem:isolated_vertex_RGG}
    Denote 
    \[M_n:=\inf_{x\in\mathbb{R}}\{RGG(d,n,x) \text{ has no isolated vertex}\}.\] 
    Then, almost surely,
    \[ \lim_{n\to\infty} \frac{n\nu_d M_n^d}{\log n} = 1. \]
\end{lemma}
For a Poisson point process of intensity $\lambda n$, we obtain
\[ \lim_{n\to\infty} \frac{\lambda n\nu_d M_{\lambda n}^d}{\log \lambda n} = 1. \]
By definition, there will be an isolated vertex if the visibility radius is less than $M_n$. Since $M_{\lambda n} \sim (\frac{\log(\lambda n)}{\lambda n \nu_d})^{1/d}$, taking $\lambda \nu_d < 1$ implies that $M_{\lambda n} \gg \left(\frac{\log n}{n}\right)^{1/d}$, from which it follows that an isolated vertex exists with high probability. The MAP estimator fails to label the isolated vertex correctly with a constant nonzero probability. Thus, the MAP estimator fails, implying that any estimator fails to achieve exact recovery when $\lambda \nu_d < 1.$

Next, consider when $\lambda\nu_d \geq 1$. When $G \sim \text{GHCM}(\lambda, n, \pi, P, d)$, the posterior probability of a labeling $x$ conditioned on the realization $\{G=g\}$ is given by 
\[Q(x; g) := Z(V_g) \prod_{\{u,v\}\in V_g \times V_g, u\sim v} \pi_{x(u)} p_{x(u),x(v)}(y_{uv}),\]
where $V_g$ denotes the vertex set of $g$, and $Z(V_g)$ is a normalization factor that depends only on $V_g$.

\begin{comment}
The posterior probability for a given labeling $x$ \julia{given a graph realization $g$} is, up to a normalization constant,
\begin{align*}
   \pr(x \given G=g) \propto  \prod_{\{u,v\}\in V\times V, u\sim v} \pi_{x(u)} p_{x(u),x(v)}(y_{uv}),
\end{align*}
\julia{where $V(g)$ denotes the vertex set of the graph $g$.}
\julia{in the above let's write $V(g)$}
\ew{where $g$ can be any observed random graph drawn according to our model with parameters satisfying conditions in Theorem \ref{theorem:impossibility}.}
\julia{the model parameters don't have to satisfy those conditions; also $g$ doesn't have to be observed- this is a generic $g$}
\end{comment}

A sufficient condition for the MAP estimator to fail is the existence of a vertex that increases the posterior probability when labeled with an incorrect community. Formally, if $\widetilde{x}$ is a labeling which differs from $x^\star$ only on vertex $u$, and 
\[  Q(\widetilde x ; G ) > Q(x^{\star} ; G ),\]
%\[  \pr(\widetilde x \given G=g) > \pr(x^\star \given  G=g),\]
then we define $u$ as $\flipbad$. We define the random variable $W_n=\sum_{u\in V}\mathds{1}\{u \isflipbad \}$ for a graph with volume $n$. The event $\{W_n>0\}$ implies there exists at least one $\flipbad$ vertex. %and implies the MAP estimator fails. 
\begin{proposition}
    \label{prop:impossibility}
    If the event $\{W_n > 0\}$ occurs, then the MAP estimator fails at exact recovery. 
\end{proposition}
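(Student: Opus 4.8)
The plan is to show the contrapositive: if the MAP estimator succeeds at exact recovery, then $W_n = 0$. Recall that the MAP estimator returns $\widehat{x}_{\mathrm{MAP}} = \argmax_x Q(x; G)$. Suppose toward a contradiction that $W_n > 0$, so there exists a vertex $u$ that is $\flipbad$; let $\widetilde{x}$ be the labeling differing from $x^\star$ only on $u$ with $Q(\widetilde{x}; G) > Q(x^\star; G)$. Then $x^\star$ is not the unique maximizer of $Q(\cdot; G)$; in particular $Q(\widehat{x}_{\mathrm{MAP}}; G) \geq Q(\widetilde{x}; G) > Q(x^\star; G)$, so $\widehat{x}_{\mathrm{MAP}} \neq x^\star$.

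First I would argue that it is not enough to merely observe $\widehat{x}_{\mathrm{MAP}} \neq x^\star$, because exact recovery is only required up to a permissible relabeling $\omega \in \Omega_{\pi, P}$. So the real claim to establish is that $\widehat{x}_{\mathrm{MAP}}$ does not agree with $x^\star$ up to \emph{any} permissible relabeling, i.e. $A(\widehat{x}_{\mathrm{MAP}}, x^\star) < 1$. The key observation here is that $Q$ is invariant under permissible relabelings: for any $\omega \in \Omega_{\pi, P}$, using $\pi_i = \pi_{\omega(i)}$ and $p_{ij} = p_{\omega(i), \omega(j)}$, one checks term by term that $Q(\omega \circ x; g) = Q(x; g)$. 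Hence $Q(\widetilde{x}; G) > Q(x^\star; G) = Q(\omega \circ x^\star; G)$ for every $\omega \in \Omega_{\pi, P}$, which means $\widehat{x}_{\mathrm{MAP}}$ cannot equal any $\omega \circ x^\star$. Since $\widetilde{x}$ differs from $x^\star$ at exactly one vertex, $\widetilde{x}$ itself has agreement $1 - 1/|V| < 1$ with $x^\star$ (indeed the max over $\omega$ of the fraction of agreements is at most $1 - 1/|V|$ because any $\omega \circ x^\star$ disagrees with $\widetilde{x}$ somewhere). More to the point, $\widehat{x}_{\mathrm{MAP}}$, being a strict improvement over all $\omega \circ x^\star$, is not a permissible relabeling of $x^\star$, so $A(\widehat{x}_{\mathrm{MAP}}, x^\star) < 1$, and the MAP estimator fails at exact recovery on the realization $G$.

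Assembling the pieces: on the event $\{W_n > 0\}$, the above shows $A(\widehat{x}_{\mathrm{MAP}}, x^\star) < 1$, so $\mathds{1}\{A(\widehat{x}_{\mathrm{MAP}}, x^\star) = 1\} = 0$ whenever $W_n > 0$. Therefore $\pr(A(\widehat{x}_{\mathrm{MAP}}, x^\star) = 1) \leq \pr(W_n = 0)$, and if $\pr(W_n > 0)$ does not vanish then the MAP estimator does not achieve $\lim_n \pr(A = 1) = 1$. Combined with the Bayes optimality of MAP (already invoked in the text), this proves the proposition, and in fact sets up the downstream argument that it suffices to lower bound $\pr(W_n > 0)$.

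The main obstacle I anticipate is handling the permissible-relabeling subtlety cleanly — one must be careful that "$Q(\widetilde{x}; G) > Q(x^\star; G)$" genuinely rules out agreement with \emph{every} element of $\Omega_{\pi, P}$, which is exactly where the invariance $Q(\omega \circ x; g) = Q(x; g)$ does the work. A secondary minor point is being precise about whether "MAP fails at exact recovery" is a statement about the fixed realization $G$ or about the probability; the clean reading is that on any realization in the event $\{W_n > 0\}$ the MAP output has agreement strictly less than $1$, from which the probabilistic conclusion follows immediately. Everything else is bookkeeping with the explicit product form of $Q$.
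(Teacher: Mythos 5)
Your proof is correct and follows essentially the same route as the paper's: you use the invariance $Q(\omega \circ x; G) = Q(x; G)$ for permissible $\omega$ together with the strict inequality $Q(\widetilde{x}; G) > Q(x^\star; G)$ to conclude that the MAP output cannot equal any $\omega \circ x^\star$, hence exact recovery fails. Your additional care about the agreement criterion $A(\cdot,\cdot) < 1$ and the per-realization versus probabilistic reading is fine bookkeeping, but it is the same underlying argument as in the paper.
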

\begin{proof}
The event $\{W_n > 0\}$ implies that there exists a vertex $u\in V$ such that $Q(\widetilde x ; G ) > Q(x^{\star} ; G )$, %$\pr(\widetilde x \given G=g) > \pr(x^\star \given G=g)$, 
where $\widetilde x$ differs from $x^\star$ only at $u$. By the definition of permissible relabelings (Definition \ref{def:permissible}), for any $\omega\in\Omega_{\pi, P}$, we have
\[ Q(\omega \circ x^{\star} ; G ) = Q(x^{\star} ; G ) < Q(\widetilde x ; G )\]
%\[\pr(\omega \circ x^\star \given G=g) = \pr(x^\star \given G=g) < \pr(\widetilde x \given G=g) \]
As a result, the MAP estimator will not return any element of $\{\omega\circ x^\star\colon \omega\in\Omega_{\pi, P}\}$, failing at exact recovery.
\end{proof}
Due to Proposition \ref{prop:impossibility}, it is sufficient to show that $\pr(W_n=0)= o(1)$. By Chebyshev's inequality, we have 
\begin{align*}
    \pr(W_n=0) &\leq \pr\big(| W_n-\mathbb{E}W_n | \geq \mathbb{E}W_n\big) \\
                 &\leq \frac{\var(W_n)}{\mathbb{E}[W_n]^2} \\
                 &= \frac{\mathbb{E}[W_n^2]}{\mathbb{E}[W_n]^2} - 1.
\end{align*}
Therefore, the condition 
\begin{align}
%    &\mathbb{E}[W_n]\to\infty  \label{flip-bad1} \\
    &\limsup_{n\to\infty}\frac{\mathbb{E}[W_n^2]}{\mathbb{E}[W_n]^2}\leq 1 \label{flip-bad2} 
\end{align}
is sufficient to imply $\pr(W_n=0)=o(1)$. Let $G \sim \text{GHCM}(\lambda,n,\pi,P,d)$. 
Then, denoting $F_u^G$ as the event that vertex $u$ is $\flipbad$ in graph $G$, Proposition 8.1 in \cite{Abbe2021} implies that the following two conditions are sufficient to guarantee %\eqref{flip-bad1} and 
\eqref{flip-bad2},
\begin{align}
&\lim_{n\to\infty}n\mathbb{E}^0[\mathds{1}\{F_0^{G\cup\{0\}}\}] = \infty     \label{cond1} \\
    &\limsup_{n\to\infty} \frac{\int_{y\in \mathcal{S}_{d,n}}\mathbb{E}^{0, y}[\mathds{1}\{F_0^{G\cup\{0, y\}}\}\mathds{1}\{F_y^{G\cup\{0, y\}}\}]m_{n, d}(dy)}{n\mathbb{E}^0[\mathds{1}\{F_0^{G\cup\{0\}}\}]^2} \leq 1,  \label{cond2}
\end{align}
where $m_{n, d}$ is the Haar measure. $\mathbb{E}^{0}$ is the expectation with respect to $G\cup\{0\}$, which is the geometric graph obtained by adding a vertex at the origin of the torus with a random label sampled according to $\pi$, and sampling its pairwise observations according to $\text{GHCM}(\lambda, n, \pi, P, d)$. Similarly, $\mathbb{E}^{0, y}$ is the expectation with respect to the graph obtained by adding two vertices at the origin and $y$ and sampling their pairwise observations. The reason for introducing vertices at $0$ and at $y$ is due to an application of Campbell's theorem; see \cite{Abbe2021} for further details. In the next two subsections, we establish that conditions \eqref{cond1} and \eqref{cond2} hold whenever $\lambda \nu_d \min_{i \neq j} D_+(\theta_i, \theta_j) < 1$, which proves Theorem \ref{theorem:impossibility}, part (a).

%%%%%%%%% lemma proofs
\subsection{Proof of Condition \eqref{cond1}}
To prove Condition \eqref{cond1}, we apply Cram\'er's theorem of large deviations. Denote $\Lambda_X(t)$ as the cumulant-generating function (CGF) of any random variable $X$:
\begin{align*}
    \Lambda_X(t) &= \log \mathbb{E}[\exp(tX)].
\end{align*}

\begin{lemma}[\cite{dembo_zeitouni_LDP}]
    \label{lem:cgf_convex} For any random variable $X$, $\Lambda_X(t)$ is convex. If $X$ is nondegenerate (i.e., the support is not a single point), then $\Lambda_X(t)$ is strictly convex.
\end{lemma}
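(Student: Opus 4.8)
The plan is to prove this classical fact directly from Hölder's inequality, without invoking any large-deviations machinery. Throughout I would read $\Lambda_X\colon\mathbb{R}\to(-\infty,+\infty]$, with effective domain $\mathcal{D}_X=\{t:\Lambda_X(t)<\infty\}$.

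For convexity, I would fix $t_1,t_2\in\mathbb{R}$ and $\alpha\in(0,1)$ and set $t=\alpha t_1+(1-\alpha)t_2$; if either $\Lambda_X(t_1)$ or $\Lambda_X(t_2)$ equals $+\infty$ the inequality $\Lambda_X(t)\le\alpha\Lambda_X(t_1)+(1-\alpha)\Lambda_X(t_2)$ is trivial, so I may assume both are finite. Then I would write $e^{tX}=\bigl(e^{t_1X}\bigr)^{\alpha}\bigl(e^{t_2X}\bigr)^{1-\alpha}$ and apply Hölder's inequality with conjugate exponents $p=1/\alpha$, $q=1/(1-\alpha)$ to the nonnegative random variables $\bigl(e^{t_1X}\bigr)^{\alpha}$ and $\bigl(e^{t_2X}\bigr)^{1-\alpha}$, obtaining
\[
\mathbb{E}\bigl[e^{tX}\bigr]\le\bigl(\mathbb{E}[e^{t_1X}]\bigr)^{\alpha}\bigl(\mathbb{E}[e^{t_2X}]\bigr)^{1-\alpha};
\]
taking logarithms gives convexity, and en route shows $\mathcal{D}_X$ is an interval.

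To upgrade this to strict convexity under nondegeneracy, I would argue by contradiction: if equality held above for some $t_1\ne t_2$, $\alpha\in(0,1)$ with $t_1,t_2\in\mathcal{D}_X$, then Hölder's inequality would be tight, and its equality case --- applied to the strictly positive integrands $e^{t_1X}$ and $e^{t_2X}$ --- would force $c_1e^{t_1X}=c_2e^{t_2X}$ almost surely for constants $c_1,c_2>0$. Hence $e^{(t_1-t_2)X}$ is a.s.\ constant, and since $t_1\ne t_2$ this makes $X$ a.s.\ constant, contradicting nondegeneracy. (If $\mathcal{D}_X$ contains at most one point there is nothing to prove.)

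I do not expect a genuine obstacle here; the only steps needing care are the bookkeeping around infinite values of $\Lambda_X$ and quoting Hölder's equality condition correctly. If one prefers an alternative route, on the interior of $\mathcal{D}_X$ one may differentiate twice under the integral sign (legitimate since the MGF is finite on a neighborhood) to get $\Lambda_X''(t)=\var_{P_t}(X)\ge0$, where $P_t$ is the exponentially tilted law $dP_t\propto e^{tX}\,dP$; this variance is strictly positive unless $X$ is $P_t$-a.s.\ constant, equivalently $P$-a.s.\ constant since $P_t\sim P$ --- which pins down nondegeneracy as exactly the hypothesis needed, and it is this hypothesis that later yields a strictly convex Cram\'er rate function in the impossibility argument.
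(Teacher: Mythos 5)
Your proof is correct. The paper itself offers no argument for this lemma --- it is quoted directly from Dembo--Zeitouni --- so there is nothing to compare against beyond the standard textbook treatment, which is exactly what you have reproduced: convexity from H\"older's inequality applied to $e^{tX}=(e^{t_1X})^{\alpha}(e^{t_2X})^{1-\alpha}$, and strict convexity from the equality case of H\"older (proportionality of $e^{t_1X}$ and $e^{t_2X}$ forces $X$ to be a.s.\ constant), with the degenerate-domain case correctly set aside as vacuous. Your alternative via the exponential tilting, $\Lambda_X''(t)=\mathrm{Var}_{P_t}(X)$ on the interior of the effective domain, is also the argument one finds in the reference; either route suffices, and your bookkeeping around infinite values of $\Lambda_X$ is the only point requiring care, which you handle.
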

\begin{theorem}[Cram\'er's Theorem \citep{dembo_zeitouni_LDP}]  Let $X_1, X_2, \hdots$ be i.i.d. random variables with finite CGF $\Lambda_{X}$ and define its Legendre transform as
\[\Lambda_{X}^\star(\alpha) = \sup_{t\in\mathbb{R}}(t\alpha-\Lambda_X(t)). \]
For all $\alpha > \mathbb{E}[X_1]$, we have 
\[\lim_{n \to \infty} \frac{1}{n} \log \mathbb{P}\Big(\sum_{r = 1}^n X_{r} \geq n\alpha \Big) = - \Lambda_X^{\star}(\alpha).\]
\label{thm:cramer}
\end{theorem}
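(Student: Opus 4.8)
The plan is to establish the claimed limit by separately proving a matching upper bound on the $\limsup$ and lower bound on the $\liminf$ of $\frac1n\log\pr\big(\sum_{r=1}^n X_r\ge n\alpha\big)$. (Existence of the limit can also be seen directly: since $\{\sum_{r=1}^m X_r\ge m\alpha\}$ and $\{\sum_{r=m+1}^{m+n}X_r\ge n\alpha\}$ are independent and their intersection lies in $\{\sum_{r=1}^{m+n}X_r\ge (m+n)\alpha\}$, the sequence $a_n:=-\log\pr\big(\sum_{r=1}^n X_r\ge n\alpha\big)$ is subadditive, so Fekete's lemma gives $\lim_n a_n/n=\inf_n a_n/n\in[0,\infty]$; only its value is at issue.)

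\emph{Upper bound (Chernoff).} For every $t\ge0$, Markov's inequality applied to $e^{t\sum_{r} X_r}$ together with the i.i.d.\ factorization $\E\big[e^{t\sum_{r=1}^n X_r}\big]=e^{n\Lambda_X(t)}$ gives $\pr\big(\sum_{r=1}^n X_r\ge n\alpha\big)\le e^{-n(t\alpha-\Lambda_X(t))}$, hence $\frac1n\log\pr(\cdot)\le-\sup_{t\ge0}\big(t\alpha-\Lambda_X(t)\big)$. To upgrade $\sup_{t\ge0}$ to the unconstrained supremum defining $\Lambda_X^\star(\alpha)$, note that $g(t):=t\alpha-\Lambda_X(t)$ is concave by Lemma \ref{lem:cgf_convex}, with $g(0)=0$ and $g'(0)=\alpha-\E[X_1]>0$, so its maximum over $\R$ is attained at some $t^\star>0$; thus $\limsup_n\frac1n\log\pr\big(\sum_{r=1}^n X_r\ge n\alpha\big)\le-\Lambda_X^\star(\alpha)$.

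\emph{Lower bound (exponential tilting).} Let $t^\star>0$ attain $\Lambda_X^\star(\alpha)$ (assuming for now this supremum is attained at a finite point; the boundary case is handled at the end), so $\Lambda_X'(t^\star)=\alpha$ by first-order optimality, and define the tilted law $\mathbb{Q}$ by $\frac{d\mathbb{Q}}{d\pr}(x)=e^{t^\star x-\Lambda_X(t^\star)}$; differentiating $\Lambda_X$ shows $\E_{\mathbb{Q}}[X_1]=\Lambda_X'(t^\star)=\alpha$, and finiteness of the CGF forces $X_1$ to have finite variance under $\mathbb{Q}$. Fixing $\epsilon>0$, restricting to the band $\big\{\sum_{r=1}^n X_r\in[n\alpha,n(\alpha+\epsilon))\big\}$ and changing measure back to $\pr$ (using $t^\star\ge0$ to bound $e^{-t^\star\sum_r x_r}\ge e^{-t^\star n(\alpha+\epsilon)}$ on that band) yields
\[
\pr\Big(\textstyle\sum_{r=1}^n X_r\ge n\alpha\Big)\;\ge\; e^{-nt^\star(\alpha+\epsilon)+n\Lambda_X(t^\star)}\;\mathbb{Q}\Big(\textstyle\frac1n\sum_{r=1}^n X_r\in[\alpha,\alpha+\epsilon)\Big).
\]
Since the $X_r$ are i.i.d.\ with mean $\alpha$ and finite variance under $\mathbb{Q}$, the central limit theorem gives $\mathbb{Q}\big(\frac1n\sum_r X_r\in[\alpha,\alpha+\epsilon)\big)\to\frac12$. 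Taking $\frac1n\log$, then $n\to\infty$, then $\epsilon\downarrow0$, gives $\liminf_n\frac1n\log\pr\big(\sum_{r=1}^n X_r\ge n\alpha\big)\ge-\big(t^\star\alpha-\Lambda_X(t^\star)\big)=-\Lambda_X^\star(\alpha)$, matching the upper bound.

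\emph{Main obstacle.} The delicate step is guaranteeing the tilting exponent $t^\star$ with $\Lambda_X'(t^\star)=\alpha$, which can fail in the boundary regime $\alpha\ge b$, where $b$ denotes the essential supremum of $X_1$. If $\alpha>b$, both sides are $-\infty$ (indeed $\pr\big(\sum_{r=1}^n X_r\ge n\alpha\big)=0$ and $\Lambda_X^\star(\alpha)=+\infty$), so nothing is needed; if $\alpha=b<\infty$ one argues directly, since $\pr\big(\sum_{r=1}^n X_r\ge nb\big)=\pr(X_1=b)^n$ while $\Lambda_X^\star(b)=-\log\pr(X_1=b)$; and the degenerate case, $X_1$ a.s.\ constant, is vacuous because then $\alpha>\E[X_1]$ forces $\pr=0$. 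In the remaining case $\E[X_1]<\alpha<b$, finiteness of $\Lambda_X$ everywhere makes it smooth and strictly convex (Lemma \ref{lem:cgf_convex}), with $\Lambda_X'$ strictly increasing and mapping $(0,\infty)$ onto $(\E[X_1],b)$, so $\Lambda_X'(t^\star)=\alpha$ has a unique solution $t^\star>0$ and the tilting argument proceeds unobstructed.
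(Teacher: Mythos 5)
The paper never proves this statement: it is imported verbatim as a classical result and attributed to Dembo and Zeitouni, so there is no internal argument to compare yours against. What you give is the standard textbook proof under the everywhere-finite-CGF hypothesis — Markov/Chernoff for the upper bound, exponential tilting plus the central limit theorem for the lower bound, with the Fekete subadditivity remark as an optional extra — and it is correct, including the case analysis for $\alpha$ at or above the essential supremum of $X_1$ and the degenerate (a.s.\ constant) case. Two small points of hygiene. First, in the upper-bound paragraph the claim that the maximum of $t\mapsto t\alpha-\Lambda_X(t)$ is attained at a finite $t^\star>0$ is not needed there and is false when $\alpha$ is at least the essential supremum (the supremum may only be approached as $t\to\infty$); all that step requires, and all your concavity-plus-$g'(0)>0$ argument actually delivers, is that the supremum over $t\ge 0$ equals the supremum over all of $\mathbb{R}$, and you correctly defer the attainment question to the boundary analysis at the end. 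Second, the finiteness of the variance under the tilted law $\mathbb{Q}$ deserves its one-line justification: the CGF of $X_1$ under $\mathbb{Q}$ is $t\mapsto\Lambda_X(t+t^\star)-\Lambda_X(t^\star)$, finite for all $t$, so all moments exist and the CLT applies (nondegeneracy under $\mathbb{Q}$ follows from nondegeneracy under $\mathbb{P}$). With these cosmetic repairs your write-up is a complete, self-contained proof of the cited theorem, which is more than the paper itself provides.
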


\begin{lemma}\label{cond1_pf}
For any $\lambda>0, d\in\mathbb{N}$, $\pi\in\mathbb{R}^k$, and $P$ such that $\lambda \nu_d \min_{i \neq j}  D_+(\theta_i, \theta_j) < 1$, there exists $\beta>0$ such that $\pr(0\isflipbad\text{ in }G\cup\{0\}) \geq n^{-1+\beta}$. It follows that Condition \eqref{cond1} holds.
\end{lemma}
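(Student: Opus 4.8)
The plan is to lower-bound the probability that the origin vertex $0$ is \flipbad{} in $G \cup \{0\}$ by a quantity of the form $n^{-1+\beta}$ for some $\beta > 0$, which immediately gives $n\,\mathbb{E}^0[\mathds{1}\{F_0^{G\cup\{0\}}\}] \geq n^{\beta} \to \infty$, i.e.\ Condition \eqref{cond1}. First I would translate the event $\{0 \isflipbad\}$ into a statement about the genie-type log-likelihood comparison at the origin: the origin is \flipbad{} when there is some label $j \neq x^\star(0)$ for which relabeling $0$ as $j$ does not decrease the posterior, which (after cancelling the common factors in $Q$) is equivalent to the event that the genie estimator would not prefer the true label $x^\star(0)$ over $j$. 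Conditioning on $x^\star(0)=i$ (which happens with probability $\pi_i > 0$) and on the locations of the Poisson process, the relevant log-likelihood difference decomposes as a sum over the visible neighbors $v \sim 0$ of independent increments $\log\!\big(p_{i,x^\star(v)}(y_{0v})/p_{j,x^\star(v)}(y_{0v})\big)$, where $y_{0v} \sim p_{i,x^\star(v)}$. So the probability that $0$ is \flipbad{} is at least the probability that one such sum is $\le 0$.

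Next I would set up the large-deviations estimate. Fix the pair $(i,j)$ achieving $\min_{i\neq j} D_+(\theta_i,\theta_j)$. Conditioned on the Poisson locations, the number $N$ of visible neighbors of the origin is $\mathrm{Poisson}(\lambda \nu_d \log n)$ (the visibility ball has volume $\nu_d \log n$), and each neighbor independently has label $r$ with probability $\pi_r$; hence, conditioned on $N = m$, the log-likelihood difference is a sum of $m$ i.i.d.\ terms, each distributed as $\log(p_{i,R}(Y)/p_{j,R}(Y))$ with $R \sim \pi$ and $Y \sim p_{i,R}$. By Cram\'er's theorem (Theorem \ref{thm:cramer}, applied to the negative of these i.i.d.\ variables so that the ``$\ge n\alpha$'' event becomes the ``sum $\le 0$'' event at $\alpha = 0 > \mathbb{E}[\cdot]$, using that $\mathbb{E}[\log(p_{i,R}(Y)/p_{j,R}(Y))] = \mathrm{KL} \ge 0$ and is strictly positive under the assumptions so the event is a genuine deviation), the probability that this sum is $\le 0$ decays like $\exp(-m \cdot I)$ where the rate $I$ is the Legendre transform evaluated at $0$. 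The key computation is that this rate equals $-\log\big(\inf_{t\in[0,1]} \sum_r \pi_r \phi_t(p_{ir},p_{jr})\big)$, i.e.\ $-\log(1 - D_+(\theta_i,\theta_j))$ — this is exactly the observation recorded in \eqref{eq:CH-phi}, since the moment generating function of $\log(p_{i,R}(Y)/p_{j,R}(Y))$ at $-t$ is $\sum_r \pi_r \phi_t(p_{ir},p_{jr})$ and the optimizing $t$ lies in $[0,1]$ (this last point needs the standard convexity argument — the MGF at $t=0$ and $t=1$ both equal $1$, so the infimum over $\mathbb{R}$ is attained in $[0,1]$, cf.\ Lemma \ref{lem:cgf_convex}).

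Then I would combine the two layers: taking $m = \lfloor \lambda \nu_d \log n \rfloor$ (a typical value of $N$, which occurs with probability $\Omega(1/\sqrt{\log n})$, a sub-polynomial factor we can absorb) gives
\[
\pr(0 \isflipbad) \;\gtrsim\; \frac{\pi_i}{\mathrm{poly}(\log n)}\, \exp\!\big(-\lambda \nu_d \log n \cdot (-\log(1 - D_+(\theta_i,\theta_j)))\big) \;=\; \frac{\pi_i}{\mathrm{poly}(\log n)}\, (1 - D_+(\theta_i,\theta_j))^{\lambda \nu_d \log n}.
\]
Writing $(1-D_+)^{\lambda\nu_d \log n} = n^{\lambda \nu_d \log(1-D_+)}$ and using the hypothesis $\lambda \nu_d D_+(\theta_i,\theta_j) < 1$ together with $\log(1-x) > -x/(1-x) \ge -x \cdot C$ for $x$ bounded away from $1$ — more carefully, one wants $\lambda \nu_d \cdot(-\log(1-D_+)) < 1$, which does \emph{not} follow from $\lambda\nu_d D_+ < 1$ in general, so here I must be more careful and instead argue directly: the genie failure probability is better bounded below not by the full Cram\'er rate at a typical $N$ but by choosing $N$ slightly larger than typical, or rather by noting that what we actually need is merely \emph{some} $\beta>0$ with the bound $n^{-1+\beta}$, and the sharp threshold $\lambda\nu_d D_+ = 1$ enters through a more delicate choice. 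The honest statement is that one lower-bounds $\pr(0\isflipbad)$ by conditioning on $N$ taking a value in a window around $\lambda\nu_d \log n$ and optimizing; the resulting exponent, after the dust settles, is $-\lambda\nu_d D_+(\theta_i,\theta_j) \cdot \log n + o(\log n)$, matching the impossibility threshold, because the relevant large-deviations event for the \emph{Poisson-randomized} sum (not the fixed-$m$ sum) has rate exactly $\lambda\nu_d D_+$. Concretely: the log-MGF of a single neighbor's contribution, \emph{including} the Poisson randomization of whether the neighbor is visible, gives a tilted rate whose value at the relevant point is $\lambda\nu_d(1 - \sum_r \pi_r\phi_{t_{ij}}) = \lambda\nu_d D_+(\theta_i,\theta_j)$, which is the content of \cite[Proof of Lemma 8.3]{Abbe2021} adapted here.

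\textbf{Main obstacle.} The subtle point — and where the argument must be executed carefully rather than sketched — is getting the \emph{exact} exponent $\lambda\nu_d D_+(\theta_i,\theta_j)$ rather than a lossy bound, which requires performing the large-deviations analysis on the Poisson-mixture sum (number of visible neighbors $\sim \mathrm{Poisson}(\lambda\nu_d\log n)$, each contributing an independent log-likelihood-ratio increment with the neighbor's random label) so that the Cram\'er rate is computed for the right random variable; a naive two-step bound (condition on typical $N$, then apply Cram\'er to a fixed-length i.i.d.\ sum) produces the wrong, strictly larger exponent $-\lambda\nu_d\log(1-D_+)$ and fails to match the threshold. Handling the optimizing tilt parameter landing in $[0,1]$, and verifying the i.i.d.\ increments are non-degenerate (so Cram\'er applies with a positive rate and the infimum is not trivially $1$) under Assumption \ref{ass:distinguishable} or the distinctness needed here, are the remaining technical checks.
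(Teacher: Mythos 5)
Your overall route is the paper's: reduce the Flip-Bad event at the origin to a genie-type likelihood comparison, condition on the Poisson number $N(0)\sim\mathrm{Poisson}(\lambda\nu_d\log n)$ of visible neighbours, apply a Cram\'er lower bound to the i.i.d.\ mixture increments, and identify the rate through $\sum_r\pi_r\phi_t(p_{ir},p_{jr})$ and \eqref{eq:CH-phi}. You also correctly diagnose the genuinely delicate point: a two-step bound (``typical $N$, then Cram\'er on a fixed-length sum'') produces the exponent $\lambda\nu_d\log\frac{1}{1-D_+}$, which is strictly larger than $\lambda\nu_d D_+$ and therefore does not follow from the hypothesis $\lambda\nu_d D_+<1$.

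However, your proposed fix is not right as stated, and this is where the gap lies. Conditioning on $N(0)$ lying ``in a window around $\lambda\nu_d\log n$'' is exactly the lossy computation again: for fixed $m\approx\lambda\nu_d\log n$ the Cram\'er rate is $m\cdot(-\log(1-D_+))$, so localizing $N(0)$ at its mean cannot produce the exponent $\lambda\nu_d D_+$. The sharp exponent comes from letting $N(0)$ be atypically \emph{small}: one sums the fixed-$m$ Cram\'er lower bound $\exp(-m(\varepsilon+\Lambda^\star(\alpha)))$ against the full Poisson law of $N(0)$ (over $m\ge\delta\log n$, the truncated part being $o(1/n)$), which reproduces the Poisson moment generating function and yields an exponent tending to $-\lambda\nu_d D_+(\theta_i,\theta_j)$; the dominant values of $m$ are near $\lambda\nu_d(1-D_+)\log n$, not near the mean. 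Your compound-Poisson CGF remark does point at the right identity $\lambda\nu_d(1-\sum_r\pi_r\phi_{t_{ij}})=\lambda\nu_d D_+$, but an MGF computation by itself only gives a Chernoff \emph{upper} bound; for the required \emph{lower} bound you still need the summation over the Poisson distribution (equivalently, an explicit tilt of $N(0)$), combined with the small-$\alpha$ convexity step the paper uses to control the Legendre transform, $\Lambda^\star(\alpha)\le(t_{ij}+\eta)\alpha-\Lambda(t_{ij})$ (evaluating it exactly at $0$ is a boundary case of Cram\'er and additionally requires the increments to be nondegenerate, i.e.\ $\theta_i\neq\theta_j$). With that replacement — integrate the Cram\'er lower bound against the Poisson law of $N(0)$ rather than localize at its typical value — your argument coincides with the paper's proof.
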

\begin{proof}
     Note Condition \eqref{cond1} is equivalent to $\pr(0\isflipbad\text{ in }G\cup\{0\}) = \omega(1/n)$, where $0$ is the vertex added at the origin of $\mathcal S_{d, n}$. Thus, showing $\pr(0\isflipbad\text{ in }G\cup\{0\}) \geq n^{-1+\beta}$ implies Condition \eqref{cond1}. Let $i\neq j$ be two communities such that $\lambda \nu_d D_+(\theta_i, \theta_j) < 1$. Let $\widetilde x$ be a labeling such that $\widetilde x(u) = x^\star(u)$ for $u\in V$ and set $\widetilde x(0) = j$. 
     %Denote $Q(\cdot \mid G=g)$ as the posterior of the labeling on $G\cup \{0\}$ given the observation $\{G=g\}$. 
     We want to show that the following event occurs with probability at least $n^{-1+\beta}$:
     %\[ \{ Q(\widetilde{x} \given G = g) > Q(x^{\star} \given G = g)\} \cap \{x^{\star}(0) = i\}. \]
     \[ \{ Q(\widetilde{x} ; G \cup \{0\}) > Q(x^{\star} ; G \cup \{0\})\} \cap \{x^{\star}(0) = i\}. \]
     Therefore, it is sufficient to show
\begin{align}
   &\pi_j \prod_{v\in V, v\sim 0} p_{j, x^\star(v)} (y_{0,v}) > \pi_i \prod_{v\in V, v\sim 0} p_{i, x^\star(v)} (y_{0,v}) \label{eqn:zero_flipbad}
\end{align}
holds with probability at least $n^{-1+\beta}$, conditioned on $\{x^{\star}(0) = i\}$. By rearranging terms and applying the logarithm, the inequality in \eqref{eqn:zero_flipbad} is equivalent to 
\begin{align}
    \label{eqn:zero_flipbad2}
   &-\log\Big(\frac{\pi_i}{\pi_j}\Big) +  \sum_{v\in V, v\sim 0} \log\Big( \frac{p_{j, x^\star(v)} (y_{0,v})}{p_{i, x^\star(v)} (y_{0,v})}\Big) > 0.
\end{align}
Let  $\{X_{v}^{(i)}\}_{v\in V}$ be i.i.d. replicates drawn from a mixture distribution with weights $\{\pi_{s}\}$. For a fixed $v$, with probability $\pi_s$ independent of everything else, $X_v^{(i)}$ is distributed as $\log({p_{js}(Y^{(i, s)})}/{p_{is}(Y^{(i, s)})})$, 
where $Y^{(i, s)}$ comes from the $P_{is}$ distribution. Note that $X_v^{(i)}$ and $\log( {p_{j, x^\star(v)} (y_{0,v})}/{p_{i, x^\star(v)} (y_{0,v})})$ have the same distribution, conditioned on $\{x^{\star}(0) = i\}$. Thus, the inequality in \eqref{eqn:zero_flipbad2} can be expressed as
\begin{align*}
   &-\log\Big(\frac{\pi_i}{\pi_j}\Big) +  \sum_{v\in V, v\sim 0} X_v^{(i)}   > 0.
\end{align*}
Denote $N(0)$ as the number of visible neighbors of the vertex at the origin. By conditioning on $N(0)$, we obtain 
\begin{align}
    & \mathbb{P}\big(0 \text{ is }\textit{Flip-Bad}\text{ in }G\cup\{0\} \mid x^\star(0)=i\big) \nonumber \\
    &\qquad= \mathbb{P}\Big(\sum_{v\sim 0} X_v^{(i)} > \log\Big(\frac{\pi_i}{\pi_j}\Big) \Big) \nonumber \\
     &\qquad= \sum_{m=0}^{\infty} \mathbb{P}(N(0)=m) \mathbb{P}\Big(\sum_{r=1}^m X_r^{(i)} > \log\Big(\frac{\pi_i}{\pi_j}\Big) \Biggiven N(0)=m\Big),\label{eq:flipbad_cond}
\end{align}
where $\{X_r^{(i)}\}$ are i.i.d. copies of the mixture distribution defined above. Our approach is to bound the tail event $\mathbb{P}(\sum_{r=1}^m X_r^{(i)} > \log({\pi_i}/{\pi_j}) \mid N(0)=m)$ using large deviation theory.  
Recalling $\phi_t$ defined in \eqref{eq:define-phi-t}, we compute the moment generating function of $X_r^{(i)}$ as
\begin{align*}
    \mathbb{E}[\exp(tX_r^{(i)})] &= \sum_{a\in Z} \pi_a \phi_t(p_{ja}, p_{ia}).
\end{align*}
Thus, we have the CGF as follows,
\begin{align}
    \Lambda(t) &= \log\Big( \sum_{a\in Z} \pi_a \phi_t(p_{ja}, p_{ia}) \Big). \label{eq:Lambda}
\end{align}
Recall that $t_{ij}$ defined in \eqref{eq:t_ij} minimizes $\Lambda(t)$. We will show $\Lambda^{\star}(\alpha) \approx 
  -\Lambda(t_{ij})$ for $\alpha > 0$ sufficiently small. To this end, we will show that for any $\eta > 0$, there exists $\alpha >0$ such that we can restrict the supremum in Theorem \ref{thm:cramer} to $\{t : t_{ij} \leq t \leq t_{ij} + \eta \}$; in other words, we will show that $\Lambda^{\star}(\alpha) = \sup_{t : t_{ij} \leq t \leq t_{ij} + \eta } (t\alpha - \Lambda(t))$.

For any $t, t_0\in\mathbb{R}$, the tangent approximation and the convexity of $\Lambda(\cdot)$ yield
\#\label{eq:tangent-Lambda}
-\Lambda(t)\le -\Lambda(t_0) - \Lambda'(t_0)(t-t_0).
\#
For any $\eta>0$, the strict convexity in Lemma \ref{lem:cgf_convex} ensures $\Lambda'(t_{ij} + \eta)>0$. We choose $\alpha >0$ such that $\alpha < \Lambda'(t_{ij} + \eta)$. For the case $t> t_{ij} + \eta$, we take $t_0 = t_{ij} + \eta$ in \eqref{eq:tangent-Lambda} and obtain
\$
- \Lambda(t) \le -\Lambda(t_{ij} + \eta) -\Lambda'(t_{ij} + \eta)\big[t-(t_{ij} + \eta)\big].
\$
For any $t>t_{ij}+\eta$, substituting the above bound and considering the condition $\alpha < \Lambda'(t_{ij} + \eta)$ yield 
\$
t\alpha - \Lambda(t) & \le t\alpha -\Lambda(t_{ij} + \eta) -\Lambda'(t_{ij} + \eta) \big[t-(t_{ij} + \eta)\big] \notag\\
&= (t_{ij} + \eta)\alpha -\Lambda(t_{ij} + \eta) +\big[\alpha-\Lambda'(t_{ij} + \eta)\big] \big[t-(t_{ij} + \eta)\big] \notag\\
&\le (t_{ij} + \eta)\alpha -\Lambda(t_{ij} + \eta).
\$
For the case $t < t_{ij}$, recall $t_{ij}\in[0, 1]$ by definition. Thus, $t\alpha - \Lambda(t) < t_{ij} \alpha - \Lambda(t_{ij})$. We conclude that for fixed $\eta >0$ and $0<\alpha < \Lambda'(t_{ij} + \eta) $, the Legendre transform of $X_r^{(i)}$ is bounded by
\begin{align}
    \Lambda^{\star}(\alpha) &= \sup_{t : t_{ij} \leq t \leq t_{ij} + \eta } (t\alpha - \Lambda(t)) \nonumber \\
    &\leq \sup_{t : t_{ij} \leq t \leq t_{ij} + \eta } \big\{t\alpha\big\} + \sup_{t : t_{ij} \leq t \leq t_{ij} + \eta } \big\{ - \Lambda(t)\big\} \nonumber \\
    &= (t_{ij} + \eta)\alpha -\Lambda(t_{ij}) \label{eqn:Lambda_star}.
\end{align}
By Theorem \ref{thm:cramer} and noting $\mathbb{E}[X_r^{(i)}] \leq 0$, for our choice of $\alpha>0$ and for any $\varepsilon > 0$, there exists $m_0$ such that for all $m\geq m_0$ 
\begin{align*}
    \Big|\frac{1}{m} \log \mathbb{P}\Big(\sum_{r = 1}^m X_{r}^{(i)} \geq m\alpha \Big)  + \Lambda^{\star}(\alpha) \Big| \leq \varepsilon. 
\end{align*}
Rearranging terms yields
\begin{align}
     \mathbb{P}\Big(\sum_{r = 1}^m X_{r}^{(i)} \geq m\alpha \Big) \geq \exp\big(-m(\varepsilon+\Lambda^\star(\alpha))\big). \label{eqn:Cramer}
\end{align}
Note $N(0)\sim \text{Poisson}(\lambda\nu_d \log n)$. The Chernoff bound (Lemma \ref{lem:Chernoff-poisson}) yields
\begin{align*}
    \mathbb{P}(N(0) < \delta \log n) 
    &\leq \exp \Big( \delta\log n - \lambda \nu_d \log n - \log\Big(\frac{\delta}{\lambda \nu_d}\Big) \delta \log n \Big) \\
    &= n^{\delta - \lambda \nu_d - \delta \log(\delta /(\lambda \nu_d))}.
\end{align*}
Let $f(\delta):=\delta - \lambda \nu_d - \delta \log(\delta /(\lambda\ \nu_d))$. As $\delta \to 0$, we have $f(\delta) \to -\lambda \nu_d$ and $f(\delta)$ is strictly increasing on $\delta \in (0, \lambda \nu_d).$ Thus, we can choose $\delta$ small enough such that $f(\delta) < -\lambda\nu_d + c$. Since we assumed $\lambda \nu_d \geq 1$, we obtain $f(\delta) < -1 + c$. As a result, $\mathbb{P}(|N(0)| < \delta \log n) \leq n^{-1+c}$. For $n$ sufficiently large, we further have $\delta\log n\geq m_0$ and $\alpha\delta\log n \geq \log({\pi_i}/{\pi_j})$. Then, we lower bound (\ref{eq:flipbad_cond}) as 
\begin{align}
    \mathbb{P}\Big(\sum_{v\sim 0} X_v^{(i)} > \log\Big(\frac{\pi_i}{\pi_j}\Big) \Big) &\geq \sum_{m=\delta\log n}^{\infty} \mathbb{P}(N(0)=m) \mathbb{P}\Big(\sum_{r=1}^m X_r^{(i)} > \log\Big(\frac{\pi_i}{\pi_j}\Big) \Biggiven N(0)=m\Big) \nonumber \\
    &\geq \sum_{m=\delta\log n}^{\infty} \mathbb{P}(N(0)=m) \mathbb{P}\Big(\sum_{r=1}^m X_r^{(i)} > m\alpha\Big) \Biggiven N(0)=m\Big) \nonumber \\
    &\geq \sum_{m=\delta\log n}^{\infty} \mathbb{P}(N(0)=m) \exp\big(-m(\varepsilon+\Lambda^\star(\alpha))\big) \nonumber \\
     &\geq \sum_{m=\delta\log n}^{\infty} \mathbb{P}(N(0)=m) \exp\big(-m(\varepsilon+(t_{ij} + \eta)\alpha -\Lambda(t_{ij}))\big), \label{eq:flipbad_cond2}
\end{align}
where the last two inequalities follow from \eqref{eqn:Cramer} and \eqref{eqn:Lambda_star}, respectively. Observe that \eqref{eq:flipbad_cond2} is almost the moment generating function of $N(0)$ evaluated at $\varepsilon+(t_{ij} + \eta)\alpha -\Lambda(t_{ij})$. By introducing the first $\delta\log n$ terms, (\ref{eq:flipbad_cond2}) is equivalent to
\begin{align*}
    & \sum_{m=0}^{\infty} \mathbb{P}(N(0)=m) \exp\big(-m(\varepsilon+(t_{ij} + \eta)\alpha -\Lambda(t_{ij}))\big) \\
    &\qquad \qquad - \sum_{m=0}^{\delta\log n - 1} \mathbb{P}(N(0)=m) \exp\big(-m(\varepsilon+(t_{ij} + \eta)\alpha -\Lambda(t_{ij}))\big)  \\
     &\qquad\geq \sum_{m=0}^{\infty} \mathbb{P}(N(0)=m) \exp\big(-m(\varepsilon+(t_{ij} + \eta)\alpha -\Lambda(t_{ij}))\big) - \mathbb{P}(N(0) < \delta\log n) \\
     &\qquad\geq \sum_{m=0}^{\infty} \mathbb{P}(N(0)=m) \exp\big(-m(\varepsilon+(t_{ij} + \eta)\alpha -\Lambda(t_{ij}))\big) - n^{-1+c}.
\end{align*}
Substituting the moment-generating function of a Poisson random variable yields
\begin{align}
    &\mathbb{P}\bigg(\sum_{v\sim 0} X_v^{(i)} > \log\Big(\frac{\pi_i}{\pi_j}\Big) \bigg) \\
    &\geq \sum_{m=0}^{\infty} \mathbb{P}(N(0)=m) \exp\big(-m(\varepsilon+(t_{ij} + \eta)\alpha -\Lambda(t_{ij}))\big) - n^{-1+c} \nonumber \\
    &= \exp\bigg(\lambda \nu_d \log n \Big(\exp\big(-\varepsilon-(t_{ij} + \eta)\alpha +\Lambda(t_{ij})\big) 
 -1\Big)\bigg) - n^{-1+c} \nonumber\\
    &= n^{\lambda \nu_d \big(\exp(-\varepsilon-(t_{ij} + \eta)\alpha +\Lambda(t_{ij})) 
 -1\big)} - n^{-1+c}. \label{eqn:imp_ch} 
 %     &= n^{\lambda \nu_d \Big(\exp(-(\varepsilon+(t_{ij} + \eta)\alpha + \log( \sum_{r\in Z} \pi_r \phi_{t_{ij}}(p_{jr}, p_{ir})))) 
 % -1\Big)} - o\Big(\frac{1}{n}\Big) \nonumber\\
 %    &= n^{\lambda \nu_d \Big( (1-D_+(\theta_i, \theta_j)) \cdot \exp(-(\varepsilon+(t_{ij} + \eta)\alpha)) - 1\Big)} - o\Big(\frac{1}{n}\Big). \nonumber
\end{align}

To show \eqref{eqn:imp_ch} is lower-bounded by $n^{-1+\beta}$, it suffices to show 
\begin{equation}
\lambda \nu_d \Big(\exp\big(-\varepsilon-(t_{ij} + \eta)\alpha +\Lambda(t_{ij})\big) 
 -1\Big) > -1 + 2\beta,   \label{eq:less-than-1b}
\end{equation}
which lower-bounds \eqref{eqn:imp_ch} by $n^{-1+2\beta} - n^{-1+c}$.
Since $c>0$ can be arbitrarily small by taking $\delta >0$ small enough, it follows that \eqref{eqn:imp_ch} is lower-bounded by $n^{-1+\beta}$ for $c < 2\beta$ and $n$ large enough.

Substituting the definition of $\Lambda(t_{ij})$ in \eqref{eq:Lambda}, the left hand side of \eqref{eq:less-than-1b} is equal to
\$
&\lambda \nu_d \bigg[\exp\bigg\{-(\varepsilon+(t_{ij} + \eta)\alpha + \log\bigg( \sum_{r\in Z} \pi_r \phi_{t_{ij}}(p_{jr}, p_{ir})\bigg)\bigg\} 
 -1\bigg] \nonumber\\
 &=\lambda \nu_d \bigg[\sum_{r\in Z} \pi_r \phi_{t_{ij}}(p_{jr}, p_{ir}) \cdot e^{-(\varepsilon+(t_{ij} + \eta)\alpha)} 
 -1\bigg] \nonumber\\
 &= \lambda \nu_d \big[ (1-D_+(\theta_i, \theta_j)) \cdot e^{-(\varepsilon+(t_{ij} + \eta)\alpha)} - 1\big]. 
\$

Finally, observe that for any $\delta > 0$, there exist choices of $\varepsilon, \eta, \alpha$ such that 
\begin{align}
   \lambda \nu_d \left[ (1-D_+(\theta_i, \theta_j)) \cdot e^{-(\varepsilon+(t_{ij} + \eta)\alpha)} - 1\right] &\geq \lambda \nu_d \left[ (1-D_+(\theta_i, \theta_j)) \cdot  (1-\delta) - 1 \right] \nonumber \\
    &= -\lambda \nu_d \left[ D_+(\theta_i, \theta_j)(1-\delta) + \delta \right]. \label{eq:ch_delta} 
\end{align}
The limit of \eqref{eq:ch_delta} as $\delta$ approaches $0$ is $-\lambda \nu_dD_+(\theta_i, \theta_j) > -1$. Let \[\beta = (1 - \lambda \nu_d D_+(\theta_i, \theta_j) )/3 > 0.\] Therefore, there exists $\delta_0 > 0$ such that for all $\delta \in (0,\delta_0]$, \[-\lambda \nu_d \left[ D_+(\theta_i, \theta_j)(1-\delta) + \delta \right] > -1 + 2\beta\] holds, verifying \eqref{eq:less-than-1b}.
\end{proof}

\subsection{Proof of Condition \eqref{cond2}}

\begin{lemma}\label{cond2_pf}
    For any $\lambda>0, d\in\mathbb{N}$, $\pi\in\mathbb{R}^k$, and $P$ such that $\lambda \nu_d \min_{i \neq j}  D_+(\theta_i, \theta_j) < 1$, Condition \eqref{cond2} holds.
\end{lemma}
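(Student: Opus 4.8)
Write $D^\star := \min_{i\neq j} D_+(\theta_i,\theta_j)$, so the hypothesis is $\lambda\nu_d D^\star < 1$. The plan is to split the integral in \eqref{cond2} over $y\in\cS_{d,n}$ according to whether $y$ is ``far'' from the origin, $\|y\| > 2(\log n)^{1/d}$, or ``close'', $\|y\| \le 2(\log n)^{1/d}$, and to compare both pieces against the denominator using sharp two-sided estimates of $\mathbb{E}^0[\mathds{1}\{F_0^{G\cup\{0\}}\}]$. The essential structural fact is that the far region has $m_{n,d}$-measure at most $n$ while the close region (a toroidal ball of radius $2(\log n)^{1/d}$) has measure $2^d\nu_d\log n = n^{o(1)}$.

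First I would record a two-sided bound on $p_n := \mathbb{E}^0[\mathds{1}\{F_0^{G\cup\{0\}}\}]$. For the upper bound, $F_0^{G\cup\{0\}}$ (conditioned on $x^\star(0)=i$) occurs only if $\sum_{v\sim 0}\log\!\big(p_{j,x^\star(v)}(y_{0v})/p_{i,x^\star(v)}(y_{0v})\big) > \log(\pi_i/\pi_j)$ for some $j\neq i$; applying Markov's inequality to $e^{t(\cdot)}$ with $t$ the minimizer over $[0,1]$ of $\Lambda(t)=\log\sum_r\pi_r\phi_t(p_{jr},p_{ir})$ (so $e^{\Lambda(t)}=1-D_+(\theta_i,\theta_j)$), and averaging the bound $e^{-t\log(\pi_i/\pi_j)}(1-D_+(\theta_i,\theta_j))^{N(0)}$ over $N(0)\sim\mathrm{Poisson}(\lambda\nu_d\log n)$ via the Poisson generating function, gives a bound of the form $C_{ij}\,n^{-\lambda\nu_d D_+(\theta_i,\theta_j)}\le C_{ij}\,n^{-\lambda\nu_d D^\star}$; summing over the $O(1)$ pairs yields $p_n \le C\,n^{-\lambda\nu_d D^\star}$. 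For the lower bound I would invoke the computation already carried out in the proof of Lemma \ref{cond1_pf}: evaluating \eqref{eqn:imp_ch} at the minimizing pair (with $D_+(\theta_i,\theta_j)=D^\star$) and letting $\varepsilon,\eta,\alpha\to 0$ slowly with $n$ gives $p_n \ge n^{-\lambda\nu_d D^\star - o(1)}$, the $o(1/n)$ correction there being negligible since $\lambda\nu_d D^\star<1$. Consequently the denominator satisfies $n\,p_n^2 \ge n^{\,1-2\lambda\nu_d D^\star - o(1)}$.

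Next I would handle the far region. When $\|y\|>2(\log n)^{1/d}$ the balls $B(0,(\log n)^{1/d})$ and $B(y,(\log n)^{1/d})$ are disjoint, $y\notin\cN(0)$ and $0\notin\cN(y)$; hence $F_0^{G\cup\{0,y\}}$ is measurable with respect to the point process, labels, and origin-observations inside the first ball, $F_y^{G\cup\{0,y\}}$ with respect to the analogous data in the second ball, and these two data bundles are independent (independence of the PPP on disjoint regions, i.i.d.\ labels, conditionally independent observations on disjoint vertex pairs). By stationarity of the model on the torus, $\mathbb{E}^{0,y}[\mathds{1}\{F_0^{G\cup\{0,y\}}\}]=\mathbb{E}^{0,y}[\mathds{1}\{F_y^{G\cup\{0,y\}}\}]=p_n$, so the far-region integrand equals $p_n^2$ exactly; since the far region has measure at most $n$, its contribution to the numerator is at most $n\,p_n^2$, i.e.\ exactly the denominator, so its contribution to the ratio is $\le 1$. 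For the close region I would use $\mathds{1}\{F_0\}\mathds{1}\{F_y\}\le\mathds{1}\{F_0\}$; adding the single vertex $y$ appends at most one extra summand to the log-likelihood-ratio sum defining $F_0$ (only when $\|y\|\le(\log n)^{1/d}$), and that summand is independent of the rest with moment generating function $\sum_r\pi_r\phi_t(p_{jr},p_{ir})=1-D_+(\theta_i,\theta_j)\le 1$ at the relevant exponent. Repeating the Chernoff computation with this extra $\le 1$ factor gives $\mathbb{E}^{0,y}[\mathds{1}\{F_0^{G\cup\{0,y\}}\}]\le C\,n^{-\lambda\nu_d D^\star}$ uniformly in $y$, so the close-region contribution to the numerator is at most $(2^d\nu_d\log n)\cdot C\,n^{-\lambda\nu_d D^\star} = n^{-\lambda\nu_d D^\star + o(1)}$.

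Combining, the ratio in \eqref{cond2} is at most $1 + n^{-\lambda\nu_d D^\star + o(1)}/n^{\,1-2\lambda\nu_d D^\star - o(1)} = 1 + n^{\,\lambda\nu_d D^\star - 1 + o(1)}$, and since $\lambda\nu_d D^\star < 1$ the second term tends to $0$; hence $\limsup_n$ of the ratio is at most $1$, which is \eqref{cond2}, and together with Lemma \ref{cond1_pf} this completes the proof of Theorem \ref{theorem:impossibility}(a). The main obstacle I anticipate is getting the bookkeeping of the far region exactly right—one must compare the far-region integrand to $p_n^2$ \emph{exactly} (via disjoint-neighborhood independence plus stationarity) rather than through a one-sided estimate, since an upper bound on $\mathbb{E}^{0,y}[\mathds{1}\{F_0\}\mathds{1}\{F_y\}]$ by $\mathbb{E}^{0,y}[\mathds{1}\{F_0\}]$ over the whole of $\cS_{d,n}$ would lose a factor $n^{\lambda\nu_d D^\star}$ and fail; and one must extract from the proof of Lemma \ref{cond1_pf} the sharp lower exponent $-\lambda\nu_d D^\star$ rather than the advertised weaker $-1+\beta$. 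The close region, by contrast, only needs the crude observation that a single extra nearby vertex inflates the one-vertex misclassification probability by a bounded factor.
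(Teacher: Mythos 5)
Your proposal is correct, and it shares the paper's skeleton — the split of the integral at $\|y\|=2(\log n)^{1/d}$, the exact factorization $\mathbb{E}^{0,y}[\mathds{1}\{F_0\}\mathds{1}\{F_y\}]=\mathbb{E}^0[\mathds{1}\{F_0^{G\cup\{0\}}\}]^2$ on the far region via disjoint neighborhoods and stationarity, and the crude bound $\mathds{1}\{F_0\}\mathds{1}\{F_y\}\le\mathds{1}\{F_0\}$ on the near region — but it diverges in how the near-region term is compared to the denominator. The paper never estimates $p_n=\mathbb{E}^0[\mathds{1}\{F_0^{G\cup\{0\}}\}]$ from above at all: it converts the near-region integral into $\mathbb{P}(F_0^{G\cup\{0\}}\mid A)$ (where $A$ is the event of another point in the ball) and bounds it by $p_n/(1-n^{-2^d\lambda\nu_d})$ via Bayes' rule, so the near-region contribution to the ratio is $1/(n\,p_n(1-n^{-2^d\lambda\nu_d}))$, which vanishes using only the weak lower bound $n\,p_n\ge n^{\beta}$ as stated in Lemma \ref{cond1_pf}. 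You instead bound the near-region integrand by an absolute Chernoff estimate $C\,n^{-\lambda\nu_d D^\star}$ (uniform in $y$, the extra deterministic vertex contributing a factor $\le 1$ by H\"older), which forces you to match it against a sharp lower bound $p_n\ge n^{-\lambda\nu_d D^\star-\epsilon'}$; you correctly recognize this, and indeed the advertised $n^{-1+\beta}$ alone would not close your version (the exponent would come out as $\tfrac{1}{3}(1-\lambda\nu_d D^\star)>0$). That sharpened lower bound is genuinely extractable from the \emph{proof} of Lemma \ref{cond1_pf} — for each fixed $\epsilon'>0$ choose the constants $\varepsilon,\eta,\alpha,\delta$ small enough in \eqref{eqn:imp_ch}, with the $o(1/n)$ correction absorbed since $\lambda\nu_d D^\star+\epsilon'<1$ — so no limits "slowly with $n$" are needed; phrasing it as $n$-dependent parameters would require a quantitative Cram\'er statement you don't actually need. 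Net comparison: your route makes the orders of magnitude ($p_n\asymp n^{-\lambda\nu_d D^\star\pm o(1)}$) explicit but costs an extra Chernoff upper bound and a reopening of Lemma \ref{cond1_pf}'s proof; the paper's conditioning trick is more economical, needing only the lemma's stated conclusion. Both are valid proofs of Condition \eqref{cond2}.
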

Lemma \ref{cond2_pf} is a direct consequence of Lemma \ref{cond1_pf} and independence of disjoint regions of the Poisson point process. Its proof is an adaptation of the proof of \cite[Lemma 8.3]{Abbe2020b}.
\begin{proof}
     Defining $B(0, r) = \{x \in \cS_{d,n} : \Vert x\Vert_2 \leq r\}$ as the Euclidean ball around the origin with radius $r$, we obtain 
\begin{align}\label{eq:torus_decomposed}
    &\int_{y\in \mathcal{S}_{d, n}}\mathbb{E}^{0, y}[\mathds{1}\{F_0^{G\cup\{0, y\}}\}\mathds{1}\{F_y^{G\cup\{0, y\}}\}]m_{n, d}(dy) \nonumber \\
    &\quad= \int_{y\in B(0, 2\log(n)^{1/d})}\mathbb{E}^{0, y}[\mathds{1}\{F_0^{G\cup\{0, y\}}\}\mathds{1}\{F_y^{G\cup\{0, y\}}\}]m_{n, d}(dy)  \nonumber \\
    &\qquad +\int_{y\in \mathcal{S}_{d, n}\cap B(0, 2\log(n)^{1/d})^c}\mathbb{E}^{0, y}[\mathds{1}\{F_0^{G\cup\{0, y\}}\}\mathds{1}\{F_y^{G\cup\{0, y\}}\}]m_{n, d}(dy) \nonumber \\
    &\quad\leq \int_{y\in B(0, 2\log(n)^{1/d})}\mathbb{E}^{0, y}[\mathds{1}\{F_0^{G\cup\{0, y\}}\}]m_{n, d}(dy)  \nonumber \\
    &\qquad +\int_{y\in \mathcal{S}_{d, n}\cap B(0, 2\log(n)^{1/d})^c}\mathbb{E}^{0, y}[\mathds{1}\{F_0^{G\cup\{0, y\}}\}\mathds{1}\{F_y^{G\cup\{0, y\}}\}]m_{n, d}(dy) \nonumber \\
    &\quad= \int_{y\in B(0, 2\log(n)^{1/d})}\mathbb{E}^{0, y}[\mathds{1}\{F_0^{G\cup\{0, y\}}\}]m_{n, d}(dy)   \\
    &\qquad +\int_{y\in \mathcal{S}_{d, n}\cap B(0, 2\log(n)^{1/d})^c}\mathbb{E}^{0}[\mathds{1}\{F_0^{G\cup\{0\}}\}]\mathbb{E}^{y}[\mathds{1}\{F_y^{G\cup\{y\}}\}]m_{n, d}(dy), \notag
\end{align}
where the last equality is due to the fact that if $y \in \mathcal{S}_{d, n}\cup B(0, 2\log(n)^{1/d})^c$, then $\Vert y \Vert > 2(\log n)^{1/d}$; i.e., the events $F_0^{G\cup\{0\}}$ and $F_y^{G\cup\{y\}}$ are independent. By symmetry in the torus, we obtain
\begin{align}
    \mathbb{E}^{0}[\mathds{1}\{F_0^{G\cup\{0\}}\}] &= \mathbb{E}^{y}[\mathds{1}\{F_y^{G\cup\{y\}}\}]. \label{eq:torus_symmetry}
\end{align}

Denote $Y$ as the uniform random variable on $B(0, 2\log(n)^{1/d})$ and let $A$ be the event that there exists one other point in $B(0, 2\log(n)^{1/d})$ besides the origin. Then, 
\begin{align}
    \int_{y\in B(0, 2\log(n)^{1/d})}\mathbb{E}^{0, y}[F_0^{G\cup\{0, y\}}]m_{n, d}(dy)  &= \mathbb{P}(F_0^{G\cup\{0, Y\}}) \nonumber \\
    &= \mathbb{P}(F_0^{G\cup\{0\}} \mid A) \nonumber \\
    &\leq \frac{\mathbb{P}(F_0^{G\cup\{0\}})}{\mathbb{P}(A)} \nonumber \\
    &= \frac{\mathbb{E}^{0}[\mathds{1}\{F_0^{G\cup\{0\}}\}]}{1 - n^{-2^d\lambda\nu_d}},  \label{eq:torus_ub}
\end{align}
where the inequality is due to Bayes' rule. Combining  (\ref{eq:torus_decomposed}), (\ref{eq:torus_symmetry}), and (\ref{eq:torus_ub}) yields
\begin{align*}
    &\int_{y\in \mathcal{S}_{d, n}}\mathbb{E}^{0, y}[\mathds{1}\{F_0^{G\cup\{0, y\}}\}\mathds{1}\{F_y^{G\cup\{0, y\}}\}]m_{n, d}(dy) \\
    &\leq \frac{\mathbb{E}^{0}[\mathds{1}\{F_0^{G\cup\{0\}}\}]}{1 - n^{-2^d\lambda\nu_d}} + (n-2^d\nu_d\log(n))\mathbb{E}^{0}[\mathds{1}\{F_0^{G\cup\{0\}}\}]^2.
\end{align*}
By Lemma \ref{cond1_pf}, we have $n\mathbb{E}^{0}[\mathds{1}\{F_0^{G\cup\{0\}}\}]\geq n^\beta$ for some $\beta>0$ for $n$ sufficiently large, so dividing both sides by $n\mathbb{E}^{0}[\mathds{1}\{F_0^{G\cup\{0\}}\}]^2$ yields 
\begin{align*}
    & \frac{\int_{y\in \mathcal{S}_{d, n}}\mathbb{E}^{0, y}[\mathds{1}\{F_0^{G\cup\{0, y\}}\}\mathds{1}\{F_y^{G\cup\{0, y\}}\}]m_{n, d}(dy)}{n\mathbb{E}^0[\mathds{1}\{F_0^{G\cup\{0\}}\}]^2}  \\
    &\leq \frac{1}{n\mathbb{E}^{0}[\mathds{1}\{F_0^{G\cup\{0\}}\}](1 - n^{-2^d\lambda\nu_d})} + \frac{n-2^d\nu_d\log(n)}{n} \\
    &\leq 1 - \frac{2^d\nu_d\log(n)}{n} + \frac{1}{n^\beta(1 - n^{-2^d\lambda\nu_d})}.
\end{align*}
Hence Condition \eqref{cond2} is satisfied. 
\end{proof}

\subsection{Impossibility for $d = 1$ and $\lambda<1$}

We now prove part (b) of Theorem \ref{theorem:impossibility}. 
\begin{proof}[Proof of Theorem \ref{theorem:impossibility} (b)]
In the one-dimensional GHCM, the vertices are generated on $\mathcal{S}_{1, n}$, which is a circle. We partition the circle into blocks of length $\log n$ and construct the visibility graph $H^\dagger$ as follows. First, we create a vertex for every block in $\mathcal{S}_{1, n}$. Second, we add an edge between any two vertices if their corresponding blocks contain some points that are visible to each other in $\mathcal{S}_{1, n}.$ Define a segment as the set of blocks corresponding to a connected component of the visibility graph $H^{\dagger}$. If there is more than one segment, then any estimator must label the points in each segment independently from the other segments. Exact recovery becomes impossible because we can permute the labeling of each segment using any permissible $\omega \in \Omega_{\pi, P}$. Therefore, if there are $L$ segments, then there exist at least $|\Omega_{\pi, P}|^L - 1$ labelings which are different from $x^{\star}$ yet have the same posterior probability. It follows that, conditioned on the existence of at least $2$ segments, the condition $| \Omega_{\pi, P}| \geq 2$ implies the MAP estimator fails with probability at least $\frac{1}{3}$.

We next show that there are at least $2$ segments with high probability. Let $\mathcal{X}$ be the event that there exist two empty blocks $B_a$, $B_b$, and non-empty blocks $B_c$, $B_d$ with $a < c< b < d$ or $c < a < d < b$. In other words, $\mathcal{X}$ is the event that there are non-empty blocks that belong to different segments. We prove that $\cX$ occurs with high probability if $\lambda<1$, guaranteeing failure at exact recovery. Let $\cY_\ell$ be the event of having exactly $\ell$ empty blocks, among which at least two of them are non-adjacent. Since each block is independently empty with probability $\exp(-\lambda \log n) = n^{-\lambda}$, we have 
      \$
          \pr(\cX) &= \sum_{\ell=2}^{n/\log n-2} \pr(\cY_\ell) = \sum_{\ell=2}^{n/\log n - 2} \Big(\binom{{n}/{\log n}}{\ell} - {n}/{\log n}\Big) \big(n^{-\lambda}\big)^\ell\big(1 - n^{-\lambda}\big)^{n/\log n-\ell} \\
          &= \sum_{\ell=1}^{n/\log n - 1} \Big(\binom{{n}/{\log n}}{\ell} - {n}/{\log n}\Big) \big(n^{-\lambda}\big)^\ell\big(1 - n^{-\lambda}\big)^{n/\log n-\ell} \\
          &\ge \sum_{\ell=1}^{n/\log n} \Big(\binom{{n}/{\log n}}{\ell} - {n}/{\log n}\Big) \big(n^{-\lambda}\big)^\ell\big(1 - n^{-\lambda}\big)^{n/\log n-\ell} \\
          & = \sum_{\ell=1}^{n/\log n } \binom{{n}/{\log n}}{\ell} (n^{-\lambda})^\ell(1 - n^{-\lambda})^{n/\log n-\ell} \\
          &\qquad - \frac{n}{\log n}(1 - n^{-\lambda})^{n/\log n}\sum_{\ell=1}^{n/\log n} 
          \big[n^{-\lambda}/(1 - n^{-\lambda})\big]^{\ell} \\
          & \ge 1- \big(1 - n^{-\lambda}\big)^{n/\log n} - \big(1 - n^{-\lambda}\big)^{n/\log n} \cdot\frac{n}{\log n} \cdot\frac{n^{-\lambda}}{1-2n^{-\lambda}} \\
          & \ge 1 - \big(1 - n^{-\lambda}\big)^{n/\log n}\cdot\big(1 + 2n^{1-\lambda}/\log n\big) \\
          & = 1 - \big[(1 - n^{-\lambda})^{n^{\lambda}}\big]^{n^{1-\lambda}/\log n}\cdot\big(1 + 2n^{1-\lambda}/\log n\big) \\
          &= 1 - O\big(\exp(-{n^{1-\lambda}/\log n})\cdot(1 + 2n^{1-\lambda}/\log n)\big) =  1- o(1),
      \$
  where the first inequality holds since $\binom{n/\log n}{n/ \log n} - n \log n = 1 - n\log n < 0$, the second inequality follows since the first summation is a partial sum of the PMF of a binomial random variable and the second summation is a geometric series, and the last inequality holds since $1-2n^{-\lambda}\ge1/2$ for large enough $n$.
\end{proof}

\section{Connectivity of the Visibility Graph}
  \label{sec:connectivity}
  In this section, we establish the connectivity of the visibility graph $H = (V^{\dagger}, E^{\dagger})$ from Line \ref{line:visibility-graph} of Algorithm \ref{alg:almost-exact}. We begin by defining sufficiently small constants $\chi$ and $\delta$ used in Algorithm \ref{alg:almost-exact}. We define $\chi$ to satisfy the following condition, relying on $\lambda$ and $d$:
  \#\label{eq:chi-fomula}
  \nu_d \big(1 - 3\sqrt{d}\chi^{1/d}/2  \big)^d \ge (\nu_d + 1/{\lambda} )/2 \text{ and } 0<\chi <[(\mathds{1}_{d=1} + \nu_d\cdot\mathds{1}_{d\ge2})- 1/{\lambda} ]/2.
  \#
  The first condition is satisfiable since $\lim_{\chi \to 0} \nu_d \big(1 - 3\sqrt{d}\chi^{1/d}/2  \big)^d = \nu_d$
  and we have $\nu_d>(\nu_d + 1/{\lambda} )/2$ when $\lambda \nu_d>1$. The second one is also satisfiable since $\mathds{1}_{d=1} + \nu_d\cdot\mathds{1}_{d\ge2} = 1 > 1/\lambda$ if $d = 1$ and otherwise $\mathds{1}_{d=1} + \nu_d\cdot\mathds{1}_{d\ge2} = \nu_d > 1/\lambda$, under the conditions of Theorems \ref{theorem:exact-recovery} and \ref{theorem:almost-exact-recovery}.
  Associated with the choice of $\chi$, there is a constant $\delta'(\text{or }\widetilde\delta\text{ for }d\ge 2)>0$ such that for any block $B_i$, its visible blocks $\bigcup_{j\in V}\{V_j\colon B_j \sim B_i\}$ contain at least $\delta'\log n$ (or $\widetilde\delta\log n$) vertices with probability $1 - o(n^{-1})$. %Let $R_d =1 + \sqrt{d}\chi^{1/d}/2$. %The first condition in \eqref{eq:chi-fomula} implies that $\sqrt{d}\chi^{1/d}/2<1/3$ and thus $R_d>0$. 
  With specific values of $\delta'$ and $\widetilde\delta$ to be determined in Propositions \ref{lem:visibility-d1-small-lambda} and \ref{lem:fN_i-size}, respectively, we take $\delta$ such that
  \#\label{eq:delta-formula}
  0<\delta<(\delta'\chi)\cdot \mathds{1}_{d=1} + [\widetilde{\delta}\chi/\nu_d]\cdot\mathds{1}_{d\ge2}.
  \#
  Propositions \ref{lem:visibility-d1-small-lambda} and \ref{lemma:connectivity} will present the connectivity properties of $\delta$-occupied blocks of volume $\chi \log n$, for $\chi$ and $\delta$ satisfying the conditions in \eqref{eq:chi-fomula} and \eqref{eq:delta-formula}, respectively. 

  We now record some preliminaries (see \cite{boucheron2013concentration}).
  \begin{lemma}[Chernoff bound, Poisson]\label{lem:Chernoff-poisson} Let $X\sim\text{Poisson}(\mu)$ with $\mu>0$. For any $t>0$,
  \$
  \pr(X\ge \mu + t) 
  \le \exp\Big(-\frac{t^2}{2(\mu+t)}\Big).
  \$
  For any $0<t<\mu$, we have 
  \$
  \pr(X\le \mu - t) \le \exp\Big(-(\mu - t) \log \Big(1 - \frac{t}{\mu} \Big)-t\Big).
  \$
  \end{lemma}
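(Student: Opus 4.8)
The plan is to apply the classical Chernoff bounding technique via the moment generating function of a Poisson random variable. First I would record that for $X \sim \text{Poisson}(\mu)$ one has $\mathbb{E}[e^{sX}] = \exp(\mu(e^s - 1))$ for every $s \in \mathbb{R}$.

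For the upper tail, I would fix $t > 0$, and for $s > 0$ apply Markov's inequality to $e^{sX}$ to obtain $\pr(X \ge \mu + t) \le \exp(\mu(e^s - 1) - s(\mu + t))$. Minimizing the exponent over $s > 0$ gives the optimal choice $s = \log(1 + t/\mu)$, and after substitution this becomes $\pr(X \ge \mu + t) \le \exp\big(t - (\mu + t)\log(1 + t/\mu)\big) = \exp(-\mu\, h(t/\mu))$ with $h(u) := (1+u)\log(1+u) - u$. The remaining task is the scalar inequality $h(u) \ge u^2/(2(1+u))$ for $u \ge 0$; I would prove it by setting $g(u) := h(u) - u^2/(2(1+u))$, checking that $g(0) = g'(0) = 0$, and computing $g''(u) = (u^2 + 2u)/(1+u)^3 \ge 0$ on $[0, \infty)$, so that $g \ge 0$. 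This converts the exponent into $-t^2/(2(\mu + t))$, yielding the first bound.

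For the lower tail, I would fix $0 < t < \mu$, and for $s < 0$ apply Markov's inequality to $e^{sX}$ to get $\pr(X \le \mu - t) \le \exp(\mu(e^s - 1) - s(\mu - t))$. Minimizing the exponent over $s < 0$ gives $s = \log(1 - t/\mu)$, which is indeed negative precisely because $0 < t < \mu$; substituting produces exactly $\pr(X \le \mu - t) \le \exp\big(-(\mu - t)\log(1 - t/\mu) - t\big)$, so no further estimation is needed in this case.

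I expect no serious obstacle: the only step beyond mechanical optimization is the one-variable inequality $h(u) \ge u^2/(2(1+u))$, which reduces to the convexity check above. Since this is a textbook estimate, one could alternatively simply invoke \cite{boucheron2013concentration} directly, as the paper does.
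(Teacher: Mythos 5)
Your proposal is correct: the MGF computation, the optimal choices $s=\log(1+t/\mu)$ and $s=\log(1-t/\mu)$, and the convexity check $g''(u)=(u^2+2u)/(1+u)^3\ge 0$ establishing $(1+u)\log(1+u)-u\ge u^2/(2(1+u))$ are all accurate, and the lower-tail bound indeed needs no further estimation. The paper gives no proof of this lemma at all—it simply records it as a preliminary with a citation to Boucheron--Lugosi--Massart—and your argument is exactly the standard Chernoff/Bennett-type derivation found there, so there is nothing to reconcile.
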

  
  \begin{lemma}[Hoeffding's inequality]%\label{lem:Hoeffding-bounded} 
          Let $X_1,\cdots, X_n$ be independent bounded random variables with values $X_i\in[0,1]$ for all $1\le i\le n$. Let $X=\sum_{i=1}^n X_i$ and $\mu = \mathbb{E}[X]$. Then for any $t\ge0$, it holds that 
          \$
          \pr(X \ge \mu+ t) \le \exp(-2t^2/n), \quad
          \pr(X \le \mu -t) \le \exp(-2t^2/n).
          \$
  \end{lemma}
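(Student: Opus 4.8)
The statement is the classical Hoeffding inequality, recorded here (cf.\ \cite{boucheron2013concentration}); the plan is to reproduce the standard exponential-moment (Chernoff--Cram\'er) argument. First I would fix $s>0$ and apply Markov's inequality to the exponentiated centered sum:
\[
\pr(X \ge \mu + t) \;=\; \pr\big(e^{s(X-\mu)} \ge e^{st}\big) \;\le\; e^{-st}\,\mathbb{E}\big[e^{s(X-\mu)}\big].
\]
Since the $X_i$ are independent, the moment generating function factorizes, $\mathbb{E}[e^{s(X-\mu)}] = \prod_{i=1}^{n} \mathbb{E}[e^{s(X_i - \mathbb{E} X_i)}]$, so it suffices to bound each factor separately.

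The key ingredient is \emph{Hoeffding's lemma}: if $Y$ has mean zero and is supported in an interval of length $\ell$, then $\mathbb{E}[e^{sY}] \le e^{s^2\ell^2/8}$. I would prove this by setting $\psi(s) = \log \mathbb{E}[e^{sY}]$, noting $\psi(0)=0$ and $\psi'(0)=\mathbb{E}[Y]=0$, and observing that $\psi''(s)$ is the variance of $Y$ under the exponentially tilted law (with density proportional to $e^{sy}$), which is supported in the same interval and therefore has variance at most $\ell^2/4$ (using $\mathrm{Var}(Y) \le \mathbb{E}[(Y-c)^2]$ with $c$ the midpoint of the interval); a second-order Taylor expansion in $s$ then yields $\psi(s)\le s^2\ell^2/8$. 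Applying this with $Y = X_i - \mathbb{E} X_i$, which lies in an interval of length at most $1$ because $X_i\in[0,1]$, gives $\mathbb{E}[e^{s(X_i-\mathbb{E} X_i)}] \le e^{s^2/8}$, hence $\pr(X\ge\mu+t)\le e^{-st + n s^2/8}$.

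Finally I would optimize the free parameter: the exponent $-st + n s^2/8$ is minimized at $s = 4t/n$, which gives $\pr(X \ge \mu + t) \le e^{-2t^2/n}$. The lower-tail bound follows by applying the same inequality to the variables $1-X_1,\dots,1-X_n$, which again take values in $[0,1]$ and whose sum exceeds its mean by $t$ precisely when $X$ falls below $\mu$ by $t$. There is no genuine obstacle here, as this is a textbook estimate; the only step requiring a little care is the variance bound inside Hoeffding's lemma, and the crude bound $\psi''(s)\le \ell^2/4$ already delivers the stated constant in the exponent.
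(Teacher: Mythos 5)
Your proof is correct: this lemma is a classical result that the paper simply records as a preliminary (citing \cite{boucheron2013concentration}) without proof, and your Chernoff--Cram\'er argument via Hoeffding's lemma, with the tilted-measure variance bound $\psi''(s)\le \ell^2/4$ and the optimization $s=4t/n$, is exactly the standard derivation that yields the stated constant $2t^2/n$; the lower tail via $1-X_i$ is also handled correctly. Nothing further is needed.
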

  
  \begin{lemma}[Chernoff upper tail bound]\label{lem:Chernoff-binomial} 
  Let $X_1,\cdots, X_n$ be independent Bernoulli random variables. Let $X=\sum_{i=1}^n X_i$ and $\mu = \E(X)$. Then for any $t>0$, we have
      \$
      \pr(X\ge (1+t)\mu) \le \left(\frac{e^t}{(1+t)^{(1+t)}} \right)^{\mu}.
      \$
  \end{lemma}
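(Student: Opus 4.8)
The plan is to prove this standard concentration bound by the exponential-moment (Chernoff) method. Fix $t>0$ and introduce a free parameter $s>0$ to be optimized at the end. First I would apply Markov's inequality to the increasing function $x\mapsto e^{sx}$:
\[
\pr\big(X\ge (1+t)\mu\big) = \pr\big(e^{sX}\ge e^{s(1+t)\mu}\big) \le e^{-s(1+t)\mu}\,\E\big[e^{sX}\big].
\]
The next step is to control the moment generating function $\E[e^{sX}]$. Writing $p_i=\pr(X_i=1)$, independence of the $X_i$ gives $\E[e^{sX}]=\prod_{i=1}^n\E[e^{sX_i}]=\prod_{i=1}^n\big(1+p_i(e^s-1)\big)$; applying the elementary inequality $1+x\le e^x$ with $x=p_i(e^s-1)\ge 0$ and using $\mu=\sum_i p_i$ yields
\[
\E\big[e^{sX}\big]\le \exp\Big((e^s-1)\sum_{i=1}^n p_i\Big)=\exp\big((e^s-1)\mu\big).
\]
Combining the two displays gives $\pr(X\ge(1+t)\mu)\le\exp\big(\mu[(e^s-1)-s(1+t)]\big)$ for every $s>0$.

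Finally I would optimize the exponent over $s>0$. The map $s\mapsto (e^s-1)-s(1+t)$ is convex with derivative $e^s-(1+t)$, which vanishes at $s^\star=\log(1+t)$; crucially $s^\star>0$ because $t>0$, so this choice is admissible in the preceding bound. Substituting $e^{s^\star}=1+t$ turns the exponent into $\mu\big[t-(1+t)\log(1+t)\big]$, and exponentiating,
\[
\pr\big(X\ge (1+t)\mu\big)\le\exp\Big(\mu\big[t-(1+t)\log(1+t)\big]\Big)=\left(\frac{e^t}{(1+t)^{1+t}}\right)^{\mu},
\]
which is exactly the claimed inequality.

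There is no genuine obstacle here: this is the classical multiplicative Chernoff bound for sums of independent indicators, and the only points that need minor attention are verifying that the optimizing parameter $s^\star=\log(1+t)$ is strictly positive — which is precisely where the hypothesis $t>0$ enters — and applying $1+x\le e^x$ in the correct direction. The same short computation underlies the Poisson and Hoeffding-type tail estimates recorded in the surrounding lemmas, so no new ideas are required.
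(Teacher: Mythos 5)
Your proof is correct: it is the classical exponential-moment argument (Markov applied to $e^{sX}$, the bound $1+x\le e^x$ on the Bernoulli moment generating functions, and optimization at $s^\star=\log(1+t)$), and every step, including the admissibility of $s^\star>0$ under the hypothesis $t>0$, checks out. The paper records this lemma as a standard preliminary with a citation rather than proving it, and your argument is exactly the standard proof behind that citation, so nothing further is needed.
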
 

  \begin{lemma}[Chernoff lower tail bound]\label{lem:Chernoff-binomial-lower} 
  Let $X \sim \text{Bin}(n, p)$ and let $\mu = \E(X)$. Then for any $0 < \delta < 1$, we have
      \$
      \pr(X\le (1-\delta)\mu) \le \exp \Big( \frac{-\delta^2 \mu}{2} \Big).
      \$
  \end{lemma}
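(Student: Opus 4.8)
The plan is to prove this via the classical exponential moment (Chernoff) method applied to the lower tail. First, I would fix any $t > 0$ and apply Markov's inequality to the positive random variable $e^{-tX}$:
\[
\pr\big(X \le (1-\delta)\mu\big) = \pr\big(e^{-tX} \ge e^{-t(1-\delta)\mu}\big) \le e^{t(1-\delta)\mu}\,\E\big[e^{-tX}\big].
\]
Writing $X = \sum_{i=1}^n X_i$ with the $X_i$ independent $\text{Bern}(p)$ variables (which has the same law as $\text{Bin}(n,p)$), the moment generating function factorizes: $\E[e^{-tX}] = (1-p+pe^{-t})^n = \big(1+p(e^{-t}-1)\big)^n \le \exp\big(np(e^{-t}-1)\big) = \exp\big(\mu(e^{-t}-1)\big)$, using $1+x \le e^{x}$. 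Combining the two displays gives $\pr(X \le (1-\delta)\mu) \le \exp\big(\mu(e^{-t}-1) + t(1-\delta)\mu\big)$ for every $t>0$.

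Next I would optimize the exponent over $t>0$. Setting the derivative of $e^{-t}-1+t(1-\delta)$ to zero yields $e^{-t} = 1-\delta$, i.e.\ $t = \log\frac{1}{1-\delta}$, which is positive since $0<\delta<1$. Substituting this back produces the standard raw Chernoff bound
\[
\pr\big(X \le (1-\delta)\mu\big) \le \left(\frac{e^{-\delta}}{(1-\delta)^{1-\delta}}\right)^{\mu} = \exp\!\Big(-\mu\big[\delta + (1-\delta)\log(1-\delta)\big]\Big).
\]

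It then remains to compare this with the target bound, i.e.\ to show $\delta + (1-\delta)\log(1-\delta) \ge \delta^2/2$ for all $\delta \in [0,1)$. I would verify this by setting $f(\delta) = \delta + (1-\delta)\log(1-\delta) - \delta^2/2$, noting $f(0) = 0$, and computing $f'(\delta) = -\log(1-\delta) - \delta = \sum_{k \ge 2}\delta^k/k \ge 0$, so $f$ is nondecreasing on $[0,1)$ and hence nonnegative there. This gives $\pr(X \le (1-\delta)\mu) \le \exp(-\delta^2\mu/2)$, as claimed. (Alternatively, the statement can simply be quoted from \cite{boucheron2013concentration}.) There is no real obstacle here --- this is a textbook estimate; the only step requiring slight care is the last elementary inequality, since the crude bound $\log(1-\delta) \le -\delta$ goes the wrong way, and one needs the second-order refinement supplied by the series for $-\log(1-\delta)$ to recover the $\delta^2/2$ in the exponent.
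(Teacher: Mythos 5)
Your proof is correct: the exponential-moment argument, the optimization at $t=\log\frac{1}{1-\delta}$, and the final comparison $\delta+(1-\delta)\log(1-\delta)\ge \delta^2/2$ (via the series for $-\log(1-\delta)$) are all accurate. The paper does not prove this lemma at all — it is recorded as a standard preliminary with a citation to \cite{boucheron2013concentration} — so your write-up is simply the textbook derivation that the cited reference supplies, and nothing further is needed.
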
 
  
We also define a homogeneous Poisson point process used to generate locations as described in Definition \ref{def:ghcm}.
  \begin{definition}[\cite{kingman1992poisson}]\label{def:PPP}
    A homogeneous Poisson point process with intensity $\lambda$ on $S \subseteq \R^{d}$ is a random countable set $\Phi := \{v_1, v_2,\cdots \} \subset S$ such that
    \begin{enumerate}
        \item For any bounded Borel set $B\subset \R^d$, the count $N_\Phi(B) := |\Phi\cap B| = |\{i\in \N\colon v_i\in B\}|$ has a Poisson distribution with mean $\lambda\text{vol}(B)$, where $\text{vol}(B)$ is the measure (volume) of $B$.
        \item For any $k\in \N$ and any disjoint Borel sets $B_1,\cdots, B_k \subset \R^d$, the random variables $N_\Phi(B_1),$ $\cdots,$ $N_\Phi(B_k)$ are mutually independent.
    \end{enumerate}
\end{definition}
In the GHCM, the set of locations $V=\{v_1,v_2,\cdots\}$ are generated by a homogeneous Poisson point process with intensity $\lambda$ on $\cS_{n,d}$. The established properties guarantee that $|V|$ follows $\text{Poisson}(\lambda n)$. Moreover, conditioned on $|V|$, the locations $\{v_i\}_{i\in[|V|]}$ are independently and uniformly distributed in $\cS_{n,d}$. This gives a simple construction of a Poisson point process as follows:
\begin{enumerate}
    \item Sample $N_V \sim \text{Poisson}(\lambda n)$;
    \item Sample $v_1, \cdots, v_{N_V}$ independently and uniformly in the region $\cS_{n,d}$.
\end{enumerate}
This procedure ensures that the resulting set $\{v_1, \cdots, v_{N_V}\}$ constitutes a Poisson point process as desired.
  
  We now start the analysis. The following lemma shows that regions of appropriate volume have $\Omega(\log n)$ vertices with high probability.
  \begin{lemma}\label{lemma:volume-points}
      For any fixed subset $B\subset\cS_{d,n}$ with a volume $\nu\log n$ such that $\lambda \nu>1$, there exist constants $0 < \gamma < \lambda \nu$ and $\epsilon>0$ such that \$\pr(|V(B)| > \gamma \log n) \geq 1- n^{-1-\epsilon}.\$ 
  \end{lemma}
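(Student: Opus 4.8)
The plan is to use the fact that $|V(B)|$ is exactly a Poisson random variable. By the defining property of the homogeneous Poisson point process (Definition~\ref{def:PPP}), for a fixed Borel set $B \subset \cS_{d,n}$ with $\mathrm{vol}(B) = \nu \log n$, the count $|V(B)| = N_V(B)$ has distribution $\mathrm{Poisson}(\mu)$ with mean $\mu = \lambda \nu \log n$. Since $\lambda\nu > 1$ by hypothesis, we have $\mu = c \log n$ with $c := \lambda\nu > 1$. The goal is then simply a lower-tail estimate: we want $\gamma$ with $1 < \gamma < \lambda\nu$ and $\pr(N_V(B) \le \gamma\log n) \le n^{-1-\epsilon}$ for some $\epsilon > 0$.

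First I would apply the Poisson lower-tail Chernoff bound from Lemma~\ref{lem:Chernoff-poisson}. Writing $t = (c - \gamma)\log n$, which satisfies $0 < t < \mu$ because $1 < \gamma < c$, the bound gives
\[
\pr\big(N_V(B) \le \gamma\log n\big) = \pr\big(N_V(B) \le \mu - t\big) \le \exp\Big(-(\mu - t)\log\Big(1 - \tfrac{t}{\mu}\Big) - t\Big).
\]
Substituting $\mu - t = \gamma\log n$, $t/\mu = (c-\gamma)/c$, and $1 - t/\mu = \gamma/c$, the exponent becomes
\[
-\gamma\log n \cdot \log(\gamma/c) - (c - \gamma)\log n = \Big[\gamma\log(c/\gamma) - (c - \gamma)\Big]\log n.
\]
So it suffices to show the bracketed constant, call it $g(\gamma) := \gamma\log(c/\gamma) - c + \gamma$, is strictly negative for $\gamma$ chosen in $(1, c)$; then $\pr(N_V(B) \le \gamma\log n) \le n^{g(\gamma)}$, and taking $\epsilon = -g(\gamma) - 1$ works provided $g(\gamma) < -1$, i.e. we need $g(\gamma) < -1$, not merely $g(\gamma) < 0$.

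The main (and only real) obstacle is verifying that $g(\gamma) < -1$ for a suitable choice of $\gamma \in (1,c)$. Observe $g(c) = 0$ and $g'(\gamma) = \log(c/\gamma) - 1 + 1 = \log(c/\gamma) > 0$ on $(1,c)$, so $g$ is increasing and $g(\gamma) < g(c) = 0$ for all $\gamma \in (1,c)$ — but this only gives $g(\gamma) < 0$. To get $g(\gamma) < -1$ I would evaluate $g$ near the left endpoint: $g(1) = \log c - c + 1 < 0$ for $c > 1$, but this also need not be below $-1$ when $c$ is close to $1$. Indeed when $\lambda\nu = c$ is barely above $1$, the entire interval $(1,c)$ is tiny and $g$ is bounded in magnitude by something like $(c-1)^2/(2c) \cdot$(constant), which is $o(1)$, so $g(\gamma) < -1$ is \emph{false}. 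This means the stated exponent $n^{-1-\epsilon}$ is not achievable through this count alone when $\lambda\nu$ is only slightly larger than $1$; the resolution must be that Lemma~\ref{lemma:volume-points} is applied with $\nu$ a fixed constant times the eventual block volume, or $\gamma$ is allowed to depend only on $\lambda,\nu$ while $\nu$ itself is taken large enough (e.g.\ $\nu \ge (1+\epsilon')/\lambda$ for a genuine constant gap). So the correct reading is: \emph{given} $\lambda\nu > 1$, fix any $\gamma \in (1, \lambda\nu)$; then $g(\gamma) < 0$ is a fixed negative constant, and for $n$ large enough $n^{g(\gamma)} \le n^{-1-\epsilon}$ fails unless $g(\gamma) \le -1$. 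I would therefore restate the argument as: choose $\gamma$ close enough to $1$ that... no — the honest fix is that the lemma as used downstream only needs $\pr(|V(B)| \le \gamma\log n) = o(n^{-1})$ when the relevant $\nu$ carries a constant-factor slack, so I would prove: \textbf{if} $\lambda\nu > 1 + a$ for a constant $a>0$ (which holds in all invocations by the choice of $\chi$ in~\eqref{eq:chi-fomula}), then picking $\gamma = 1 + a/2$ gives $g(\gamma) = (1+\tfrac a2)\log\tfrac{1+a}{1+a/2} - \tfrac a2 < 0$ with room to spare as $a$ grows, and more carefully $g(\gamma) < -1 - \epsilon$ holds once $a$ exceeds an absolute threshold; for the borderline small-$a$ regime one instead sharpens $\gamma \to 1$ and uses that $\pr(\mathrm{Poisson}(c\log n) \le \log n)$ decays like $n^{-(c - 1 - \log c)}$, which exceeds $1$ in magnitude precisely when $c - 1 - \log c > 1$. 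In short: the computation of $g$ via Lemma~\ref{lem:Chernoff-poisson} is routine, and the single delicate point is exhibiting, for each admissible $(\lambda,\nu)$ arising in the paper, a choice of $\gamma \in (1,\lambda\nu)$ with $\gamma\log(\lambda\nu/\gamma) - \lambda\nu + \gamma < -1$, which I would verify using the monotonicity of $g$ together with the quantitative separation $\lambda\nu > 1$ guaranteed by~\eqref{eq:chi-fomula} and~\eqref{eq:delta-formula}.
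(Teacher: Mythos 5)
Your Chernoff computation is essentially the paper's, but you have imposed a constraint that is not in the statement: you require $\gamma > 1$. Lemma \ref{lemma:volume-points} only asks for the existence of constants $0 < \gamma < \lambda\nu$ and $\epsilon > 0$, and every downstream use (e.g.\ \eqref{eq:connected-H-lambda2}, Proposition \ref{lem:visibility-d1-small-lambda}, Lemma \ref{lem:fN_i-size}) only needs $\Omega(\log n)$ vertices, i.e.\ an arbitrarily small positive constant $\gamma$. Because of this extra constraint you conclude that the exponent $-1-\epsilon$ is unattainable when $\lambda\nu$ is only slightly above $1$ and that the lemma must be restated with a quantitative gap $\lambda\nu > 1 + a$; that conclusion is an artifact of the misreading, not a defect of the lemma, and as written your argument does not prove the stated result.

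The fix is to send $\gamma$ toward $0$ rather than toward $1$. With $c = \lambda\nu$ and your $g(\gamma) = \gamma\log(c/\gamma) - c + \gamma$, you already showed $g$ is increasing on $(0,c)$ with $g(c)=0$; moreover $g(\gamma) \to -c$ as $\gamma \to 0^{+}$. Since $c > 1$, one can fix $\gamma \in (0,c)$ small enough that $g(\gamma) < -(1+c)/2 < -1$, and then Lemma \ref{lem:Chernoff-poisson} gives
\[
\mathbb{P}\big(|V(B)| \le \gamma\log n\big) \le n^{g(\gamma)} \le n^{-(1+\lambda\nu)/2},
\]
so $\epsilon = (\lambda\nu-1)/2$ works. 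This is exactly the paper's argument (its function is $-g$, and it picks $\gamma$ with $-g(\gamma) > (1+\lambda\nu)/2$). Note also that the monotonicity you computed points opposite to your proposed ``sharpen $\gamma \to 1$'' step: decreasing $\gamma$ strictly improves the exponent, so there is no borderline regime requiring extra slack in $\lambda\nu$.
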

  \begin{proof}
      For a subset $B$ with $\text{vol}(B)=\nu\log n$, we have $|V(B)|\sim \text{Poisson}(\lambda \nu\log n)$. To show the lower bound, we define a function $g:(0, \lambda \nu]\to\R$ as $g(x) = x(\log x - \log (\lambda \nu)) +\lambda \nu-x$. 
      It is easy to check that $g$ is continuous and decreases on $(0, \lambda \nu]$ with $\lim_{x\to0}g(x)=\lambda \nu$ and $g(\lambda \nu)=0$. When $\lambda \nu>1$, it holds that $\lim_{x\to0}g(x)=\lambda \nu>(1+\lambda \nu)/2$ and thus there exists a constant $\gamma \in (0, \lambda \nu)$ such that $g(\gamma) > (1+\lambda \nu)/2$. Thus, the Chernoff bound in Lemma \ref{lem:Chernoff-poisson} yields that 
      \$
      \pr(|V(B)|\le \gamma \log n) & \le 
      \exp\Big(-[\gamma(\log \gamma - \log (\lambda \nu)) +\lambda \nu-\gamma]\log n\Big) \\
      & = n^{-g(\gamma)} \leq n^{-(1+\lambda \nu)/2}.
      \$
      Taking $\epsilon = (\lambda \nu - 1)/2 >0$ concludes the proof.
  \end{proof}
  
  \subsection{The Simple Case When \texorpdfstring{\(d=1\)}{} \text{and} \texorpdfstring{\(\lambda >1\)}{}}
  We start with the simple case when $d=1$.
  
  \vskip6pt
  \paragraph*{An example when $\lambda >2$} We first study an example when $d=1$ and $\lambda>2$. If $\lambda>2$ and $\vol(B_i) = \log n/2$, we have $\lambda\vol(B_i)/\log n>1$, and thus Lemma \ref{lemma:volume-points} ensures the existence of positive constants $\gamma$ and $\epsilon$ such that $\pr(|V_i| > \gamma \log n) \geq 1 - n^{-1-\epsilon}$ for all $i\in[2n/\log n]$. Thus, the union bound gives that 
  \#\label{eq:connected-H-lambda2}
\pr\Big(\bigcap_{i=1}^{2n/\log n}\big\{|V_i|> \gamma \log n\big\}\Big) &= 1 - \pr\Big(\bigcup_{i=1}^{2n/\log n}\big\{|V_i|\le \gamma \log n\big\}\Big) \notag\\
& \geq 1 - \frac{2n}{\log n}\cdot n^{-1-\epsilon} = 1-o(1).
  \#
  Since all blocks are $\gamma$-occupied, the $(1/2, \gamma)$-visibility graph $H = (V^{\dagger}, E^{\dagger})$ is trivially connected.

  \vskip6pt
  \paragraph*{General case when $\lambda >1$} For small density $\lambda$, we partition the interval into small blocks and establish the existence of visible occupied blocks on the left side of each block.
  \begin{proposition}\label{lem:visibility-d1-small-lambda}
  If $d=1$ and $\lambda >1$, with $0<\chi <(1-1/\lambda)/2$, we consider the blocks $\{B_i\}_{i=1}^{n/(\chi\log n)}$ obtained from Line \ref{line:partition} in Algorithm \ref{alg:almost-exact}. Then there exists a constant $\delta'>0$ such that for any $0<\delta<\delta'\chi$, it holds that 
      \$
      \pr\Big(\bigcap_{i=1}^{n/(\chi\log n)}\big\{\exists j \colon j < i, B_j \sim B_i, \text{ and } B_j \text{ is $\delta$-occupied}\big\}\Big) = 1 - o(1).
      \$
  It follows that the $(\chi, \delta)$-visibility graph is connected with high probability.
  \end{proposition}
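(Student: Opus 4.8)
\textbf{Plan.} The goal is to show that when $d=1$ and $\lambda > 1$, with $\chi < (1-1/\lambda)/2$, every block $B_i$ (for $i \geq 2$) has a visible, $\delta$-occupied block $B_j$ to its left. The key structural observation is that in one dimension, two blocks $B_j$ and $B_i$ with $j < i$ are mutually visible if and only if the total length spanned from the left endpoint of $B_j$ to the right endpoint of $B_i$ is at most $\log n$; since each block has length $\chi \log n$, this means $B_j \sim B_i$ whenever $(i - j + 1)\chi \log n \leq \log n$, i.e., roughly $i - j \leq 1/\chi - 1$. So for each $B_i$, consider the ``window'' $W_i$ consisting of the $\lfloor 1/\chi \rfloor - 1$ blocks immediately to its left (or all blocks to its left if $i$ is small, but the root block $B_1$ is handled separately as the initial block). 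Every block in $W_i$ is visible to $B_i$. The window $W_i$ has total volume at least $\nu' \log n$ for some $\nu' $ bounded below by roughly $(1/\chi - 2)\chi = 1 - 2\chi > 1/\lambda$ (using $\chi < (1-1/\lambda)/2$), so $\lambda \nu' > 1$.

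\textbf{Key steps.} First, I would fix the precise window: let $m = \lceil 1/\chi \rceil$ or so, chosen so that any block among the $m-1$ blocks preceding $B_i$ is mutually visible to $B_i$, and so that the union of those $m-1$ blocks has volume $\nu' \log n$ with $\lambda \nu' > 1$; the inequality $\chi < (1-1/\lambda)/2$ is exactly what makes this possible. Second, apply Lemma \ref{lemma:volume-points} to the fixed region $\bigcup_{j \in W_i} B_j$: there exist constants $\gamma \in (0, \lambda\nu')$ and $\epsilon > 0$ with $\pr(|V(\bigcup_{j\in W_i} B_j)| > \gamma \log n) \geq 1 - n^{-1-\epsilon}$. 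Third — and this is the step requiring a little care — I would convert ``the window has many vertices'' into ``some single block in the window has at least $\delta \log n$ vertices'': since the window contains at most $m - 1 = O(1)$ blocks, by pigeonhole at least one block contains at least $\gamma \log n / (m-1)$ vertices, so setting $\delta' = \gamma/(m-1) / \chi$ — more precisely, choosing $\delta < \delta' \chi$ for an appropriate $\delta'$ — guarantees that block is $\delta$-occupied. Fourth, take a union bound over all $i \in [n/(\chi\log n)]$: the failure probability is at most $(n/(\chi \log n)) \cdot n^{-1-\epsilon} = o(1)$. Finally, observe that if every block after the first has a visible occupied predecessor, then following these predecessor links from any occupied block eventually reaches a block near the left end, and all occupied blocks are thereby connected in $H$ through chains of mutually visible occupied blocks; a short argument (e.g., induction on block index, or noting the predecessor relation induces a connected structure rooted at the leftmost occupied region) shows $H$ is connected.

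\textbf{Main obstacle.} The substantive point is the pigeonhole conversion in step three, and in particular tracking constants so that the final $\delta$ is positive and consistent with the constraint $0 < \delta < \delta'\chi$ stated in the proposition, while simultaneously keeping $\gamma$ strictly below $\lambda\nu'$ as Lemma \ref{lemma:volume-points} demands. One must verify that $\nu' = (m-1)\chi$ (or whatever the exact window volume coefficient is) genuinely exceeds $1/\lambda$ given only $\chi < (1-1/\lambda)/2$; this requires choosing $m$ as large as visibility allows (close to $1/\chi$) rather than conservatively, and checking the boundary/rounding effects when $1/\chi$ is not an integer. A secondary subtlety is the treatment of the toroidal wraparound and the initial blocks $i = 1, 2, \dots$: for small $i$ the ``window to the left'' wraps around the circle, but since $\mathcal{S}_{1,n}$ is a torus this is not actually a problem — every block has a genuine window of $m-1$ blocks on its (cyclic) left — though one should note that the propagation in the algorithm starts from a designated root and the connectivity claim is what matters, not a literal left-to-right order. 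Everything else is a routine union bound.
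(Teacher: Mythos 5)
Your proposal matches the paper's proof in all essentials: the paper likewise takes the union $U_i$ of visible blocks to the left of $B_i$, notes $\mathrm{vol}(U_i)\ge(1-2\chi)\log n$ so that $\lambda(1-2\chi)>1$ under $\chi<(1-1/\lambda)/2$, applies Lemma \ref{lemma:volume-points} to get $\delta'\log n$ vertices in $U_i$ with probability $1-n^{-1-\epsilon}$, and then uses exactly your pigeonhole step (phrased in contrapositive: if every one of the at most $1/\chi$ visible blocks had $\le\delta\log n$ vertices with $\delta<\delta'\chi$, the union would have $\le\delta'\log n$) before the union bound over $i$. Your constant bookkeeping and the concluding connectivity-by-induction remark are consistent with what the paper does, so this is essentially the same argument.
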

  \begin{proof}
      For any $i\in[n/(\chi\log n)]$, we define $ U_i = \bigcup_{j\colon j < i, B_j \sim B_i}{B}_j$ as the union of visible blocks on the left-hand side of ${B}_i$. We have $\text{vol}( U_i)=(\floor{1/\chi}-1)\chi\log n\ge (1 -2\chi)\log n$ and $\lambda\text{vol}( U_i)/\log n\ge \lambda(1-2\chi)>1$ when $\lambda>1$ and $\chi<(1-1/\lambda)/2$. Thus, Lemma \ref{lemma:volume-points} ensures the existence of positive constants $\delta'$ and $\epsilon$ such that $\pr( |\bigcup_{j\colon j < i, B_i \sim B_j}V_j| \le \delta' \log n) \leq n^{-1-\epsilon}$. We note that $|\{j\colon j <i, B_j\sim B_i\}| \le (\ceil{1/\chi} - 1) \le 1/\chi$. Thus, we take $0<\delta < \delta'\chi$ and obtain that
      \$
  \pr\Big(\bigcap_{j\colon j< i, B_j\sim B_i}\big\{|V_j|\le \delta\log n \big\}\Big) & \le  \pr\Big(\big|\bigcup_{j\colon j< i, B_j\sim B_i}V_j\big|\le \delta\log n /\chi \Big) \\
  &\le \pr\Big(\big|\bigcup_{j\colon j < i, B_j\sim B_i}V_j\big|\le \delta'\log n \Big) \leq n^{-1-\epsilon}.
  \$
  Therefore, the union bound over all $i\in[n/(\chi\log n)]$ gives
\$
  &\pr\Big(\bigcap_{i=1}^{n/(\chi\log n)}\big\{\exists j \colon j < i, B_j \sim B_i, \text{ and } B_j \text{ is $\delta$-occupied}\big\}\Big) \\
  & \quad = 1 - \pr\Big(\bigcup_{i=1}^{n/(\chi\log n)}\bigcap_{j\colon j < i, B_j\sim B_i}\big\{|V_j|\le \delta\log n \big\}\Big) \\
  &\quad \ge 1 - \frac{n}{\chi\log n}\cdot n^{-1-\epsilon} = 1 - o(1).
\$
  \end{proof}
  
  \subsection{The General Case When \texorpdfstring{\(d\ge 2\)}{} \text{and} \texorpdfstring{\(\lambda \nu_d>1\)}{}}
  We first show that for any block $B$, the set of surrounding visible blocks $\{B'\colon B \sim B', B' \neq B\}$ contains $\Omega(\log n)$ vertices. Let $C_i$ be the ball of radius $R_d \triangleq (\log n)^{1/d}$ centered at the center of $B_i$. We define 
  \[U_i = \bigcup_{j\colon j\neq i, B_j\sim B_i} B_j %=\bigcup_{j \neq i \colon B_j\subset C_i} B_j
  \]
  as the union of all visible blocks to $B_i$, excluding $B_i$ itself. Since any point in a visible block $B_j$ lies within a distance of $(\log n)^{1/d}$ from the center of $B_i$, we have $U_i\cup B_i \subset C_i$ (see Figure \ref{fig:geometry}). In addition, let $C_i''$ and $C_i'$ be the balls of radiuses $(1 - \sqrt{d}(\chi)^{1/d})(\log n)^{1/d}$ and $(1 - 3\sqrt{d}(\chi)^{1/d}/2)(\log n)^{1/d} \triangleq R_d'$ centered at the center of $B_i$, respectively. For any block $B_i\subset\cS_{d,n}$ with $\text{vol}(B_i)=\chi\log n$, the length of its longest diagonal is given by $\sqrt{d}(\chi\log n)^{1/d}$. Observe that
  \[\sup_{x \in B_i, y \in C_i''} \Vert x - y \Vert = \frac{1}{2}\sqrt{d}(\chi\log n)^{1/d} + (R_d - \sqrt{d}(\chi)^{1/d}) (\log n)^{1/d} = (\log n)^{1/d}.\]
  %\julia{Do we need $C_i''$?}
  It follows that if $B_j \subseteq C_i''$, then $B_i \sim B_j$. %. Also, $C_i$ contains all blocks $B_j \sim B_i$ 
  Shrinking the radius of $C_i''$ by the length of the longest diagonal of a block, namely $\sqrt{d}(\chi\log n)^{1/d}$, gives the smaller ball $C_i'$. Based on geometric observations, we note that $C_i^{\prime} \subset U_i\cup B_i$ (see Figure \ref{fig:geometry}). Observe that as $\chi \to 0$, the volume of the black region approaches the volume of $C_i$ and $C_i'$. The following lemma quantifies this observation, showing that our conditions on $\chi$ guarantee that $U_i$ (and any set with the same volume as $U_i$) will contain sufficiently many vertices.

  \begin{figure}[!ht]
\centering\includegraphics[width=0.9\linewidth]{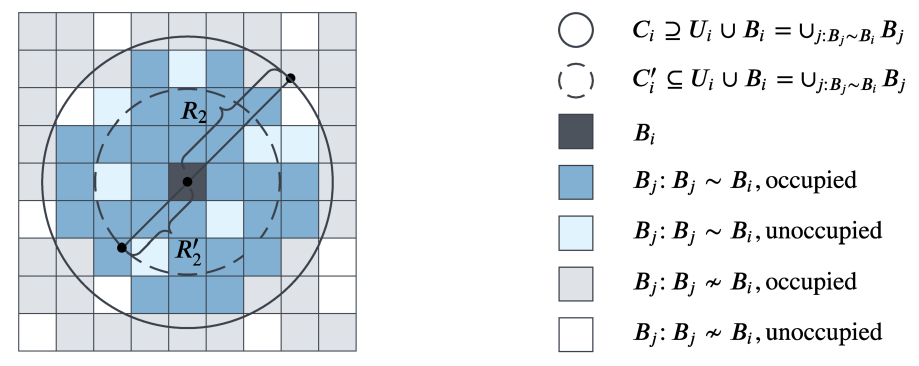}
      \caption{Geometry around block $B_i$, showing a portion of the region $\cS_{2,n}$. The set $U_i$ is comprised of dark and light black blocks.}
      \label{fig:geometry}
  \end{figure}

  \begin{lemma}\label{lem:fN_i-size}
  If $\chi$ satisfies the condition in \eqref{eq:chi-fomula} and $\lambda \nu_d>1$, there exist positive constants $\widetilde\delta$ and $\epsilon$, depending on $\lambda$ and $d$, such that for any subset $S\in\cS_{d,n}$ with $\text{vol}(S) = \text{vol}( U_i)$, we have
  \$
  \pr\big(|V(S)|>\widetilde\delta\log n\big) \geq 1- n^{-1-\epsilon}.
  \$ 
  \end{lemma}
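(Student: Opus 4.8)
The plan is to reduce the claim to Lemma~\ref{lemma:volume-points}, which already controls the number of points in any fixed region of volume $\nu\log n$ as soon as $\lambda\nu>1$. Conditioned on $|V|$, the count $|V(S)|$ is $\text{Poisson}(\lambda\,\text{vol}(S))$ by Definition~\ref{def:PPP}, hence stochastically nondecreasing in $\text{vol}(S)$; so it suffices to produce a constant $\nu_0>1/\lambda$, depending only on $\lambda$ and $d$, with $\text{vol}(U_i)\ge\nu_0\log n$, and then to invoke Lemma~\ref{lemma:volume-points} on a subregion of $S$ of volume exactly $\nu_0\log n$.

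The first step is the geometric volume estimate for $U_i$. Let $c$ be the center of $C_i$, so $C_i=B\big(c,R_d(\log n)^{1/d}\big)$, and write $w=\sqrt{d}(\chi\log n)^{1/d}$ for the diameter of a single block. If a block $B_j$ meets the smaller ball $B\big(c,R_d(\log n)^{1/d}-w\big)$, then by the triangle inequality every point of $B_j$ lies within $R_d(\log n)^{1/d}$ of $c$, i.e.\ $B_j\subseteq C_i$, and recall from the discussion preceding the lemma that $B_j\subseteq C_i$ forces $B_j\sim B_i$. Since the blocks partition $\cS_{d,n}$, the blocks contained in $C_i$ cover $B\big(c,R_d(\log n)^{1/d}-w\big)$; discarding the single block $B_i$ from this disjoint union costs at most $\text{vol}(B_i)=\chi\log n$, so
\[
\text{vol}(U_i)\;\ge\;\nu_d\big(R_d(\log n)^{1/d}-w\big)^d-\chi\log n\;=\;\Big[\nu_d\big(R_d-\sqrt{d}\chi^{1/d}\big)^d-\chi\Big]\log n .
\]
Using $R_d=1-\sqrt{d}\chi^{1/d}/2$ gives $R_d-\sqrt{d}\chi^{1/d}=1-3\sqrt{d}\chi^{1/d}/2$ (positive by the first part of \eqref{eq:chi-fomula}), so $\text{vol}(U_i)\ge\nu_0\log n$ with $\nu_0:=\nu_d\big(1-3\sqrt{d}\chi^{1/d}/2\big)^d-\chi$. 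The two conditions in \eqref{eq:chi-fomula} were chosen precisely so that, for $d\ge 2$, $\nu_d\big(1-3\sqrt{d}\chi^{1/d}/2\big)^d\ge(\nu_d+1/\lambda)/2$ while $\chi<(\nu_d-1/\lambda)/2$; subtracting, $\nu_0>1/\lambda$, i.e.\ $\lambda\nu_0>1$.

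Finally, I would conclude as follows. Given any $S\subseteq\cS_{d,n}$ with $\text{vol}(S)=\text{vol}(U_i)\ge\nu_0\log n$, fix a measurable $S'\subseteq S$ with $\text{vol}(S')=\nu_0\log n$; then $|V(S)|\ge|V(S')|$, and Lemma~\ref{lemma:volume-points} applied to $S'$ with $\nu=\nu_0$ yields constants $\widetilde\delta\in(0,\lambda\nu_0)$ and $\epsilon>0$, depending only on $\lambda$ and $d$, with $\pr\big(|V(S')|>\widetilde\delta\log n\big)\ge 1-n^{-1-\epsilon}$, and the claim follows. I do not expect a genuine obstacle here: the only delicate point is the bookkeeping with $\chi$, namely checking that the slack built into \eqref{eq:chi-fomula} absorbs both the boundary loss $w$ and the removed block $B_i$ while keeping $\nu_0$ strictly above $1/\lambda$; one should also note that for $n$ large, $C_i$ has radius $o(n^{1/d})$, so the ball does not wrap around the torus and the Euclidean volume formula $\text{vol}(B(c,\rho))=\nu_d\rho^d$ is valid.
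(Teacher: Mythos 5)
Your proof is correct and takes essentially the same approach as the paper: lower-bound $\text{vol}(U_i)$ by the volume of the concentric ball of radius $(R_d-\sqrt{d}\chi^{1/d})(\log n)^{1/d}$ minus the single block $B_i$, check via \eqref{eq:chi-fomula} that the resulting volume constant exceeds $1/\lambda$ (for $d\ge 2$), and then invoke Lemma \ref{lemma:volume-points}; your covering argument just makes explicit the inclusion $C_i'\subset U_i\cup B_i$ that the paper asserts, and passing to a subregion $S'$ of exact volume $\nu_0\log n$ is a harmless refinement. One cosmetic slip: $|V(S)|$ is $\text{Poisson}(\lambda\,\text{vol}(S))$ unconditionally, not conditioned on $|V|$ (given $|V|$ it would be binomial), but your argument only uses the pointwise monotonicity $|V(S)|\ge|V(S')|$, so nothing breaks.
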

  
  \begin{proof}
  We first evaluate the volume of $U_i\subset C_i$.\footnote{This is similar to the Gauss circle problem \citep{ivic2006lattice}.} %We define \julia{already defined above} $R_d' = R_d-\sqrt{d}\chi^{1/d}$ and $C_i^{\prime}$ as the ball centered at the center of $B_i$ with a radius $R_d'(\log n)^{1/d}$. 
%Based on geometric observations, we n
  Note that $C_i^{\prime} \subset U_i\cup B_i%\subset C_i
  $. The condition in \eqref{eq:chi-fomula} implies that $3\sqrt{d}\chi^{1/d}/2<1$ and thus $R_d'>0$. It follows that $\text{vol}( U_i\cup B_i) \ge \text{vol}(C_i^{\prime}) = \nu_d(R_d')^d$, and thus $\text{vol}( U_i) \ge \nu_d(R_d')^d-\chi\log n$.
  
  We now show that when $\lambda \nu_d>1$, the conditions in \eqref{eq:chi-fomula} imply $\lambda [\nu_d (1 - 3\sqrt{d}\chi^{1/d}/2)^d - \chi]>1$ by observing the following relations:
  \begin{align*}
  %\nu_d (R_d')^d-\chi &= 
  \nu_d \big(1 - 3\sqrt{d}\chi^{1/d}/2  \big)^d - \chi
  &\geq (\nu_d + 1/{\lambda} )/2 - \chi\\
  &\geq 1/\lambda.
  \end{align*}
  In summary, we have shown $\text{vol}(S) = \text{vol}( U_i)\ge \nu_d(R_d')^d-\chi\log n$ and $\lambda [\nu_d (R_d')^d/\log n - \chi]>1$. Thus, Lemma \ref{lemma:volume-points} ensures the existence of positive constants $\widetilde\delta$ and $\epsilon$ such that $\pr(|V(S)|>\widetilde\delta\log n) > 1- n^{-1-\epsilon}$.
  \end{proof}
  
  Henceforth, we use the term ``occupied block'' to refer to $\delta$-occupied blocks, as well as ``unoccupied block'', with the constant threshold $\delta=\delta(\lambda, d)$ defined in \eqref{eq:delta-formula} in the rest of the section. We define $K=|\{j\colon B_j\subset U_i\}|$ as the number of blocks in $U_i$, a constant relying on $\lambda$ and $d$. We note that $K\le\nu_d/\chi - 1 < \widetilde{\delta}/\delta$ since $U_i\cup B_i\subset C_i$. The key observation in establishing connectivity is that there cannot be a large \emph{cluster} of unoccupied blocks.
  \begin{definition}[Cluster of blocks]
      Two blocks are adjacent if they share an edge or a corner. We say that a set of blocks $\mathcal{B}$ is a cluster if for every $B, B' \in \mathcal{B}$, there is a path of blocks of the form $(B = B_{j_1}, B_{j_2}, \dots, B_{j_m} = B')$, where $B_{j_k} \in \mathcal{B}$ for $k \in [m]$ and $B_{j_k}, B_{j_{k+1}}$ are adjacent.
  \end{definition}
  
  The following lemma shows that all clusters of unoccupied blocks have fewer than $K$ blocks, with high probability. This also implies that $U_i$ contains at least one occupied block for each $i$.
  \begin{lemma}\label{lem:unoccupied-cluster}
  Suppose $d\ge2$ and $\lambda \nu_d >1$. Let
  $Y$ be the size of the largest cluster of unoccupied blocks produced in Line \ref{line:partition} in Algorithm \ref{alg:almost-exact}. Then $\pr(Y < K) = 1-o(1)$.
  \end{lemma}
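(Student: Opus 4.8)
The plan is a first-moment argument: a union bound over all candidate clusters of $K$ unoccupied blocks, exploiting that disjoint blocks have independent Poisson vertex counts together with a Poisson lower-tail estimate. The hypothesis $\lambda\nu_d>1$ enters only at the end, to make the exponents balance.

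First I would establish a per-block bound. By Definition \ref{def:PPP}, the number of vertices in any block $B_j$ (which has volume $\chi\log n$) is $\mathrm{Poisson}(\lambda\chi\log n)$, and the counts in disjoint blocks are mutually independent. Note also that $\delta<\lambda\chi$: the proof of Lemma \ref{lem:fN_i-size} shows $\widetilde\delta<\lambda\big(\nu_d(R_d')^d-\chi\big)\le\lambda\nu_d R_d$, so the choice \eqref{eq:delta-formula} forces $\delta<\widetilde\delta\chi/(\nu_d R_d)<\lambda\chi$. Applying the Poisson Chernoff bound of Lemma \ref{lem:Chernoff-poisson} with $\mu=\lambda\chi\log n$ and $t=(\lambda\chi-\delta)\log n\in(0,\mu)$ gives, for every block $B_j$,
\[
\pr\big(B_j\text{ is unoccupied}\big)=\pr\big(|V(B_j)|\le\delta\log n\big)\le n^{-c},\qquad c:=\delta\log\frac{\delta}{\lambda\chi}+\lambda\chi-\delta .
\]
Here $c=g(\delta)$ for the function $g$ from the proof of Lemma \ref{lemma:volume-points} taken with $\nu=\chi$, and since $0<\delta<\lambda\chi$ we have $c>0$.

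Next comes the union bound. If $Y\ge K$, then running a breadth-first search inside the largest cluster of unoccupied blocks and stopping after $K$ blocks produces a connected set of exactly $K$ unoccupied blocks; hence $\{Y\ge K\}$ is contained in the union, over all clusters $\mathcal B$ of exactly $K$ blocks, of the event that every block of $\mathcal B$ is unoccupied. The number of clusters of $K$ blocks containing a fixed block is at most a constant $A_{K,d}$: such a cluster can be enumerated by a connected ordering $C_1,C_2,\dots,C_K$ with $C_1$ fixed and each $C_{i+1}$ adjacent to some $C_j$ with $j\le i$, and since each block has at most $3^d-1$ adjacent blocks there are at most $\prod_{i=1}^{K-1}i(3^d-1)=(K-1)!\,(3^d-1)^{K-1}$ such orderings. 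Hence the number of clusters of size $K$ is at most $N\,A_{K,d}$ with $N=n/(\chi\log n)$ the number of blocks. Since the $K$ blocks of a fixed cluster are pairwise disjoint, their unoccupancy events are independent, so the probability that all of them are unoccupied is at most $n^{-cK}$. Therefore
\[
\pr(Y\ge K)\ \le\ N\,A_{K,d}\,n^{-cK}\ =\ \frac{A_{K,d}}{\chi\log n}\,n^{\,1-cK},
\]
which is $o(1)$ as soon as $cK\ge 1$.

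It remains to verify $cK>1$; this is where $\lambda\nu_d>1$ is used and is the only genuinely delicate step. Since $U_i$ is the union of exactly $K$ disjoint blocks of volume $\chi\log n$ and $C_i'\subseteq U_i\cup B_i$ (as in the proof of Lemma \ref{lem:fN_i-size}), we have $(K+1)\chi\ge\nu_d(R_d')^d$, and together with the first condition in \eqref{eq:chi-fomula}, namely $\nu_d(R_d')^d\ge(\nu_d+1/\lambda)/2$, this yields $K\ge(\nu_d+1/\lambda)/(2\chi)-1$. Consequently $cK\ge g(\delta)\big((\nu_d+1/\lambda)/(2\chi)-1\big)$. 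Since $g$ is continuous on $(0,\lambda\chi]$ with $\lim_{\delta\to 0}g(\delta)=\lambda\chi$, this lower bound tends, as $\delta\to 0$, to $\lambda\chi\big((\nu_d+1/\lambda)/(2\chi)-1\big)=(\lambda\nu_d+1)/2-\lambda\chi$, which is strictly larger than $1$ because the second condition in \eqref{eq:chi-fomula} gives (for $d\ge 2$) $\chi<(\nu_d-1/\lambda)/2$, i.e.\ $\lambda\chi<(\lambda\nu_d-1)/2$. Hence, choosing $\delta$ small enough within the admissible range \eqref{eq:delta-formula} (which already forces $\delta<\lambda\chi$), we get $cK>1$, and then $\pr(Y\ge K)\le \frac{A_{K,d}}{\chi\log n}\,n^{1-cK}=o(1)$, i.e.\ $\pr(Y<K)=1-o(1)$. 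In this argument the single-block Chernoff bound and the lattice-animal count are routine; the crux is this balancing — each block is unoccupied only with polynomially small probability $n^{-c}$ with $c\approx\lambda\chi$, while isolating a block forces roughly $K\approx\nu_d/\chi$ simultaneously unoccupied blocks, and the product $cK\approx\lambda\nu_d$ exceeds $1$ precisely under the standing assumption.
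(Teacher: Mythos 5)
Your proposal is correct in spirit and reaches the paper's conclusion, but it takes a genuinely different route for the key estimate. For a fixed candidate set of $K$ unoccupied blocks, the paper never argues block-by-block: it aggregates the $K$ blocks into a single region of volume $K\chi\log n=\mathrm{vol}(U_i)$, observes that ``all $K$ blocks unoccupied'' forces the total vertex count in that region below $\delta K\log n<\widetilde\delta\log n$ (using $K<\widetilde\delta/\delta$, which is exactly what \eqref{eq:delta-formula} provides), and then quotes Lemma \ref{lem:fN_i-size} to get the bound $n^{-1-\epsilon}$ in one step; the union bound over the constantly many cluster shapes and $n/(\chi\log n)$ positions then mirrors your rooted lattice-animal count. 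You instead use independence of the Poisson counts on disjoint blocks together with the per-block lower tail from Lemma \ref{lem:Chernoff-poisson}, obtaining $n^{-cK}$ with $c=g_\chi(\delta):=\delta\log(\delta/(\lambda\chi))+\lambda\chi-\delta$, and must then verify $cK>1$. What the paper's aggregation buys is that the exponent condition is automatic for every $\delta$ admitted by \eqref{eq:delta-formula}; what your route buys is a self-contained computation that makes the balance $cK\approx\lambda\nu_d>1$ explicit.

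The one shortfall is exactly there: you verify $cK>1$ only in the limit $\delta\to0$ and then restrict to ``$\delta$ small enough,'' whereas the lemma is stated for every $\delta$ satisfying \eqref{eq:delta-formula}. This is fixable within your own framework. Writing $g_\nu(x)=x\log\big(x/(\lambda\nu)\big)+\lambda\nu-x$, one has the exact identity $K\,g_\chi(\delta)=g_{K\chi}(K\delta)$, and \eqref{eq:delta-formula} gives $K\delta<\widetilde\delta\,K\chi/(\nu_dR_d)<\widetilde\delta$ because $K\chi\le\nu_d(R_d)^d-\chi<\nu_dR_d$; since $g_{K\chi}$ is decreasing on $(0,\lambda K\chi]$ and $\widetilde\delta$ is produced by Lemma \ref{lemma:volume-points} (applied to a region of volume $\mathrm{vol}(U_i)$, with $\lambda\,\mathrm{vol}(U_i)/\log n>1$) precisely so that the corresponding exponent exceeds $(1+\lambda K\chi)/2>1$, it follows that $K\,g_\chi(\delta)>1$ throughout the admissible range, with no extra smallness assumption on $\delta$. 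Alternatively you can simply swap your per-block product bound for the paper's aggregated bound via Lemma \ref{lem:fN_i-size}. With either patch your argument is complete; the remaining ingredients (the BFS truncation to a connected set of exactly $K$ unoccupied blocks, independence of counts on disjoint blocks from Definition \ref{def:PPP}, and the $(K-1)!\,(3^d-1)^{K-1}$ bound on rooted clusters) are sound.
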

  \begin{proof}
  We first bound the probability that all $K$ blocks in any given set are unoccupied. For any set of $K$ blocks $\{B_{j_k}\}_{k=1}^K$, we have 
  \#\label{eq:pr-all-unoccupied}
  \pr\Big(\bigcap_{k=1}^K\big\{|V_{j_k}|\le\delta\log n\big\}\Big) & \le \pr\Big(\big|\bigcup_{k=1}^K V_{j_k}\big|\le\delta K \log n\Big) \notag\\
  & \le \pr\Big(\big|\bigcup_{k=1}^K V_{j_k}\big|<\widetilde\delta \log n\Big) \notag\\
  & \le n^{-1-\epsilon},
  \#
  where the second inequality holds due to $K<\widetilde\delta/\delta$ and the last inequality follows from Proposition \ref{lem:fN_i-size} and the fact that $\vol(\bigcup_{k=1}^K B_{j_k})=\vol(U_i)$.
  
  Let $Z$ be the number of unoccupied block clusters with a size of $K$. Then we have $\pr(Y\ge K) = \pr(Z \ge 1)$. Let $\fS$ be the set of all possible shapes of clusters of blocks with a size of $K$. Clearly, $|\fS|$ is a constant depending on $K$ and $d$. For any $s\in\fS$, $i\in [n/(\chi\log n)]$, and $j\in[K]$, we define $\cZ_{s, i,j}$ as the event that there is a cluster of unoccupied blocks, characterized by shape $s$ with block $B_i$ occupying the $j$th position. Due to \eqref{eq:pr-all-unoccupied}, we have $\pr(\cZ_{s, i, j})\le n^{-1-\epsilon}$. Thus, the union bound gives
  \$
  \pr(Y\ge K) &= \pr(Z \ge 1) = \pr\Big(\bigcup_{s\in\fS, i\in[n/(\chi\log n)], j\in[K]}\cZ_{s, i,j}\Big) \\
  &\le |\fS| \cdot \frac{n}{\chi\log n}\cdot  K \cdot n^{-1-\epsilon} =  o(1). 
  \$
  \end{proof}
   
  Finally, we establish the connectivity of the visibility graph.
  \begin{proposition}\label{lemma:connectivity}
  Suppose that $d \ge 2$ and $\lambda \nu_d > 1$. Let $V \subset \cS_{d,n}$ be a Poisson point process on $\cS_{d,n}$ with intensity $\lambda$. Then for $\chi$ and $\delta$ given in \eqref{eq:chi-fomula} and \eqref{eq:delta-formula}, respectively, the $(\chi, \delta)$-visibility graph $H$ on $V$ is connected with probability $1-o(1)$.
  \end{proposition}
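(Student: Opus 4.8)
The plan is to deduce the proposition from Lemma~\ref{lem:unoccupied-cluster} by a deterministic argument run on a high-probability event. Let $\mathcal{G}$ be the event that (i) every cluster of unoccupied blocks contains fewer than $K$ blocks, and (ii) for every block $B_i$ the set $U_i$ of blocks visible to $B_i$ contains at least one $\delta$-occupied block. By Lemma~\ref{lem:unoccupied-cluster}, (i) holds with probability $1-o(1)$; and (ii) follows from (i), since for $d\ge 2$ the blocks comprising $U_i$ form a single cluster of exactly $K$ blocks, which therefore cannot all be unoccupied. So $\pr(\mathcal{G})=1-o(1)$, and it suffices to prove that $H$ is connected on $\mathcal{G}$.

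Suppose instead that $H$ is disconnected on $\mathcal{G}$, and let $\mathcal{C}\subsetneq V^{\dagger}$ be a connected component of $H$, with associated region $A=\bigcup_{i\in\mathcal{C}}B_i$; since there is an occupied block outside $\mathcal{C}$, whose interior is disjoint from $A$, the set $A$ is a proper subregion of $\cS_{d,n}$. The key structural observation is that any block visible to some block of $\mathcal{C}$ but not itself in $\mathcal{C}$ must be unoccupied, for otherwise it would be adjacent in $H$ to a vertex of $\mathcal{C}$ and hence belong to $\mathcal{C}$. Consequently the ``moat'' $M=\big(\bigcup_{i\in\mathcal{C}}U_i\big)\setminus\mathcal{C}$ consists entirely of unoccupied blocks; geometrically, up to boundary rounding, $M$ is the shell of cells lying in the dilation of $A$ by $(\log n)^{1/d}$ but outside $A$. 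I would then argue that, regardless of the size and shape of $A$, this shell contains a connected cluster of at least $K$ cells: when $A$ is not too large one uses that the shell contains a full ball of radius $(\log n)^{1/d}$ with $A$ removed, and the choice of $\chi$ in \eqref{eq:chi-fomula} — which forces $K$ to sit strictly below $\nu_d/\chi$ — leaves enough room; when $A$ is large one invokes a Brunn--Minkowski / isoperimetric lower bound on the volume of the dilation shell, which for $d\ge 2$ again exceeds $K\chi\log n$; and one passes to the outer boundary layer of the shell to guarantee connectivity even when $A$ has holes. Any such cluster of $\ge K$ unoccupied blocks contradicts (i). Hence $H$ is connected on $\mathcal{G}$, and since $\pr(\mathcal{G})=1-o(1)$ the proposition follows.

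The main obstacle is exactly this last geometric claim: controlling the number of cells in the unoccupied shell around $A$ \emph{uniformly} over all shapes and sizes of $A$, and extracting from it a single connected cluster of size at least $K$. The delicate point is handling the full range of $\mathrm{vol}(A)$ with one estimate — in particular intermediate-sized $A$, where a crude ``ball of radius $(\log n)^{1/d}$'' count and an asymptotic isoperimetric bound each come close to but need to be combined to clear $K$ — so the bookkeeping with the constants $\chi,\delta,\widetilde{\delta},K$ from \eqref{eq:chi-fomula}--\eqref{eq:delta-formula} (and the margins built in through $R_d$ and $R_d'$ in Lemma~\ref{lem:fN_i-size}) has to be done carefully. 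Everything else — reducing to $\mathcal{G}$, deriving property (ii), and the final union over the $O(n/\log n)$ candidate components — is routine.
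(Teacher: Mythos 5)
Your overall strategy matches the paper's: reduce disconnection of $H$ to the existence of a moat of unoccupied blocks around an isolated component, show that moat must contain at least $K$ unoccupied blocks forming a cluster, and contradict Lemma \ref{lem:unoccupied-cluster}. The framing (event $\mathcal{G}$, the observation that every block visible to the component but outside it must be unoccupied) is correct. However, the step you yourself flag as ``the main obstacle'' is precisely the entire content of the proposition, and it is not established by your sketch. Concretely: (a) for a component consisting of a single block $B_i$ the moat is exactly $U_i$, which has exactly $K$ blocks, so there is \emph{no} slack coming from ``$K$ sitting strictly below $\nu_d/\chi$''; for a component of $m\ge 2$ blocks the naive count ``ball of radius $(\log n)^{1/d}$ minus $A$'' loses up to $m$ blocks and by itself drops below $K$, so one must genuinely use the blocks visible to the \emph{other} blocks of the component, which is where the difficulty lies. (b) The Brunn--Minkowski/isoperimetric bound you invoke for large $A$ is false on the torus $\cS_{d,n}$ (e.g.\ a component occupying all but a small neighborhood of a ball has a dilation shell of volume far below $\nu_d\log n$); fixing this requires, at minimum, arguing on the smaller component, which your sketch does not do, and handling wrapping components for which ``the outer boundary layer'' is not well defined. (c) Extracting a single \emph{connected} cluster of $\ge K$ cells from the moat (the moat can split into inner and outer pieces when the component encloses holes) is asserted but not proved, and it is needed because Lemma \ref{lem:unoccupied-cluster} bounds only clusters, not arbitrary collections of unoccupied blocks. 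Since the uniform-over-all-shapes counting is exactly what must be proved, the proposal as written has a genuine gap.

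For comparison, the paper avoids all volumetric estimates: it proves by induction on the number of blocks of an isolated (not necessarily connected) set $S$ that at least $K$ surrounding blocks are unoccupied. The induction exploits that the footprints $B_i\cup U_i$ are translates of one another, so some $j^\star\in S$ has footprint not contained in the union of the others; removing $V_{j^\star}$ and applying the inductive hypothesis, one checks two cases (the newly exposed region $F\setminus F_{j^\star}$ contains an unoccupied block, or it equals $B_{j^\star}$ and is disjoint from the old footprint), and in either case the count of $K$ surrounding unoccupied blocks is preserved. If you want to keep your volumetric route you would need to (i) always argue on a smallest component, (ii) replace Brunn--Minkowski by a torus-valid estimate (or a case analysis on whether the component wraps), and (iii) prove the connectivity of the extracted cell cluster; adopting the paper's induction sidesteps all three issues.
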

  \begin{proof}
      For a visibility graph $H = (V^{\dagger},E^{\dagger})$, we say that $S \subset V^{\dagger}$ is a \emph{connected component} if the subgraph of $H$ induced on $S$ is connected.
      Let $\cE$ be the event that $H$ contains an isolated connected component. Formally, $\cE$ is the event that there exists $S\subset V^{\dagger}$ \footnote{The notation $\subset$ denotes a strict subset.} such that (1) $S \neq \emptyset$ and $S\neq V^\dagger$; (2) $S$ is a connected component; (3) for all $i \in S, j \not \in S$ we have $\{i,j\} \not \in E^{\dagger}$. Observe that $\{H \text{ is disconnected}\} = \cE$.
  
      For any $S \neq \emptyset$ and $S\subset V^\dagger$ to be an isolated connected component, it must be completely surrounded by a cluster of unoccupied blocks. In other words, all blocks in the cluster $(\bigcup_{i\in S}U_i)\setminus(\bigcup_{i\in S}B_i)$ must be unoccupied. We 
   next show that for any isolated, connected component $S$, we have $|\{j\colon B_j\subset (\bigcup_{i\in S}U_i)\setminus(\bigcup_{i\in S}B_i)\}| \ge K$; that is, the number of unoccupied blocks visible to an isolated connected component is at least $K$.
  
      We prove the claim by induction on $|S|$. In fact, we prove it for $S$ that is isolated, but not necessarily connected.
      The claim holds true whenever $|S| = 1$ by the definition of $K$. Suppose that the claim holds for every isolated component with $k$ blocks. Consider an isolated component $S$, with $|S| = k +1$. Let $F = (\bigcup_{i \in S} B_i) \bigcup (\bigcup_{i \in S} U_i)$ be the collective ``footprint'' of all elements of $S$ along with the surrounding unoccupied blocks. For each $j \in S$, let $F_j = (\bigcup_{i \in S, i \neq j} B_i) \bigcup (\bigcup_{i \in S, i \neq j} U_i)$ be the footprint of all blocks in $S$ excluding $j$. Let $G_{j}$ be the graph formed from $G$ by removing all vertices from $V_{j}$, thus rendering $V_{j}$ unoccupied. Observe that there must exist some $j^{\star} \in S$ such that $F_{j^{\star}} \neq F$ and $F_{j^{\star}} \subset F$, as the regions $\{B_i \cup U_i\}_{i \in S}$ are translations of each other. Since $S \setminus \{j^{\star}\}$ is an isolated component in $G_{j^{\star}}$, the inductive hypothesis implies that $S \setminus \{j^{\star}\}$ has at least $K$ surrounding unoccupied blocks in $G_{j^{\star}}$. Comparing $G_{j^{\star}}$ to $G$, there are two cases (see Figure \ref{fig:isolated-component} for examples in $\cS_{2,n}$). \\
      \emph{Case I.} In the first case, $F \setminus F_{j^{\star}}$ contains at least one unoccupied block. In that case, the inclusion of $V_{j^{\star}}$ changes one block from unoccupied to occupied, and increases the number of surrounding unoccupied blocks by at least one. Thus, $S$ contains at least $K$ surrounding unoccupied blocks. \\
      \emph{Case II.} In the second case, $F \setminus F_{j^{\star}} $ contains only occupied blocks. Since there are $k+1$ total occupied blocks in $F$ and $k$ of them are in $F_{j^\star}$, we have $F \setminus F_{j^{\star}} = B_{j^{\star}}$, so that $B_{j^{\star}} \cap F_{j^{\star}} = \emptyset$. In this case, the set of $K$ surrounding unoccupied blocks in $F_{j^\star}$ remains unoccupied in $F$.
      \begin{figure}[!ht]
  \centering
  \includegraphics[width=.9\linewidth]{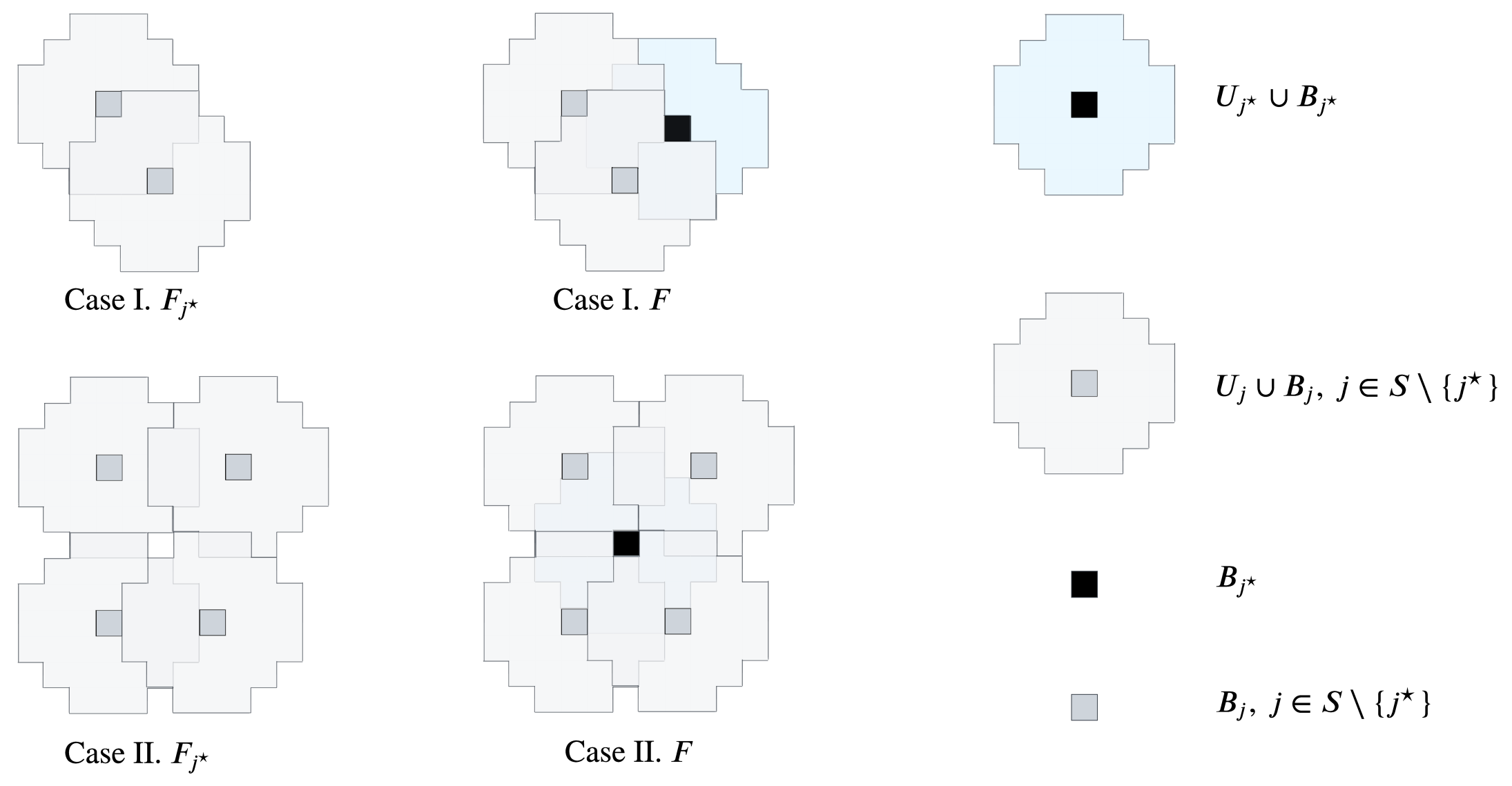}
    \caption{Possible isolated components in $\cS_{2,n}$ for Proposition \ref{lemma:connectivity}.}
    \label{fig:isolated-component}
  \end{figure}
  
      Thus, $\mathcal{E}$ implies $\{Y \geq K\}$. The result follows from Lemma \ref{lem:unoccupied-cluster}.
  \end{proof}

  In summary, Propositions \ref{lem:visibility-d1-small-lambda} and \ref{lemma:connectivity} establish the connectivity of visibility graphs for cases when $d=1$ and $\lambda>1$, or $d\ge 2$ and $\lambda \nu_d>1$, ensuring successful label propagation in the algorithm. For convenience, let $\cH = \{H \text{ is connected} \}.$ We conclude that $\pr(\cH) = 1- o(1)$.
\section{Phase I: Proof of Almost Exact Recovery}
\label{sec:almost_exact_recovery}

 In this section, we prove that Phase I of Algorithm \ref{alg:almost-exact} achieves almost exact recovery. Before proceeding further and showing the algorithm's success, we provide a high probability upper bound on the maximal occupancy of any block. 
  
  \begin{lemma}\label{lem:B-upper-bound}
  For the blocks obtained from Line \ref{line:partition} in Algorithm \ref{alg:almost-exact}, there exists a constant $\Delta>0$ such that 
  \$
  \pr\Big(\bigcap_{i=1}^{n/(\chi\log n)} \big\{|V_i|< \Delta\log n\big\}\Big) = 1- o(1).
  \$
  \end{lemma}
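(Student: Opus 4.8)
The plan is a direct union bound over all $n/(\chi\log n)$ blocks, combined with the Poisson upper-tail Chernoff bound of Lemma \ref{lem:Chernoff-poisson}. By the construction of the Poisson point process (Definition \ref{def:PPP}), each block $B_i$ has volume $\chi\log n$, so $|V_i|\sim\text{Poisson}(\mu)$ with $\mu = \lambda\chi\log n$. The key point is to choose the constant $\Delta$ large enough that the per-block tail probability beats the number of blocks.

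First I would fix $\Delta$. Applying Lemma \ref{lem:Chernoff-poisson} with $t = (\Delta - \lambda\chi)\log n$ (valid once $\Delta > \lambda\chi$), one gets
\#
\pr\big(|V_i| \ge \Delta\log n\big) \le \exp\Big(-\frac{(\Delta-\lambda\chi)^2\log^2 n}{2\,\Delta\log n}\Big) = n^{-(\Delta-\lambda\chi)^2/(2\Delta)}. \notag
\#
Since $(\Delta-\lambda\chi)^2/(2\Delta)\to\infty$ as $\Delta\to\infty$ while the right-hand side of the defining inequality for ``large enough'' grows only linearly, there exists a constant $\Delta = \Delta(\lambda,\chi)$ and some $\epsilon>0$ with $(\Delta-\lambda\chi)^2/(2\Delta) \ge 1+\epsilon$; fix such a $\Delta$. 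Then $\pr(|V_i|\ge \Delta\log n) \le n^{-1-\epsilon}$ for every $i$.

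Next I would take the union bound over all $i\in[n/(\chi\log n)]$:
\$
\pr\Big(\bigcup_{i=1}^{n/(\chi\log n)}\big\{|V_i|\ge \Delta\log n\big\}\Big) \le \frac{n}{\chi\log n}\cdot n^{-1-\epsilon} = \frac{n^{-\epsilon}}{\chi\log n} = o(1),
\$
which gives $\pr\big(\bigcap_i\{|V_i|<\Delta\log n\}\big) = 1-o(1)$, as claimed. (A small remark: $\chi$ depends only on $\lambda$ and $d$ via \eqref{eq:chi-fomula}, so $\Delta$ is a genuine constant.)

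There is no real obstacle here; the only thing to be careful about is ensuring the Chernoff exponent is \emph{strictly} larger than $1$ so that the factor $n/(\chi\log n)$ in the union bound is absorbed, which is exactly why we take $\Delta$ a sufficiently large constant rather than merely $\Delta > \lambda\chi$.
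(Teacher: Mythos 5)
Your proposal is correct and follows essentially the same route as the paper: the block counts are $\text{Poisson}(\lambda\chi\log n)$, Lemma \ref{lem:Chernoff-poisson} gives a per-block tail of $n^{-(\Delta-\lambda\chi)^2/(2\Delta)}$, and a union bound over the $n/(\chi\log n)$ blocks finishes. The only cosmetic difference is that the paper settles for exponent just above $1$ (giving threshold $\Delta>\lambda\chi+1+\sqrt{2\lambda\chi+1}$) and absorbs the union bound via the extra $1/(\chi\log n)$ factor, whereas you insist on exponent $1+\epsilon$ — both work.
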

  \begin{proof}
  For a block ${B}_i$ with $\text{vol}({B}_i)=\chi\log n$, we have $|V_i|\sim \text{Poisson}(\lambda \chi\log n)$. 
  Thus, the Chernoff bound in Lemma \ref{lem:Chernoff-poisson} implies that, for $\Delta > (\lambda \chi + 1+\sqrt{2\lambda \chi+1})$, we have 
  \$ \pr(|{V}_i|\ge \Delta\log n) \le \exp\Big(-\frac{(\Delta-\lambda \chi)^2\log n}{2\Delta}\Big) = n^{-\frac{(\Delta-\lambda \chi)^2}{2\Delta}} < n^{-1}, 
  \$ 
  where the last inequality holds by straightforward calculation. Thus, the union bound gives that
  \$
\pr\Big(\bigcap_{i=1}^{n/(\chi\log n)} \big\{|V_i|< \Delta\log n\big\}\Big) & = 1 - \pr\Big(\bigcup_{i=1}^{n/(\chi\log n)} \big\{|V_i|\ge \Delta\log n\big\}\Big) \\
& > 1 - \frac{n}{\chi\log n}\cdot n^{-1} = 1- o(1). 
  \$
  \end{proof}
  
  For $\delta$ given by \eqref{eq:delta-formula} and $\Delta$ given by Lemma \ref{lem:B-upper-bound}, we define $\cI$ as follows and conclude that $\pr(\cI)=1-o(1)$. 
  \$\cI = \bigcap_{i\in V^\dagger} \{\delta\log n <|V_i|< \Delta\log n\}.\$

Recall the definition of $\phi_t(\cdot)$ in \eqref{eq:define-phi-t}. Note when $t=0$ or $t=1$, it holds that $\phi_t(p,q)=1$ for any $p,q$. When $0<t<1$, we have $\phi_t(p,q) = 1$ if and only if $p \stackrel{d}{=} q$. Throughout this section, for any $t\in(0,1)$, we define $\Phi_t$ as
\begin{align}
    \Phi_t = \max\{\phi_t(p_{ij}, p_{k \ell}) : i,j,k,\ell \in Z,p_{ij}\neq p_{k \ell}\} < 1.  \label{eqn:Phi_t}
\end{align}

\subsection{Labeling the Initial Block}
Algorithm \ref{alg:almost-exact} labels a subset of the first block $V_{i_0} \subseteq V_{i_1}$ by the \texttt{MAP} subroutine in Algorithm \ref{alg:initial-block}; namely,
\[
\widehat{x}_0 = \argmax_{x\colon V_{i_0}\to Z} \mathbb{P}(x^\star=x | G).
\]
In this section, we show that $\widehat{x}_0$ achieves exact recovery on $V_{i_0}$ with high probability. %for any parameterization of GHCM. 
\begin{theorem}
\label{thm:initial_block}
Suppose that Assumption \ref{ass:bounded-ratio} holds. Let $G\sim\text{GHCM}(\lambda, n, \pi, P,d)$. Then
\$
\pr(\widehat x_0 \text{ achieves exact recovery on } V_{i_0}) = 1 - o(1).
\$
\end{theorem}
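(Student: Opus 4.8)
The plan is to reduce the analysis of the \texttt{MAP} subroutine to a restricted maximum-likelihood estimator and then to classify the competing labelings by their distance from $x^\star$. First note that all vertices of $V_{i_0}\subseteq V_{i_1}$ lie in a single block of volume $\chi\log n$, whose diameter is $\sqrt d\,(\chi\log n)^{1/d}<(\log n)^{1/d}$ by \eqref{eq:chi-fomula}; hence they are pairwise mutually visible, so that, conditioned on $x^\star$, the observations $\{Y_{uv}\}_{u,v\in V_{i_0}}$ are independent with $Y_{uv}\sim P_{x^\star(u),x^\star(v)}$. Because every $\omega\in\Omega_{\pi,P}$ leaves the (marginal) posterior of $x^\star|_{V_{i_0}}$ invariant, the \texttt{MAP} estimator is Bayes-optimal for the loss ``fails to recover $x^\star|_{V_{i_0}}$ up to a permissible relabeling'', so it suffices to exhibit one estimator that succeeds with probability $1-o(1)$. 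Write $m=|V_{i_0}|=\varepsilon_0\log n$ and, for a small constant $\varepsilon>0$, let $X_0^\star(\varepsilon)$ be the set of labelings $x\colon V_{i_0}\to Z$ whose community profile $(|x^{-1}(r)|)_{r\in Z}$ lies within $\varepsilon m$ of $(\pi_r m)_{r\in Z}$ coordinatewise. A multinomial Chernoff bound shows that on an event $\cA$ of probability $1-e^{-\Omega(m)}=1-o(1)$ we have $x^\star\in X_0^\star(\varepsilon)$, and then, by permissibility $\pi_i=\pi_{\omega(i)}$, so do all permissible relabelings of $x^\star$. We then study the restricted MLE $\widehat x^{\,\mathrm{ML}}_0=\argmax_{x\in X_0^\star(\varepsilon)}L(x)$ with $L(x)=\sum_{\{u,v\}\subseteq V_{i_0}}\log p_{x(u),x(v)}(Y_{uv})$: on $\cA$ this estimator succeeds whenever $L(x)<L(x^\star)$ for every $x\in X_0^\star(\varepsilon)$ that is not a permissible relabeling of $x^\star$, so the task reduces to union-bounding $\pr(L(x)\ge L(x^\star))$ over those $x$. (We also use the distinctness Assumption \ref{ass:distinguishable}, which holds throughout this appendix.)

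For a labeling $x$ define its discrepancy $\bar d(x)=\min_{\omega\in\Omega_{\pi,P}}|\{u\in V_{i_0}\colon x(u)\neq\omega\circ x^\star(u)\}|$, and fix a small constant $\beta>0$. In the \emph{low-discrepancy} regime $1\le\bar d(x)=t\le\beta m$, apply the optimal $\omega$ (which alters neither $L$ nor $\bar d$) so that $x$ and $x^\star$ agree off a set $T$ of size $t$. Then $L(x^\star)-L(x)$ is the sum, over the at most $tm$ pairs meeting $T$, of $\log\big(p_{x^\star(u),x^\star(v)}(Y_{uv})/p_{x(u),x(v)}(Y_{uv})\big)$; the $t(m-t)$ ``cross'' pairs with $u\in T,\,v\notin T$ dominate, and for such a pair $Y_{uv}\sim P_{x^\star(u),x^\star(v)}$ and $x(v)=x^\star(v)$, so the conditional mean of its term is $\mathrm{KL}\big(P_{x^\star(u),x^\star(v)}\,\|\,P_{x(u),x^\star(v)}\big)$, which is strictly positive because $x(u)\neq x^\star(u)$ and, by Assumption \ref{ass:distinguishable} together with $P_{ab}\stackrel{d}{=}P_{ba}$, $P_{ir}\neq P_{jr}$ whenever $i\neq j$. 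Together with the profile constraint defining $X_0^\star(\varepsilon)$ this yields $\mathbb{E}[L(x^\star)-L(x)]\ge\kappa\,tm$ for a constant $\kappa>0$ depending only on $(\pi,P)$. By Assumption \ref{ass:bounded-ratio} each of the $\le tm$ summands lies in $(-\eta,\eta)$, so Hoeffding's inequality gives $\pr(L(x)\ge L(x^\star))\le\exp(-\Omega(tm))$. Since there are at most $|\Omega_{\pi,P}|\binom{m}{t}(k-1)^t=\exp(O(t\log\log n))$ labelings of discrepancy $t$, summing over $1\le t\le\beta m$ contributes $\exp(-\Omega(m))=n^{-\Omega(1)}=o(1)$, as $m=\varepsilon_0\log n$ dominates $\log\log n$.

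In the \emph{high-discrepancy} regime $\bar d(x)>\beta m$, I would argue combinatorially on the confusion matrix $M_{ab}=|\{u\colon x^\star(u)=a,\ x(u)=b\}|$, whose row and column sums both lie within $\varepsilon m$ of $(\pi_r m)_{r\in Z}$ since $x,x^\star\in X_0^\star(\varepsilon)$: these near-doubly-stochastic constraints force $x$ and $x^\star$ to induce distinct observation distributions on at least $cm^2$ pairs, for a constant $c=c(\beta,\pi,P)>0$ (again using distinctness so that the ``confused'' vertices genuinely change the distribution). Each such pair contributes a positive relative-entropy term to $\mathbb{E}[L(x^\star)-L(x)]$, making this mean $\Omega(m^2)$, while $L(x^\star)-L(x)$ is still a sum of $\le\binom m2$ terms of magnitude $<\eta$; Hoeffding's inequality — or, alternatively, a direct Chernoff bound using the moment generating functions $\phi_t$ and the bound $\Phi_t<1$ of \eqref{eq:define-phi-t}--\eqref{eqn:Phi_t} — then yields $\pr(L(x)\ge L(x^\star))\le\exp(-\Omega(m^2))$, and a union bound over the at most $k^m$ such labelings gives $k^m\exp(-\Omega(m^2))=o(1)$. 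Intersecting with $\cA$ and combining the two regimes shows that $\widehat x^{\,\mathrm{ML}}_0$, hence by Bayes-optimality the \texttt{MAP} output $\widehat x_0$, recovers $x^\star|_{V_{i_0}}$ up to a permissible relabeling with probability $1-o(1)$.

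I expect the main obstacle to be the low-discrepancy step: one must produce a likelihood gap that is uniformly linear in $tm$ over \emph{all} choices of the disagreement set $T$ and of the labels assigned inside and outside $T$, which forces one to track how the in-$T$ relabeling interacts with the roughly $\pi_r m$ vertices of each community $r$ outside $T$. This is precisely where distinctness (Assumption \ref{ass:distinguishable}) is indispensable — it gives a uniform positive lower bound on the relevant KL divergences, so no flipped vertex can be statistically invisible — and where the bounded-ratio Assumption \ref{ass:bounded-ratio} enters, to control the fluctuations of the log-likelihood via a Hoeffding (or Bernstein) bound. A secondary, purely combinatorial, difficulty is establishing the $\Omega(m^2)$ lower bound on the number of confused pairs in the high-discrepancy regime from the near-doubly-stochastic structure of the confusion matrix.
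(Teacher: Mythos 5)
Your overall route --- reduce the \texttt{MAP} output to a restricted MLE over the profile-constrained set $X_0^\star(\varepsilon)$, then split competing labelings by discrepancy and union bound, with $\exp(-\Omega(t\log n))$ bounds at low discrepancy and $\exp(-\Omega(\log^2 n))$ bounds at high discrepancy --- is exactly the paper's (Lemma \ref{thm:initial_block_restricted_MLE}, Propositions \ref{prop:low-discrepancy} and \ref{prop:high-discrepancy}). Your low-discrepancy treatment (exact decomposition of $L(x^\star)-L(x)$ over pairs meeting the disagreement set $T$, a KL mean gap of order $tm$ using distinctness plus the symmetry $P_{ab}\stackrel{d}{=}P_{ba}$, Hoeffding with the $\eta$-bound of Assumption \ref{ass:bounded-ratio}, and a union bound over the $\exp(O(t\log\log n))$ labelings at discrepancy $t$) is sound, and in fact somewhat cleaner than the paper's $A(x)/B_1(x)/B_2(x)$ decomposition; the price is that you invoke the full distinctness Assumption \ref{ass:distinguishable}, whereas the paper only needs one distinguishing community $r_{ij}$ per pair of communities, via the quantity $\rho<1$ in \eqref{eq:rho}.

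The genuine gap is the high-discrepancy step. The fact you assert --- that any $x\in X_0^\star(\varepsilon)$ with $\mathrm{DISC}(x,x^\star)\ge\beta m$ must induce pair distributions differing from those of $x^\star$ on $\Omega(m^2)$ pairs --- is true, but it is the entire content of Proposition \ref{prop:high-discrepancy}, and it does not follow from ``distinctness makes each confused vertex genuinely change the distribution.'' Concretely, take $k=2$, $P_{11}=P_{22}=\text{Bern}(p)$, $P_{12}=\text{Bern}(q)$ with $p\neq q$, and $\pi_1\neq\pi_2$: Assumption \ref{ass:distinguishable} holds, the global swap $\omega$ is not permissible (the priors differ), yet $x=\omega\circ x^\star$ has high discrepancy and induces exactly the same distribution as $x^\star$ on every pair, so $L(x)=L(x^\star)$ identically and no concentration bound can eliminate it. Such labelings are killed only because the profile constraint (with $\varepsilon$ chosen below $\tfrac13\min_{\pi_i\neq\pi_j}|\pi_i-\pi_j|$) excludes them from $X_0^\star(\varepsilon)$, and showing that this is the \emph{only} obstruction requires the paper's case analysis on the confusion matrix: either some true community splits its mass across two labels (Case 1, where pigeonhole plus distinctness of the rows $\theta_b,\theta_{b'}$ produces $\Omega(m^2)$ mismatched pairs), or the confusion matrix is essentially a permutation $\omega$, in which case one must prove from the row and column profile constraints that $\pi_{\omega(i)}=\pi_i$ for all $i$, so that non-permissibility forces some $(a,b)$ with $P_{ab}\neq P_{\omega(a),\omega(b)}$ and hence $n_{a\omega(a)}n_{b\omega(b)}=\Omega(m^2)$ mismatched pairs. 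You label this a ``secondary, purely combinatorial difficulty,'' but it is precisely where the permissible-relabeling structure and the choice of $\varepsilon$ enter; without it your high-discrepancy union bound does not go through. Once that lemma is supplied, your Hoeffding/Chernoff finish at rate $\exp(-\Omega(m^2))$ against at most $k^m$ labelings matches the paper's conclusion.
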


Rather than directly analyzing the MAP estimator, we analyze a different estimator - the \textit{restricted} maximum likelihood estimator (MLE). For any labeling $x\colon V_{i_0}\to Z$, we define the log-likelihood function as
\#\label{eq:def-ell-0} \ell_0(G, x) = \sum_{u\in V_{i_0}}\sum_{v\in V_{i_0}, v\neq u} \log \big(p_{x(u), x(v)}(y_{uv})\big). 
\#
The restricted MLE maximizes $\ell_0(G, x)$ over a set of labelings where the number of labels for each community is close to its mean. For any $\varepsilon>0$, define 
\begin{align}
X^\star_0(\varepsilon) = \big\{x\colon V_{i_0}\to Z \text{ such that } |u\in V_{i_0}\colon x(u)=i|\in \big((\pi_i-\varepsilon) |V_{i_0}|, (\pi_i + \varepsilon)|V_{i_0}| \big), \forall i\in Z \big\}. \label{eq:restricted_labels}
\end{align}
In particular, recall $\eta$ and $\rho$ defined in Assumption \ref{ass:bounded-ratio} and \eqref{eq:rho}, respectively. Let $\pi_{\min} = \min_{r\in Z} \pi_r$.
We choose positive constants $c$ and $\varepsilon$ as follows,
\begin{align}
c \le \frac{\varepsilon_0\pi_{\min}\log(1/ \rho)}{3\eta}, \quad    \varepsilon \le \frac{1}{3}\min\Big\{ \pi_{\min}, \min_{i,j\in Z, \pi_i \neq \pi_j}\{ |\pi_i - \pi_j| \}, \frac{c}{(k-1)\varepsilon_0 } \Big\}. \label{eq:epsilon}
\end{align}
The restricted MLE is then defined as
\[ \overline{x}_0 = \argmax_{x \in X^\star_0(\varepsilon)} \ell_0(G, x). \]
%For our subsequent analysis, we fix $\delta\log n\le m_1\le \Delta\log n$, condition on $\{|V_1| = m_1\}$, and write $\pr_{m_1}(\cdot) = \pr(\cdot \given |V_1|=m_1)$ and $\mathbb{E}_{m_1}[\cdot] = \mathbb{E}_{m_1}[\cdot \given |V_1|=m_1]$ as a reminder. 
With a slight abuse of notation, we use $x^\star$ to denote the truncation of $x^\star$ on $V_{i_0}$ and condition on the event $\{x^\star\in X^\star_0(\varepsilon)\}$. The following lemma states that this event occurs with high probability. 
\begin{lemma}\label{eq:x^star-set} For any $0<\varepsilon<\pi_{\min}/2$, it holds that
$\pr(x^\star \in X^\star_0(\varepsilon)) =  1 - o(1)$.
\end{lemma}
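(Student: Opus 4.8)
The plan is to view $|V_{i_0}|$ as a fixed quantity $N = \varepsilon_0 \log n$ and recall that, conditioned on the locations and on $|V_{i_0}| = N$, the labels $\{x^\star(u)\}_{u \in V_{i_0}}$ are i.i.d.\ with $\mathbb{P}(x^\star(u) = i) = \pi_i$. For each $i \in Z$, write $N_i = |\{u \in V_{i_0} : x^\star(u) = i\}|$, so that $N_i \sim \text{Bin}(N, \pi_i)$ with mean $\pi_i N$. The event $\{x^\star \in X^\star_0(\varepsilon)\}$ is precisely the intersection over $i \in Z$ of the events $\{N_i \in ((\pi_i - \varepsilon)N, (\pi_i + \varepsilon)N)\}$. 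So it suffices to show each of these $k$ events fails with probability $o(1/k) = o(1)$, and then apply a union bound over the (constantly many) communities.

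For a fixed $i$, a two-sided Chernoff/Hoeffding bound gives $\mathbb{P}(|N_i - \pi_i N| \geq \varepsilon N) \leq 2\exp(-2\varepsilon^2 N)$ by Hoeffding's inequality (stated in the excerpt), since $N_i$ is a sum of $N$ independent $[0,1]$-valued indicators. Substituting $N = \varepsilon_0 \log n$, this is $2 n^{-2\varepsilon^2 \varepsilon_0}$, which tends to $0$ as $n \to \infty$. A union bound over the $k$ communities yields $\mathbb{P}(x^\star \notin X^\star_0(\varepsilon) \mid |V_{i_0}| = N) \leq 2k\, n^{-2\varepsilon^2\varepsilon_0} = o(1)$. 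Since this bound is deterministic in $N$ (and $|V_{i_0}| = \varepsilon_0 \log n$ is in fact fixed by the construction in Algorithm \ref{alg:almost-exact}), removing the conditioning gives $\mathbb{P}(x^\star \in X^\star_0(\varepsilon)) = 1 - o(1)$.

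There is essentially no obstacle here; the only points requiring a word of care are that the labels within $V_{i_0}$ are genuinely i.i.d.\ (which follows from Step 2 of Definition \ref{def:ghcm} together with the fact that $V_{i_0}$ is selected based only on locations, independent of the labels), and that $\varepsilon < \pi_{\min}/2$ guarantees the interval $((\pi_i - \varepsilon)N, (\pi_i + \varepsilon)N)$ is a nontrivial neighborhood of the mean for every $i$, so the concentration bound is the relevant quantity. No appeal to the more delicate properties of the Poisson point process is needed beyond the conditional-i.i.d.\ structure of the labels.
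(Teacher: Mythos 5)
Your argument is correct and follows essentially the same route as the paper: represent the count of each community label in $V_{i_0}$ as a $\text{Bin}(\varepsilon_0\log n,\pi_i)$ variable, apply Hoeffding's inequality to get a $2n^{-2\varepsilon^2\varepsilon_0}$ tail bound, and union bound over the $k$ communities. The additional remarks about the conditional i.i.d.\ structure of the labels and the role of $\varepsilon<\pi_{\min}/2$ are consistent with, and implicit in, the paper's proof.
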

\begin{proof}
    Fix $0<\varepsilon<\min_i\{\pi_i\}/2$. For each $i\in Z$, let $\xi_i\sim\text{Bin}(\varepsilon_0\log n, \pi_i)$. Hoeffding's inequality gives that
    \$
    \pr\Big(|u\in V_{i_0}\colon x^\star(u)=i|\notin \big((\pi_i-\varepsilon) |V_{i_0}|, (\pi_i + \varepsilon)|V_{i_0}|\big)\Big) &= \pr\big( |\xi_i-\pi_i \varepsilon_0\log n| \ge \varepsilon \varepsilon_0\log n \big) \\
    & \le 2\exp(-2\varepsilon^2 \varepsilon_0\log n) \\
    &= 2n^{-2\varepsilon^2\varepsilon_0}.
    \$
    Thus, a union bound over all $i\in Z$ yields that $\pr(x^\star \notin X^\star_0(\varepsilon)) \le 2kn^{-2\varepsilon^2\varepsilon_0} = o(1)$.
\end{proof}

We denote $\pr_{x^\star}(\cdot) = \pr(\cdot \given x^\star\in X_0^\star(\varepsilon))$. Due to the optimality of the MAP estimator, it is sufficient to show that the restricted MLE achieves exact recovery on $V_{i_0}$ with high probability. %\julia{please change this to be a lemma}
\begin{lemma}
\label{thm:initial_block_restricted_MLE}
Suppose that Assumption \ref{ass:bounded-ratio} holds. Let $G\sim\mathrm{GHCM}(\lambda, n, \pi, P,d)$. Then
\$
\pr(\overline x_0 \text{ achieves exact recovery on }V_{i_0}) = 1 - o(1).
\$
\end{lemma}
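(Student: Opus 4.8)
The plan is to show that, with high probability, every labeling $x \in X_0^\star(\varepsilon)$ that does not agree with $x^\star$ up to a permissible relabeling has strictly smaller log-likelihood $\ell_0(G,x)$ than $x^\star$ itself, so the restricted MLE $\overline{x}_0$ must coincide with $x^\star$ on $V_{i_0}$ (up to permissible relabeling). Throughout we condition on $\{x^\star \in X_0^\star(\varepsilon)\}$, which holds with probability $1-o(1)$ by Lemma~\ref{eq:x^star-set}, and on $|V_{i_0}| = \varepsilon_0 \log n$, which is deterministic by construction. Following the structure of \cite[Section III.B]{Dhara2022}, I would parametrize a competing labeling $x$ by its \emph{discrepancy} from $x^\star$, i.e. essentially the number of vertices on which $x$ and $x^\star$ disagree after optimizing over permissible relabelings, and split into two regimes.

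\textbf{Low-discrepancy regime.} Here $x$ differs from $x^\star$ on a small (but positive) number $m$ of vertices, with $1 \le m \le \gamma \varepsilon_0 \log n$ for a small constant $\gamma$. For such $x$, the difference $\ell_0(G, x^\star) - \ell_0(G, x)$ decomposes as a sum, over the $O(m \log n)$ vertex pairs $\{u,v\}$ for which $x(u)x(v)$ differs from $x^\star(u)x^\star(v)$ in the relevant sense, of the log-likelihood-ratio increments $\log(p_{x^\star(u),x^\star(v)}(y_{uv}) / p_{x(u),x(v)}(y_{uv}))$. Each increment has nonnegative expectation (it is a KL-type quantity) — and here Assumption~\ref{ass:distinguishable} (or the strong version) is what guarantees the expectation is bounded \emph{below} by a positive constant whenever the pair's label pattern actually changed, since the two distributions being compared are genuinely distinct. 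Each increment is bounded in absolute value by $\eta$ via Assumption~\ref{ass:bounded-ratio}, so a Hoeffding / Bernstein bound shows the sum concentrates around its mean, which is $\Theta(m \log n)$, making it positive with probability $1 - \exp(-\Omega(m \log n))$. A union bound over the at most $\binom{\varepsilon_0 \log n}{m} k^m \le (\varepsilon_0 \log n \cdot k)^m = n^{o(m)}$ such labelings kills the failure probability, uniformly over $1 \le m \le \gamma \varepsilon_0 \log n$. This is exactly where the constants $c$ and $\varepsilon$ in \eqref{eq:epsilon} are calibrated, via $\rho$ from \eqref{eq:rho}.

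\textbf{High-discrepancy regime.} When $x$ disagrees with $x^\star$ on a constant fraction of $V_{i_0}$, individual pair-concentration is too weak and one instead argues at the level of the full log-likelihood. Because $x \in X_0^\star(\varepsilon)$, the community-size profile of $x$ is within $\varepsilon$ of $\pi$, so a high discrepancy forces many pairs to have genuinely mismatched label patterns; summing the (positive, constant-order) expected increments over these $\Theta(\log^2 n)$ pairs gives a mean gap of order $\log^2 n$, while the bounded-increment Assumption~\ref{ass:bounded-ratio} gives sub-Gaussian fluctuations of order $\log^{3/2} n$ (or one can use the $\Phi_t$ exponential-moment control from \eqref{eqn:Phi_t} in a Chernoff argument, which is the cleaner route and parallels the propagation analysis). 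Hence each such $x$ fails with probability $\exp(-\Omega(\log^2 n)) = n^{-\omega(1)}$, and the crude union bound over all at most $k^{\varepsilon_0 \log n} \le \sqrt{n}$ labelings in $X_0^\star(\varepsilon)$ still leaves $o(1)$ total failure probability. The main obstacle, and the step requiring the most care, is the low-discrepancy regime: one must verify that the per-pair mean gap genuinely stays bounded away from zero using only the distinctness assumption (not a quantitative separation), track how the gap scales with $m$, and check that the entropy term $n^{o(m)}$ from the union bound is dominated by $\exp(-\Omega(m\log n))$ for \emph{every} $m$ in the range simultaneously — which is precisely what the choice of $c$ in \eqref{eq:epsilon} enforces. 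Combining the two regimes, no labeling in $X_0^\star(\varepsilon) \setminus \{\omega \circ x^\star\}$ beats $x^\star$, so $\overline{x}_0$ achieves exact recovery on $V_{i_0}$ with probability $1-o(1)$.
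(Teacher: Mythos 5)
Your overall skeleton matches the paper's: condition on $x^\star \in X^\star_0(\varepsilon)$, split competing labelings by discrepancy, and show every non-trivial competitor has strictly smaller likelihood. In the low-discrepancy regime, however, your route differs from the paper's and is essentially workable: you concentrate the pairwise sum directly (Hoeffding with the bound $\eta$ from Assumption \ref{ass:bounded-ratio}) and union bound over the $\exp(O(m\log\log n))$ labelings with $m$ errors, which is indeed dominated by $\exp(-\Omega(m\log n))$. The paper instead decomposes $\ell_0(G,x)-\ell_0(G,x^\star)$ into terms $A(x)+B_1(x)$, controlled by per-vertex Chernoff bounds using $\phi_{t_{ab}}$ and a witness community (a union over only $O(\log n)$ events, independent of $x$), plus a cross term $B_2(x)$ between pairs of misclassified vertices bounded \emph{deterministically} by $2\eta\, d_H(x,x^\star)^2$; this avoids both a union bound over labelings and any per-pair KL lower bound. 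One overstatement in your sketch: a pair whose label pattern changed need not have positive KL (if both endpoints change, $p_{x(u),x(v)}$ can equal $p_{x^\star(u),x^\star(v)}$). This is harmless at low discrepancy, since such pairs number only $O(m^2)$ and contribute nonnegatively, while the single-changed-endpoint pairs supply the $\Theta(m\log n)$ mean gap (the paper needs only one witness community per confusable pair of labels, via $\rho<1$ in \eqref{eq:rho}, rather than the full strength of Assumption \ref{ass:distinguishable}).

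The genuine gap is in the high-discrepancy regime. You assert that membership in $X^\star_0(\varepsilon)$ plus high discrepancy forces $\Theta(\log^2 n)$ pairs with ``genuinely mismatched label patterns,'' each carrying a constant positive expected increment. Mismatched label patterns do not imply mismatched distributions. For example, take $x$ equal (or nearly equal) to $\omega\circ x^\star$ for a permutation $\omega$ with $P_{ij}=P_{\omega(i),\omega(j)}$ for all $i,j$ but $\pi_{\omega(i)}\neq\pi_i$ for some $i$: then $\omega\notin\Omega_{\pi,P}$, so the discrepancy is of order $\log n$, yet essentially no pair changes its distribution and $\ell_0(G,x)-\ell_0(G,x^\star)$ is (nearly) zero, so any mean-gap or Chernoff bound is vacuous; a similar failure occurs when two true communities are merged into one label. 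These are precisely the cases the paper's proof of Proposition \ref{prop:high-discrepancy} is built to handle, via the confusion counts $n_{ij}$: Case 1 uses pigeonhole to exhibit two blocks of size $\Theta(\log n)\times\Theta(\log n)$ indexed by distributions that genuinely differ; Case 2 shows the confusion map is a permutation $\omega$, uses the $X^\star_0(\varepsilon)$ size constraints (with $\varepsilon \le \tfrac13\min_{\pi_i\neq\pi_j}|\pi_i-\pi_j|$) to force $\pi_{\omega(i)}=\pi_i$, and only then concludes from non-permissibility that some $(a,b)$ has $p_{ab}\neq p_{\omega(a),\omega(b)}$ with both corresponding blocks of size $\Theta(\log n)$. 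Without this combinatorial step (or an equivalent), your high-discrepancy bound does not go through, so that part of the proposal needs to be filled in along the paper's lines.
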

We establish Lemma \ref{thm:initial_block_restricted_MLE} by adapting the proof ideas from \cite[Section III.B]{Dhara2022}. Our approach is to show that with high probability, all labelings in $X_0^\star(\varepsilon)$ that do not achieve exact recovery satisfy $\ell_0(G,x) < \ell_0(G,x^\star)$. We divide labelings in $X_0^\star(\varepsilon)$ into two groups, according to their \emph{discrepancy} relative to the ground-truth labeling $x^{\star}$. 
\begin{definition}[Discrepancy] The \emph{Hamming distance} between two labelings $x, x^\prime \colon V_{i_0}\to Z$ is defined as
\$
d_H(x, x^\prime) = \sum_{u\in V_{i_0}} \mathds{1}\{x(u) \neq x^\prime(u)\}.
\$
The \emph{discrepancy} between two labelings $x, x^\prime \colon V_{i_0}\to Z$ is defined as 
\$
\text{DISC}(x, x^\prime) = \min_{\omega\in\Omega_{\pi, P}}\big\{d_H(\omega\circ x, x^\prime)\big\}.
\$
\end{definition}
We divide all labelings in the set $X^\star_0(\varepsilon)$ into low-discrepancy and high-discrepancy labelings. Our goal is to show, in either case, that their log-likelihood is strictly less than that of the ground truth.
\begin{proposition}[Low discrepancy]\label{prop:low-discrepancy}
Suppose that Assumption \ref{ass:bounded-ratio} holds. Then, with $\varepsilon$ defined in \eqref{eq:epsilon}, there exists $c\in(0, 1)$ satisfying the condition in \eqref{eq:epsilon} such that  with high
probability
\$
\forall x\colon V_{i_0}\to Z \text{ such that } 0<\textup{DISC}(x, x^\star) <c\log n, \text{ it holds that } \ell_0(G, x) < \ell_0(G, x^\star).
\$
\end{proposition}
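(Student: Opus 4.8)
The plan is to establish the statement uniformly over all low-discrepancy labelings and to reduce the union bound to one over only $O(\log n)$ events, each indexed by a single vertex $u\in V_{i_0}$ and a pair of communities. Since permissible relabelings leave $\ell_0(G,\cdot)$ unchanged (because $P_{ij}=P_{\omega(i),\omega(j)}$), replacing a given $x$ by $\omega_x\circ x$ with $\omega_x$ attaining the minimum in $\mathrm{DISC}(x,x^\star)$, it suffices to show that with probability $1-o(1)$, every labeling $x'\colon V_{i_0}\to Z$ with $0<d_H(x',x^\star)<c\log n$ satisfies $\ell_0(G,x')<\ell_0(G,x^\star)$. Throughout I condition on $\{x^\star\in X_0^\star(\varepsilon)\}$, which has probability $1-o(1)$ by Lemma~\ref{eq:x^star-set}; on this event, for each $r\in Z$ there are more than $(\pi_r-\varepsilon)\varepsilon_0\log n$ vertices of community $r$ in $V_{i_0}$.

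Fix such an $x'$, let $T=\{u\in V_{i_0}:x'(u)\neq x^\star(u)\}$ and $m=|T|<c\log n$. Partitioning the ordered pairs in $\ell_0$ by their intersection with $T$, using $x'=x^\star$ off $T$ together with the symmetries $y_{uv}=y_{vu}$, $p_{ab}=p_{ba}$, yields the exact identity
\[
\ell_0(G,x^\star)-\ell_0(G,x')\;=\;2\sum_{u\in T}\widetilde S_u+W_T,
\qquad
\widetilde S_u\;=\;\sum_{v\in V_{i_0}\setminus T}\log\frac{p_{x^\star(u),x^\star(v)}(y_{uv})}{p_{x'(u),x^\star(v)}(y_{uv})},
\]
where $W_T$ collects the $m(m-1)$ within-$T$ terms, each of absolute value at most $\eta$ by Assumption~\ref{ass:bounded-ratio}, so $|W_T|\le m(m-1)\eta$. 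The same assumption gives $\widetilde S_u\ge S_u^{(x^\star(u),x'(u))}-(m-1)\eta$, where $S_u^{(i,j)}:=\sum_{v\in V_{i_0}\setminus\{u\}}\log\bigl(p_{i,x^\star(v)}(y_{uv})/p_{j,x^\star(v)}(y_{uv})\bigr)$ no longer depends on $x'$. Hence $\ell_0(G,x')\ge\ell_0(G,x^\star)$ forces $\sum_{u\in T}\widetilde S_u\le\frac12 m(m-1)\eta$, so some $u^\star\in T$ has $\widetilde S_{u^\star}\le\frac12(m-1)\eta$ and therefore $S_{u^\star}^{(x^\star(u^\star),x'(u^\star))}<2c\eta\log n$ (using $m<c\log n$). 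Thus the bad event is contained in $\bigl\{\exists\, u\in V_{i_0},\ i\neq j\in Z:\ S_u^{(i,j)}<2c\eta\log n\bigr\}$, a union of $|V_{i_0}|\,k(k-1)=O(\log n)$ events.

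It remains to control $\mathbb{P}\bigl(S_u^{(i,j)}<2c\eta\log n\bigr)$ for fixed $u$ and $i\neq j$. Conditioned on $x^\star$, this is a sum of independent terms with $y_{uv}\sim P_{i,x^\star(v)}$, so a Chernoff bound with parameter $s=1-t_{ij}\in(0,1)$ gives $\mathbb{P}(S_u^{(i,j)}<2c\eta\log n)\le e^{2c\eta\log n}\prod_{v\neq u}\phi_{s}\bigl(p_{j,x^\star(v)},p_{i,x^\star(v)}\bigr)$; every factor is at most $1$ by H\"older's inequality, so I discard all but the vertices $v$ with $x^\star(v)=r_{ij}$, of which there are at least $(\pi_{r_{ij}}-\varepsilon)\varepsilon_0\log n-1\ge\frac23\pi_{\min}\varepsilon_0\log n-1$ (using $\varepsilon\le\pi_{\min}/3$), and for those the factor equals $\phi_{t_{ij}}(p_{i,r_{ij}},p_{j,r_{ij}})\le\rho$ by \eqref{eq:r_ij} and \eqref{eq:rho}. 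This yields $\mathbb{P}(S_u^{(i,j)}<2c\eta\log n)\le\rho^{-1}n^{2c\eta-\frac23\pi_{\min}\varepsilon_0\log(1/\rho)}$; taking $c\in(0,1)$ small enough that $2c\eta<\frac23\pi_{\min}\varepsilon_0\log(1/\rho)$ (consistent with \eqref{eq:epsilon}) makes the exponent a negative constant, and a union bound over the $O(\log n)$ events gives $o(1)$, which together with $\mathbb{P}(x^\star\in X_0^\star(\varepsilon))=1-o(1)$ completes the proof. The one delicate point is the calibration of constants: the additive $2c\eta$ on the exponential side of the Chernoff bound must be beaten by the decay rate $\frac23\pi_{\min}\varepsilon_0\log(1/\rho)$ extracted from the $r_{ij}$-subsum, which is exactly the origin of the bound on $c$ in \eqref{eq:epsilon}; the within-$T$ and truncation errors are only $O(m^2\eta)$ and are harmless once the discrepancy radius $c$ is small.
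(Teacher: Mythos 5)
Your proof is correct and follows essentially the same route as the paper's: a per-vertex Chernoff bound at exponent $t_{ij}$ keeping only the $r_{ij}$-factors (so the rate $\pi_{\min}\varepsilon_0\log(1/\rho)$ appears, using $x^\star\in X_0^\star(\varepsilon)$), a union bound over the $O(\log n)$ vertex–label events, and a deterministic $O(\eta m^2)$ bound on the within-error-set terms via Assumption \ref{ass:bounded-ratio}, with $c$ calibrated exactly as in \eqref{eq:epsilon}. The only difference is presentational — you argue by contraposition with a slightly positive threshold $2c\eta\log n$, while the paper proves the uniform bound $A(x)\le -d_H c_1\log n$ directly — but the decomposition and estimates coincide.
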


\begin{proposition}[High discrepancy]\label{prop:high-discrepancy}
Fix any $c\in(0, 1)$. Then, with $\varepsilon$ defined in \eqref{eq:epsilon},  with high probability
\$
\forall x\in X^\star_0(\varepsilon) \text{ such that } \textup{DISC}(x, x^\star) \ge c\log n, \quad \ell_0(G, x) < \ell_0(G, x^\star).
\$
\end{proposition}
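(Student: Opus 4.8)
The plan is to fix an arbitrary labeling $x\in X_0^\star(\varepsilon)$ with $\textup{DISC}(x,x^\star)\ge c\log n$, bound $\pr\bigl(\ell_0(G,x)\ge\ell_0(G,x^\star)\bigr)$ by an exponentially small quantity via a moment-generating-function (Chernoff) estimate, and then union bound over all $\le k^{|V_{i_0}|}=n^{\varepsilon_0\log k}\le n^{1/2}$ labelings (the last bound being exactly why $\varepsilon_0\le 1/(2\log k)$ is imposed). I would condition throughout on $\{x^\star\in X_0^\star(\varepsilon)\}$, which holds with probability $1-o(1)$ by Lemma~\ref{eq:x^star-set}. Since $V_{i_0}$ sits inside a single block of diameter $\sqrt d\,\chi^{1/d}(\log n)^{1/d}<(\log n)^{1/d}$ by~\eqref{eq:chi-fomula}, all pairs inside $V_{i_0}$ are mutually visible, and conditionally on $x^\star$ the observations $\{y_{uv}\}_{\{u,v\}\subseteq V_{i_0}}$ are independent with $y_{uv}\sim P_{x^\star(u),x^\star(v)}$. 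Writing $W_{uv}=\log\bigl(p_{x(u),x(v)}(y_{uv})/p_{x^\star(u),x^\star(v)}(y_{uv})\bigr)$ so that $\ell_0(G,x)-\ell_0(G,x^\star)=2\sum_{\{u,v\}}W_{uv}$, a Chernoff bound at $t=1/2$ gives $\pr(\ell_0(G,x)\ge\ell_0(G,x^\star))\le\prod_{\{u,v\}}\E[e^{W_{uv}/2}]=\prod_{\{u,v\}}\phi_{1/2}(p_{x(u),x(v)},p_{x^\star(u),x^\star(v)})\le\Phi_{1/2}^{\,N(x)}$, where $\phi_{1/2}$ is as in~\eqref{eq:define-phi-t}, $\Phi_{1/2}<1$ as in~\eqref{eqn:Phi_t}, and $N(x):=\#\{\{u,v\}\subseteq V_{i_0}:P_{x(u),x(v)}\neq P_{x^\star(u),x^\star(v)}\}$ (pairs whose law is unchanged contribute a unit factor). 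This step uses only independence and the strict bound $\Phi_{1/2}<1$; in contrast to the low-discrepancy companion, Assumption~\ref{ass:bounded-ratio} should not be needed here.

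\textbf{The key combinatorial estimate.} The main work is a purely combinatorial lemma: under Assumption~\ref{ass:distinguishable} and with $\varepsilon$ small as in~\eqref{eq:epsilon}, there is a constant $\kappa=\kappa(c,\pi,P)>0$ with $N(x)\ge\kappa(\log n)^2$ whenever $\textup{DISC}(x,x^\star)\ge c\log n$. I would argue through the confusion matrix $n_{ij}=|\{u\in V_{i_0}:x^\star(u)=i,\ x(u)=j\}|$ and its normalization $q=(n_{ij}/|V_{i_0}|)$, which — because $x,x^\star\in X_0^\star(\varepsilon)$ — lies in the compact set $\mathcal Q_\varepsilon$ of nonnegative matrices of total mass $1$ whose row and column sums are within $\varepsilon$ of $(\pi_i)_{i\in Z}$. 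A count of ordered pairs gives $N(x)\ge\frac12|V_{i_0}|^2 D(q)-O(\log n)$, where $D(q)=1-\sum_{i,j,i',j'}q_{ij}q_{i'j'}\mathds{1}\{P_{jj'}=P_{ii'}\}$ is continuous in $q$. The plan is to show $D$ vanishes on $\mathcal Q_\varepsilon$ only at matrices supported on the graph of a permissible relabeling $\sigma$: using Assumption~\ref{ass:distinguishable} (equivalently $P_{ar}\neq P_{br}$ for $a\neq b$, by symmetry of $P$) one forces each row, and then each column, of $\mathrm{supp}(q)$ to be a singleton, hence $\mathrm{supp}(q)$ is the graph of a permutation; the gaps among distinct $\pi_i$ (which exceed $\varepsilon$ by~\eqref{eq:epsilon}) then force that permutation to be permissible, and at such a $q$ one has $x=\sigma\circ x^\star$, so $\textup{DISC}(x,x^\star)=0$. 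By continuity on the compact set $\mathcal Q_\varepsilon$, for every $\tau>0$ there is $\beta(\tau)>0$ such that $D(q)<\beta(\tau)$ implies $q$ is $\ell_1$-close (within $\tau$) to the zero set, which, by the computation $\sum_i q_{i\sigma(i)}>1-\tau$, implies $\textup{DISC}(x,x^\star)\le d_H(\sigma^{-1}\circ x,x^\star)<\tau|V_{i_0}|=\tau\varepsilon_0\log n$. Choosing $\tau<c/\varepsilon_0$ makes this impossible, so $D(q)\ge\beta(\tau)$ and $N(x)\ge\frac12\beta(\tau)|V_{i_0}|^2-O(\log n)=\Omega((\log n)^2)$.

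\textbf{Conclusion.} Combining the two steps, for every such $x$ we obtain $\pr(\ell_0(G,x)\ge\ell_0(G,x^\star))\le\Phi_{1/2}^{\,\kappa(\log n)^2}=n^{-\kappa\log(1/\Phi_{1/2})\log n}$, uniformly. A union bound over the $\le n^{1/2}$ labelings then gives $\pr\bigl(\exists\,x\in X_0^\star(\varepsilon):\ \textup{DISC}(x,x^\star)\ge c\log n\ \text{and}\ \ell_0(G,x)\ge\ell_0(G,x^\star)\bigr)\le n^{1/2}\cdot n^{-\omega(\log n)}=o(1)$, and removing the conditioning on $\{x^\star\in X_0^\star(\varepsilon)\}$ (whose complement has probability $o(1)$) finishes the proof.

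\textbf{Main obstacle.} The moment-generating-function bound and the union bound are routine; the hard part is the combinatorial lemma — that a discrepancy of order $\log n$, which may be a constant fraction of $|V_{i_0}|$ and may persist under every permissible relabeling, still forces order $(\log n)^2$ pairs with an altered observation law. Within it the two delicate points are (i) identifying the zero set of the "agreement" polynomial $D$ with the permissible permutation matrices, which is exactly where distinctness (Assumption~\ref{ass:distinguishable}) is indispensable — without it, a labeling that swaps two communities of different prior probability but identical pairwise laws has large discrepancy yet the same likelihood, so the statement is false — and (ii) absorbing the $\varepsilon$-slack in the marginals of $q$ so that a near-zero value of $D(q)$ still forces $q$ near a genuine permissible permutation matrix rather than merely near a convex combination of several. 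This is precisely the step where the analysis departs from the low-discrepancy case, which instead exploits Assumption~\ref{ass:bounded-ratio} and a sharper first-order expansion of the log-likelihood.
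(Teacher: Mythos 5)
Your proposal is correct, and the Chernoff/union-bound skeleton is the same as the paper's; the genuinely different part is the combinatorial core. The paper bounds $\pr(\ell_0(G,x)\ge\ell_0(G,x^\star))$ by an explicit two-case analysis of the confusion counts $n_{ij}$: either some row has two entries of size $\Omega(\log n)$ (Case 1), in which case distinctness supplies a community $z$ with $p_{bz}\neq p_{b'z}$ and the pigeonhole principle produces $\Omega(\log^2 n)$ altered pairs; or the large entries form a permutation $\omega$ (Case 2), which is shown to be non-permissible (using the $\varepsilon$-gaps in \eqref{eq:epsilon} to force $\pi_{\omega(i)}=\pi_i$), so some pair $(a,b)$ has $p_{ab}\neq p_{\omega(a),\omega(b)}$ and again contributes $\Omega(\log^2 n)$ altered pairs with explicit constants. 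You instead normalize the confusion matrix to $q$, introduce the agreement functional $D(q)$, identify its zero set on the marginal-constrained polytope with matrices supported on graphs of permissible relabelings (your row/column-singleton argument is exactly where Assumption \ref{ass:distinguishable} enters, just as the paper's Case 1 needs distinct rows of $P$; the Proposition's statement silently relies on this), and then invoke compactness/continuity to get a uniform lower bound $\beta(\tau)$ on $D$ away from labelings of small discrepancy. This buys a cleaner, more conceptual proof at the cost of a non-explicit constant in the exponent, which is immaterial for a with-high-probability statement; your treatment is also slightly more careful than the paper's in two minor respects, namely working with unordered pairs at $t=1/2$ (so the symmetric observations $y_{uv}=y_{vu}$ are not treated as independent in the MGF product) and flagging explicitly that the conditioning on $\{x^\star\in X_0^\star(\varepsilon)\}$ should be implemented by a bound that is uniform over label realizations in that event, since the MGF estimate is conditional on the labels. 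The remaining details you sketch (marginal positivity forcing each row and column of $\mathrm{supp}(q)$ to be hit, $|\pi_i-\pi_{\sigma(i)}|\le 2\varepsilon$ forcing equality, $\sum_i q_{i\sigma(i)}>1-\tau$ giving $\text{DISC}(x,x^\star)<\tau\varepsilon_0\log n$, and the union bound over at most $n^{1/2}$ labelings against $\Phi_{1/2}^{\Omega(\log^2 n)}$) all check out.
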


\begin{proof}[Proof of Lemma \ref{thm:initial_block_restricted_MLE}]
Fix $c$ and $\varepsilon$ from \eqref{eq:epsilon}. We consider a labeling from the set of labelings that maximize the MLE restricted to the set $X_0^\star(\varepsilon)$, 
\$
\overline x_0 \in \argmax_{x\in X_0^\star(\varepsilon)} \ell_0(G, x).
\$
Propositions \ref{prop:low-discrepancy} and \ref{prop:high-discrepancy} yield that
\$
\pr\Big(\bigcap_{\substack{x\in X_0^\star(\epsilon)\\ \text{DISC}(x, x^\star)\neq 0}} \ell_0(G, x) < \ell_0(G, x^\star)\Big) = 1 - o(1).
\$
Moreover, Lemma \ref{eq:x^star-set} gives $\pr(x^\star \in X_0^\star(\epsilon)) = 1 -o(1)$, which implies that $\overline{x}_0 \in \{\omega \circ x^{\star} : \omega \in \Omega_{\pi, P}\}$ with high probability.
\end{proof}

It remains to prove the low-discrepancy and high-discrepancy results in Propositions \ref{prop:low-discrepancy} and \ref{prop:high-discrepancy}.

\begin{proof}[Proof of Proposition \ref{prop:low-discrepancy}] Fix $x\colon V_{i_0}\to Z$ such that $\text{DISC}(x, x^\star)>0$. %\julia{Fix $x$ such that [...]; consider the notation $\mathcal{L}(V_{i_0})$ to denote the set of labelings}
Let $\omega\in\Omega_{\pi, P}$ be the permissible relabeling such that $d_{H}(\omega\circ x, x^\star)= \text{DISC}(x, x^\star)$. Since $\ell_0(G, x) = \ell_0(G, \omega\circ x)$ for any $\omega\in\Omega_{\pi, P}$, we can without loss of generality assume that $\omega$ is the identity and $d_{H}(x, x^\star)=\text{DISC}(x, x^\star)$. Recall $\ell_0(G,x)$ defined in \eqref{eq:def-ell-0}. We decompose the targeted difference as follows, 
\#\label{eq:diff-phi}
& \ell_0(G, x) - \ell_0(G, x^\star)  = \sum_{u\in V_{i_0}}\sum_{\substack{v\in V_{i_0} \\ v\neq u }} \log \Big(\frac{p_{x(u), x(v)}(y_{uv})}{p_{x^\star(u), x^\star(v)}(y_{uv})}\Big) \\
& \quad = \sum_{u\in V_{i_0}}\sum_{\substack{v\in V_{i_0} \\ v\neq u }} \log \Big(\frac{p_{x(u), x^\star(v)}(y_{uv})}{p_{x^\star(u), x^\star(v)}(y_{uv})}\Big) + \sum_{u\in V_{i_0}}\sum_{\substack{v\in V_{i_0} \\ v\neq u }} \log \Big(\frac{p_{x(u), x(v)}(y_{uv})}{p_{x(u), x^\star(v)}(y_{uv})}\Big) \notag\\
& \quad = \sum_{\substack{u\in V_{i_0}\\ u\colon x(u)\neq x^\star(u)}}\sum_{\substack{v\in V_{i_0} \\ v\neq u }} \log \Big(\frac{p_{x(u), x^\star(v)}(y_{uv})}{p_{x^\star(u), x^\star(v)}(y_{uv})}\Big) + \sum_{\substack{v\in V_{i_0}\\ v\colon x(v)\neq x^\star(v)}}\sum_{\substack{u\in V_{i_0} \\ u\neq v }} \log \Big(\frac{p_{x(u), x(v)}(y_{uv})}{p_{x(u), x^\star(v)}(y_{uv})}\Big). \notag
\#
We define the two terms above as $A(x)$ and $B(x)$ and bound them in the sequel. We start from term $A(x)$ by considering the contribution of a fixed $u\in V_{i_0}$, where $x^\star(u)=a$, $x(u)=b$, and $a\neq b$. Then, with $\rho$ defined in \eqref{eq:rho}, we fix $0<c_1 \le {\varepsilon_0\pi_{\min}\log (1/ \rho)}/{3}$. The Chernoff bound gives that, for  $t_{ab}\in(0,1)$ defined in \eqref{eq:t_ij},
\$
& \pr_{x^\star}\bigg(\sum_{\substack{v\in V_{i_0} \\ v\neq u }} \log \Big(\frac{p_{b, x^\star(v)}(y_{uv})}{p_{a, x^\star(v)}(y_{uv})}\Big) \ge -c_1\log n\bigg) \\
&\qquad=\pr_{x^\star}\bigg(\exp\bigg(t_{ab}\sum_{\substack{v\in V_{i_0} \\ v\neq u }} \log \Big(\frac{p_{b, x^\star(v)}(y_{uv})}{p_{a, x^\star(v)}(y_{uv})}\Big)\bigg) \ge \exp(-t_{ab} c_1\log n)\bigg) \\
&\qquad\le\E\bigg[\exp\bigg(t_{ab}\sum_{\substack{v\in V_{i_0} \\ v\neq u }} \log \Big(\frac{p_{b, x^\star(v)}(y_{uv})}{p_{a, x^\star(v)}(y_{uv})}\Big)\bigg)\bigg]\cdot \exp(t_{ab} c_1\log n) \\
&\qquad =n^{t_{ab}c_1}\cdot\prod_{\substack{v\in V_{i_0} \\ v\neq u }} \phi_{t_{ab}}(p_{b, x^\star(v)}, p_{a, x^\star(v)}) \\
&\qquad \le n^{c_1}\cdot \phi_{t_{ab}}(p_{b, r_{ab}}, p_{a, r_{ab}})^{(\pi_{r_{ab}}-\varepsilon)\varepsilon_0\log n} \\
&\qquad \le n^{c_1}\cdot \rho^{(\pi_{\min}-\varepsilon)\varepsilon_0\log n} \\
&\qquad = n^{(\pi_{\min}-\varepsilon)\varepsilon_0\log\rho + c_1} \le n^{\varepsilon_0\pi_{\min}\log \rho /3}, 
\$
where $\phi_t$, $r_{ab}$, and $\rho$ are defined in \eqref{eq:define-phi-t}, \eqref{eq:r_ij}, and \eqref{eq:rho}, the second inequality follows from $t_{ab} \le 1$ and $x^\star\in X^\star_0(\varepsilon)$, and the last inequality holds since $\varepsilon\le \pi_{\min}/3$ and $c_1 \le -{\varepsilon_0\pi_{\min}\log \rho}/{3}$. 
We remark that $\log \rho<0$. The above holds for any $u\in V_{i_0}$ and any $b\neq x^\star(u)\in Z$ and is independent of $x$. As a result, the following observation ensures that term $A(x)$ in \eqref{eq:diff-phi} satisfies 
\$
& \pr_{x^\star}\Big(\bigcap_{x\colon V_{i_0}\to Z} \big\{ A(x) \le - d_H(x, x^\star)\cdot c_1\log n \big\}\Big) \\
&\qquad  \ge \pr_{x^\star}\bigg(\bigcap_{u\in V_{i_0}} \bigcap_{b\neq x^\star(u)\in Z} \Big\{\sum_{\substack{v\in V_{i_0} \\ v\neq u }} \log \Big(\frac{p_{b, x^\star(v)}(y_{uv})}{p_{x^\star(u), x^\star(v)}(y_{uv})}\Big) \le -c_1\log n \Big\}\bigg) \\
&\qquad  \ge 1 - k|V_{i_0}| \cdot n^{\varepsilon_0\pi_{\min}\log \rho /3} \ge 1 - k\varepsilon_0 \log n \cdot n^{\varepsilon_0\pi_{\min}\log \rho /3} \\
&\qquad  = 1 - o(1),
\$
where the second inequality is a union bound over all $u\in V_{i_0}$ and all $b \neq x^{\star}(u)\in Z$. Since the conditional event satisfies $\pr(x^\star \in X^\star_1(\varepsilon)) =  1 - o(1)$ from Lemma \ref{eq:x^star-set}, the law of total probability gives
\#\label{eq:diff-phi-I}
\pr\Big(\bigcap_{x\colon V_{i_0}\to Z} \big\{ A(x) \le - d_H(x, x^\star)\cdot c_1\log n \big\}\Big) = 1 - o(1).
\#
Next, we aim to bound the term $B(x)$ in \eqref{eq:diff-phi}. We decompose it as 
\$
B(x) %&= \sum_{\substack{v\in V_{i_0}\\ v\colon x(v)\neq x^\star(v)}}\sum_{\substack{u\in V_{i_0} \\ u\neq v }} \log \Big(\frac{p_{x(u), x(v)}(y_{uv})}{p_{x(u), x^\star(v)}(y_{uv})}\Big) \\
& = \sum_{\substack{v\in V_{i_0}\\ v\colon x(v)\neq x^\star(v)}}\sum_{\substack{u\neq v\in V_{i_0} \\ u\colon x(u) = x^\star(u) }} \log \Big(\frac{p_{x^\star(u), x(v)}(y_{uv})}{p_{x^\star(u), x^\star(v)}(y_{uv})}\Big) \\
& \qquad + \sum_{\substack{v\in V_{i_0}\\ v\colon x(v)\neq x^\star(v)}}\sum_{\substack{u\neq v\in V_{i_0} \\ u\colon x(u)\neq x^\star(u) }} \log \Big(\frac{p_{x(u), x(v)}(y_{uv})}{p_{x(u), x^\star(v)}(y_{uv})}\Big) \\
& = \sum_{\substack{v\in V_{i_0}\\ v\colon x(v)\neq x^\star(v)}}\sum_{u\neq v\in V_{i_0}} \log \Big(\frac{p_{x^\star(u), x(v)}(y_{uv})}{p_{x^\star(u), x^\star(v)}(y_{uv})}\Big) \\
&\qquad+ \sum_{\substack{v\in V_{i_0}\\ v\colon x(v)\neq x^\star(v)}}\sum_{\substack{u\neq v\in V_{i_0} \\ u\colon x(u)\neq x^\star(u) }} \bigg[\log \Big(\frac{p_{x^\star(u), x^\star(v)}(y_{uv})}{p_{x^\star(u), x(v)}(y_{uv})}\Big) + \log \Big( \frac{p_{x(u), x(v)}(y_{uv})}{p_{x(u), x^\star(v)}(y_{uv})}\Big) \bigg].
\$
Let the two terms above be $B_1(x)$ and $B_2(x)$. Since $B_1(x) = A(x)$, it follows from \eqref{eq:diff-phi-I} that
\#\label{eq:diff-phi-IIa}
\pr\Big(\bigcap_{x\colon V_{i_0}\to Z} \big\{ B_1(x) \le - d_H(x, x^\star)\cdot c_1\log n \big\}\Big) = 1 - o(1).
\#
For the term $B_2(x)$, Assumption \ref{ass:bounded-ratio} gives $\log(p_{ij}(y)/p_{ab}(y)) <\eta$ for any $i,j,a,b$ and $y$; thus, we have
\#\label{eq:diff-phi-IIb}
B_2(x) \le 2\eta d_H(x, x^\star)^2.
\#
Combining \eqref{eq:diff-phi}, \eqref{eq:diff-phi-I}, \eqref{eq:diff-phi-IIa}, and \eqref{eq:diff-phi-IIb}, we have, with high probability, for any $x\colon V_{i_0}\to Z$, 
\$
\ell_0(G, x) - \ell_0(G, x^\star) & = A(x)+ B_1(x) +  B_2(x) \\
&\le - 2d_H(x, x^\star)\cdot c_1\log n + 2\eta d_H(x, x^\star)^2.
\$
Therefore, with high probability, for any $x$ such that $0<\text{DISC}(x, x^\star) <c\log n$ with $c\le c_1/\eta$, 
the above is negative and thus $\ell_0(G, x) - \ell_0(G, x^\star) < 0$ with high probability.
\end{proof}

\begin{proof}[Proof of Proposition \ref{prop:high-discrepancy}]
Fix any $c\in(0,1)$ and consider any $x\in X_0^\star(\varepsilon)$ such that $\textsc{DISC}(x, x^\star) \ge c\log n$. Let
\$
N_{ij} =\{u\in V_{i_0}:x^\star(u)=i, x(u)=j\}, \quad
        n_{ij} = |N_{ij}|, \quad \text{for any } i, j\in Z.
\$
Recall $\ell_0(G,x)$ defined in \eqref{eq:def-ell-0}. The error probability is bounded via a Chernoff bound, for any $t\in(0,1)$,  
\begin{align}
    \mathbb{P}( \ell_0(G, x) - \ell_0(G, x^\star) > 0 ) &\leq \mathbb{E}\big[ e^{t(\ell_0(G, x) - \ell_0(G, x^\star))} \big] \nonumber \\
    &= \prod_{u\in V_{i_0}}\prod_{\substack{v\in V_{i_0} \\ v\neq u}}\phi_t(p_{x(u), x(v)}, p_{x^\star(u), x^\star(v)}) \nonumber \\
    &= \prod_{u\in V_{i_0}}\prod_{\substack{v\in V_{i_0}, v\neq u \\ p_{x(u), x(v)} \neq p_{x^\star(u), x^\star(v)}}}
    \phi_t(p_{x(u), x(v)}, p_{x^\star(u), x^\star(v)}) \nonumber \\
    &= \prod_{\substack{i,j,k,l \in Z \\ p_{ik}\neq p_{jl}}} \big( \phi_t(p_{jl}, p_{ik}) \big)^{n_{ij}n_{kl}}. \label{eq:mle_diff}
\end{align}
Since for any $t\in(0,1)$, $\phi_t(p, q) < 1$ for any two distributions $p \neq q$, the approach is to bound a subset of the terms to provide a uniform bound. We consider the following two disjoint cases: 

\vskip4pt
\textbf{Case 1:} There exist $a, b, b'$ with $b \neq b'$ such that $n_{ab}, n_{ab'} \geq {\varepsilon \varepsilon_0 \log n}/{k}$. Take $z \in Z$ such that $p_{bz}\neq p_{b'z}$, which is guaranteed to exist (or otherwise the two classes $b$ and $b'$ are indistinguishable). Let $y\in Z$ be such that $n_{yz}\geq {|u: x(u)=z|}/{k}$, which is guaranteed to exist by the pigeonhole principle. Since $x \in X^\star_1(\varepsilon)$ and $\varepsilon \le \pi_{\min}/3$, we have
\[ \frac{|u: x(u)=z|}{k} \geq \frac{(\pi_z-\varepsilon)\varepsilon_0 \log n}{k} \geq \frac{\varepsilon \varepsilon_0 \log n}{ k}. \]
Since $p_{bz} \neq p_{b'z}$, we must have either $p_{ay}\neq p_{bz}$ or $p_{ay}\neq p_{b'z}$. Substituting into \eqref{eq:mle_diff}, it follows that
%for some constant $0<C'<1$
\begin{align}
         \mathbb{P}( \ell_0(G, x) - \ell_0(G, x^\star) > 0 ) &\leq \big(\min\{ \phi_t(p_{ay}, p_{bz}), \phi_t(p_{ay}, p_{b'z}) \}\big)^{\frac{\varepsilon^2 \varepsilon_0^2 \log^2 n}{k^2}} \label{eq:case1_bound}
\end{align}

Recall the definition of $\phi_t(\cdot)$ in \eqref{eq:define-phi-t} and $\Phi_t$ in \eqref{eqn:Phi_t}. We obtain
\begin{align}
    \min\{ \phi_t(p_{ay}, p_{bz}), \phi_t(p_{ay}, p_{b'z}) \} &\leq \Phi_t <1.
    \label{eq:case1_min_bound}
\end{align}
Combining \eqref{eq:case1_bound} and (\ref{eq:case1_min_bound}) implies there is a constant $C' \in (0,1)$ such that $\mathbb{P}( \ell_0(G, x) - \ell_0(G, x^\star) > 0 ) \leq C'^{\log^2 n}$ for all $x \in X^\star_1(\varepsilon)$ in Case 1.

\vskip4pt
\textbf{Case 2:} For any community $i$, there exists at most one $j$ such that $n_{ij}\geq {\varepsilon \varepsilon_0 \log n}/{k}$. This condition implies with high probability, for all community $i$, there exists exactly one community $j$ such that $n_{ij}\geq {\varepsilon \varepsilon_0 \log n}/{k}$.  Otherwise, since $x^\star \in X^\star_1(\varepsilon)$ with high probability, we obtain
\[  (\pi_i-\varepsilon)\varepsilon_0 \log n \leq |\{u: x^\star(u)=i\}| = \sum_{j \in Z}n_{ij} < \sum_{j \in Z}\frac{\varepsilon \varepsilon_0 \log n}{k} = \varepsilon \varepsilon_0 \log n.\]
Rearranging terms shows $\pi_i < 2\varepsilon$, which is a contradiction since $\varepsilon \leq \pi_i / 3$ by assumption. 

Let $\omega: Z \to Z$ be a mapping such that $n_{i\omega(i)}\geq {\varepsilon \varepsilon_0 \log n}/{k}$ for all $i$. We claim $\omega$ is a permutation. By way of contradiction, suppose there is some $j$ for which there is no $i$ satisfying $\omega(i)=j$, implying $n_{ij} < {\varepsilon \varepsilon_0 \log n}/{k}$ for all $i$. Then, since $x \in X^\star_1(\varepsilon)$, we obtain
\[ (\pi_j-\varepsilon)\varepsilon_0 \log n \leq  |\{u: x(u)=j\}| =  \sum_{i\in Z}n_{ij} < \sum_{i\in Z}\frac{\varepsilon \varepsilon_0 \log n}{k} = \varepsilon \varepsilon_0 \log n, \]
which again yields a contradiction.

Next, we claim that $\omega$ is not permissible with high probability. By way of contradiction, suppose that $\omega$ is permissible, which implies that its inverse, $\omega^{-1}$, is also permissible. Then we have
\begin{align*}
    \text{DISC}(x, x^\star) \leq d_H(\omega^{-1} \circ x, x^\star) &= \sum_{i}\sum_{j\colon j\neq \omega(i)} n_{ij} \\
    &\leq \sum_{i} \Big( (k-1)\frac{\varepsilon \varepsilon_0 \log n}{k} \Big) \\
    &= (k-1) \varepsilon \varepsilon_0 \log n \\
    &\leq c\log n.
\end{align*}
where the last inequality is due to $\varepsilon \leq c/[(k-1)\varepsilon_0]$ as defined in \eqref{eq:epsilon}. This contradicts with the high discrepancy of $x$; hence, $\omega$ is not permissible. 

We proceed by showing there must exist $(a, b)$ for which $p_{ab}\neq p_{\omega(a), \omega(b)}$. If $\pi_i=\pi_j$ for all $i, j\in Z$, then this condition clearly follows because $\omega$ is not permissible. Therefore, consider the case where $\{\pi_i\}$ are not all equal. In this case, suppose there exists $i$ such that $\pi_{\omega(i)} < \pi_i$.  Since $\sum_{r\in Z} n_{ir} \geq (\pi_i-\varepsilon)\varepsilon_0 \log n$ with high probability, and $n_{ir} < {\varepsilon \varepsilon_0 \log n}/{k}$ for any $r\neq \omega(i)$, we have
\begin{align}
    n_{i\omega(i)} > (\pi_i - \varepsilon) \varepsilon_0 \log n-(k-1)\frac{\varepsilon \varepsilon_0 \log n}{k}  > (\pi_i - 2 \varepsilon) \varepsilon_0 \log n. \label{eq:n_omega_lower_bound}
\end{align}
Then, combining \eqref{eq:n_omega_lower_bound} and \eqref{eq:restricted_labels} yields
\[ (\pi_i - 2\varepsilon)\varepsilon_0\log n < \sum_{r}n_{r\omega(i)} \leq (\pi_{\omega(i)}+\varepsilon)\varepsilon_0 \log n. \]
Rearranging terms yields $\varepsilon >  \frac{1}{3}  (\pi_i - \pi_{\omega(i)})$, which is a contradiction because $\varepsilon$ was defined to satisfy $\varepsilon \leq \frac{1}{3} |\pi_i - \pi_{\omega(i)}|$. Hence, for all $i \in Z$, we must have $\pi_{\omega(i)}\geq \pi_i$, which implies $\pi_{\omega(i)}=\pi_i$ because $\{\pi_i\}$ sums to $1$. Since $\omega$ is not permissible, there must exist $(a, b)$ for which $p_{ab}\neq p_{\omega(a), \omega(b)}$. Then we obtain for some constant $0<C''<1$, 
\begin{align*}
         \mathbb{P}( \ell_0(G, x) - \ell_0(G, x^\star) > 0 ) &\leq \big( \phi_t(p_{ab}, p_{\omega(a),\omega(b)}) \big)^{n_{a,\omega(a)}n_{b,\omega(b)}} \\
         &\leq \big( \phi_t(p_{ab}, p_{\omega(a),\omega(b)})\big)^{\varepsilon_0^2(\pi_a-2\varepsilon)(\pi_b-2\varepsilon)\log^2 n}\\
         &\leq \Phi_t^{\varepsilon_0^2(\pi_a-2\varepsilon)(\pi_b-2\varepsilon)\log^2 n}\\
         &\leq C''^{\log^2 n}.
\end{align*}
Note that the above bound is uniform for all $x$ satisfying Case 2. 

Combining both cases implies there exists a constant $0<C<1$ such that 
\[ \mathbb{P}( \ell_0(G, x) - \ell_0(G, x^\star) > 0)  \leq C^{\log^2 n}\]
for all $x$ such that $x\in X^\star_0(\varepsilon)$ and $\textsc{DISC}(x, x^\star) > c\log n$. Taking a union bound over all possible labelings $x$ yields %\julia{$k^{\delta\log n}$ should be $k^{\epsilon_0\log n}$ I think?}
\begin{align*}
    \mathbb{P}\Big[ \bigcup_{x\in X_0^\star: \textsc{DISC}(x, x^\star) > c\log n} \big\{\ell_0(G, x) - \ell_0(G, x^\star) \geq 0\big\} \Big] &\leq k^{\epsilon_0\log n}C^{\log^2 n} = o(1). 
\end{align*} 
\end{proof}

\subsection{Propagating Labels Among Occupied Blocks}
\label{subsec:propagation}
We now demonstrate that the \texttt{Propagate} subroutine (Algorithm \ref{alg:propagation}) ensures that all occupied blocks are classified with at most $M$ mistakes, for a suitable constant $M$.
 
We introduce a vector $m=(m_1, \cdots, m_{(n/(\chi\log n))})\in \mathbb{Z}_{+}^{(n/(\chi\log n))}$ and define the event \[\cV(m)=\big\{|V_i| = m_i \text{ for } i\in[n/(\chi\log n)]\big\}.\]  %\ew{Is $(n/(\chi\log n)$ guaranteed to be an integer? Or can we always find an $\chi$, such that this is an integer?}
Each $\cV(m)$ corresponds to a specific $(\chi,\delta)$-visibility graph $H$. Thus, conditioning on $\cV(m)$ and $H$ being connected determines the occupied block set $V^\dagger$ and the propagation ordering over $V^\dagger$. To simplify the analysis, we fix the vector $m$ in what follows, and condition on the event $\{\mathcal{V}(m) \subset \mathcal{H}\}$, recalling that $\mathcal{H} = \{H \text{ is connected}\}$. We write $\mathbb{P}_m(\cdot) = \mathbb{P}( \cdot \given \cV(m))$ as a reminder. Note that conditioned on $\mathcal{V}(m)$, the labels of vertices are independent, and the edges are independent conditioned on the vertex labels.
  
Let $\omega^\star\in \Omega_{\pi, P}$ be the permissible relabeling corresponding to $\widehat{x}_0$, that is, $\widehat{x}_0 = \omega^\star \circ x^\star$. We condition on the event that $\omega^\star$ exists, which occurs with high probability due to Theorem \ref{thm:initial_block}. Let vector $z = (z(j,+), z(j,-))_{j\in Z}\in \Z_{+}^{2k}$ be the \emph{configuration} of a block, where $z(j,+)$ is the number of vertices correctly labeled as $j$ and $z(j,-)$ is the number of vertices incorrectly labeled as $j$ by $\widehat{x}$. For $i\in V^\dagger$, the event $\cC_{i}(z)$ signifies that the occupied block $V_{i}$ possesses a configuration $z$ such that  
  \$
  &|\{u\in V_{i}, \widehat x(u) = j, \omega^\star \circ x^\star(u) =j\}| = z(j,+)\\
  &|\{u\in V_{i}, \widehat x(u) = j, \omega^\star \circ x^\star(u) \neq j\}| = z(j,-).
  \$
  Consider $i \in V^{\dagger} \setminus \{i_1\}$ and a configuration $z \in \mathbb{Z}_+^{2k}$. The key observation is that because the labels $\{\widehat x(u) \colon u \in V_i\}$ are determined using disjoint sets of edges, they are independent conditioned on $\cC_{p(i)}$.

  To ensure successful propagation, we assume that for any community $j$, any two distributions  $P(j,r)$ and $P(j,s)$ with $r\neq s$ are distinct, as stated in Assumption \ref{ass:distinguishable}. Fix any $t\in (0,1)$. For any $j$ and any distinct $r\neq s$, the distinctness assumption ensures that $\phi_t(p_{jr}, p_{js}) < \Phi_t < 1$, given $\phi_t$ and $\Phi_t$ defined in \eqref{eq:define-phi-t} and \eqref{eqn:Phi_t}. We define positive constants 
  \begin{align}
      c_2 = \delta\log (1/\Phi_t)/k,  \quad
      M = 5/(4c_2), \quad
      \eta_2 = k^2(e^\eta/\Phi_t)^M. \label{eqn:eta_2}
  \end{align}
  Let $\cA_{i}$ be the event that $\widehat x$ makes at most $M$ mistakes on $V_{i}$; that is: 
  \#\label{eq:define-A}
  \cA_{i} = \big\{ |\{u \in V_{i} : \widehat{x}(u) \neq \omega^\star \circ x^\star(u) \} | \leq M\big\}.
  \#
Let $\ell(u,x,S; j,r)$ be the log-likelihood used in Algorithm \ref{alg:propagation} at any $u\in S^\prime \subset V$, $S\subset V$, $x\colon S\to Z$, and $j, r\in Z$,
\#\label{eq:def-log-likelihood}
\ell(u,x,S; j,r) = \sum_{v\in S\setminus\{u\}, %v\sim u, 
x(v)=j} \log p_{jr}(y_{uv}).
\#
We define the largest community labeled by $\widehat x$ in $V_{p(i)}$ as
\begin{align}
    J_i = \argmax_{a\in Z} |\{v\in V_{p(i)}\colon \widehat x(v)=a\}|. \label{eqn:J_i}
\end{align}
Then Algorithm \ref{alg:propagation} labels a vertex $u\in V_i$ by 
\$
\widehat x(u) = \argmax_{r\in Z}\ell(u,\widehat x, V_{p(i)}; J_i,r).
\$
The following lemma bounds the probability of misclassifying a vertex $u \in V_i$ using Algorithm \ref{alg:propagation}. The bound is uniform over all possible configurations $z$ restricted to $V_{p(i)}$ that is consistent with the total occupancy $m_{p(i)}$ and the maximal number of mistakes $M$.
\begin{lemma}\label{lem:degree-profile-error}
Suppose that Assumptions \ref{ass:bounded-ratio} and \ref{ass:distinguishable} hold. Fix $i\in V^\dagger\setminus\{i_1\}$. Fix $z \in \mathbb{Z}_+^{2k}$ such that $\sum_{j\in Z}(z(j,+) + z(j,-)) = m_{p(i)}$ and $\sum_{j\in Z}z(j,-) \leq M$ (so that $\cC_{p(i)}(z) \subset \cV(m) \cap \cA_{p(i)} $). Then for any $u\in V_i$, we have \$
\pr_m\big(\widehat x(u) \neq \omega^\star \circ x^\star(u)\biggiven \cC_{p(i)}(z)\big) \le \eta_2n^{-c_2}.
\$
\end{lemma}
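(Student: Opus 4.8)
The plan is to replace the relabeled ground truth $\omega^\star\circ x^\star$ by a canonical target labeling, identify the batch of observations that the \texttt{Propagate} subroutine actually consults at $u$, and then apply a Chernoff bound followed by a union bound over the $k-1$ incorrect labels for $u$. Since $\omega^\star\in\Omega_{\pi,P}$ is permissible, $P_{\omega^\star(a),\omega^\star(b)}=P_{a,b}$ for all $a,b\in Z$, so with $\sigma^\star:=\omega^\star\circ x^\star$ every visible pair satisfies $y_{uv}\sim P_{x^\star(u),x^\star(v)}=P_{\sigma^\star(u),\sigma^\star(v)}$; I would work throughout with $\sigma^\star$ as the target. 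Because the community labels are i.i.d.\ and $\cC_{p(i)}(z)$ constrains only $V_{p(i)}$, one may condition additionally on $\{\sigma^\star(u)=i^\star\}$ and prove the bound uniformly over $i^\star\in Z$. Let $j=J_i$ be the community realizing \eqref{eqn:J_i} and $S_j=\{v\in V_{p(i)}:\widehat x(v)=j\}$. Since $B_i\sim B_{p(i)}$, every $v\in S_j$ is visible to $u$, hence $\ell(u,\widehat x,V_{p(i)};j,r)=\sum_{v\in S_j}\log p_{j,r}(y_{uv})$, and $\{\widehat x(u)\neq i^\star\}\subseteq\bigcup_{r\neq i^\star}\{\sum_{v\in S_j}\log(p_{j,r}(y_{uv})/p_{j,i^\star}(y_{uv}))\ge0\}$.

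The key structural observation is that the observations $\{y_{uv}:v\in S_j\}$ are \emph{fresh}. On $\cV(m)\cap\cH$, the labeling $\widehat x$ restricted to $V_{p(i)}$ is produced by \texttt{MAP} on the root subset followed by the \texttt{Propagate} steps along the root-to-$p(i)$ path of the spanning tree, and these only read observations on edges among blocks lying on that path; in particular they never touch an edge between $V_i$ and $V_{p(i)}$. Conditioned on $\cV(m)$ the labels are i.i.d.\ and the observations are conditionally independent given the labels, so conditioning further on any refinement of $\cC_{p(i)}(z)$ that also fixes the subset $S_j^{+}\subseteq S_j$ of vertices correctly labeled $j$ and the true labels of the remaining vertices of $S_j$, the family $\{y_{uv}:v\in S_j\}$ stays independent, with $y_{uv}\sim P_{i^\star,j}$ for $v\in S_j^{+}$ (there are $z(j,+)$ of these) and $y_{uv}\sim P_{i^\star,j'}$ for some $j'\neq j$ for the remaining $z(j,-)\le M$ vertices. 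Finally, since $J_i$ is the largest of the $k$ communities and $m_{p(i)}\ge\delta\log n$ (as $p(i)\in V^\dagger$), we get $z(j,+)=|S_j|-z(j,-)\ge m_{p(i)}/k-M\ge\delta\log n/k-M$; the only features of $z$ entering the estimate are this lower bound on $z(j,+)$ and the bound $z(j,-)\le M$, which is precisely what yields the uniformity in $z$.

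The estimate itself is then routine. Fix $r\neq i^\star$ and the $t\in(0,1)$ used to define $\Phi_t$, $c_2$, $M$, $\eta_2$. On $S_j\setminus S_j^{+}$, Assumption~\ref{ass:bounded-ratio} gives $\log(p_{j,r}(y_{uv})/p_{j,i^\star}(y_{uv}))<\eta$ surely, so the event $\sum_{v\in S_j}\log(p_{j,r}(y_{uv})/p_{j,i^\star}(y_{uv}))\ge0$ forces $\sum_{v\in S_j^{+}}\log(p_{j,r}(y_{uv})/p_{j,i^\star}(y_{uv}))\ge-M\eta$. Applying a Chernoff bound at $t$ to the latter, using that on $S_j^{+}$ the observations are i.i.d.\ $P_{i^\star,j}=P_{j,i^\star}$ (by symmetry $P_{ij}\stackrel{d}{=}P_{ji}$) and that $P_{j,r}\neq P_{j,i^\star}$ by distinctness (Assumption~\ref{ass:distinguishable}, as $r\neq i^\star$), we obtain
\[
\pr_m\Bigl(\textstyle\sum_{v\in S_j^{+}}\log(p_{j,r}(y_{uv})/p_{j,i^\star}(y_{uv}))\ge-M\eta\Bigr)\ \le\ e^{tM\eta}\,\phi_t(p_{j,r},p_{j,i^\star})^{z(j,+)}\ \le\ e^{M\eta}\,\Phi_t^{\,z(j,+)},
\]
and since $z(j,+)\ge\delta\log n/k-M$ and $\Phi_t<1$, the right-hand side is at most $e^{M\eta}\Phi_t^{-M}\Phi_t^{\delta\log n/k}=(e^\eta/\Phi_t)^M n^{-c_2}$ with $c_2=\delta\log(1/\Phi_t)/k$. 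A union bound over the $k-1$ choices of $r$, followed by averaging this (uniform) bound over the refinement of $\cC_{p(i)}(z)$ and over $i^\star\in Z$, yields $\pr_m(\widehat x(u)\neq\omega^\star\circ x^\star(u)\mid\cC_{p(i)}(z))\le k(e^\eta/\Phi_t)^M n^{-c_2}=\eta_2 n^{-c_2}$, as claimed.

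\textbf{Main obstacle.} The only genuinely delicate point is the independence claim in the second paragraph: one must argue carefully that $\cC_{p(i)}(z)$, and the refinement of it fixing $S_j^{+}$ and the missing true labels, is measurable with respect to a $\sigma$-algebra that — given all true community labels — is independent of the observations $\{y_{uv}:v\in S_j\}$ that \texttt{Propagate} consumes at $u$. This is exactly the edge-disjointness built into the tree-ordered schedule (each block $V_{i_j}$ is labeled from a disjoint batch of cross-block observations, so $\widehat x|_{V_{p(i)}}$ never depends on edges incident to $V_i$) combined with the conditional independence of observations given labels; once this is granted, the remainder is the Chernoff computation above, with $\Phi_t$ and the constants $c_2,M,\eta_2$ in the roles fixed by \eqref{eqn:Phi_t} and \eqref{eqn:eta_2}.
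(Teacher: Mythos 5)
Your proposal is correct and follows essentially the same route as the paper's proof: split the reference vertices labeled $J_i$ into the correctly and incorrectly classified ones, use Assumption \ref{ass:bounded-ratio} to absorb the at most $M$ bad terms into a factor $e^{\eta M}$, apply a Chernoff bound with $\phi_t \le \Phi_t$ on the $z(j,+)\ge \delta\log n/k - M$ good terms, and union bound over the $k-1$ wrong labels. The only (harmless) differences are that you pull the bad terms out of the probability deterministically rather than bounding their MGF factors by $e^{\eta}$ inside the Chernoff product, and that you spell out the edge-disjointness/conditional-independence bookkeeping slightly more explicitly than the paper does.
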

%\julia{The lemma statement should be interpreted since it's (unavoidably) notationally heavy. Something like ``Lemma \ref{lem:degree-profile-error} thus gives a bound on the probability of misclassifying a vertex $u \in V_i$, which is uniform over all possible configurations $z$ restricted to $V_{p(i)}$ which are consistent with the total occupancy $m_{p(i)}$ and the maximal number of mistakes $M$.}
    
\begin{proof} 
Fix $i\in V^\dagger\setminus\{i_1\}$. Recall that $J_i$ in \eqref{eqn:J_i} is the largest community labeled by $\widehat{x}$ in $V_{p(i)}$. We fix $j\in Z$ and first study the case when $J_i=j$. Then the pigeonhole principle yields $z(j,+) + z(j,-) \ge m_{p(i)}/k$. Moreover, the condition that $\sum_{j\in Z}z(j,-) \leq M$ ensures $z(j,-) \le M$ and thus $z(j, +) \ge m_{p(i)}/k - M \ge \delta\log n/k - M$. 
    
We fix any $u\in V_i$ and write $\mathbb{P}_{m, z, j, r}(\cdot) = \mathbb{P}( \cdot \given \cV(m), \cC_{p(i)}(z), J_i = j, \omega^\star \circ x^\star(u) = r)$. For any $t\in(0,1)$ and any label $s\neq r$, the definition of $\ell$ in \eqref{eq:def-log-likelihood} and the Chernoff bound give that
    \$
    & \pr_{m,z,j,r}\big(\ell(u, \widehat x, V_{p(i)}; j,s) \ge \ell(u, \widehat x, V_{p(i)}; j,r)\big) \\
    & \qquad = \pr_{m,z,j,r}\bigg(\sum_{\substack{v\in V_{p(i)}\\ v\colon \widehat x(v)=j}} \log \Big(\frac{p_{js} (y_{uv})}{p_{jr} (y_{uv})}\Big) \ge 0 \bigg) \\
    &\qquad \le \E_{m,z,j,r}\bigg[\exp\bigg(t \sum_{\substack{v\in V_{p(i)}\\ v\colon \widehat x(v)=j}} \log \Big(\frac{p_{js}(y_{uv})}{p_{j r}(y_{uv})}\Big)\bigg) \bigg] \\
    &\qquad = \prod_{\substack{v\in V_{p(i)}\\ v\colon \widehat x(v)=j}}  \E_{m,z,j,r}\bigg[\exp\bigg(t  \log \Big(\frac{p_{j s}(y_{uv})}{p_{j r}(y_{uv})}\Big)\bigg) \bigg] \\
    &\qquad = \prod_{\substack{v\in V_{p(i)}\\ \widehat x(v)=j \\  \omega^\star \circ x^\star(v) = j}}  \E_{m,z,j,r}\bigg[\exp\bigg(t  \log \Big(\frac{p_{j s}(y_{uv})}{p_{j r}(y_{uv})}\Big)\bigg)\bigg] \\
    &\qquad \qquad \cdot \prod_{\substack{v\in V_{p(i)}\\ \widehat x(v)=j \\  \omega^\star \circ x^\star(v) \neq j}}  \E_{m,z,j,r}\bigg[\exp\bigg(t  \log \Big(\frac{p_{j s}(y_{uv})}{p_{j r}(y_{uv})}\Big)\bigg) \bigg] \\
    & \qquad \le \prod_{\substack{v\in V_{p(i)}\\ \widehat x(v)=j \\  \omega^\star \circ x^\star(v) = j}} \phi_t(p_{jr}, p_{js}) \cdot \prod_{\substack{v\in V_{p(i)}\\ \widehat x(v)=j \\ \omega^\star \circ x^\star(v) \neq j}} \exp(\eta),
    \$
where the first product of the last inequality holds since  $\omega^\star$ is permissible, and thus conditioned on $\{\omega^\star \circ x^\star(u) = r\}$, we have $y_{uv}\sim P_{x^\star(u), x^\star(v)} = P_{r j} = P_{j r}$ for 
$v$ such that $\omega^\star \circ x^\star(v) = j$; and the second product follows from Assumption \ref{ass:bounded-ratio} that $\log(p_{ij}(y)/p_{ab}(y))<\eta$ for any $i,j,a,b$ and any $y$. Recall that the distinctness in Assumption \ref{ass:distinguishable} gives $P_{jr} \neq P_{js}$ and thus $\phi_t(p_{j r}, p_{j s}) < \Phi_t < 1$ for any $s\neq r$. Thus, the above gives
\$
& \pr_{m,z,j,r}\big(\ell(u, \widehat x, V_{p(i)}; j,s) \ge \ell(u, \widehat x, V_{p(i)}; j,r) \big) \\
&\qquad \le \prod_{\substack{v\in V_{p(i)}\\ \widehat x(v)=j \\  \omega^\star \circ x^\star(v) = j}} \phi_t(p_{jr}, p_{js}) \cdot \prod_{\substack{v\in V_{p(i)}\\ \widehat x(v)=j \\ \omega^\star \circ x^\star(v) \neq j}} \exp(\eta) \\
&\qquad \le \Phi_t^{z(j,+)} \cdot e^{\eta z(j,-)} \\
&\qquad \le \Phi_t^{\delta\log n/k - M} \cdot e^{\eta M} \\
&\qquad = (e^\eta/\Phi_t)^M \cdot n^{-\delta\log (1/\Phi_t)/k} = (\eta_2/k^2) \cdot n^{-c_2},
\$
where the third inequality follows from $z(j,-) \le M$ and $z(j, +) \ge m_{p(i)}/k - M \ge \delta\log n/k - M$.
Thus, conditioned on $\{\omega^{\star} \circ x^{\star}(u) = r\}$, a union bound over all $s\neq r$ gives that
\$
\pr_{m,z, j, r}\big(\widehat x(u) \neq \omega^\star \circ x^\star(u)\big) & = \pr_{m,z,j, r}\Big(\bigcup_{s\neq r \in Z} \big\{\ell(u, \widehat x, V_{p(i)}; j, s) \ge \ell(u, \widehat x, V_{p(i)}; j, r) \big\}\Big) \\
& \le \eta_2 n^{-c_2}.
\$
Since the above bound is uniform for all $j\in Z$ and $r\in Z$, the law of total probability completes the proof. 
\end{proof}

The following lemma concludes that Phase I makes few mistakes on occupied blocks during the propagation. Let $\eta_3 = (e\eta_2\Delta/M)^M$.
  
\begin{lemma}\label{prop:phase1-occupied} Let $G \sim \text{GHCM}(\lambda, n,\pi, P,d)$ and $\widehat{x} : V \to Z$ be the output of Phase I in Algorithm \ref{alg:almost-exact} on input $G$. 
  Suppose $m$ is such that $\cV(m)\subset\cI\cap\cH$. Lines \ref{line:step3}-\ref{line:step3-end} of Algorithm \ref{alg:almost-exact} ensure that 
   \$\pr_m\Big(\bigcap_{i\in V^\dagger}\cA_{i}\Big) \geq \big(1- o(1) \big) \Big(1 - \frac{\eta_3 n^{-\frac{1}{8}}}{\chi \log n} \Big).\$
  \end{lemma}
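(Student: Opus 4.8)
The plan is to bound the failure probability $\pr_m\big(\bigcup_{i \in V^\dagger} \cA_i^c\big)$ by a union bound over occupied blocks, where the per-block bound is obtained by combining the single-vertex estimate of Lemma~\ref{lem:degree-profile-error} with a stochastic-domination argument. First I would condition on $\cV(m) \subset \cI \cap \cH$, so that the block occupancies are fixed with $\delta \log n < m_i < \Delta \log n$ for all $i \in V^\dagger$, the visibility graph $H$ is connected, and the breadth-first propagation order $i_1, i_2, \dots$ on $V^\dagger$ is well-defined. The root block $V_{i_1}$ is handled separately: Theorem~\ref{thm:initial_block} gives that $\widehat x_0$ achieves exact recovery on $V_{i_0}$ (so zero mistakes on $V_{i_0}$), and then the \texttt{Propagate} step from $V_{i_0}$ to $V_{i_1}'$, followed by the bound below, gives at most $M$ mistakes on $V_{i_1}$; this contributes the outer $(1 - o(1))$ factor.

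Next, for a non-root occupied block $i = i_j$ with parent $p(i)$, I would show $\pr_m(\cA_i^c \mid \cA_{p(i)}) \le \eta_3 n^{-5/4}$, i.e. the per-block failure probability has exponent $5/4$. The key structural fact, already noted in the text, is that conditioned on $\cC_{p(i)}(z)$, the labels $\{\widehat x(u) : u \in V_i\}$ are determined by \emph{disjoint} sets of edges (the edges from $u$ to $V_{p(i)}$, for distinct $u$), hence are mutually independent. Therefore, conditioning on any configuration $z$ of $V_{p(i)}$ with $\sum_j z(j,-) \le M$, the number of mistakes on $V_i$ is stochastically dominated by a $\text{Bin}(|V_i|, \eta_2 n^{-c_2})$ random variable, using the uniform per-vertex bound $\eta_2 n^{-c_2}$ from Lemma~\ref{lem:degree-profile-error}. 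Summing over the (at most polynomially many, in fact $\le k^{2\Delta \log n + O(1)}$—but I only need a trivial count) configurations $z$ consistent with $m_{p(i)}$ and with at most $M$ mistakes, and bounding $\pr(\text{Bin}(m_i, \eta_2 n^{-c_2}) > M)$ by $\binom{m_i}{M}(\eta_2 n^{-c_2})^{M} \le (e m_i \eta_2 n^{-c_2}/M)^M \le (e\Delta \eta_2/M)^M n^{-Mc_2} = \eta_3 n^{-Mc_2}$, and recalling $M = 5/(4c_2)$ so that $Mc_2 = 5/4$, gives the claimed per-block bound $\eta_3 n^{-5/4}$. (The configuration count is absorbed since $\eta_3 n^{-5/4}$ already accounts for the worst case over $z$ via the uniformity in Lemma~\ref{lem:degree-profile-error}; one must be slightly careful that the domination holds after conditioning only on $\cA_{p(i)}$, which follows by decomposing $\cA_{p(i)} = \bigsqcup_z \cC_{p(i)}(z)$ over admissible $z$ and averaging.)

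Finally, a union bound over the $|V^\dagger| \le n/(\chi \log n)$ occupied blocks yields
\[
\pr_m\Big(\bigcap_{i \in V^\dagger} \cA_i\Big) \ge (1 - o(1))\Big(1 - \frac{n}{\chi \log n}\,\eta_3 n^{-5/4}\Big) = (1 - o(1))\Big(1 - \frac{\eta_3 n^{-1/4}}{\chi \log n}\Big),
\]
which is the statement (the excerpt writes $n^{-1/8}$; any fixed positive exponent suffices, and one can also get $1/4$ directly—I would match whichever exponent the constants in \eqref{eqn:eta_2} are tuned to produce, here $Mc_2 = 5/4$ gives $1/4$, but a more conservative choice of $M$ gives $1/8$). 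The induction along the BFS order is completed by chaining: $\pr_m(\bigcap_i \cA_i) = \pr_m(\cA_{i_1}) \prod_{j \ge 2} \pr_m(\cA_{i_j} \mid \bigcap_{\ell < j}\cA_{i_\ell})$, and since $\cA_{i_j}$ depends (given the configuration of $V_{p(i_j)}$) only on edges internal to the pair $(V_{i_j}, V_{p(i_j)})$, the conditioning on $\bigcap_{\ell<j}\cA_{i_\ell}$ reduces to conditioning on $\cA_{p(i_j)}$, so each factor is at least $1 - \eta_3 n^{-5/4}$.

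The main obstacle is the conditioning bookkeeping: making precise that, given $\cV(m)$, conditioning on the event $\cA_{p(i)}$ (rather than on a specific configuration $\cC_{p(i)}(z)$) still permits the stochastic domination of the mistake count on $V_i$ by a binomial, and that the events $\cA_{i_j}$ across the tree can be chained without the conditioning leaking information about the edges used to label $V_{i_j}$. This is handled by the disjoint-edge-set observation plus the tree structure (each block's labeling uses only edges to its parent, and distinct parent–child edge sets are disjoint), but it requires stating carefully which $\sigma$-algebra each step is measurable with respect to.
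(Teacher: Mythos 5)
Your proposal follows essentially the same route as the paper's proof: condition on a configuration $z$ of the parent block, use the uniform per-vertex bound of Lemma~\ref{lem:degree-profile-error} together with the disjoint-edge observation to get mutual independence and stochastic domination of the mistake count by $\text{Bin}(\Delta\log n,\eta_2 n^{-c_2})$, chain the conditional bounds along the BFS tree (handling the root via Theorem~\ref{thm:initial_block}), and conclude with Bernoulli's inequality over the $n/(\chi\log n)$ blocks. The one slip is in your binomial tail computation: since $m_i\le\Delta\log n$ rather than $\Delta$, the bound $\bigl(e\, m_i\,\eta_2 n^{-c_2}/M\bigr)^M$ carries an extra factor $(\log n)^M$, so the per-block failure probability is $(e\eta_2\Delta/M)^M(\log n)^M n^{-c_2M}\le\eta_3 n^{-9/8}$ (absorbing $(\log n)^M\le n^{1/8}$ for large $n$), not $\eta_3 n^{-5/4}$. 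This is exactly where the stated exponent $1/8$ comes from — with $M=5/(4c_2)$ fixed as in \eqref{eqn:eta_2}, the loss is the polylogarithmic factor, not a ``more conservative choice of $M$'' — and with that correction your argument reproduces the paper's bound verbatim.
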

  \begin{proof}
  Consider $i_j \in V^{\dagger}$ for $2 \leq j \leq |V^{\dagger}|$, and fix $z \in \mathbb{Z}_+^{2k}$ such that 
  \begin{equation}
  \sum_{j\in Z}\big(z(j,+) + z(j,-)\big) = m_{p(i)} \text{ and } \sum_{j\in Z}z(j,-) \leq M. \label{eq:z-conditions}
  \end{equation}
  Observe that the events that $u \in V_{i_j}$ is mislabeled by $\widehat x$ are mutually independent conditioned on $\cC_{p(i_j)}(z)$. Lemma \ref{lem:degree-profile-error} shows that each individual vertex in $V_{i_j}$ is misclassified with probability at most $\eta_2n^{-c_2}$, conditioned on $\cC_{p(i_j)}(z)$. It follows that conditioned on $\cC_{p(i_j)}(z)$,
  \[|\{u \in V_{i_j} : \widehat{x}(u) \neq \omega^\star \circ x^\star(u) \}| \stleq \text{Bin}\left(\Delta\log n, \eta_2n^{-c_2} \right) =: \xi.\]
  Let $\mu_{\xi} = \mathbb{E}[\xi] = \eta_2\Delta n^{-c_2}\log n$.
  Using the Chernoff bound (Lemma \ref{lem:Chernoff-binomial}), we obtain
  \begin{align}
  \pr_m\big(\cA_{i_j}^c \biggiven \cC_{p(i_j)}(z) \big) &= \pr_m\big(|\{u \in V_{i_j} : \widehat{x}(u) \neq \omega^\star \circ x^\star(u)\} | > M \biggiven \cC_{p(i_j)}(z)\big) \nonumber \\
      & \le \pr(\xi > M)\nonumber \\
      & = \pr\big(\xi - \mu_\xi> (M/\mu_\xi- 1)\mu_\xi \big)\nonumber\\
  & \le e^{M-\mu_\xi}(\mu_\xi/M)^{M} \nonumber\\
  &\le (e\eta_2\Delta/M)^M(\log n)^M n^{-c_2M}\nonumber \\
  & \le \eta_3 n^{-9/8}.\label{eq:mistake-bound}    
  \end{align}
  The last inequality holds since $c_2M=5/4$ by definition and $(\log n)^M\le n^{1/8}$ for large enough $n$. Since $\cA_{i_j}$ is independent of $\{\cA_{i_k}: k < j, k \neq p(i_j)\}$ conditioned on $\cC_{p(i_j)}$, \eqref{eq:mistake-bound} implies 
  \begin{equation*}
  \pr_m\Big(\cA_{i_j}^c \biggiven \cC_{p(i_j)}(z), \bigcap_{k < j: i_k \neq p(i_j)}\cA_{i_k} \Big) \leq \eta_3 n^{-9/8}. 
  \end{equation*}
  Furthermore, since \eqref{eq:mistake-bound} is a uniform bound over all $z$ satisfying \eqref{eq:z-conditions}, it follows that
  \begin{equation*}
  \pr_m\Big(\cA_{i_j}^c \biggiven  \bigcap_{k < j}\cA_{i_k} \Big) \leq \eta_3 n^{-9/8}. 
  \end{equation*}
  
  Thus, combining Theorem \ref{thm:initial_block} with the preceding bound, we have 
  \$
  \pr_m\Big(\bigcap_{i\in V^\dagger}\cA_{i}\Big) &= \pr_m\big( \cA_{i_1}\big)\cdot\prod_{j=2}^{|V^{\dagger}|}\pr_m\big(\cA_{i_j} \biggiven \cA_{i_1}, \cdots, \cA_{i_{j-1}} \big) \\
  & \ge \big(1- o(1) \big) \Big(1 - \eta_3 n^{-\frac{9}{8}} \Big)^{|V^{\dagger}|-1}\\
  &\geq \big(1- o(1) \big) \Big(1 - \eta_3 n^{-\frac{9}{8}} \Big)^{\frac{n}{\chi \log n}}\\
  &\geq \big(1- o(1) \big) \Big(1 - \frac{\eta_3 n^{-\frac{1}{8}}}{\chi \log n} \Big), 
  \$
  where we use the fact that there are ${n}/{\chi \log n}$ blocks in total along with Bernoulli's inequality.
  \end{proof}

Combining the above results, we now prove the success of Phase I in Theorem \ref{thm:phase1-summary}. We highlight that since $\eta_1 > 0$ is arbitrary, the following equation \eqref{eq:almost-exact-recovery} implies Theorem \ref{theorem:almost-exact-recovery}.
\begin{theorem}\label{thm:phase1-summary}
  Let $G\sim\mathrm{GHCM}(\lambda, n, \pi, P, d)$ under Assumptions \ref{ass:bounded-ratio} and \ref{ass:distinguishable}. Fix any $\eta_1 > 0$. Let $\kappa = \nu_d(1+\sqrt{d}\chi^{1/d})^d/\chi$. Let $\widehat x$ be the labeling obtained from Phase I with $\chi>0$ satisfying \eqref{eq:chi-fomula} and $\delta>0$ satisfying \eqref{eq:delta-formula} and $\delta<\eta_1/\kappa$, respectively. 
  Then there exists a constant $M$ such that $\widehat x$ makes at most $M$ mistakes on every occupied block with high probability, i.e.,
  \begin{equation}
  \mathbb{P}\Big(\bigcap_{i \in V^{\dagger}} \big\{|\{v \in V_i\colon \widehat x(v) \neq \omega^\star \circ x^\star(v)\}| \leq M\big\} \Big) = 1-o(1). \label{eq:occupied-block-mistakes}
  \end{equation}
  Moreover, it follows that
  \begin{equation}
  \mathbb{P}\big(|\{v \in V \colon \widehat x(v) \neq \omega^\star \circ x^\star(v)\} | \leq \eta_1 n/(\chi\kappa) \big) = 1-o(1)   \label{eq:almost-exact-recovery} 
  \end{equation}
  and
  \begin{equation}
  \pr\Big(\bigcap_{u\in V}\big\{|v\in\cN(u)\colon \widehat x(v) \neq \omega^\star \circ x^\star(v)|\le \eta_1\log n\big\}\Big) = 1-o(1). \label{eq:dispersed-errors} 
  \end{equation}
  \end{theorem}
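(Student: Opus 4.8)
The plan is to deduce all three conclusions of Theorem~\ref{thm:phase1-summary} from the conditional estimate of Lemma~\ref{prop:phase1-occupied}, combined with the structural facts already in hand: $\pr(\cH)=1-o(1)$ (Propositions~\ref{lem:visibility-d1-small-lambda} and \ref{lemma:connectivity}), $\pr(\cI)=1-o(1)$ (Lemma~\ref{lem:B-upper-bound} and the remark following it), and the success of the initial-block \texttt{MAP} step (Theorem~\ref{thm:initial_block}), the latter already folded into Lemma~\ref{prop:phase1-occupied}. Fix the constant $M$ from \eqref{eqn:eta_2}; then the event in \eqref{eq:occupied-block-mistakes} is exactly $\bigcap_{i\in V^\dagger}\cA_i$ with $\cA_i$ as in \eqref{eq:define-A}. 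Since the occupied set $V^\dagger$, the visibility graph $H$, and hence the events $\cH$ and $\cI$ are all measurable with respect to the occupancy vector $m$, we may write $\cI\cap\cH$ as a disjoint union of the events $\cV(m)$ over those $m$ with $\cV(m)\subset\cI\cap\cH$, and then
\[
\pr\Big(\bigcap_{i\in V^\dagger}\cA_i\Big)\ \ge\ \sum_{m:\,\cV(m)\subset\cI\cap\cH}\pr(\cV(m))\,\pr_m\Big(\bigcap_{i\in V^\dagger}\cA_i\Big)\ \ge\ \big(1-o(1)\big)\big(1-\tfrac{\eta_3 n^{-1/8}}{\chi\log n}\big)\,\pr(\cI\cap\cH)\ =\ 1-o(1),
\]
using the uniform lower bound of Lemma~\ref{prop:phase1-occupied}. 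This establishes \eqref{eq:occupied-block-mistakes}.

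Next I would show that \eqref{eq:almost-exact-recovery} and \eqref{eq:dispersed-errors} hold \emph{deterministically} on the event $\bigcap_{i\in V^\dagger}\cA_i$ once $n$ is large, so that they inherit its probability $1-o(1)$. For \eqref{eq:almost-exact-recovery}: every vertex $v$ with $\widehat x(v)\neq\omega^\star\circ x^\star(v)$ lies either in an occupied block — contributing at most $M$ per block, over at most $n/(\chi\log n)$ blocks — or in an unoccupied block, which by definition contains fewer than $\delta\log n$ vertices and of which there are again at most $n/(\chi\log n)$; summing gives at most $\frac{n}{\chi}\big(\frac{M}{\log n}+\delta\big)$ mistakes, which is below $\frac{\eta_1 n}{\chi\kappa}$ for large $n$ because $\delta<\eta_1/\kappa$. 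For \eqref{eq:dispersed-errors}: if $x\in B_j$ with $\|x-u\|\le(\log n)^{1/d}$ then every point of $B_j$ lies within $(1+\sqrt d\,\chi^{1/d})(\log n)^{1/d}$ of $u$, so all blocks meeting $\cN(u)$ are contained in a concentric ball of volume $\nu_d(1+\sqrt d\,\chi^{1/d})^d\log n$; being disjoint of volume $\chi\log n$ each, there are at most $\kappa=\nu_d(1+\sqrt d\,\chi^{1/d})^d/\chi$ of them. Each occupied such block contributes at most $M$ errors (on $\cA_i$) and each unoccupied one at most $\delta\log n$ errors, so $\cN(u)$ contains at most $\kappa\max\{M,\delta\log n\}\le\kappa\delta\log n<\eta_1\log n$ errors for large $n$, uniformly in $u$.

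The substantive work has already been packaged into Lemmas~\ref{lem:degree-profile-error} and \ref{prop:phase1-occupied} (the uniform per-vertex misclassification bound for \texttt{Propagate} and its amplification via binomial domination and a Chernoff bound), so what remains is essentially bookkeeping, and the main points to be careful about are: verifying that $\cI\cap\cH$ is $m$-measurable so the conditioning step is legitimate; checking that the block-counting constant $\kappa$ is genuinely uniform in $n$ and $u$ (a volume-packing estimate, valid once $(\log n)^{1/d}$ is small relative to $n^{1/d}$ so the visibility ball does not wrap around the torus); and confirming that the strict inequality $\delta<\eta_1/\kappa$ absorbs the lower-order terms $M/\log n$ and $\kappa M$ for all large $n$. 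Finally, since $\eta_1>0$ is arbitrary, \eqref{eq:almost-exact-recovery} immediately yields the almost-exact recovery statement of Theorem~\ref{theorem:almost-exact-recovery}.
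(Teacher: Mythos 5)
Your proposal is correct and follows essentially the same route as the paper's proof: condition on the occupancy vector $m$ with $\cV(m)\subset\cI\cap\cH$, invoke the uniform bound of Lemma~\ref{prop:phase1-occupied} together with $\pr(\cI\cap\cH)=1-o(1)$ to get \eqref{eq:occupied-block-mistakes}, and then deduce \eqref{eq:almost-exact-recovery} and \eqref{eq:dispersed-errors} deterministically by counting at most $\delta\log n$ (or $M$) mistakes per block over $n/(\chi\log n)$ blocks and at most $\kappa$ blocks per visible neighborhood. The only cosmetic difference is that you spell out the volume-packing justification for the constant $\kappa$, which the paper asserts without proof.
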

Observe \eqref{eq:almost-exact-recovery} bounds the total number of errors and \eqref{eq:dispersed-errors} bounds the number of errors in the visible radius around any vertex. 
  \begin{proof} Fixing any $\eta_1>0$, we consider $\chi>0$ satisfying \eqref{eq:chi-fomula} and $\delta>0$ satisfying \eqref{eq:delta-formula} and $\delta<\eta_1/\kappa$, respectively.
  Given any $m$ such that $\cV(m)\subset\cI\cap\cH$, for occupied blocks,
  Proposition \ref{prop:phase1-occupied} yields the existence of a constant $M>0$ such that \$\pr_m\Big(\bigcap_{i\in V^\dagger}\{ |\{v \in V_i \colon \widehat x(v) \neq \omega^\star \circ x^\star(v)\} | \leq M \}\Big) \geq \big(1- o(1) \big) \Big(1 - \frac{\eta_3 n^{-\frac{1}{8}}}{\chi \log n} \Big).\$ 
  Since the above bound is uniform over all $m$ such that $\cV(m) \subset \cI \cap \mathcal{H}$, we have 
  \$
  &\pr\Big(\bigcap_{i\in V^{\dagger}}\big\{|v\in V_i\colon \widehat x(v) \neq \omega^\star \circ x^\star(v)|\le M\big\}\Big) \\
  &\quad \ge \sum_{m\colon \cV(m) \subset \cI \cap \cH} \pr_m\Big(\bigcap_{i\in V^{\dagger}}\big\{|v\in V_i\colon \widehat x(v) \neq \omega^\star \circ x^\star(v)|\le M\big\}\Big) \cdot \pr\big(\cV(m)\big) \\
  &\quad \ge \big(1- o(1) \big) \Big(1 - \frac{\eta_3 n^{-\frac{1}{8}}}{\chi \log n} \Big)\cdot  \pr\big(\cI\cap\cH\big) = 1-o(1), 
  \$
  where the last step holds by Propositions \ref{lem:visibility-d1-small-lambda} and \ref{lemma:connectivity}, and Lemma \ref{lem:B-upper-bound}. Thus, we have proven \eqref{eq:occupied-block-mistakes}.
  
  Since $\delta \log n > M$ for $n$ large enough, it follows that
  \begin{equation}
  \pr\Big(\bigcap_{i \in [n/\chi\log n]}\{ |\{v \in V_i \colon \widehat x(v) \neq \omega^\star \circ x^\star(v)\} | \leq \delta \log n \}\Big) = 1-o(1).   \label{eq:block-mistakes} 
  \end{equation}

  Observe the intersection in \eqref{eq:block-mistakes} contains all blocks in $\mathcal{S}_{d, n}$, including the unoccupied blocks, because the unoccupied blocks contain at most $\delta \log n$ vertices. If $\widehat x$ makes fewer than $\delta\log n$ mistakes on $V_i$ for all $i\in[n/(\chi\log n)]$, then $\widehat x$ makes fewer than $\delta\log n \cdot (n/(\chi\log n)) = \delta n/\chi \le \eta_1 n/(\chi\kappa)$ mistakes in $\mathcal{S}_{d,n}$. Thus, \eqref{eq:almost-exact-recovery} follows from \eqref{eq:block-mistakes}. Moreover, if $\widehat x$ makes fewer than $\delta\log n$ mistakes on $V_i$ for all $i\in[n/(\chi\log n)]$, then there will be fewer than $\delta\kappa\log n\le \eta_1\log n$ mistakes in all vertices' neighborhood since each neighborhood $\cN(u)$ intersects at most $\kappa$ blocks. Thus, \eqref{eq:dispersed-errors} also follows from \eqref{eq:block-mistakes}.
  \end{proof}

\xn{
\subsubsection{Alternative Propagate Subroutine for Algorithm \ref{alg:propagation}} \label{sect:propagation-alt}
Finally, we remark that choosing the most frequent label in Algorithm \ref{alg:propagation} as the reference is primarily for convenience. An alternative approach is to label nodes relative to all other nodes in the seed region. For completeness, we also present this alternative approach in Algorithm \ref{alg:propagation-alt}, along with Lemma \ref{lem:degree-profile-error-alt}, which serves as an analogue of Lemma \ref{lem:degree-profile-error}. 
\begin{algorithm}
    \caption{\texttt{Propagate (Alternative)}} \label{alg:propagation-alt}
    \begin{algorithmic}[1]
    \Require{ Graph $G = (V,E)$, mutually visible vertex sets $S, S' \subset V$, $S \cap S' = \emptyset$, and $\widehat{x}_S\colon S\to Z$.}
    \Ensure{An estimated labeling $\widehat x_{S'} \colon S'\to Z$.} 
    %\State{Take $j=\argmax_{r\in Z} |\{v\in S\colon \widehat x_S(v)=r\}|$, so that $j$ represents the largest community in $S$.}\label{line:largest-community}
    \For{$u \in S'$}
        \$
        \widehat x_{S'}(u) = \argmax_{r\in Z}  \sum_{v\in S }%,\, %v\sim u, \widehat x_S(v)=j} 
        \log p_{\widehat x_S(v),r}(y_{uv}).
        \$
    \EndFor
    \end{algorithmic}
\end{algorithm}

Let $\ell(u,x,S; r)$ be the log-likelihood used in Algorithm \ref{alg:propagation-alt} at any $u\in S^\prime \subset V$, $S\subset V$, $x\colon S\to Z$, and $r\in Z$; i.e.,
\#\label{eq:def-log-likelihood-alt}
\ell(u,x,S; r) = \sum_{v\in S\setminus\{u\}} \log p_{x(v),r}(y_{uv}).
\#
Then Algorithm \ref{alg:propagation-alt} labels a vertex $u\in V_i$ by 
\$
\widehat x(u) = \argmax_{r\in Z}\ell(u,\widehat x, V_{p(i)}; r).
\$

\begin{lemma}\label{lem:degree-profile-error-alt}
Suppose that Assumptions \ref{ass:bounded-ratio} and \ref{ass:distinguishable} hold. Fix $i\in V^\dagger\setminus\{i_1\}$. Fix $z \in \mathbb{Z}_+^{2k}$ such that $\sum_{j\in Z}(z(j,+) + z(j,-)) = m_{p(i)}$ and $\sum_{j\in Z}z(j,-) \leq M$ (so that $\cC_{p(i)}(z) \subset \cV(m) \cap \cA_{p(i)} $). For $\widehat x(\cdot)$ given by Algorithm \ref{alg:propagation-alt}, and any $u\in V_i$,  we have
\$
\pr_m\big(\widehat x(u) \neq \omega^\star \circ x^\star(u)\biggiven \cC_{p(i)}(z)\big) \le \eta_2n^{-kc_2}.
\$
\end{lemma}    
\begin{proof} 
%Fix $i\in V^\dagger\setminus\{i_1\}$. %Recall that $J_i$ in \eqref{eqn:J_i} is the largest community labeled by $\widehat{x}$ in $V_{p(i)}$. 
%We fix $j\in Z$ and first study the case when $J_i=j$. Then the pigeonhole principle yields $z(j,+) + z(j,-) \ge m_{p(i)}/k$. 
%Moreover, the condition that $\cC_{p(i)}(z) \subset \cA_{p(i)} $ ensures $z(j,-) \le M$ and thus $z(j, +) \ge m_{p(i)}/k - M \ge \delta\log n/k - M$. 

Fix $i\in V^\dagger\setminus\{i_1\}$ and $u\in V_i$, and write $\mathbb{P}_{m, z, r}(\cdot) = \mathbb{P}( \cdot \given \cV(m), \cC_{p(i)}(z), \omega^\star \circ x^\star(u) = r)$. For any $t\in(0,1)$ and any label $s\neq r$, the definition of $\ell$ in \eqref{eq:def-log-likelihood-alt} and the Chernoff bound give that
    \$
    & \pr_{m,z,r}\big(\ell(u, \widehat x, V_{p(i)}; s) \ge \ell(u, \widehat x, V_{p(i)}; r)\big) \\
    & \qquad = \pr_{m,z,r}\bigg(\sum_{\substack{v\in V_{p(i)}}} \log \Big(\frac{p_{\widehat x(v),s} (y_{uv})}{p_{\widehat x(v),r} (y_{uv})}\Big) \ge 0 \bigg) \\
    &\qquad \le \E_{m,z,r}\bigg[\exp\bigg(t \sum_{\substack{v\in V_{p(i)}}} \log \Big(\frac{p_{\widehat x(v),s}(y_{uv})}{p_{\widehat x(v), r}(y_{uv})}\Big)\bigg) \bigg] \\
    &\qquad = \prod_{\substack{v\in V_{p(i)}}}  \E_{m,z,r}\bigg[\exp\bigg(t  \log \Big(\frac{p_{\widehat x(v), s}(y_{uv})}{p_{\widehat x(v), r}(y_{uv})}\Big)\bigg) \bigg] \\
    &\qquad = \prod_{j\in Z}\prod_{\substack{v\in V_{p(i)}\\ \widehat x(v)=j \\  \omega^\star \circ x^\star(v) = j}}  \E_{m,z,r}\bigg[\exp\bigg(t  \log \Big(\frac{p_{j s}(y_{uv})}{p_{j r}(y_{uv})}\Big)\bigg)\bigg] \\
    &\qquad \qquad \qquad \qquad \cdot \prod_{\substack{v\in V_{p(i)}\\ \widehat x(v)=j \\  \omega^\star \circ x^\star(v) \neq j}}  \E_{m,z,r}\bigg[\exp\bigg(t  \log \Big(\frac{p_{j s}(y_{uv})}{p_{j r}(y_{uv})}\Big)\bigg) \bigg] \\
    & \qquad \le \prod_{j\in Z}\prod_{\substack{v\in V_{p(i)}\\ \widehat x(v)=j \\  \omega^\star \circ x^\star(v) = j}} \phi_t(p_{jr}, p_{js}) \cdot \prod_{\substack{v\in V_{p(i)}\\ \widehat x(v)=j \\ \omega^\star \circ x^\star(v) \neq j}} \exp(\eta),
    \$
where the first product of the last inequality holds since  $\omega^\star$ is permissible, and thus conditioned on $\{\omega^\star \circ x^\star(u) = r\}$, we have $y_{uv}\sim P_{x^\star(u), x^\star(v)} = P_{r j} = P_{j r}$ for 
$v$ such that $\omega^\star \circ x^\star(v) = j$; and the second product follows from Assumption \ref{ass:bounded-ratio} that $\log(p_{ij}(y)/p_{ab}(y))<\eta$ for any $i,j,a,b$ and any $y$. Recall that the distinctness in Assumption \ref{ass:distinguishable} gives $P_{jr} \neq P_{js}$ and thus $\phi_t(p_{j r}, p_{j s}) < \Phi_t < 1$ for any $s\neq r$. Thus, the above gives
\$
& \pr_{m,z,r}\big(\ell(u, \widehat x, V_{p(i)}; s) \ge \ell(u, \widehat x, V_{p(i)}; r) \big) \\
&\qquad \le \prod_{j\in Z} \prod_{\substack{v\in V_{p(i)}\\ \widehat x(v)=j \\  \omega^\star \circ x^\star(v) = j}} \phi_t(p_{jr}, p_{js}) \cdot \prod_{\substack{v\in V_{p(i)}\\ \widehat x(v)=j \\ \omega^\star \circ x^\star(v) \neq j}} \exp(\eta) \\
&\qquad \le \Phi_t^{\sum_{j\in Z}z(j,+) } \cdot e^{\eta \sum_{j\in Z}z(j,-) } \\
&\qquad \le \Phi_t^{\delta\log n - M} \cdot e^{\eta M} \\
&\qquad = (e^\eta/\Phi_t)^M \cdot n^{-\delta\log (1/\Phi_t)} = (\eta_2/k^2) \cdot n^{-kc_2},
\$
where the third inequality follows from $\sum_{j\in Z}z(j,-) \le M$ and $\sum_{j\in Z}z(j, +) \ge m_{p(i)} - M \ge \delta\log n - M$.
Thus, conditioned on $\{\omega^{\star} \circ x^{\star}(u) = r\}$, a union bound over all $s\neq r$ gives that
\$
\pr_{m,z, r}\big(\widehat x(u) \neq \omega^\star \circ x^\star(u)\big) & = \pr_{m,z, r}\Big(\bigcup_{s\neq r \in Z} \big\{\ell(u, \widehat x, V_{p(i)}; s) \ge \ell(u, \widehat x, V_{p(i)}; r) \big\}\Big) \\
& \le \eta_2 n^{-kc_2}. 
\$
%Since the above bound is uniform for all $j\in Z$ and $r\in Z$, the law of total probability completes the proof. 
\end{proof}
}

\subsection{Almost Exact Recovery When $d = 1$ and $|\Omega_{\pi, P}| = 1$}
 % \julia{My guess is that almost exact recovery is possible whenever $|\Omega_{\pi, P}| = 1$, and exact recovery is possible whenever $|\Omega_{\pi, P}| = 1$ and $\lambda \nu_d \min_{i \neq j} D_+(\theta_i, \theta_j) > 1$. The main bottleneck is almost exact recovery, as the proof of exact recovery should work with minor modifications. A possible outline:}

In the one-dimensional $\text{GHCM}$, we partition $\cS_{1, n}$ into blocks of length $\log(n)/2$ so that any pair of adjacent blocks are mutually visible. Recall, as in the impossibility result of Theorem \ref{theorem:impossibility} when $d=1, |\Omega_{\pi, P}|\geq 2$, that a connected component in the visibility graph corresponds to a contiguous segment of occupied blocks. In this section, we show that almost exact recovery is possible when $|\Omega_{\pi, P}| = 1$ for any value of $\lambda$. We refine the definition of 
segments to consist of contiguous blocks that are all $\delta$-occupied.
Observe that there may exist multiple segments in $\cS_{1, n}$. The number of connected components is upper bounded by the number of blocks which contain fewer than $\delta \log n$ vertices in $\cS_{1,n}$, which we show is $O(n^{1-\frac{\lambda}{4}})$

\begin{lemma} \label{lem:1d_N}
    When $d=1$, the number of blocks which contain fewer than $\delta \log n$ vertices is $O(n^{1-\frac{\lambda}{4}})$ with high probability. 
\end{lemma}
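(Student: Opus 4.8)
The plan is to reduce the statement to an elementary binomial tail estimate, exploiting the independence of disjoint regions of the underlying Poisson point process. Recall that in this one-dimensional section the circle $\mathcal{S}_{1,n}$ (of circumference $n$) is partitioned into $N_b := \lceil 2n/\log n\rceil = \Theta(n/\log n)$ blocks, each of length $\log n/2$, and that here a block is counted as unoccupied precisely when it contains no vertex of $V$.

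First I would record the distribution of the relevant count. By the defining properties of a homogeneous Poisson point process (Definition~\ref{def:PPP}), the vertex counts of distinct blocks are independent, and each block of length $\log n/2$ carries $\mathrm{Poisson}(\lambda\log n/2)$ vertices; hence the events ``$B_i$ is unoccupied'' are independent, each of probability $e^{-\lambda\log n/2}=n^{-\lambda/2}$, so the number $X$ of unoccupied blocks is distributed as $\mathrm{Bin}(N_b,n^{-\lambda/2})$. Consequently $\mathbb{E}[X]=N_b\,n^{-\lambda/2}=\Theta(n^{1-\lambda/2}/\log n)$, which already sits a full logarithmic factor below the target rate $n^{1-\lambda/2}$.

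It then remains to upgrade this bound on the mean to a high-probability statement, which I would do by Markov's inequality applied at the threshold $a_n := n^{1-\lambda/2}/\sqrt{\log n}=o(n^{1-\lambda/2})$: this gives $\mathbb{P}(X\ge a_n)\le \mathbb{E}[X]/a_n=\Theta(1/\sqrt{\log n})=o(1)$, so $X<a_n=o(n^{1-\lambda/2})$ with high probability, uniformly over all $\lambda>0$. (When $\lambda<2$ one could alternatively sharpen this, since then $\mathbb{E}[X]\to\infty$ and the Chernoff upper tail bound for binomials, Lemma~\ref{lem:Chernoff-binomial} with $t=1$, yields $\mathbb{P}(X\ge 2\mathbb{E}[X])\le (e/4)^{\mathbb{E}[X]}=o(1)$, hence $X\le 2\mathbb{E}[X]=O(n^{1-\lambda/2}/\log n)$ with high probability.)

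There is no genuine analytic obstacle here; the only point that requires care is the bookkeeping — one must use the block length $\log n/2$ specific to this section, so that a block is empty with probability $n^{-\lambda/2}$ rather than the $n^{-\lambda}$ that arose in the proof of Theorem~\ref{theorem:impossibility}(b) with blocks of length $\log n$, together with the block count $N_b=\Theta(n/\log n)$, so that their product is $\Theta(n^{1-\lambda/2}/\log n)$, i.e.\ exactly one logarithmic factor inside the claimed $o(n^{1-\lambda/2})$. Once the mean is pinned down, the independence across blocks makes the concentration step immediate.
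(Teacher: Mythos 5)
Your proposal is correct and follows essentially the same route as the paper: identify the number of unoccupied blocks as $\mathrm{Bin}(\Theta(n/\log n),\,n^{-\lambda/2})$ with mean $\Theta(n^{1-\lambda/2}/\log n)$, then apply Markov's inequality (the paper thresholds at $n^{1-\lambda/2}$ itself, getting failure probability $2/\log n$, while you threshold at $n^{1-\lambda/2}/\sqrt{\log n}$ — an immaterial difference). No further comment is needed.
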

\begin{proof}
    Let $X\sim \mathrm{Pois}(\mu)$. A Chernoff bound yields, for $\kappa \in (0, 1)$,
    \[ \mathbb{P}\big(X \leq (1-\kappa)\mu \big) \leq \exp\big(-\mu[\kappa +(1-\kappa)\log(1-\kappa)]\big). \]
    Since $\mu = \lambda \log(n) / 2$, there exists $\kappa$ sufficiently close to $1$ (and thus $\delta$ sufficiently small) such that the RHS is $n^{-\lambda/4}$.

    Let $N$ be the number of blocks with fewer than $\delta \log n$ vertices in $\cS_{1,n}$. Observe that $N$ is stochastically upper-bounded by $\mathrm{Bin}(2n/\log(n), n^{-\lambda / 4})$. It follows that $N = O(n^{1-\lambda/4})$ with high probability by Markov's inequality. 
\end{proof}

% \begin{lemma} \label{lem:1d_N}
%     When $d=1$, the number of unoccupied blocks is $o(n^{1-\frac{\lambda}{2}})$ with high probability. 
% \end{lemma}
% \begin{proof}
%     Let $N$ be the number of unoccupied blocks in $\cS_{1,n}$. Since the points are generated by a Poisson point process, the probability that a block is unoccupied is $\exp(-\lambda \log n/2) = n^{-\lambda/2}$, which implies 
%     \[ N\sim \text{Bin}\left(\frac{2n}{\log n}, n^{-\lambda/2}\right),\quad \mathbb{E}N = \frac{2n}{\log n}n^{-\lambda /2}=\frac{2n^{1-\frac{\lambda}{2}}}{\log n}. \]
%     Markov's inequality yields $\mathbb{P}(N \geq n^{1-\frac{\lambda}{2}}) \leq \frac{2}{\log n}.$
% \end{proof}

To achieve almost exact recovery in the one-dimensional case, we apply our algorithm to each segment, i.e., conduct an initial labeling of a subset of the "leftmost" interval in each segment, then propagate.

\begin{algorithm}
      \caption{Almost exact recovery for the GHCM ($d=1$)} \label{alg:almost-exact-1d}
      \begin{algorithmic}[1]
      \Require{$G \sim \text{GHCM}(\lambda, n, \pi, P, 1)$.}
      \Ensure{An estimated community labeling $\widetilde{x}: V \to Z$.}
      \vspace{5pt}
      \State{{\bf Phase I:}} 
      \State{Partition the interval $[-n/2, n/2]$ into $2n/\log n$ blocks of volume $\log n/2$ each. Let $l$ be the number of segments, and $n_r$ be the number of blocks in the $r$-th segment. Let $B_i^r$ be the $i$-th block in the $r$-th segment, and $V_i^r$ be the set of vertices in $B_i^r$ for $i\in [n_r]$.} %\label{line:step1-1d}
      \State Set $\varepsilon_0 \leq \min(\frac{\lambda}{8\log k}, \delta)$. 
      \For{$r=1,\cdots, l$} 
      \State Sample $V_{0}^r \subset V_{1}^r $ such that $|V_{0}^r| = \varepsilon_0 \log n$ 
      \State Set $V_{1}^r \leftarrow V_{1}^r \setminus V_{0}^r$ 
      \State Apply \texttt{Maximum a Posteriori} (Algorithm \ref{alg:initial-block}) on input $G$ and $V_{0}^r$ to obtain a labeling $\widehat{x}$ of $V_{0}^r$. \label{line:MAP-1d}
      \For{$i=1, \cdots, n_r$}
      \State Apply \texttt{Propagate} (Algorithm \ref{alg:propagation}) on input $G, V_{i-1}^r, V_i^r$ to determine the labeling $\widehat{x}$ on $V_i^r$. \label{line:propagate-1d}
      \EndFor
      \EndFor 
      \end{algorithmic}
  \end{algorithm}

\subsubsection{Initial Block Labeling}
%\hfill

Recall that when $d = 1$ and $|\Omega_{\pi, P}| \geq 2$, the value of $\lambda$ needed to be sufficiently high to ensure the connectivity of the block visibility graph; otherwise, there would be multiple segments and the MAP would not be able to select a consistent permissible relabeling across the components. On the other hand, when $|\Omega_{\pi, P}| = 1$, this ambiguity among labelings no longer exists, and thus we are able to consider smaller values of $\lambda$ that lead to disconnected visibility graphs. The presence of multiple segments necessitates the labeling of multiple initial blocks, potentially a diverging number. The existence of multiple initial blocks brings a new barrier to analysis, namely the ground truth $x^\star$ does not have guaranteed concentration around expected community sizes over all initial blocks. Thus, we cannot simply apply the result of Lemma \ref{thm:initial_block_restricted_MLE} using the restricted MLE. Instead, we add a stronger distinctness assumption (Assumption \ref{ass:distinguishable-strong}) that imposes distinctness of distributions for every $i, j\in Z$. With this assumption, we show that the MLE succeeds in labeling all but a vanishing fraction of vertices, in \emph{each} initial block, with high probability.

To ensure a linear runtime of $O(n \log n)$, Line \ref{line:MAP-1d} of Algorithm \ref{alg:almost-exact-1d} labels only $\varepsilon_0\log n$ points, where \[\varepsilon_0 = \min\left(\delta, \frac{\lambda}{8 \log k}\right).\] With high probability, there are at most $n^{1-\frac{\lambda}{4}}$ initial blocks to label  due to Lemma \eqref{lem:1d_N}. For each initial block, recall there are $k^{\varepsilon_0\log n}$ possible labelings, and evaluating the posterior probability of one labeling has runtime $O\left(\epsilon_0 \log n + \binom{\epsilon_0 \log n}{2} \right)$. Thus, for sufficiently large $n$, the runtime is 
\begin{align*}
    O\left(n^{1-\frac{\lambda}{4}} \log^2(n) \cdot k^{\epsilon_0 \log n} \right) &= O\left(n^{1-\frac{\lambda}{4}} \log^2(n) \cdot e^{\epsilon_0 \log(n) \log(k)}\right) \\
    &= O\left(n^{1-\frac{\lambda}{4}}\log^2(n) \cdot e^{\frac{\lambda}{8} \log(n) }\right) \\
    &= O\left(n^{1-\frac\lambda8}\log^2(n) \right) = O(n \log n). 
\end{align*}

Next, we show that on one initial block, the output of the MLE will make at most $c\log n$ mistakes, for $c>0$ arbitrarily small, with probability $1-O(n^{-\log n})$.

\begin{lemma}\label{lem:1d_initial_block}
    Let $d=1, |\Omega_{\pi, P}| = 1$, and any $c>0$. For any $x:V_{0}\to Z$ such that $\textup{DISC}(x, x^\star) = d_H(x, x^\star)> c\log n$,
    \[ \mathbb{P}( \ell_0(G, x) - \ell_0(G, x^\star) > 0 ) = n^{-C\log n} \]
    for some constant $C$ which is a function of $c.$
\end{lemma}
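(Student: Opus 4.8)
The plan is to adapt the Chernoff-bound argument used in Proposition \ref{prop:high-discrepancy} to the single initial block $V_0$, the crucial difference being that here $d=1$ and $|\Omega_{\pi,P}|=1$, so there is no need to restrict to labelings in $X_0^\star(\varepsilon)$ and the discrepancy equals the Hamming distance. The key simplification is that when $|\Omega_{\pi,P}|=1$ the discrepancy $\textup{DISC}(x,x^\star)$ equals $d_H(x,x^\star)$, and the strong distinctness assumption (Assumption \ref{ass:distinguishable-strong}) means that $p_{ik}\neq p_{jl}$ whenever $(i,k)\neq(j,l)$ as unordered pairs. As in \eqref{eq:mle_diff}, I would first write, for any $t\in(0,1)$,
\[
\mathbb{P}\big(\ell_0(G,x)-\ell_0(G,x^\star)>0\big)\le \prod_{\substack{i,j,k,l\in Z\\ p_{ik}\neq p_{jl}}}\big(\phi_t(p_{jl},p_{ik})\big)^{n_{ij}n_{kl}},
\]
where $n_{ij}=|\{u\in V_0:x^\star(u)=i,\ x(u)=j\}|$. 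Since each $\phi_t(p,q)<1$ whenever $p\neq q$, it suffices to exhibit a pair of ``mismatched'' indices whose product of counts is $\Omega(\log^2 n)$.

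Next I would argue that the high discrepancy condition $d_H(x,x^\star)>c\log n$ forces such a product to be large. Write $n_{\mathrm{err}}=d_H(x,x^\star)=\sum_{i}\sum_{j\neq i}n_{ij}>c\log n$. There must exist a pair $(a,b)$ with $a\neq b$ and $n_{ab}\ge c\log n/k^2$ (by pigeonhole over the at most $k^2$ off-diagonal cells). Now I need a companion index: I would split into cases depending on whether there is a second off-diagonal cell with enough mass, or whether $n_{aa}$ is large. If $n_{aa}\ge c\log n/(2k^2)$, then the pair of pairs $(a,a)$ and $(a,b)$ contribute $\phi_t(p_{aa},p_{ab})^{n_{aa}n_{ab}}$ with exponent $\Omega(\log^2 n)$, and $p_{aa}\neq p_{ab}$ by Assumption \ref{ass:distinguishable-strong}. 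Otherwise $n_{aa}$ is small; since $|\{u:x^\star(u)=a\}|\ge \sum_j n_{aj}$ and by a Chernoff/Hoeffding bound this is $\ge(\pi_a-o(1))\varepsilon_0\log n$ with probability $1-O(n^{-\log n})$ (or more carefully, this concentration holds except on an event of probability decaying faster than any polynomial — one can take a tail bound giving $\exp(-\Omega(\log^2 n))$), there must be another index $b'\neq a$ with $n_{ab'}\ge \Omega(\log n)$, and then the pair $(a,b)$, $(a,b')$ gives exponent $n_{ab}n_{ab'}=\Omega(\log^2 n)$ with $p_{ab}\neq p_{ab'}$. Setting $\Phi_t$ as in \eqref{eqn:Phi_t}, in every case $\mathbb{P}(\ell_0(G,x)-\ell_0(G,x^\star)>0)\le \Phi_t^{\,C'(c)\log^2 n}=n^{-C(c)\log n}$ for an appropriate constant $C(c)>0$ depending on $c$, $k$, $\varepsilon_0$, and $\Phi_t$.

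The main obstacle I anticipate is handling the case where the single block $V_0$ has atypical community proportions — the concentration of $|\{u\in V_0:x^\star(u)=i\}|$ around $\pi_i\varepsilon_0\log n$ is what lets me conclude that a large off-diagonal count in row $a$ forces a second large entry in the same row. Since the lemma claims a bound of $n^{-C\log n}=e^{-C\log^2 n}$, I must be careful that the bad concentration event also has probability at most $e^{-\Omega(\log^2 n)}$; a standard Hoeffding bound on $\mathrm{Bin}(\varepsilon_0\log n,\pi_i)$ only gives $e^{-\Omega(\log n)}$ deviation bounds at the natural scale, so I would instead either (i) absorb the atypical-proportion event into the statement by conditioning, as is done with $X_0^\star(\varepsilon)$ elsewhere, or (ii) note that in the atypical case one can still extract two large cells directly from $n_{\mathrm{err}}>c\log n$ without invoking row sums, by a more careful pigeonhole over ordered pairs. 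Option (ii) is cleaner: if no single off-diagonal cell and no diagonal cell in the rows hit by errors has mass $\Omega(\log^2 n / \log n)=\Omega(\log n)$ paired appropriately, then since $n_{\mathrm{err}}>c\log n$ is spread over $\le k^2$ cells, at least two distinct cells $(i,j)$, $(i',j')$ with $(i,j)\neq(i',j')$ each have $\ge c\log n/k^2$ mass; by strong distinctness $p_{ij}\neq p_{i'j'}$, giving the required $\Omega(\log^2 n)$ exponent with no concentration input at all. I would adopt this route, making the lemma unconditional up to the $O(n^{-\log n})$ error already present from the MAP optimality step.
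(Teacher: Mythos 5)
Your proposal follows the same skeleton as the paper's proof: the Chernoff/MGF bound of \eqref{eq:mle_diff} reducing everything to the cell counts $n_{ij}$, a pigeonhole step producing one off-diagonal cell $(a,b)$ with $n_{ab}=\Omega(\log n)$, a search for a companion cell whose factor $\phi_t<1$, and a union bound over the $k^{O(\log n)}$ labelings. The paper completes the middle step by fixing $l\notin\{i,j\}$, using the binomial concentration of the row-$l$ sum (Lemma \ref{lem:Chernoff-binomial-lower}) plus pigeonhole to get some $n_{lm}=\Omega(\log n)$, and invoking Assumption \ref{ass:distinguishable-strong} to get $p_{il}\neq p_{jm}$; your route (i) is in that spirit, and your concern that the concentration event fails only with polynomially small probability (while the lemma asserts an $n^{-C\log n}$ bound) is a fair observation about that style of argument. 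However, the route you actually commit to, option (ii), has a genuine gap.

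Two concrete problems. First, the combinatorial claim is false: from $d_H(x,x^\star)>c\log n$ spread over at most $k(k-1)$ off-diagonal cells you can only conclude that \emph{one} cell is large; all the errors may sit in a single cell (e.g.\ an atypically homogeneous block in which every vertex has true label $1$ and $x$ assigns label $2$ — exactly the atypical-proportion scenario you were trying to sidestep), so ``two distinct cells each of mass $\ge c\log n/k^2$'' need not exist. Second, you repeatedly attach the wrong distinctness condition to a pair of cells: in \eqref{eq:mle_diff} the exponent $n_{ij}n_{kl}$ multiplies $\phi_t(p_{jl},p_{ik})$, so the factor is nontrivial iff the true-pair distribution $p_{ik}$ differs from the assigned-pair distribution $p_{jl}$, not iff $p_{ij}\neq p_{kl}$. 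With the correct condition, two distinct large cells need not help at all — cells $(1,2)$ and $(2,1)$ give $p_{ik}=p_{12}=p_{21}=p_{jl}$, a trivial factor — whereas a single large off-diagonal cell $(a,b)$ paired with \emph{itself} always helps, since the condition becomes $p_{aa}\neq p_{bb}$, which Assumption \ref{ass:distinguishable-strong} guarantees, and the exponent is $n_{ab}(n_{ab}-1)=\Omega(\log^2 n)$, hence a bound $\Phi_t^{\Omega(\log^2 n)}=n^{-\Omega(\log n)}$ with $\Phi_t$ as in \eqref{eqn:Phi_t}. That self-pairing is the clean, concentration-free fix you were looking for (it also repairs the mis-stated comparison in your Case 2, where the relevant distributions are $p_{aa}$ versus $p_{bb'}$, not $p_{ab}$ versus $p_{ab'}$); with it, the union bound over labelings goes through exactly as in the paper.
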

\begin{proof}
Fix $x$ such that $\textup{DISC}(x, x^\star) > c\log n$. Recall we defined 
\[ n_{ij} = |\{ u\in V_{0}: x^\star(u)=i, x(u)=j \}|. \] 
As in the high-discrepancy proof of Proposition \ref{prop:high-discrepancy}, we have
\begin{align}
    \mathbb{P}( \ell_1(G, x) - \ell_1(G, x^\star) > 0 ) &\leq \mathbb{E}\big[ e^{t(\ell_1(G, x) - \ell_1(G, x^\star)} \big] \nonumber \\
    &= \prod_{\substack{i,j,l,m \in Z \\ p_{il}\neq p_{jm}}} \big( \phi_t(p_{jm}, p_{il}) \big)^{n_{ij}n_{lm}}. \label{eq:1d_high_disc_bound}
\end{align}
It is sufficient to bound a subset of the terms in \eqref{eq:1d_high_disc_bound}. 
By the pigeonhole principle, there exists $i\neq j\in Z$ such that $n_{ij} \geq \frac{c\log n}{k(k-1)}$. Then, fix $l\neq \{i, j\}$ and $\varepsilon < 1$. Observe $\sum_{r\in Z} n_{lr}$, the number of points in the initial block whose ground truth label is $l$, is distributed as $\text{Bin}(\varepsilon_0\log n, \pi_l)$. Hence, Lemma \ref{lem:Chernoff-binomial-lower} yields $\sum_{r\in Z} n_{lr} \geq (\pi_l - \varepsilon)\varepsilon_0 \log n \geq (\pi_{\text{min}} - \varepsilon)\varepsilon_0 \log n$ with high probability. Then, the pigeonhole principle guarantees that there exists $m\in Z$ such that $n_{\ell m} \geq \frac{1}{k}(\pi_{\text{min}} - \epsilon) \epsilon_0 \log n$. The strong distinctness assumption (Assumption \ref{ass:distinguishable-strong}) implies $p_{il}\neq p_{jm}$. Therefore we obtain for some constant $C<1$
\begin{align}
    \mathbb{P}( \ell_1(G, x) - \ell_1(G, x^\star) > 0 ) &\leq \phi_t(p_{i\ell}, p_{jm})^{n_{ij}n_{l m}} \nonumber \\
    &\leq \Phi_t^{\frac{c\log n}{k(k-1)} \frac{(\pi_{\text{min}} - \epsilon) \epsilon_0 \log n}{k} } \nonumber \\
    &\leq C^{\log^2 n} \nonumber \\
    &= n^{-\Omega(\log n)}. \nonumber
\end{align}
Union bounding over all $k^{\Omega(\log n)}$ high-discrepancy labelings shows the error probability is $n^{-\Omega(\log n)}$.
\end{proof}

\begin{lemma} \label{lem:initial-block-1d}
    When $d=1, |\Omega_{\pi, P}| = 1$, for any $c>0$, labeling all initial blocks of connected components yields at most $c\log n$ mistakes per block with high probability. 
\end{lemma}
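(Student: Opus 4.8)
The plan is to derive the conclusion from the single-block estimate (Lemma~\ref{lem:1d_initial_block}) by a union bound over all segments, the crucial feature being that the single-block failure probability is super-polynomially small in $n$. First I would condition on the high-probability event supplied by Lemma~\ref{lem:1d_N}: the number of empty blocks, and hence the number $l$ of segments and of initial blocks labeled in Line~\ref{line:MAP-1d} of Algorithm~\ref{alg:almost-exact-1d}, is at most $n^{1-\lambda/2}$ (which is $\le n$ for every $\lambda>0$, and bounded when $\lambda\ge2$). Because $|\Omega_{\pi,P}|=1$, the only permissible relabeling is the identity, so $\textup{DISC}(\cdot,\cdot)=d_H(\cdot,\cdot)$ on each $V_0^r$, and ``at most $c\log n$ mistakes on $V_0^r$'' is exactly $d_H(\widehat x|_{V_0^r},x^\star|_{V_0^r})\le c\log n$. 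It therefore suffices to show that, simultaneously over all $r\in[l]$, no labeling $x$ of $V_0^r$ with $d_H(x,x^\star|_{V_0^r})>c\log n$ has posterior probability at least that of $x^\star$.

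Fix a segment $r$ and a candidate $x$ on $V_0^r$ with $d_H(x,x^\star|_{V_0^r})>c\log n$, and set $n_{ij}=|\{u\in V_0^r:x^\star(u)=i,\ x(u)=j\}|$. Distributing the $d_H>c\log n$ mismatches over the $k(k-1)$ off-diagonal cells, the pigeonhole principle forces $n_{i^\star j^\star}\ge c\log n/(k(k-1))$ for some $i^\star\ne j^\star$; this is a \emph{deterministic} consequence of high discrepancy, so it requires no concentration of the ground-truth label counts inside $V_0^r$ and holds for every realization of $x^\star|_{V_0^r}$. Relating the MAP objective of Algorithm~\ref{alg:initial-block} to the log-likelihood $\ell_0$ of \eqref{eq:1d_high_disc_bound}, if $x$ beats $x^\star$ under the posterior then $\ell_0(G,x)-\ell_0(G,x^\star)\ge -C_0\log n$, where $C_0$ is a constant depending only on $\varepsilon_0$ and $\pi_{\min}$, since the two log-priors differ by at most $d_H(x,x^\star)\log(1/\pi_{\min})\le\varepsilon_0\log(1/\pi_{\min})\log n$. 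Now run the Chernoff bound from the proof of Lemma~\ref{lem:1d_initial_block}, keeping only the factors coming from unordered pairs $\{u,v\}$ with both endpoints in cell $(i^\star,j^\star)$ (every remaining factor $\phi_t(\cdot,\cdot)$ is $\le1$ for $t\in(0,1)$): since $i^\star\ne j^\star$, strong distinctness (Assumption~\ref{ass:distinguishable-strong}) gives $P_{i^\star i^\star}\ne P_{j^\star j^\star}$, hence $\phi_t(p_{j^\star j^\star},p_{i^\star i^\star})\le\Phi_t<1$ with $\Phi_t$ as in \eqref{eqn:Phi_t}, and
\[
\mathbb{P}\big(\ell_0(G,x)-\ell_0(G,x^\star)\ge -C_0\log n\big)\ \le\ n^{tC_0}\,\Phi_t^{\binom{\lfloor c\log n/(k(k-1))\rfloor}{2}}\ =\ n^{-\Omega(\log n)},
\]
uniformly over all high-discrepancy $x$. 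A union bound over the polynomially many (at most $k^{\varepsilon_0\log n}=n^{\varepsilon_0\log k}$) candidate labelings on $V_0^r$ then shows that the MAP labeling of $V_0^r$ makes at most $c\log n$ mistakes with probability $1-n^{-\Omega(\log n)}$.

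Finally I would union-bound this over the $l\le n$ segments: the probability that some initial block is mislabeled on more than $c\log n$ vertices is at most $n\cdot n^{-\Omega(\log n)}=n^{-\Omega(\log n)}=o(1)$, and combining with the $o(1)$ failure probability of Lemma~\ref{lem:1d_N} yields the claim. (Propagation from each initial block to the rest of its segment via Line~\ref{line:propagate-1d} is a separate matter, handled by adapting the analysis of Section~\ref{subsec:propagation} segment by segment.)

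I expect the main obstacle to be precisely that the number of initial blocks can grow with $n$, as fast as $n^{1-\lambda/2}$ when $\lambda$ is small, so that a per-block failure probability only polynomially small in $n$ would not survive the union bound. This forces the argument to produce the super-polynomial tail $n^{-\Omega(\log n)}$ and, in particular, to avoid leaning on any event---such as concentration of the label frequencies within $V_0^r$---that fails with only polynomially small probability; the device of pairing one large off-diagonal cell with itself sidesteps this because it is driven solely by the deterministic pigeonhole bound together with strong distinctness, hence is valid for every realization of the ground-truth labels in every initial block. Reconciling the MAP objective actually used by the algorithm with the likelihood computation is the only remaining point needing care, and it is harmless since the log-prior gap across labelings is $O(\log n)$, negligible against the $\Omega(\log^2 n)$ gap in log-likelihood.
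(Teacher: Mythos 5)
Your proof is correct, and it takes a genuinely different route through the key single-block estimate than the paper does. The paper's Lemma~\ref{lem:1d_initial_block} also starts from the pigeonhole fact that some off-diagonal cell $(i,j)$ has $n_{ij}\ge c\log n/(k(k-1))$, but it then pairs that cell with a \emph{second} large cell $(l,m)$, whose existence is obtained by first invoking concentration of the ground-truth label counts in $V_0^r$ (a Chernoff lower-tail event that fails with only polynomially small probability) and then pigeonholing again; strong distinctness is applied in the form $p_{il}\neq p_{jm}$. You instead pair the large off-diagonal cell with itself, using $P_{i^\star i^\star}\neq P_{j^\star j^\star}$ from Assumption~\ref{ass:distinguishable-strong}, so the $\Omega(\log^2 n)$ count of useful pairs is a deterministic consequence of high discrepancy, valid for every realization of $x^\star|_{V_0^r}$. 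This buys exactly what you identify: a per-labeling failure probability of $n^{-\Omega(\log n)}$ with no auxiliary high-probability event attached, which union-bounds cleanly over the $k^{\varepsilon_0\log n}$ candidate labelings per block and over the up to $n^{1-\lambda/2}$ initial blocks — whereas the paper's argument, as written, carries along a concentration event per initial block whose polynomially small failure probability is not obviously negligible after that union bound. Your explicit reconciliation of the MAP objective with $\ell_0$ via the $O(\log n)$ log-prior gap (absorbed into the Chernoff threshold as $n^{tC_0}$) is also a point the paper glosses over, and it is handled correctly. Both arguments ultimately rest on strong distinctness and the same Chernoff/$\Phi_t$ machinery; the difference is only in which pair of cells supplies the $\Phi_t^{\Omega(\log^2 n)}$ decay, with your choice being the more robust one.
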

\begin{proof}
    The number of initial blocks is at most $o(n^{1-\frac{\lambda}{2}})$ by Lemma \ref{lem:1d_N}, with high probability. Since the probability that a single initial block contains more than $c\log n$ errors is $n^{-\Omega(\log n)}$ by Lemma \ref{lem:1d_initial_block}, it follows that the total error probability is
    \[ o(n^{1-\frac{\lambda}{2}})n^{-\Omega(\log n)} = o(1). \]
\end{proof}

\subsubsection{Propagation}
%\hfill\\
Line \ref{line:propagate-1d} of Algorithm \ref{alg:almost-exact-1d} applies the \texttt{Propagate} subroutine (Algorithm \ref{alg:propagation}) to each segment. Recall from Algorithm \ref{alg:almost-exact-1d} that $l$ is the number of segments and $n_r$ is the number of blocks in the $r$-th segment, for $r\in[l]$. Given the $r$-th segment, denote the vertices within each of its blocks as $V_i^r$, for $i\in[n_r]$. Adapting the notation from Section \ref{subsec:propagation}, we define the vector $m^r=(m^r_1, \cdots, m^r_{n_r})\in \mathbb{Z}_{+}^{n_r}$ and the event \[\cV(m^r)=\{|V^r_i| = m^r_i \text{ for } i\in[n_r]\}.\] 
We fix the vector $m^r$, and condition on the event $\mathcal{V}(m^r)$. We denote $\pr_m(\cdot) = \mathbb{P}(\cdot | m^1, \dots m^l)$. Recall the configuration of a block as a vector 
  $z = (z(j,+), z(j,-))_{j\in Z}\in \Z_{+}^{2k}$ to count the correct labels of the label $\widehat x$. For each $i\in[n_r]$, the event $\cC_i^r(z)$ implies the block $V_i^r$ is configured such that 
  \$
  &z(j,+) = |\{u\in V^r_{i}, \widehat x(u) = r, x^\star(u) =j\}|\\
  &z(j,-) = |\{u\in V^r_{i}, \widehat x(u) = r, x^\star(u) \neq j\}|.
  \$
Fix $c\in(0, 1)$ such that all initial blocks have at most $c\log n$ mistakes, as in Theorem \ref{lem:1d_initial_block}. Let $\cA_i^r$ be the event that $\widehat x$ makes at most $c\log n$ mistakes on $V^r_{i}$; that is, 
  \$
  \cA_{i}^r = \big\{ |\{u \in V^r_{i} : \widehat{x}(u) \neq  x^\star(u) \} | \leq c \log n\big\}.
  \$

Denote $p(i; r)$ as the parent block of $V_i^r$. We adapt Lemma \ref{lem:degree-profile-error} so that we tolerate at most $c\log n$ mistakes, as opposed to $M$ mistakes, when propagating. Recall from \eqref{eqn:eta_2} that $c_2 = \delta\log(1/\Phi_t) / k$. 

  \begin{corollary}\label{cor:degree-profile-error-1d}
    Let $d=1$ and $|\Omega_{\pi, P}| = 1$. Suppose that Assumptions \ref{ass:bounded-ratio} and \ref{ass:distinguishable} hold. Fix $i$ and $r$. Set 
    \begin{equation}
        c < \frac{c_2}{\log(e^\eta / \Phi_t)} < 1, \label{eq:c_1d}
    \end{equation}
    and fix $z \in \mathbb{Z}_+^{2k}$ such that $\sum_{j\in Z}(z(j,+) + z(j,-)) = m_{p(i; r)}$ and $\sum_{j\in Z}z(j,-) \leq c\log n$ (so that $\cC_{p(i; r)}(z) \subset \cV(m^r) \cap \cA_{p(i; r)}^r $). Then for any $u\in V_i^r$, there exists a constant $c_3>0$ such that 
    \$
    \pr_{m}\big(\widehat x(u) \neq x^\star(u)\biggiven \cC_{p(i; r)}(z)\big) \le (k-1)n^{-c_3}.
    \$
    \end{corollary}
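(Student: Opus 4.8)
The plan is to follow the proof of Lemma~\ref{lem:degree-profile-error} essentially verbatim, making only the two changes forced by the present setting. First, since $|\Omega_{\pi,P}| = 1$ the identity is the only permissible relabeling, so $\widehat x$ is compared directly against $x^\star$ rather than against $\omega^\star\circ x^\star$. Second, the constant mistake budget $M$ is replaced everywhere by $c\log n$. Concretely, I would fix the segment $r$, the block index $i$, the label $j$ of the largest community assigned by $\widehat x$ in the parent block $V^r_{p(i;r)}$, and the ground-truth label $r'=x^\star(u)$ of the target vertex $u\in V^r_i$, and condition on $\cV(m^r)$, on $\cC_{p(i;r)}(z)$, and on these last two values. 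From $\cC_{p(i;r)}(z)\subset\cA^r_{p(i;r)}$ one gets $\sum_a z(a,-)\le c\log n$, and from the pigeonhole principle applied to the largest community, $z(j,+)+z(j,-)\ge m_{p(i;r)}/k > \delta\log n/k$ (every block in a segment being $\delta$-occupied); hence $z(j,+)\ge \delta\log n/k - c\log n$, which is strictly positive because condition~\eqref{eq:c_1d} forces $c < c_2/\log(e^\eta/\Phi_t) < \delta/k$, using $\log(e^\eta/\Phi_t)=\eta+\log(1/\Phi_t)>\log(1/\Phi_t)$ and $c_2=\delta\log(1/\Phi_t)/k$ from \eqref{eqn:eta_2}.

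Next, for each fixed label $s\ne r'$ I would bound the probability that $\ell(u,\widehat x,V^r_{p(i;r)};j,s)\ge \ell(u,\widehat x,V^r_{p(i;r)};j,r')$ by a Chernoff bound at the fixed $t\in(0,1)$: writing this as a tail event for $\sum_{v:\widehat x(v)=j}\log\big(p_{js}(y_{uv})/p_{jr'}(y_{uv})\big)$ and expanding the moment generating function into a product over reference vertices $v$, I would split that product according to whether $x^\star(v)=j$ or $x^\star(v)\ne j$. On the first group (there are $z(j,+)$ such $v$, and conditioned on $x^\star(u)=r'$ we have $y_{uv}\sim P_{jr'}$) each factor equals $\phi_t(p_{jr'},p_{js})\le\Phi_t<1$, valid since Assumption~\ref{ass:distinguishable} gives $P_{jr'}\ne P_{js}$; on the second group (there are $z(j,-)$ such $v$) each factor is at most $e^\eta$ by Assumption~\ref{ass:bounded-ratio}. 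This yields the bound $\Phi_t^{z(j,+)}e^{\eta z(j,-)}\le \Phi_t^{\delta\log n/k-c\log n}e^{\eta c\log n}=(e^\eta/\Phi_t)^{c\log n}\,n^{-c_2}=n^{-c_3}$, where $c_3:=c_2-c\log(e^\eta/\Phi_t)>0$ by \eqref{eq:c_1d}. A union bound over the $k-1$ labels $s\ne r'$ gives $\pr_m(\widehat x(u)\ne x^\star(u)\mid \cV(m^r),\cC_{p(i;r)}(z),\,j\text{ largest},\,x^\star(u)=r')\le(k-1)n^{-c_3}$, and since this bound is uniform in $j$ and $r'$, the law of total probability removes that extra conditioning and yields the claim.

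I do not expect a genuine obstacle: the argument is a direct transcription of Lemma~\ref{lem:degree-profile-error}. The only points requiring a moment of care are (i) verifying that the choice of $c$ in \eqref{eq:c_1d} simultaneously makes both the lower bound on $z(j,+)$ and the exponent $c_3$ positive (both handled by $\log(e^\eta/\Phi_t)>\log(1/\Phi_t)$ noted above); (ii) keeping track of the constants so that the final prefactor is $k-1$ rather than $k$; and (iii) confirming that $\cC_{p(i;r)}(z)$ carries the same per-label split of $V^r_{p(i;r)}$ into correctly- and incorrectly-labeled vertices as in the general case, so that the product split is legitimate — which is immediate from the configuration used in the analysis of Algorithm~\ref{alg:almost-exact-1d}.
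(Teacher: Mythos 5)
Your proposal is correct and follows essentially the same route as the paper: the paper's proof of Corollary \ref{cor:degree-profile-error-1d} simply reruns the Chernoff-bound argument of Lemma \ref{lem:degree-profile-error} with the mistake budget $M$ replaced by $c\log n$, obtaining $\Phi_t^{z(j,+)}e^{\eta z(j,-)}\le (e^\eta/\Phi_t)^{c\log n}n^{-c_2}=n^{-c_3}$ and then union-bounding over the $k-1$ alternative labels, exactly as you do. Your additional check that \eqref{eq:c_1d} also forces $c<\delta/k$ (so the exponent bookkeeping is sound) is a detail the paper leaves implicit, but it does not change the argument.
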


\begin{proof}
    Fix an $r$-th segment and its $i$-th block. Let $J_i^r$ be the largest community labeled by $\widehat x$ in $V_{p(i;r)}^r$. For any label $s\neq x^\star(u)$, the similar reasoning in Lemma \ref{lem:degree-profile-error} yields
    \$
    & \pr_{m}\big(\ell(u, \widehat x, V_{p(i;r)}^r; J_i^r,s) \ge \ell(u, \widehat x, V_{p(i;r)}^r; J_i^r,x^\star(u)) \biggiven \cC_{p(i;r)}^r(z)\big) \\
    &\qquad \leq (e^\eta/\Phi_t)^{c\log n} \cdot n^{-\delta\log (1/\Phi_t)/k} \\
    &\qquad = (e^\eta/\Phi_t)^{c\log n} \cdot n^{-c_2} \\
    &\qquad =  n^{-c_2} \exp\left( c\log(\eta/\Phi_t)\log(n) \right) \\
    &\qquad = n^{c\log(\eta/\Phi_t) - c_2} \\
    &\qquad = n^{-c_3},
    \$
    where the last line is due to the choice of $c$ in \eqref{eq:c_1d}. Taking a union bound over all $s\neq x^\star(u)$ yields the desired result.     
\end{proof}

Finally, we show that the almost exact recovery algorithm (Algorithm \ref{alg:almost-exact-1d}) makes at most $c \log n$ errors on occupied blocks. We define 
\begin{equation}
    c_4 = \left(\frac{e\Delta(k-1)}{c}\right)^c, \quad c_5 = c\cdot c_3. \label{eq:c4_c5}
\end{equation}

\begin{corollary}\label{cor:phase1-occupied-1d} Let $d=1$ $|\Omega_{\pi, P}| = 1$. Let $G \sim \text{GHCM}(\lambda, n,\pi, P,d)$ and $\widehat{x} : V \to Z$ be the output of Algorithm \ref{alg:almost-exact-1d} on input $G$. Let $l\leq n^{1-\lambda/2}$. Then, under Assumptions \ref{ass:bounded-ratio} and \ref{ass:distinguishable-strong}, 
   \$\pr_{m}\Big(\bigcap_{r=1}^l \bigcap_{i\in n_r}\cA_{i}^r\Big) \geq \big(1- o(1) \big) \Big(1 - \frac{2c_4^{\log n}n^{1 - c_5\log n}}{\log n} \Big) \geq (1-o(1)) \left(1 - n^{-\frac{c_5}{2}\log n}\right).\$
  \end{corollary}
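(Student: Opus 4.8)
The plan is to follow the structure of the proof of Proposition~\ref{prop:phase1-occupied}, adapted to the (possibly diverging) number of segments. Fix $m = (m^1, \dots, m^l)$ consistent with the event $\cI$, so that every occupied block contains fewer than $\Delta\log n$ vertices, and condition throughout on $\cV(m^1), \dots, \cV(m^l)$, under which the vertex labels are independent and, given the labels, the edges are independent. Since $|\Omega_{\pi,P}| = 1$, there is no relabeling ambiguity, so we compare $\widehat x$ directly with $x^\star$. By Lemma~\ref{lem:initial-block-1d}, the \texttt{Maximum a Posteriori} subroutine produces at most $c\log n$ mistakes on each of the $l \le n^{1-\lambda/2}$ initial blocks simultaneously, with probability $1 - o(1)$; this contributes the leading $(1-o(1))$ factor. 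It remains to control the propagation steps.

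For a non-initial block $V_i^r$, propagated from its parent $V_{p(i;r)}^r$, condition on a configuration $\cC_{p(i;r)}^r(z)$ of the parent block with $\sum_{j}z(j,-) \le c\log n$. Because \texttt{Propagate} labels the vertices of $V_i^r$ using disjoint edge sets, the indicator events $\{\widehat x(u) \neq x^\star(u)\}$, $u \in V_i^r$, are mutually independent given $\cC_{p(i;r)}^r(z)$, and Corollary~\ref{cor:degree-profile-error-1d} bounds each by $(k-1)n^{-c_3}$. Hence the number of mistakes on $V_i^r$ is stochastically dominated by $\xi \sim \text{Bin}(\Delta\log n,\, (k-1)n^{-c_3})$, with $\E[\xi] = \Delta(k-1)n^{-c_3}\log n$. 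Applying the Chernoff upper tail bound (Lemma~\ref{lem:Chernoff-binomial}) with the target $c\log n \gg \E[\xi]$ gives
\[
\pr_m\big((\cA_i^r)^c \mid \cC_{p(i;r)}^r(z)\big) \le \pr(\xi > c\log n) \le \Big(\frac{e\Delta(k-1)}{c}\Big)^{c\log n}\, n^{-c\,c_3\log n} = c_4^{\log n}\, n^{-c_5\log n},
\]
with $c_4, c_5$ as in~\eqref{eq:c4_c5}. This bound is uniform over all valid configurations $z$, and since $\cA_i^r$ is conditionally independent of the corresponding events for earlier blocks given $\cC_{p(i;r)}^r$, we obtain the same bound on $\pr_m\big((\cA_i^r)^c \mid \text{all earlier }\cA\text{'s}\big)$.

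Chaining these conditional bounds over all $\sum_{r=1}^l n_r = 2n/\log n$ propagation steps and combining with the initial-block guarantee yields, via Bernoulli's inequality,
\[
\pr_m\Big(\bigcap_{r=1}^l \bigcap_{i\in[n_r]}\cA_i^r\Big) \ge (1-o(1))\big(1 - c_4^{\log n} n^{-c_5\log n}\big)^{2n/\log n} \ge (1-o(1))\Big(1 - \frac{2c_4^{\log n} n^{1-c_5\log n}}{\log n}\Big).
\]
Finally, since $c_4^{\log n} = n^{\log c_4}$ grows only polynomially in $n$ while $n^{-c_5\log n}$ decays super-polynomially, for $n$ large enough $2c_4^{\log n} n^{1-c_5\log n}/\log n \le n^{-(c_5/2)\log n}$, which gives the second inequality in the statement.

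The main obstacle is bookkeeping rather than a new idea: because the number of segments $l$, and hence of initial blocks, may diverge with $n$, each failure event must be controlled at a super-polynomially small probability so that the union bound over $\Theta(n/\log n)$ blocks still leaves $o(1)$ total. This is precisely why Lemma~\ref{lem:1d_initial_block} delivers an $n^{-\Omega(\log n)}$ bound on the per-initial-block error (not merely $n^{-\Omega(1)}$), and why the Chernoff estimate above is pushed to the $\log^2 n$ exponent scale; the strong distinctness assumption (Assumption~\ref{ass:distinguishable-strong}) is what makes the $n^{-\Omega(\log n)}$ initial-block bound available, since the ground truth need not concentrate around the expected community sizes on every one of the diverging initial blocks.
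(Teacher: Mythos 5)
Your proposal is correct and follows essentially the same route as the paper: conditioning on the block occupancies, applying Corollary \ref{cor:degree-profile-error-1d} with stochastic domination by $\text{Bin}(\Delta\log n,(k-1)n^{-c_3})$ and the Chernoff bound to get the uniform per-block failure probability $c_4^{\log n}n^{-c_5\log n}$, then chaining conditionally over all $2n/\log n$ blocks and combining with the simultaneous initial-block guarantee of Lemma \ref{lem:initial-block-1d} via Bernoulli's inequality. Your explicit justification of the final inequality (that $c_4^{\log n}=n^{\log c_4}$ is only polynomial while $n^{-c_5\log n}$ is super-polynomially small) is a small but welcome addition that the paper leaves implicit.
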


  \begin{proof}
     Consider $V_i^r$ and fix $z \in \mathbb{Z}_+^{2k}$ such that 
  \begin{equation}
  \sum_{j\in Z}\big(z(j,+) + z(j,-)\big) = m^r_{p(i; r)} \text{ and } \sum_{j\in Z}z(j,-) \leq c\log n. \label{eq:z_conditions_1d}
  \end{equation}
  The events that $u \in V_{i}^r$ is mislabeled by $\widehat x$ are mutually independent conditioned on $\cC_{p(i;r)}(z)$. Corollary \ref{cor:degree-profile-error-1d} shows that each individual vertex in $V_i^r$ is misclassified with probability at most $(k-1)n^{-c_3}$, conditioned on $\cC_{p(i; r)}(z)$. It follows that conditioned on $\cC_{p(i; r)}(z)$,
  \[|\{u \in V_{i}^r : \widehat{x}(u) \neq x^\star(u) \}| \stleq \text{Bin}\left(\Delta\log n, (k-1)n^{-c_3} \right) =: \xi.\]

  Let $\mu_{\xi} = \mathbb{E}[\xi] = (k-1)\Delta n^{-c_3}\log n$. Using the Chernoff bound (Lemma \ref{lem:Chernoff-binomial}), we obtain
  \begin{align}
  \pr_m\big(\cA_{i}^{rc} \biggiven \cC_{p(i;r)}(z) \big) &= \pr_m\big(|\{u \in V_{i}^r : \widehat{x}(u) \neq  x^\star(u)\} | > c\log n \biggiven \cC_{p(i;r)}(z)\big) \nonumber \\
      & \le \pr(\xi > c\log n)\nonumber \\
      & = \pr\big(\xi - \mu_\xi> (c\log n/\mu_\xi- 1)\mu_\xi \big)\nonumber\\
  & \le e^{c\log n -\mu_\xi}\left(\frac{\mu_\xi}{c\log n}\right)^{c\log n} \nonumber\\
  & = \left( \frac{e\mu_\xi}{c\log n} \right)^{c\log n} e^{-\mu_\xi} \nonumber\\
  & \leq \left( \frac{e\mu_\xi}{c\log n} \right)^{c\log n} \label{eqn:mu_xi} \\
  & = \left( \frac{e (k-1)\Delta n^{-c_3}\log n }{c\log n} \right)^{c\log n} \nonumber \\
  & =  \left( \frac{e (k-1)\Delta }{c} \right)^{c\log n} n^{-c_3\cdot c \log n} \nonumber \\
  &= c_4^{\log n} n^{-c_5\log n}. \label{eq:mistake-bound-1d}    
  \end{align}
 The inequality in \eqref{eqn:mu_xi} holds because $e^{-\mu_\xi} \leq 1$, and \eqref{eq:mistake-bound-1d} follows from the definitions in \eqref{eq:c4_c5}. Since $\cA_{i}^r$ is independent of $\{\cA_{k}^r: k < i, k \neq p(i; r)\}$ conditioned on $\cC_{p(i; r)}$, \eqref{eq:mistake-bound-1d} implies 
  \begin{equation*}
  \pr_m\Big(\cA_{i}^{rc} \biggiven \cC_{p(i; r)}(z), \bigcap_{k < i: k \neq p(i;r)}\cA_{k}^r \Big) \leq  c_4^{\log n} n^{-c_5\log n}. 
  \end{equation*}
  Furthermore, since \eqref{eq:mistake-bound-1d} is a uniform bound over all $z$ satisfying \eqref{eq:z_conditions_1d}, it follows that
  \begin{equation*}
  \pr_m\Big(\cA_{i}^{rc} \biggiven  \bigcap_{k < j}\cA_{k}^r \Big) \leq c_4^{\log n} n^{-c_5\log n}.
  \end{equation*}
Next, since $l \leq n^{1-\lambda/2}$, combining Theorem \ref{lem:initial-block-1d} with the preceding bound yields 
  \$
  \pr_{m}\Big(\bigcap_{r=1}^l \bigcap_{i\in n_r}\cA_{i}^r\Big) &= \prod_{r=1}^l \pr_{m}\Big(\bigcap_{i\in n_r}\cA_{i}^r\Big) \\
  &= \prod_{r=1}^l \pr_m\big( \cA_{0}^r\big)\cdot\prod_{i=1}^{n_r}\pr_m\big(\cA_{i}^r \biggiven \cA_{0}^r, \cdots, \cA_{i-1}^r \big)  \\
  &\geq \left( \prod_{r=1}^l \pr_m\big( \cA_{0}^r\big) \right) \left( 1 -  c_4^{\log n} n^{-c_5\log n} \right)^{\sum_r n_r} \\
  &\geq \big(1- o(1) \big) \left( 1 -  c_4^{\log n} n^{-c_5\log n} \right)^{\frac{2n}{\log n}} \\
  &\geq \big(1- o(1) \big) \Big(1 - \frac{2c_4^{\log n}n^{1 - c_5\log n}}{\log n} \Big), 
  \$
  where we use the fact that there are at most ${2n}/{\log n}$ blocks in total along with Bernoulli's inequality.
  \end{proof}

\begin{theorem}%\label{thm:phase1-summary-1d}
  Let $d=1$ and $|\Omega_{\pi, P}| = 1$. Let $G\sim\text{GHCM}(\lambda, n, \pi, P, d)$ under Assumptions \ref{ass:bounded-ratio} and \ref{ass:distinguishable-strong}. Let $\widehat x$ be the labeling obtained Algorithm \ref{alg:almost-exact-1d} with 
  $c < c_2/{\log(e^\eta / \Phi_t)} < 1$
  and $\delta>0$ satisfying \eqref{eq:delta-formula} with $\chi=1/2$. 
  Then $\widehat x$ makes at most $c \log n$ mistakes on every occupied block with high probability, i.e.,
  \begin{equation}
  \mathbb{P}\Big(\bigcap_{r} \bigcap_{i \in n_r} \big\{|\{v \in V_i^r\colon \widehat x(v) \neq x^\star(v)\}| \leq c \log n\big\} \Big) = 1-o(1). \label{eq:occupied-block-mistakes-1d}
  \end{equation}
  \end{theorem}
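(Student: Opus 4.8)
The plan is to obtain \eqref{eq:occupied-block-mistakes-1d} as an essentially immediate consequence of Corollary \ref{cor:phase1-occupied-1d}, by removing the conditioning on the block occupancies via the law of total probability, in exact parallel with the proof of Theorem \ref{thm:phase1-summary}. First I would fix the constants in the order the statement prescribes: take $\delta>0$ satisfying \eqref{eq:delta-formula} with $\chi=1/2$, and then $c>0$ small enough to satisfy \eqref{eq:c_1d}; since $e^\eta>1$, condition \eqref{eq:c_1d} forces $c<c_2/\log(1/\Phi_t)=\delta/k<\delta$, so occupied blocks automatically retain enough correctly-labeled vertices in their largest community for the propagation estimate of Corollary \ref{cor:degree-profile-error-1d} to be in force, and all of $c_2,c_3,c_4,c_5$ from \eqref{eqn:eta_2} and \eqref{eq:c4_c5} are then determined with $c_5=c\,c_3>0$.

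Next I would introduce the two ``good'' occupancy events. Let $\cI$ be the event that each of the $2n/\log n$ blocks of volume $\log n/2$ contains fewer than $\Delta\log n$ vertices; applying Lemma \ref{lem:B-upper-bound} with $\chi=1/2$ gives $\pr(\cI)=1-o(1)$. Let $\mathcal{M}$ be the event that the number of unoccupied blocks in $\cS_{1,n}$ is less than $n^{1-\lambda/2}$; Lemma \ref{lem:1d_N} gives $\pr(\mathcal{M})=1-o(1)$. On $\mathcal{M}$ the number of segments satisfies $l\le n^{1-\lambda/2}$ for $n$ large, since distinct segments on the circle are separated by unoccupied blocks. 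Hence for every occupancy vector $m=(m^1,\dots,m^l)$ with $\cV(m)\subset\cI\cap\mathcal{M}$, the hypotheses of Corollary \ref{cor:phase1-occupied-1d} hold (every block occupancy below $\Delta\log n$, and $l\le n^{1-\lambda/2}$), so, writing $\cA_i^r=\{|\{v\in V_i^r:\widehat x(v)\neq x^\star(v)\}|\le c\log n\}$,
\[
\pr_m\Big(\bigcap_{r=1}^{l}\bigcap_{i\in n_r}\cA_i^r\Big)\ \ge\ \big(1-o(1)\big)\Big(1-n^{-\frac{c_5}{2}\log n}\Big),
\]
and this lower bound is \emph{uniform} over all such $m$: the Chernoff step in Corollary \ref{cor:phase1-occupied-1d} depends on $m$ only through the ceiling $\Delta\log n$, and the initial-block factors $\pr_m(\cA_0^r)=1-o(1)$ come from Lemma \ref{lem:initial-block-1d}, which does not depend on $m$.

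Finally I would sum over $m$. Conditioned on $\cV(m)$ the community labels are independent, and given the labels the observations are independent; this is what licenses the conditional-independence arguments inside Corollary \ref{cor:phase1-occupied-1d}. The event that every occupied block in every segment carries at most $c\log n$ mistakes contains $\bigcup_{m:\ \cV(m)\subset\cI\cap\mathcal{M}}\big(\cV(m)\cap\bigcap_{r}\bigcap_{i\in n_r}\cA_i^r\big)$, so
\[
\pr\Big(\bigcap_{r}\bigcap_{i\in n_r}\big\{|\{v\in V_i^r:\widehat x(v)\neq x^\star(v)\}|\le c\log n\big\}\Big)\ \ge\ \big(1-o(1)\big)\Big(1-n^{-\frac{c_5}{2}\log n}\Big)\,\pr(\cI\cap\mathcal{M})\ =\ 1-o(1),
\]
which is \eqref{eq:occupied-block-mistakes-1d}; unoccupied blocks trivially carry at most $\delta\log n$ vertices and require no argument.

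The substantive content has already been discharged in the preceding lemmas, so what remains is bookkeeping, and the only point requiring genuine care is the same one flagged for Theorem \ref{thm:phase1-summary}: checking that conditioning on $\cV(m)$ preserves the independence structure used in Corollary \ref{cor:phase1-occupied-1d} and that the bound there is truly $m$-uniform, so that the total-probability sum keeps the full factor $\pr(\cI\cap\mathcal{M})$ rather than degrading it. A secondary check is that the possibly diverging number of initial blocks does not hurt: this was handled in Lemma \ref{lem:initial-block-1d} by invoking the strong distinctness Assumption \ref{ass:distinguishable-strong} in place of the community-size concentration used in the connected case, and it is precisely that lemma which supplies the $\prod_r \pr_m(\cA_0^r)=1-o(1)$ factor above.
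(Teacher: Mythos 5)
Your proposal is correct and follows essentially the same route as the paper: apply the uniform conditional bound of Corollary \ref{cor:phase1-occupied-1d} and remove the conditioning on block occupancies by the law of total probability, restricting to the high-probability event that there are at most $n^{1-\lambda/2}$ segments (the paper decomposes over the number of segments $l$ and sums $\pr(\cI_l)$, which is the same bookkeeping as your event $\mathcal{M}$). Your explicit inclusion of the max-occupancy event from Lemma \ref{lem:B-upper-bound}, which justifies the stochastic domination by $\text{Bin}(\Delta\log n,\cdot)$ inside Corollary \ref{cor:phase1-occupied-1d}, is a point the paper leaves implicit and is a welcome extra bit of care rather than a deviation.
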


  \begin{proof} 
  Fix $l \leq n^{1-\lambda/2}$. Define $\mathcal{I}_l$ as the event that GHCM consists of $l$ segments, which occurs with high probability due to Lemma \ref{lem:1d_N}. Let $m^r$ be the occupancies of the $r$-th connected component for $r \in [l]$, and condition on a sequence $m^1, m^2, \dots m^l$ which is consistent with $\mathcal{I}_l$. Corollary \ref{cor:phase1-occupied-1d} yields
  \$\pr_m\Big(\bigcap_{r} \bigcap_{i \in n_r}\{ |\{v \in V_i^r \colon \widehat x(v) \neq x^\star(v)\} | \leq c\log n \}\Big) \geq  (1-o(1)) \left(1 - n^{-\frac{c_5}{2}\log n}\right).\$ 
  Observe the above bound is uniform over all $m^r$ such that $\bigcap_r \cV(m^r) \subset \cI_l$ and there are at most $n / \log n$ possible segments. Therefore, we have 
  \begin{align*}
      &\pr\Big(\bigcap_{r} \bigcap_{i \in n_r} \big\{|v\in V_i\colon \widehat x(v) \neq x^\star(v)|\le c\log n\big\}\Big) \nonumber \\
  &\quad =  \sum_{l=1}^{n / \log n} \sum_{m\colon \bigcap_r \cV(m^r) \subset \cI_l} \pr_m\Big(\bigcap_{r} \bigcap_{i \in n_r} \big\{|v\in V_i\colon \widehat x(v) \neq x^\star(v)|\le c\log n\big\}\Big) \cdot \pr\big(\bigcap_r \cV(m^r)\big) \nonumber \\
  &\quad \ge  \sum_{l=1}^{n^{1-\lambda / 2}} \sum_{m\colon \bigcap_r \cV(m^r) \subset \cI_l} \pr_m\Big(\bigcap_{r} \bigcap_{i \in n_r} \big\{|v\in V_i\colon \widehat x(v) \neq x^\star(v)|\le c\log n\big\}\Big) \cdot \pr\big(\bigcap_r \cV(m^r)\big) \nonumber \\
  &\quad \ge  (1-o(1)) \left(1 - n^{-\frac{c_5}{2}\log n}\right) \cdot   \sum_{l=1}^{n^{1-\lambda/2}} \pr\big(\cI_l \big) \\
  &\quad = 1-o(1),
  \end{align*}
  where the last two lines are due to Corollary \ref{cor:phase1-occupied-1d} and $\sum_{l=1}^{n^{1-\lambda/2}} \pr\big(\cI_l \big)=1-o(1)$ via Lemma \ref{lem:1d_N}. Thus, we have proven \eqref{eq:occupied-block-mistakes-1d}. 
  \end{proof}

\section{Proof of Exact Recovery}
\label{sec:exact_recovery}
%\julia{There should be a remark about how there are parameters which fail to satisfy distinctness, and which have a barrier in the propagation scheme, yet are amenable to the genie. This can be summarized as "propagation looks backward but the genie looks all around." (Edit: we discussed that such an analysis belongs in Section \ref{sec:algorithm}.)}

In this section, we prove that Algorithm \ref{alg:almost-exact} achieves exact recovery under the assumptions of Theorem \ref{theorem:exact-recovery}. Recall that $\Lambda(t) = \log \E[\exp(tX)]$ is the CGF of a random variable $X$. The following lemma characterizes the sum of a random number of random variables. 
\begin{lemma}\label{lem:cumulant-generating}
  Let $Y$ be a random variable, $X_1, \cdots, X_Y$ be a number of $Y$ i.i.d. copies of a random variable $X$, independent of $Y$,  and $S = \sum_{i=1}^Y X_i$. Then it holds that $\Lambda_S(t) = \Lambda_Y(\Lambda_X(t))$.
\end{lemma}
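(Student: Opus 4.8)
The plan is to prove the identity by conditioning on $Y$ and exploiting the independence structure. Write $M_X(t) = \mathbb{E}[\exp(tX)] = \exp(\Lambda_X(t))$ for the moment-generating function of $X$, and similarly $M_Y$, $M_S$. First I would condition on the value of $Y$: since $X_1, \dots, X_Y$ are i.i.d.\ copies of $X$ and are independent of $Y$, for each fixed integer $y \ge 0$ we have
\[
\mathbb{E}\bigl[\exp(tS) \bigm| Y = y\bigr] = \mathbb{E}\Bigl[\exp\Bigl(t\sum_{i=1}^{y} X_i\Bigr)\Bigr] = \prod_{i=1}^{y} \mathbb{E}[\exp(tX_i)] = M_X(t)^{y},
\]
where the empty product (case $y=0$, giving $S = 0$) is interpreted as $1$, consistent with $\exp(t \cdot 0) = 1$.

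Next I would take the outer expectation over $Y$ via the tower property:
\[
M_S(t) = \mathbb{E}\bigl[\mathbb{E}[\exp(tS) \mid Y]\bigr] = \mathbb{E}\bigl[M_X(t)^{Y}\bigr] = \mathbb{E}\bigl[\exp\bigl(Y \log M_X(t)\bigr)\bigr] = \mathbb{E}\bigl[\exp\bigl(Y \Lambda_X(t)\bigr)\bigr] = M_Y\bigl(\Lambda_X(t)\bigr).
\]
Taking logarithms of both sides gives $\Lambda_S(t) = \log M_Y(\Lambda_X(t)) = \Lambda_Y(\Lambda_X(t))$, which is the claim. (One should note $M_X(t) > 0$ so that $\log M_X(t) = \Lambda_X(t)$ is well-defined; this is immediate since $\exp(tX) > 0$.)

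There is essentially no serious obstacle here; the only point requiring a word of care is the domain of validity, namely that the identity holds for every $t$ such that $\Lambda_X(t)$ is finite and $\Lambda_Y$ is finite at the point $\Lambda_X(t)$ — in the application of this lemma later in the paper, $X$ is a bounded log-likelihood increment (so $\Lambda_X$ is finite everywhere) and $Y$ is a Poisson count (so $\Lambda_Y(s) = \mu(e^{s}-1)$ is finite for all real $s$), hence the composition is unconditionally valid in that setting. I would state the lemma with the implicit understanding that $t$ lies in the relevant domain, and the computation above carries through verbatim.
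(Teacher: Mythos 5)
Your proof is correct: conditioning on $Y$, using the i.i.d.\ structure and independence from $Y$, and then applying the tower property is exactly the standard argument for the compound-sum identity $\Lambda_S(t) = \Lambda_Y(\Lambda_X(t))$. The paper states this lemma without proof, treating it as a known fact, and your argument (including the remark on the domain of validity, which is indeed unproblematic in the paper's application where $X$ has a bounded log-likelihood ratio and $Y$ is Poisson) is precisely the proof one would supply.
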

We first show the following lemma.
\begin{lemma}\label{lem:exact-bnd}
    If $\lambda\nu_d \min_{i\neq j} D_+(\theta_i,\theta_j)>1$, then for a fixed $0<\epsilon \le (\lambda \nu_d \min_{i\neq j} D_+(\theta_i, \theta_j) - 1)/2$, given any $u$ with $x^\star(u)=i$ and $j\neq i$, it holds that
    \$
    \pr\big(\ell_i(u, x^\star)-\ell_j(u, x^\star) \le \epsilon\log n \given x^\star(u)=i \big) = n^{-(1 + \Omega(1))}.
    \$
\end{lemma}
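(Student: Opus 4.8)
The plan is to bound the lower-tail probability via a Chernoff argument, treating $\ell_i(u,x^\star)-\ell_j(u,x^\star)$ as a sum of a Poisson-random number of i.i.d.\ terms. Conditioned on $x^\star(u)=i$, let $N(u)=|\mathcal N(u)|$ be the number of visible neighbors of $u$; then $N(u)\sim\text{Poisson}(\lambda\nu_d\log n)$, independent of the labels and observations. For each visible neighbor $v$, the contribution to $\ell_i(u,x^\star)-\ell_j(u,x^\star)$ is $\log\big(p_{i,x^\star(v)}(y_{uv})/p_{j,x^\star(v)}(y_{uv})\big)$, and since the neighbors' labels are i.i.d.\ $\sim\pi$ and, conditioned on $x^\star(v)=r$, $y_{uv}\sim P_{ir}$, these contributions are i.i.d.\ copies of a random variable $X^{(ij)}$ whose moment generating function at $t$ is $\E[e^{-tX^{(ij)}}]=\sum_{r}\pi_r\,\phi_t(p_{jr},p_{ir})$ (note the sign flip: we want the \emph{lower} tail of the difference, equivalently an upper tail of its negative). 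So define $S=\ell_j(u,x^\star)-\ell_i(u,x^\star)=\sum_{v\sim u}(-X^{(ij)}_v)$; we must show $\pr(S\ge -\epsilon\log n)=n^{-(1+\Omega(1))}$.

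First I would apply Markov's inequality to $e^{tS}$ for $t\in[0,1]$: $\pr(S\ge-\epsilon\log n)\le e^{t\epsilon\log n}\,\E[e^{tS}]$. By Lemma~\ref{lem:cumulant-generating} (compounding a Poisson number of i.i.d.\ terms), $\E[e^{tS}]=\exp\big(\lambda\nu_d\log n\,(\psi(t)-1)\big)$ where $\psi(t)=\E[e^{-tX^{(ij)}}]=\sum_r\pi_r\phi_t(p_{jr},p_{ir})$. Choosing $t=t_{ij}$ as in \eqref{eq:t_ij}, which minimizes $\psi$, gives $\psi(t_{ij})=1-D_+(\theta_i,\theta_j)$ by \eqref{eq:CH-phi}. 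Hence
\[
\pr(S\ge-\epsilon\log n)\le n^{t_{ij}\epsilon}\cdot n^{-\lambda\nu_d D_+(\theta_i,\theta_j)} = n^{-(\lambda\nu_d D_+(\theta_i,\theta_j)-t_{ij}\epsilon)}.
\]
Since $t_{ij}\le 1$ and $\epsilon\le(\lambda\nu_d\min_{i\neq j}D_+(\theta_i,\theta_j)-1)/2$, the exponent is at least $\lambda\nu_d D_+(\theta_i,\theta_j)-\epsilon\ge \lambda\nu_d\min_{i\neq j}D_+(\theta_i,\theta_j)-\epsilon\ge 1+\epsilon$, so the probability is $n^{-(1+\epsilon)}=n^{-(1+\Omega(1))}$, as desired. (A minor care point: the statement conditions on $x^\star(u)=i$ throughout, which is exactly the conditioning under which the $X^{(ij)}_v$ have the stated distribution, so no issue there.)

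The one genuinely delicate point is the validity of the MGF computation: we need $\phi_{t_{ij}}(p_{jr},p_{ir})$ to be finite for all $r$, and more basically $\psi(t)<\infty$ on a neighborhood of $t_{ij}$, so that the Chernoff bound is not vacuous. Under Assumption~\ref{ass:bounded-ratio} the log-likelihood ratio is uniformly bounded above by $\eta$, which makes $X^{(ij)}=-\log(p_{jr}/p_{ir})$ bounded below, hence $e^{-tX^{(ij)}}=e^{t\log(p_{jr}/p_{ir})\cdot(-1)}$... more carefully, $-X^{(ij)}=\log(p_{jr}(y)/p_{ir}(y))<\eta$, so $e^{tS}$ has bounded increments and $\psi(t)$ is finite for all $t\ge 0$; this is where Assumption~\ref{ass:bounded-ratio} is used, and the Gaussian variants will need the separate treatment promised in Appendix~\ref{sec:gaussian_proofs}. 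Everything else—the Poisson compounding, the identification of $t_{ij}$ as the minimizer, and the final arithmetic on exponents—is routine given \eqref{eq:t_ij}, \eqref{eq:CH-phi}, and Lemma~\ref{lem:cumulant-generating}.
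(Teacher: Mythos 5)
Your proposal is correct and follows essentially the same route as the paper's proof: a Chernoff bound on $S=\ell_j(u,x^\star)-\ell_i(u,x^\star)$, computation of $\E[e^{tS}]$ as a compound-Poisson moment generating function via Lemma~\ref{lem:cumulant-generating} (the paper thins the Poisson process by label class while you keep the label mixture, which is equivalent), evaluation at $t_{ij}$ to identify the exponent with $\lambda\nu_d D_+(\theta_i,\theta_j)$ via \eqref{eq:CH-phi}, and the same final arithmetic with $t_{ij}\le 1$ and the choice of $\epsilon$. One small correction to your closing remark: Assumption~\ref{ass:bounded-ratio} is not needed for finiteness here, since for $t\in[0,1]$ H\"older's inequality gives $\phi_t(p,q)=\int p^t q^{1-t}\le 1$ automatically, which is precisely why the paper states and uses this lemma without any boundedness assumption (including in the Gaussian setting).
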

\begin{proof}
    Fix any $u$ with $x^\star(u)=i$. We consider the random variable $S$ defined as follows,
    \$
    S & := \ell_j(u, x^\star)-\ell_i(u, x^\star) = \sum_{v\in V\setminus\{u\}, v\sim u} \log \Big(\frac{p_{j,x^\star(v)} (Y_{uv})}{p_{i,x^\star(v)} (Y_{uv})}\Big) \\
    & = \sum_{r\in Z}\Big[\sum_{v\in V\setminus\{u\}, v\sim u, x^\star(v)=r} \log \Big(\frac{p_{jr} (Y_{uv})}{p_{ir} (Y_{uv})}\Big)\Big],
    \$
    where $Y_{uv}\sim P_{ir}$ for any $v\in V\setminus\{u\}$ such that $v\sim u$ and $x^\star(v)=r$. Thus, the Chernoff bound gives that %\julia{let's use inf} \julia{please reorganize so that you start with this up until the MGF $\mathbb{E}[e^{tS}]$, then do the MGF calculations}
    \#\label{eq:chernoff-S}
    \pr\big(\ell_i(u, x^\star)-\ell_j(u, x^\star) \le \epsilon\log n \given x^\star(u)=i \big) &= \pr(S\ge -\epsilon \log n) \notag\\
    & \le \inf_{t>0}\E[\exp(tS)]\cdot \exp(t\epsilon \log n).
    \#
    We now compute the MGF $\E[\exp(tS)]$.  
    We define $\zeta_r \sim \text{Poisson}(\lambda \nu_d \pi_r \log n )$ for $r\in Z$, $\xi_{r,s}\sim P_{ir}$ for $r\in Z$ and $s\in[\zeta_r]$, and 
    \$
    \widehat S:= \sum_{r\in Z}\Big[\sum_{s=1}^{\zeta_r} \log \Big(\frac{p_{jr} (\xi_{r,s})}{p_{ir} (\xi_{r,s})}\Big)\Big].
    \$
    Since $|\{v\colon v\in V\setminus\{u\}, v\sim u, x^\star(v)=r\}|\sim \text{Poisson}(\lambda \nu_d \pi_r \log n )$ in the definition of $S$, $S$ and $\widehat S$ share identical distributions.   Let $\Lambda_r(t)$ be the CGF of the random variable $\sum_{s=1}^{\zeta_r} \log ({p_{jr} (\xi_{r,s})}/{p_{ir} (\xi_{r,s})})$. Then for any $t>0$, we have
    \#\label{eq:mgf-S}
    \E[\exp(tS)] &= \E[\exp(t\widehat S)] = \prod_{r\in Z} \exp(\Lambda_r(t)).
    \#
    We now compute $\Lambda_r(t)$. For $\zeta_r \sim \text{Poisson}(\lambda \nu_d \pi_r \log n )$, we recall that $\Lambda_{\zeta_r}(t) = \lambda \nu_d \pi_r \log n (e^t - 1)$. Moreover, we define $\tau_r := \log (p_{jr} (\xi)/p_{ir} (\xi))$ with $\xi\sim P_{ir}$ and recall the definition in \eqref{eq:define-phi-t} that 
    \$
    \E[\exp(t\tau_r)]  = \phi_t(p_{jr}, p_{ir}).
    \$ 
    Thus, by Lemma \ref{lem:cumulant-generating}, we have 
    \$
    \Lambda_r(t) &= \Lambda_{\zeta_r}(\Lambda_{\tau_r}(t)) \\
    & = \lambda \nu_d \pi_r \log n \cdot \big[\exp(\log \E[\exp(t\tau_r)])- 1\big] \\
    & =  \lambda \nu_d \pi_r \log n \cdot \big[\E[\exp(t\tau_r)]- 1\big] \\
    & = -\lambda \nu_d \pi_r\big[1-\phi_t(p_{jr}, p_{ir})\big] \cdot\log n.
    \$
Substituting the above equation into \eqref{eq:mgf-S}, we obtain the MGF of $S$ as follows,
    \$
    \E[\exp(tS)] = \prod_{r\in Z} \exp(\Lambda_r(t)) = n^{-\lambda \nu_d [1-\sum_{r\in Z}\pi_r\phi_t(p_{jr}, p_{ir})]}.
    \$
    By substituting the above into \eqref{eq:chernoff-S}, we have for $t \in (0,1]$,
    \$
    \pr\big(\ell_i(u, x^\star)-\ell_j(u, x^\star) \le \epsilon\log n \given x^\star(u)=i \big)
    & \le \inf_{t>0}\E[\exp(tS)]\cdot \exp(t\epsilon \log n) \\
    &  \le \inf_{t\in(0,1]}  n^{-\lambda \nu_d [1-\sum_{r\in Z}\pi_r\phi_t(p_{jr}, p_{ir})] + t\epsilon} \\
    &  =  n^{-\lambda \nu_d [1-\inf_{t\in(0,1]}\sum_{r\in Z}\pi_r\phi_t(p_{jr}, p_{ir})] + \epsilon} \\
    & \le  n^{-\lambda \nu_d D_+(\theta_i, \theta_j) + \epsilon} = n^{-(1 + \Omega(1))}, 
    \$
where the second equality follows from \eqref{eq:CH-phi} and the last one holds when taking $\epsilon = (\lambda \nu_d \min_{i\neq j} D_+(\theta_i, \theta_j) - 1)/2$.
\end{proof}

\begin{theorem}\label{thm:exact-reovery-result}
    Under Assumptions \ref{ass:bounded-ratio} and \ref{ass:distinguishable}, if $\lambda\nu_d \min_{i\neq j} D_+(\theta_i, \theta_j)>1$, then $\widetilde x$, the output of Algorithm \ref{alg:almost-exact}, achieves exact recovery.
\end{theorem}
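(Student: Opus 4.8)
The plan is to derive exact recovery of $\widetilde x$ directly from two ingredients already in place: the error-dispersion guarantee of Phase I (the bound \eqref{eq:dispersed-errors} in Theorem \ref{thm:phase1-summary}) and the genie-estimator tail bound of Lemma \ref{lem:exact-bnd}. Rather than split the graph, I will analyze \texttt{Refine} in a worst-case manner over all preliminary labelings that could have been produced by Phase I. Throughout, identify $x^\star$ with $\omega^\star\circ x^\star$, where $\omega^\star\in\Omega_{\pi,P}$ is the permissible relabeling produced by the initial block; this is legitimate because $A(\widetilde x,x^\star)$ is defined modulo $\Omega_{\pi,P}$. Set $\epsilon=(\lambda\nu_d\min_{i\neq j}D_+(\theta_i,\theta_j)-1)/2>0$, let $\eta$ be the constant of Assumption \ref{ass:bounded-ratio}, and choose $\eta_1>0$ small enough that $2\eta\eta_1<\epsilon$. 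Invoke Theorem \ref{thm:phase1-summary} with this $\eta_1$ (and the attendant $\chi,\delta$, all simultaneously satisfiable under the hypotheses): with probability $1-o(1)$ the event
\[
\cD=\bigcap_{u\in V}\bigl\{\,|\{v\in\cN(u):\widehat x(v)\neq x^\star(v)\}|\le\eta_1\log n\,\bigr\}
\]
holds, where vertices carrying the label $*$ are counted as mistakes (consistent with \eqref{eq:dispersed-errors} and with the choice $\delta<\eta_1/\kappa$).

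\textbf{Deterministic reduction to the genie estimator.} Fix $u$ with $x^\star(u)=i$ and condition on $\cD$. The labeling effectively used by \texttt{Refine} at $u$ agrees with $x^\star$ on $\cN(u)$ except on a set $T\subseteq\cN(u)$ with $|T|\le\eta_1\log n$ (omitting a $*$-vertex or assigning it an arbitrary label are both covered). For each such vertex $v$, the net effect on $\ell_i(u,\cdot)-\ell_j(u,\cdot)$ is a difference of log-likelihood ratios of the form considered in Assumption \ref{ass:bounded-ratio}, hence bounded in absolute value by $2\eta$ (after noting that Assumption \ref{ass:bounded-ratio}, applied with the indices swapped, also yields the lower bound $-\eta$ on each log-ratio). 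Summing,
\[
\bigl|\,[\ell_i(u,\widehat x)-\ell_j(u,\widehat x)]-[\ell_i(u,x^\star)-\ell_j(u,x^\star)]\,\bigr|\le 2\eta|T|\le 2\eta\eta_1\log n<\epsilon\log n .
\]
Therefore, on $\cD$, the event $\{\widetilde x(u)\neq i\}$ forces $\ell_i(u,\widehat x)-\ell_j(u,\widehat x)\le 0$ for some $j\neq i$, and hence $\ell_i(u,x^\star)-\ell_j(u,x^\star)\le\epsilon\log n$ --- an event that no longer refers to $\widehat x$ at all, and depends only on the edge observations incident to $u$.

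\textbf{Union bound over vertices.} By Lemma \ref{lem:exact-bnd} and a union bound over $j\neq i$,
\[
\pr\Bigl(\exists\,j\neq i:\ \ell_i(u,x^\star)-\ell_j(u,x^\star)\le\epsilon\log n \;\Bigm|\; x^\star(u)=i\Bigr)\le (k-1)\,n^{-(1+\Omega(1))},
\]
so, averaging over the label of $u$, $\pr(\{\widetilde x(u)\neq x^\star(u)\}\cap\cD)=O(n^{-(1+\Omega(1))})$ uniformly over $u$. The conditional law of the visible neighborhood of a typical point of the Poisson process is exactly the Palm/Poisson profile used in Lemma \ref{lem:exact-bnd}, so the Mecke (Campbell) formula --- used here just as in the impossibility analysis of Section \ref{sec:impossibility} --- gives
\[
\E\Bigl[\,\bigl|\{u\in V:\widetilde x(u)\neq x^\star(u)\}\cap\cD\bigr|\,\Bigr]\le \lambda n\cdot O\bigl(n^{-(1+\Omega(1))}\bigr)=o(1).
\]
By Markov's inequality, $\pr\bigl(\cD\cap\{\exists\,u\in V:\widetilde x(u)\neq x^\star(u)\}\bigr)=o(1)$, and combining with $\pr(\cD)=1-o(1)$ yields $\pr(\widetilde x=x^\star\text{ on all of }V)=1-o(1)$, i.e.\ $A(\widetilde x,x^\star)=1$ with high probability once the identification with $\omega^\star\circ x^\star$ is undone. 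This is exact recovery.

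\textbf{Main obstacle.} The delicate point is the statistical dependence between $\widehat x$ --- a function of the \emph{entire} graph --- and the observations reused by \texttt{Refine}; conditioning on the success of Phase I destroys the independence of edges needed to apply Lemma \ref{lem:exact-bnd}. The worst-case reduction above is precisely what circumvents this: intersecting with $\cD$ (rather than conditioning on it) and using Assumption \ref{ass:bounded-ratio} to absorb the perturbation into an $\epsilon\log n$ slack converts ``\texttt{Refine} errs at $u$ while Phase I is good near $u$'' into a pure statement about the genie estimator at $u$. The price is the slack $2\eta\eta_1\log n$, which is affordable exactly because $D_+(\theta_i,\theta_j)>1/(\lambda\nu_d)$ leaves room $\epsilon$ and because $\eta_1$ (hence the Phase I error density) can be taken arbitrarily small. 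The remaining points --- the treatment of $*$-labels and the Poissonization of the union bound over the random vertex set --- are routine given the machinery already developed.
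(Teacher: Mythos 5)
Your proposal is correct and follows essentially the same route as the paper: intersect with the Phase I error-dispersion event rather than condition on it, use Assumption \ref{ass:bounded-ratio} to absorb the at most $\eta_1\log n$ neighborhood mistakes into a $2\eta\eta_1\log n<\epsilon\log n$ slack on $\ell_i-\ell_j$, and then apply the genie tail bound of Lemma \ref{lem:exact-bnd} with a union bound over vertices. The only cosmetic difference is how the random vertex count is handled---you invoke the Mecke/Campbell formula and Markov's inequality, while the paper couples the Poisson process to at most $cn$ uniformly placed points on the event $\{|V|<cn\}$ and union bounds over those---which does not change the argument in any essential way.
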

\begin{proof}
We first fix a constant $c>\lambda$ and let $\cE_0= \{|V| < c n\}$. Since $|V|\sim\text{Poisson}(\lambda n)$, the Chernoff bound in Lemma \ref{lem:Chernoff-poisson} gives that
\$
  \pr(\cE_0^c) = \pr(|V|> c n) \le \exp\big(-\frac{(c-\lambda)^2n}{2c}\big) = o(1).
\$
For $\beta > 0$ to be determined, let $\cE_1$ be the event that $\widehat x$ makes at most $\beta \log n$ mistakes in the neighborhood for all vertices (Phase I succeeds); that is,
\$
\cE_1=\bigcap_{u\in V}\big\{|v\in\cN(u)\colon \widehat x(v)\neq \omega^\star \circ x^\star(v)|\le \beta\log n\big\}.
\$
Theorem \ref{thm:phase1-summary} ensures that $\pr(\cE_1)=1-o(1)$. Let $\cE_2'$ be the event that Algorithm \ref{alg:almost-exact} achieves exact recovery and $\cE_2$ be the event that all vertices are labeled correctly relative to $\omega^\star$; that is,
\[
  \mathcal{E}_2' = \bigcup_{\omega \in \Omega_{\pi, P}} \bigcap_{u \in V}\{\widetilde x(u) = \omega \circ x^\star(u)\},\quad\mathcal{E}_2 = \bigcap_{u \in V}\{\widetilde x(u) = \omega^\star \circ x^\star(u)\}.
\]
Then we have $\pr(\cE_2')\ge\pr(\cE_2)$. Since $\mathbb{P}(\mathcal{E}_0), \mathbb{P}(\mathcal{E}_1) = 1-o(1)$, it follows that
\begin{equation}
  \pr(\cE_2^c) \leq \pr(\cE_2^c \cap \cE_1 \cap \cE_0) + \pr(\cE_1^c) + \pr(\cE_0^c) = \pr(\cE_2^c \cap \cE_1 \cap \cE_0) + o(1). \label{eq:phase-II-failure}
\end{equation}
  
Note that we analyze $\pr(\cE_2^c \cap \cE_1 \cap \cE_0)$ rather than $\pr(\cE_2^c \given \cE_1, \cE_0)$, in order to preserve the data distribution. Next, we would like to show that the probability of misclassifying a vertex $v$ is $o(1/n)$, and conclude that the probability of misclassifying \emph{any} vertex is $o(1)$. To formalize such an argument, sample $N \sim \text{Poisson}(\lambda n)$, and generate $\max\{N, cn\}$ points in the region $\mathcal{S}_{d,n}$ uniformly at random. Note that on the event $\mathcal{E}_0$, we have $\max\{N, cn\} = cn$. In this way, the first $N$ points form a Poisson point process with intensity $\lambda$. We can simulate Algorithm \ref{alg:almost-exact} on the first $N$ points. To bound the failure probability of Phase II, we can assume that any $v\in \{N+1, \dots, cn\}$ must also be classified (by thresholding $\ell_i(v,x)$, computed only using edge/non-edge observations between $v$ and $u\in [N]$)).
  For $v \in [cn]$, let
  \[\mathcal{E}_2(v) = \{\widetilde{x}(v) = \omega^\star\circ x^\star(v)\}.\]
  Then
  \[\mathcal{E}_2^c \cap \mathcal{E}_1 \cap \mathcal{E}_0 \subseteq \bigcup_{v=1}^{cn} \{\mathcal{E}_2(v)^c \cap \mathcal{E}_1 \cap \mathcal{E}_0\} \subseteq \bigcup_{v=1}^{cn} \{\mathcal{E}_2(v)^c \cap \mathcal{E}_1\},\]
  so that a union bound yields
  \begin{equation}\mathbb{P}\left(\mathcal{E}_2^c \cap \mathcal{E}_1 \cap \mathcal{E}_0 \right) \leq \sum_{v=1}^{cn} \mathbb{P}\left(\mathcal{E}_2(v)^c \cap \mathcal{E}_1 \right). \label{eq:union-bound}
  \end{equation}
  Fix $v \in [cn]$. In order to bound $\mathbb{P}\left(\mathcal{E}_2(v)^c \cap \mathcal{E}_1 \right)$, we classify $v$ according to running the \texttt{Refine} algorithm with respect to edge/non-edge observations between $v$ and $u \in [N]$. Analyzing $\mathcal{E}_2(v)^c \cap \mathcal{E}_1$ now reduces to analyzing robust likelihood testing.
  Let $W(v) = \{x : \mathcal{N}(v) \to Z\cup\{*\}\}$ and $d_H$ be the Hamming distance. %Let $\omega^{-1}\colon Z\to Z$ be the inverse mapping of $\omega^\star$. 
  We define the set of all estimators that differ from $\omega^\star\circ x^\star$ on at most $\beta \log n$ vertices in $\cN(v)$ as
  \$
  W'(v; \beta) &= \{x\in W(v)\colon d_H(x(\cdot), \omega^\star\circ x^\star(\cdot)) \le \beta\log n \}. %\\&= \{x\in W(v)\colon d_H(\omega^{-1}\circ x(\cdot),  x^\star(\cdot)) \le \eta\log n \}.
  \$ Then $\cE_1 = \{\widehat x\in W'(v; \beta)\}$. Under Assumption \ref{ass:bounded-ratio}, for any $i\in Z$ and any $x\in W'(v; \beta)$, we have \$
\big|\ell_i(u, \omega^\star\circ x^\star) - \ell_i(u, x) \big|= \Big|\sum_{v\in V\setminus\{u\}, v\sim u}\log \frac{p_{i, \omega^\star\circ x^\star(v)} (y_{u v})}{p_{i, x(v)} (y_{u v})}\Big| \le \eta\beta\log n.
\$
We also note that $\ell_{\omega(i)}(u, \omega\circ x) = \ell_i(u, x)$ for the definition in \eqref{eq:ell_global} for any $\omega\in \Omega_{\pi, P}$, $i\in Z$, $u\in V$, and $x\colon V\to Z$.
Let $\cE_v$ be the event that there exists $x \in W'(v; \beta)$ such that the likelihood testing with respect to $x$ fails on vertex $v$; that is,
\$
  \mathcal{E}_v &= \bigcup_{i\in Z} \Big[\big\{\omega^\star\circ x^\star(v) = i \big\} \bigcap \Big(\bigcup_{x \in W'(v; \beta)} \bigcup_{j\neq i} \big\{\ell_i (v,  x) \leq \ell_j(v, x) \big\}\Big)\Big].
  \$
  It follows that \begin{equation}\mathbb{P}\left(\mathcal{E}_2(v)^c \cap \mathcal{E}_1\right) \leq \mathbb{P}(\cE_v). \label{eq:v-failure}
  \end{equation}
  %\julia{Somehow $\mathcal{E}_v$ seems clearer than it did in the SODA version. I don't think the below paragraph is needed.} We provide some insights into the definition of $\cE_v$. Under the permissible relabeling $\omega^\star$, we recall that $\omega^\star\circ x_{\textsf{genie}}(u) = \omega^\star(\argmax_{i\in Z}\{\ell_i(u, x^\star)\}) = \omega^\star(\argmax_{i\in Z}\{\ell_{\omega^\star(i)}(u, \omega^\star\circ x^\star)\}) = \argmax_{r\in Z}\{\ell_{r}(u, \omega^\star\circ x^\star)\}$ defined in \eqref{eq:genie} picks the event with the largest likelihood among $\{\omega^\star\circ x^\star(v)=i\}$ for $i\in Z$. Thus, for example, suppose that $\omega^\star\circ x^\star(v)=i$, then $\omega^\star\circ x_{\textsf{genie}}(v)$ makes a mistake when there exists $j\neq i$ such that $\ell_j(u, \omega^\star\circ x^\star) \ge \ell_i(u, \omega^\star\circ x^\star)$. We consider any $x\in W'(v; \beta)$. Since $x(u)=\omega^{\star}\circ x^\star(u)$ for most $u\in\cN(v)$,  $\ell_i(v, x)$ and $\ell_i(v,\omega^{\star}\circ x^\star)$ are close for all $i\in Z$. Formalizing the intuition, suppose that $\omega^{\star}\circ x^\star(v) = i$. Then for $\cE_2(v)$ to hold, we must classify $v$ as $i$. Thus, $v$ is misclassified relative to $x$ whenever there exists $j\neq i$ such that $\ell_j (v,  x) \geq \ell_i(v, x)$. As a summary, failure in the case $\omega^{\star}\circ x^\star(v) = i$ means $\cup_{j\neq i} \{\ell_i (v,  x) \leq \ell_j(v, x) \}$.
  
  We now aim to show that for $\beta > 0$ sufficiently small, $\mathbb{P}(\cE_v) = n^{-(1 + \Omega(1))}$. We first have
  \#\label{eq:prob_Ev}
  \mathbb{P}(\cE_v) &= \sum_{i\in Z} \pi_i \mathbb{P}(\cE_v \given \omega^{\star}\circ x^\star(v) = i).    
  \#
  For any $i\in Z$, we have
  \$
\mathbb{P}(\cE_v \given \omega^{\star}\circ x^\star(v) = i) &= \pr \Big(\bigcup_{x \in W'(v; \beta)} \bigcup_{j\neq i} \big\{\ell_i (v,  x) \leq \ell_j(v, x) \big\} \Biggiven \omega^{\star}\circ x^\star(v) = i \Big) \\
& \le \sum_{j\in Z\colon j\neq i} \pr \Big(\bigcup_{x \in W'(v; \beta)} \big\{\ell_i (v,  x) \leq \ell_j(v, x) \big\} \Biggiven \omega^{\star}\circ x^\star(v) = i \Big).
  \$
 For convenience, let $\omega^{-1}\colon Z\to Z$ be the inverse mapping of $\omega^\star$. Then for any fixed $j\neq i$, we have
  \$
& \pr \Big(\bigcup_{x \in W'(v; \beta)} \big\{\ell_i (v,  x) \leq \ell_j(v, x) \big\} \Biggiven \omega^{\star}\circ x^\star(v) = i \Big) \\
& \qquad = \pr \Big(\min_{x \in W'(v; \beta)} \big\{\ell_i (v,  x) - \ell_j(v, x)\big\}\leq 0 \Biggiven \omega^{\star}\circ x^\star(v) = i \Big) \\
& \qquad \le \pr \Big( \ell_i (v, \omega^{\star}\circ x^\star) - \ell_j(v, \omega^{\star}\circ x^\star) \leq 2\eta\beta\log n \Biggiven \omega^{\star}\circ x^\star(v) = i \Big) \\
& \qquad = \pr \Big( \ell_{\omega^{-1}(i)} (v, x^\star) - \ell_{\omega^{-1}(j)}(v, x^\star) \leq 2\eta\beta\log n \Biggiven x^\star(v) = \omega^{-1}(i) \Big) \\
& \qquad = n^{-(1+\Omega(1))},
  \$
where the last equality follows from Lemma \ref{lem:exact-bnd} when taking $0<\beta \le (\lambda \nu_d \min_{i\neq j} D_+(\theta_i, \theta_j) - 1)/(4\eta)$. By substituting the preceding two equations into \eqref{eq:prob_Ev}, we have $\mathbb{P}(\cE_v) = o(n^{-(1+\Omega(1))})$. Finally, combining \eqref{eq:phase-II-failure}, \eqref{eq:union-bound}, and \eqref{eq:v-failure}, we have $\pr(\cE_2^c) = o(1)$ and conclude the proof.
  \end{proof}

\section{Robustness Under Monotone Adversaries}
\label{sec:monotone}
We now study the robustness of our algorithm to monotone corruptions, in the GSBM. First, we show that when $k\ge 3$, the monotone adversary can readily break the distinctness assumption from Assumption \ref{ass:distinguishable}. In particular, let $a=\max_{i\in Z} a_{ii}$ and $b = \min_{i,j\in Z, i\neq j} a_{ij}$. Assumption \ref{ass:monotone} yields that $a> b$. Let $P^\prime$ be such that $P^\prime_{ii} = \text{Bern}(a)$ for any $i\in Z$ and $P^\prime_{ij} = \text{Bern}(b)$ for any $i\neq j\in Z$. The adversary can selectively add or remove edges on $\text{GSBM}(\lambda, n, \pi, P, d)$ to simulate $\text{GSBM}(\lambda, n, \pi, P^\prime, d)$ independently across edges:
\begin{enumerate}
    \item A non-edge within community $i$ is added as an edge with probability $\frac{a - a_{ii}}{1 - a_{ii}}$.
\item An edge between communities $i$ and $j$ is removed with probability $\frac{b}{a_{ij}}$.
\end{enumerate}
Thus a given pair of vertices within community $i$ is connected with probability
\[a_{ii} + (1 - a_{ii}) \cdot \frac{a - a_{ii}}{1 - a_{ii}} = a \]
and a given pair of vertices, one in community $i$ and the other in community $j$, is connected with probability
\[a_{ij} \cdot \frac{b}{a_{ij}} = b.\]
%Since $P'$ fails Assumption \ref{ass:distinguishable}, there are cases with $k > 2$ communities for which our algorithm will fail. \julia{todo: update; doesn't necessarily fail}

For the rest of the section, we study the $k=2$ case. In particular, we study the two-community semi-random GSBM$(\lambda, n, \pi, P, d)$, %that generalizes the model in \julia{this is the journal version, so it needs to be self-contained} \cite{Gaudio2024}. 
where $Z = \{\pm 1\}$, $\pi= (\pi_{-1}, \pi_{1})$, $P_{11}=\text{Bern}(a_{1})$, $P_{-1,-1}=\text{Bern}(a_{-1})$, and $P_{-1,1}= P_{1,-1}=\text{Bern}(b)$. Assumption \ref{ass:monotone} gives that $a_1>b$ and $a_{-1} >b$. We clarify our recovery goal as follows. When $\pi_{-1} = \pi_{1} = 1/2$, we can only hope to recover the correct partition, but not the labeling, since the adversary can add random edges to simulate $\text{GSBM}(\lambda, n, \pi, P^\prime,d)$ with $P_{-1,-1}^\prime = P_{11}^\prime$. %\julia{to simulate $\text{GSBM}(\lambda, n, \pi, P',d)$ with $P_{-1,-1}^\prime = P_{11}^\prime$.} %so that $P_{-1,-1}^\prime = P_{11}^\prime$. 
When $\pi_{-1} \neq \pi_{1}$, the recovery goal is to find the correct labeling. %Recall that our recovery goal is to recover the correct partition, but not the labeling, when $\pi_{-1} = \pi_{1} = 1/2$, since the adversary can add random edges to arrive at a distribution satisfying $P_{-1,-1}^\prime = P_{11}^\prime$. When $\pi_{-1} \neq \pi_{1}$, the recovery goal is to find the correct labeling.
%\julia{In the $k=2$ case, we need to clarify the recovery goal. When $\rho = 1/2$, we can only hope to recover the partition, and not the labeling, since the adversary can add random edges to arrive at a distribution satisfying $P_{11} = P_{22}$. When $\rho \neq 1/2$, the recovery goal is to find the labeling.}

We now show that when $k=2$, our algorithm achieves the recovery goal under monotone adversaries. Intuitively, the monotone adversary reinforces the local similarity within communities by adding intra-community edges and removing inter-community ones. Since the locations of vertices are unaffected by the adversary, our local labeling strategy succeeds up to the same threshold when faced with an adversary. 
%Let $P^\prime$ be such that $P^\prime_{11} = P^\prime_{-1, -1} = \text{Bern}(a)$ and $P^\prime_{1,-1} = P^\prime_{-1,1}= \text{Bern}(b)$. Then a graph $G^\prime$ sampled from the semi-random GSBM$(\lambda, n, \pi, P, d)$ can be viewed as sampled from the semi-random GSBM$(\lambda, n, \pi, P^\prime, d)$, where the adversary selectively adds edges within community $i$ such that $a_{ii} > a$, with probability $(a_{ii} - a)/(1-a)$. \julia{Not exactly- in fact, an adversary for GSBM$(\lambda, n, \pi, P^\prime, d)$ can simulate an adversary for GSBM$(\lambda, n, \pi, P, d)$. (That's how we can use guarantees from the Liu/Moitra paper, which assumes equal intra-community probabilities.)} 
We adapt Algorithm \ref{alg:almost-exact} with the following changes to label $G^\prime$. For the initial block in Line \ref{line:step2}, we use Algorithm 3 from \cite{Liu2022} to handle monotone adversaries. For the propagation in Lines \ref{line:step3}-\ref{line:step3-end}, we apply Algorithm \ref{alg:propagation-monotone} specialized for the case $k=2$. %we apply Algorithm 3 from \cite{Gaudio2024}, specialized for $k=2$, for simplicity. We present it in Algorithm \ref{alg:propagation-monotone} for completeness. \julia{same comment about journal version}
\begin{algorithm}
      \caption{\texttt{Propagate (two communities)}} \label{alg:propagation-monotone}
      \begin{algorithmic}[1]
      \Require{Graph $G^\prime = (V,E^\prime)$, mutually visible vertex sets $S, S' \subset V$, $S \cap S' = \emptyset$, and $\widehat{x}_S\colon S\to Z$.}
      \State{Take $j=\argmax_{r\in \{-1,1\}} |\{v\in S\colon \widehat x_S(v)=r\}|$, so that $j$ represents the largest community in $S$.}
      \For{$u \in S'$}
      %\charlie{Below should be $E$ and not $E'$, right?}
      \If{$|\{v \in S \colon \widehat{x}_S(v) = j, (u,v)\in E^\prime\}| \ge (a+b)\cdot|\{v \in S \colon \widehat{x}(v) = j\}|/2$}
      \State Set $\widehat{x}(u) = j$.
      \Else
      \State Set $\widehat{x}(u) = -j$.
      \EndIf
      \EndFor
      \end{algorithmic}
\end{algorithm}
We analyze each step of the algorithm in detail. We skip the connectivity of the visibility graph since it remains the same as Section \ref{sec:connectivity}.

\subsection{Phase I: Labeling the Initial Block}
We label the initial block using techniques (Algorithm 3) from \cite{Liu2022}. The following theorem guarantees the success of their algorithm on a semi-random SBM. %\julia{If stating the same as the reference, we need to define what it means to be $\epsilon$-corrupted. Otherwise, we can state a special case and add a footnote acknowledging that they handle additional corruptions.}

\begin{theorem}\cite[Theorem 3.3]{Liu2022}\label{thm:ref-adversarial}
    There is a polynomial time algorithm that when run on a semi-random SBM with $n$ vertices and monotone changes \footnote{\cite{Liu2022} handled additional corruptions, where in addition to the monotone changes, an adversary may pick up to $\epsilon n$ nodes and modify their incident edges arbitrarily.}, edge probabilities $b'/n < a'/n \leq 1/2$ and $k = 2$ communities of sizes between ${\alpha n}/{2}$ and $n/(2\alpha)$, outputs a labeling that has expected error rate at most
    \[e^{-C/2 + O(\alpha^{-9}\sqrt{C} \log C)} + \frac{e^{-\sqrt{\log n}}}{n},\]
    where $C = (\sqrt{a'} - \sqrt{b'})^2$.
    The constants hidden in the $O(\cdot)$'s are universal constants.
\end{theorem}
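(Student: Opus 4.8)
The plan is to prove Theorem~\ref{thm:ref-adversarial} by a two-stage scheme: a crude but \emph{monotone-robust} initialization that is correct up to a small constant fraction of vertices, followed by a local refinement step that drives the error rate down to the claimed noise floor $e^{-C/2 + O(\alpha^{-9}\sqrt{C}\log C)}$. The organizing principle throughout is that the monotone adversary can only \emph{help} a correctly oriented likelihood test: it adds edges inside communities and deletes edges across them, so any statistic that is monotone increasing in intra-community density and monotone decreasing in inter-community density is stochastically pushed in the favorable direction. The catch is that we do not know the true partition, so the reference sets used to evaluate such a statistic must be accurate enough that this monotonicity is not destroyed.

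For Stage~1 (initialization) I would run a semidefinite relaxation of the two-way community objective (of Guédon--Vershynin / Hajek--Wu--Xu type), maximizing $\langle A', X\rangle$ over the elliptope with row-sum / trace constraints encoding the approximately known community sizes between $\alpha n/2$ and $n/(2\alpha)$. The point is robustness: writing $A' = A + A_+ - A_-$, where $A_+$ are added intra edges and $A_-$ deleted inter edges, the ground-truth rank-one matrix $X^\star$ only \emph{gains} value under both $A_+$ and $A_-$, while the adversary's effect on the optimum is controlled uniformly over all monotone perturbations by a Grothendieck-type bound $\|A' - \mathbb{E}[A'\mid x^\star]\|_{\infty\to 1} = O(n)$. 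This yields $\langle X^\star - \widehat X,\ \mathbb{E}[A\mid x^\star]\rangle = O(n)$, and after rounding plus a standard degree-profiling cleanup, a labeling $\widehat x^{(0)}$ agreeing with $x^\star$ on at least a $1-\gamma_0$ fraction for a small absolute constant $\gamma_0$ we may choose.

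For Stage~2 (robust local refinement) I iterate the following: given $\widehat x^{(t)}$ with error rate $\le \gamma$, for each vertex $u$ and each label $j$ set $S_j = \{v : \widehat x^{(t)}(v)=j\}$ and accept $\widehat x^{(t+1)}(u)=j$ iff $|E'(u,S_j)| \ge \tfrac{a'+b'}{2n}|S_j|$ --- this is exactly Algorithm~\ref{alg:propagation-monotone}. Here monotone-robustness enters: if $u$'s true label is $j$, then on the $(1-\gamma)$-fraction of $S_j$ that is correctly labeled, the adversary only \emph{adds} $u$--$S_j$ edges, so $|E'(u,S_j)|$ stochastically dominates $\mathrm{Bin}((1-\gamma)|S_j|, a'/n)$; symmetrically, if $u$'s true label is $j'\neq j$, the count is stochastically below $\mathrm{Bin}(|S_j|, b'/n)$ plus a contribution from the $\gamma|S_j|$ misclassified reference vertices. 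A Chernoff / large-deviation estimate for the likelihood test between these two Binomial alternatives bounds the per-vertex misclassification probability by $\exp\!\big(-(1-O(\gamma))C/2 + O(\sqrt{C}\log C)\big)$ plus an $O(\gamma)$ times a factor $<1$ coming from the error set; since this strictly contracts $\gamma$, after $O(\log\log n)$ rounds $\gamma$ reaches the fixed point $e^{-(1-o(1))C/2}$. The imbalance parameter $\alpha$ enters through $|S_j| \ge (1-\gamma)\alpha n/2$ and, more seriously, through controlling how the small misclassified mass can concentrate on a single side so as to corrupt many tests at once --- this is the source of the $\alpha^{-9}$ in the exponent. The additive $e^{-\sqrt{\log n}}/n$ absorbs the failure probability of the Stage-1 SDP and of the concentration events.

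The main obstacle I anticipate is the Stage-2 analysis with \emph{both} the monotone adversary and an imperfect reference set present simultaneously: one must show that the $\gamma$-fraction of wrongly labeled reference vertices cannot be exploited by the (graph-aware) adversary to flip a constant fraction of the threshold tests. This needs a careful decoupling argument --- conditioning on the adversary's partition-respecting modifications while retaining the honest randomness --- together with a union bound over monotone perturbations that is tamed by the stochastic-domination structure rather than a naive count over all edge sets. Getting the constant in front of $C/2$ to be exactly $1/2$ (not merely $\Omega(1)$) is the delicate quantitative point, and is precisely where the Hellinger / Rényi-$\tfrac12$ divergence $C=(\sqrt{a'}-\sqrt{b'})^2$ emerges as the sharp exponent of the optimal test between the two alternatives.
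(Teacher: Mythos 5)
This statement is not proved in the paper at all: it is Theorem~3.3 of \cite{Liu2022}, imported verbatim as a black box (the paper merely drops the $O(\epsilon\alpha^{-3})$ term because it only needs the monotone-change case). So the relevant comparison is with Liu--Moitra's own proof, which is a substantially more elaborate SDP-based argument engineered to reach the sharp exponent $C/2$ under corruptions; your sketch is not a reconstruction of that argument, and as a standalone proof it has a concrete gap.

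The gap is Stage~2. In the regime of the theorem the edge probabilities are $a'/n,b'/n$, so for a reference set $S_j$ of size $\Theta(n)$ the separation between your two Binomial alternatives in the threshold test is $\frac{a'-b'}{2n}|S_j|=\Theta(1)$, while Stage~1 leaves a constant fraction $\gamma$ of misclassified vertices in $S_j$, i.e.\ $\Theta(n)$ vertices whose \emph{true} community equals that of the tested vertex $u$. The monotone adversary is allowed to add every edge between $u$ and those vertices (they are intra-community pairs) --- for instance by planting a clique inside one community before the algorithm runs --- which inflates $|E'(u,S_j)|$ by up to $\gamma|S_j|=\Theta(n)\gg\Theta(1)$. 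Hence the stochastic-domination claim you rely on is false, and all threshold tests for wrongly-referenced vertices can be flipped simultaneously; no contraction of $\gamma$ takes place. Note the contrast with the place where the paper does use exactly this test against a monotone adversary (Algorithm~\ref{alg:propagation-monotone}, Lemma~\ref{lem:degree-profile-monotone}): there the reference blocks have $\Theta(\log n)$ vertices, the probabilities $a,b$ are constants so the gap is $\Theta(\log n)$, and the number of reference mistakes is capped at $\delta_1\log n$ with $\delta_1$ chosen small relative to $(a-b)\delta$ --- precisely the quantitative relation your sparse-regime setting cannot supply, which is why the paper outsources the initial-block labeling to \cite{Liu2022} rather than proving it by counting. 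A secondary flaw is in Stage~1: monotone perturbations have unbounded size ($\Theta(n^2)$ edges may be added), so the uniform bound $\|A'-\mathbb{E}[A'\mid x^\star]\|_{\infty\to1}=O(n)$ you invoke does not hold; SDP robustness to monotone changes must instead come from a term-by-term comparison with the ground-truth solution, and even granting a constant-error initialization, obtaining the precise rate $e^{-C/2+O(\alpha^{-9}\sqrt{C}\log C)}+e^{-\sqrt{\log n}}/n$ is exactly the hard content of \cite{Liu2022} that this outline does not reach.
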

We apply the stated theorem to label the initial block of $G^\prime$. Let $(V_1, E_1^\prime)$ with $E_1^\prime = \{(u, v)\colon u,v\in V_1, (u, v)\in E^\prime \}$ be the initial block. Since the theorem requires $a',b'=o(n)$, we subsample $E_1^\prime$ %\julia{We should explain why the theorem as given does not apply to the initial block, in order to explain why we are subsampling.} 
by retaining each edge with probability $s/|V_1|$ and obtain $\widehat E_1$, with $s=o(|V_1|)$ to be chosen in \eqref{eq:choose-s}. Let $a = \min\{a_{-1}, a_1\}$, and SBM$(n, \pi, W)$ be the two-community SBM with $n$ vertices, partition probability $\pi$, and edge probabilities $W$. Then $(V_1, \widehat E_1)$ can be simulated by SBM$(|V_1|, \pi, \frac{1}{|V_1|}\big(\begin{smallmatrix}as & bs\\ bs & as\end{smallmatrix}\big))$ (SBM$_1$) with monotone adversaries. %\julia{not really equivalent... A monotone adversary for SBM$(|V_1|, \pi, \frac{1}{|V_1|}\big(\begin{smallmatrix}a_1s & bs\\ bs & a_1s\end{smallmatrix}\big))$ can simulate a monotone adversary for SBM$(|V_1|, \pi, \frac{1}{|V_1|}\big(\begin{smallmatrix}a_1s & bs\\ bs & a_1s\end{smallmatrix}\big))$; hence guarantees for the former translate to the latter.} 
More specifically, within the visibility radius, $(V_1, \widehat E_1)$ is equivalent to SBM$(|V_1|, \pi, \frac{1}{|V_1|}\big(\begin{smallmatrix}a_{-1}s & bs\\ bs & a_1s\end{smallmatrix}\big))$ with monotone adversaries, and if there is $a_i>a$, the later can be obtained when the adversary selectively adds edges within community $i$ to SBM$_1$, with probability $(a_i - a)/(1-a)$. Thus, we apply Theorem \ref{thm:ref-adversarial} on $(V_1, \widehat E_1)$ to label $V_1$. Let $N_{\textsf{err}}$ be the number of errors made by the algorithm. 
\begin{theorem}
For any $\delta_1>0$ and $\delta_2>0$, there is a polylogarithmic time algorithm %\julia{I haven't looked into how the runtime depends on the constants- would you have any insights?} 
that when run on the initial block of the two-community GSBM with monotone adversaries, outputs a labeling with errors %\julia{I think you mean $N_{\textsf{err}}\ge \delta_1 m_1$ so that $\delta_1$ is the error rate}
\$
\pr(N_{\textsf{err}}\ge \delta_1 \log n) \le \delta_2.
\$
\end{theorem}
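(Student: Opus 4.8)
The plan is to reduce the claim to a direct application of Theorem~\ref{thm:ref-adversarial}, via (i) a subsampling step that brings the average degree down to a constant, (ii) a Markov inequality that converts the expected-error-rate bound into a high-probability statement, and (iii) a choice of the subsampling constant $s$ and of $n$ large enough to make both the adversarial term and the lower-order $e^{-\sqrt{\log n}}/n$ term negligible. Throughout, the ambient ``$n$'' in Theorem~\ref{thm:ref-adversarial} is played by $|V_1|$, which concentrates around $\lambda \log n / 2 = \Theta(\log n)$; since running their polynomial-time algorithm on $\Theta(\log n)$ vertices costs $\mathrm{poly}(\log n)$ time, the overall runtime is polylogarithmic in $n$, as claimed.

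First I would make precise the reduction already sketched in the text: conditioned on $|V_1|$, the subsampled graph $(V_1, \widehat E_1)$ with retention probability $s/|V_1|$ is distributed as a two-community semi-random SBM on $|V_1|$ vertices with intra-/inter-edge probabilities $a's/|V_1|$ and $b's/|V_1|$ where $a' = \min\{a_{-1},a_1\}$ and $b' = b$ (any excess $a_i - a'$ is absorbed into the monotone adversary by selectively adding intra-community edges). The community sizes in $V_1$ are $\mathrm{Bin}(|V_1|, \pi_{\pm 1})$, so with probability $1 - o(1)$ (say by a Chernoff bound, Lemma~\ref{lem:Chernoff-binomial-lower}) they lie between $\alpha |V_1|/2$ and $|V_1|/(2\alpha)$ for a fixed constant $\alpha = \alpha(\pi) \in (0,1)$; call this event $\mathcal{G}$, and note $\pr(\mathcal{G}) \to 1$. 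On $\mathcal{G}$, Theorem~\ref{thm:ref-adversarial} applies with $C = C(s) = s(\sqrt{a'}-\sqrt{b'})^2$, which is a constant once $s$ is fixed, and yields
\[
\E[N_{\textsf{err}}/|V_1| \mid \mathcal{G}, |V_1|] \le e^{-C(s)/2 + O(\alpha^{-9}\sqrt{C(s)}\log C(s))} + \frac{e^{-\sqrt{\log |V_1|}}}{|V_1|}.
\]
Since Assumption~\ref{ass:monotone} gives $a' > b'$, we have $(\sqrt{a'}-\sqrt{b'})^2 > 0$, so for every target $\tau > 0$ there is a constant $s = s(\tau, \alpha, a', b')$ making the first term at most $\tau/2$; this is where the constant $s$ is chosen (i.e.\ the placeholder $s$ of \eqref{eq:choose-s}), and we only need $s = o(|V_1|)$, which holds automatically since $s$ is constant and $|V_1| \to \infty$. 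For $n$ (hence $|V_1|$) large enough the second term is also at most $\tau/2$, so $\E[N_{\textsf{err}}/|V_1| \mid \mathcal{G}] \le \tau$.

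Next I would convert this to the stated high-probability bound. Given $\delta_1, \delta_2 > 0$, apply the previous paragraph with $\tau = \delta_1 \delta_2 / 3$ (say). By Markov's inequality, on the event $\mathcal{G}$,
\[
\pr\big(N_{\textsf{err}} \ge \delta_1 |V_1| \;\big|\; \mathcal{G}\big) \le \frac{\E[N_{\textsf{err}} \mid \mathcal{G}]}{\delta_1 |V_1|} \le \frac{\tau}{\delta_1} = \frac{\delta_2}{3}.
\]
Since $|V_1| > (\log n)/2 \cdot (1 - o(1))$ with probability $1-o(1)$ by a Chernoff bound on the Poisson count (Lemma~\ref{lem:Chernoff-poisson}), we may further absorb the mismatch between $\delta_1 |V_1|$ and $\delta_1 \log n$ into the constants (or simply run the argument with $\delta_1$ replaced by $\delta_1/2$ and use $|V_1| \ge \log n$ on a probability-$1-o(1)$ event). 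Combining with $\pr(\mathcal{G}^c) = o(1)$ and the low-probability event that $|V_1|$ is atypically small, a union bound gives $\pr(N_{\textsf{err}} \ge \delta_1 \log n) \le \delta_2/3 + o(1) \le \delta_2$ for $n$ large, which is the desired conclusion.

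\textbf{Main obstacle.} The genuinely delicate point is not any single estimate but making sure the reduction to Theorem~\ref{thm:ref-adversarial} is legitimate: one must verify that restricting attention to $V_1$ and subsampling commutes with the monotone adversary (the adversary acts on $G'$, i.e.\ after corruption, but within a single block the induced model is still a \emph{semi-random} SBM in their sense), and that the parameter regime hypotheses — edge probabilities $o(n)$ and $\le 1/2$, community sizes within a constant factor of balanced — are all met \emph{after} subsampling. The subsampling factor $s/|V_1|$ is precisely the device that forces the per-vertex average degree to be the constant $s(a'+b')/2 \le s$, which is $o(|V_1|)$ and bounded by $|V_1|/2$, so that $a' s/|V_1|, b's/|V_1| \le 1/2$ for large $n$; once this bookkeeping is done the rest is routine. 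A secondary subtlety is that the excess $a_i - a'$ must be realizable by the adversary, which requires $a' \le a_i$ (true by definition of $a'$) and $a_i < 1$ so that the conditional add-probability $(a_i - a')/(1-a')$ is well-defined in $[0,1)$ — harmless unless some intra-community probability equals $1$, in which case one can take $a' = a_i$ directly.
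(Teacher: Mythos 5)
Your proposal is correct and takes essentially the same route as the paper's proof: restrict to the initial block, subsample at rate $s/|V_1|$ so the adversarially perturbed block is a semi-random SBM meeting the hypotheses of Theorem~\ref{thm:ref-adversarial}, choose the constant $s$ (and then $n$ large) to make both error terms of order $\delta_1\delta_2$, verify community-size concentration, and conclude with Markov's inequality and the law of total probability. One small correction: to pass from the Markov bound at level $\delta_1|V_1|$ to the stated level $\delta_1\log n$ you need an \emph{upper} bound on the block occupancy, $|V_1|\le\Delta\log n$, and to run the argument with $\delta_1/\Delta$ in place of $\delta_1$ (as the paper does via its constant $\Delta$), rather than the lower bound $|V_1|\ge\log n$ mentioned in your parenthetical — a trivially fixable slip since the Poisson count concentrates at $\Theta(\log n)$ from both sides.
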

\begin{proof}
We fix $\delta\log n \le m_1\le\Delta\log n$.
Let $\alpha = \min\{\pi_1, 1/(3\pi_1), \pi_2, 1/(3\pi_2)\}$ in Theorem \ref{thm:ref-adversarial} and $\mathcal{F} = \bigcap_{i\in\{1,2\}} \{\alpha |V_1|/2 \le |u\colon x^\star(u)=i| \le |V_1|/(2\alpha)\}$ be the event that the community sizes are concentrated. We have $\alpha/2 - \pi_1 \le -\pi_1/2$, $1/(2\alpha)-\pi_1 \ge \pi_1/2$, and similarly for $\pi_2$. Thus, we obtain that, by taking $t = \pi_1/2$, %\julia{Please give the logic used for the first step.}
\$
& \pr_{m_1}\big(\alpha |V_1|/2 \le |u\colon x^\star(u)=1| \le |V_1|/(2\alpha) \big) \\
&\qquad \ge \pr_{m_1}\big(-\pi_1|V_1|/2 \le |u\colon x^\star(u)=1| - \pi_1 |V_1| \le \pi_1|V_1|/2 \big)\\
&\qquad  = 1- 
\pr_{m_1}\big(\big||u\colon x^\star(u)=1| - \pi_1 |V_1|\big| \ge t|V_1| \big) \\
&\qquad \ge 1 - 2\exp(-2t^2m_1) \ge 1 - \exp(-{2t^2\delta\log n}) = 1-o(1),
\$
where Hoeffding's inequality gives the second inequality.
A similar statement holds for $|u\colon x^\star(u)=2|$. Thus, we have $\pr(\mathcal{F})=1-o(1)$. The following analysis will condition on $\mathcal{F}$. Since $(V_1, \widehat E_1)$ can be simulated by SBM$_1$ with monotone adversaries, we will apply Theorem \ref{thm:ref-adversarial} with %\julia{$a' = as$} 
$a'=as$, $b'=bs$, and $C = s(\sqrt{a} - \sqrt{b})^2$. We choose $s$ and thus $C$ sufficiently large so that
\#\label{eq:choose-s}
e^{-C/2 + O(\alpha^{-9}\sqrt{C} \log C)} \le \frac{\delta_1\delta_2}{2\Delta}.
\#
We take $n_0$ such that $e^{-\sqrt{\delta\log n_0}}/(\delta\log n_0)\le {\delta_1\delta_2}/{(2\Delta)}$. %\julia{We should take $n_0$ large enough so that \[\frac{e^{-\sqrt{\log (\delta \log n_0)}}}{\delta \log n_0} \le {\delta_1\delta_2}/{4},\] and then the initial block is label (nearly) correctly in the GSBM with parameter $n \geq n_0$} 
Then for $m_1 \geq \delta \log n$ and $n\ge n_0$, %\julia{maybe clearer to say ``for $m_1 \geq \delta \log n$, we have} $m_1$ relative to the model over $n\ge n_0$ volume, 
we have
\$
\frac{e^{-\sqrt{\log m_1}}}{m_1} \le \frac{e^{-\sqrt{\delta\log n}}}{\delta\log n} \le \frac{e^{-\sqrt{\delta\log n_0}}}{\delta\log n_0} \le \frac{\delta_1\delta_2}{2\Delta}.
\$
Conditioned on $|V_1|=m_1$ and $\mathcal{F}$, Theorem \ref{thm:ref-adversarial} ensures that there is a polylogarithmic time (polynomial time in terms of $m_1 = O(\log n)$) algorithm on $(V_1, \widehat E_1)$ that outputs a labeling with expected error rate
\$
\E_{m_1,\mathcal{F}}[N_{\textsf{err}}/m_1] \le e^{-C/2 + O(\alpha^{-9}\sqrt{C} \log C)} + \frac{e^{-\sqrt{\log m_1}}}{m_1} \le \delta_1\delta_2/\Delta.
\$
Thus, Markov's inequality gives that 
\$
\pr_{m_1,\mathcal{F}}(N_{\textsf{err}} \ge \delta_1\log n) \le \pr_{m_1,\mathcal{F}}(N_{\textsf{err}} \ge \delta_1m_1/\Delta) \le \frac{\E_{m_1,\mathcal{F}}[N_{\textsf{err}}/m_1]}{\delta_1/\Delta} \le \delta_2.
\$
Finally, the law of total probability concludes the proof. %\xn{To check the runtime}
\end{proof}

\subsection{Phase I: Propagating Labels Among Occupied Blocks}
We now show the successful propagation. The difference between the analysis in Section \ref{subsec:propagation} and the discussion here mainly lies in Lemmas \ref{lem:degree-profile-error} and \ref{lem:degree-profile-monotone}. Here we need to additionally handle more errors from the initial block labeling ($\delta_1 \log n$ errors instead of $M$), and the monotone changes on the edges. Let $\omega^\star\in \Omega_{\pi, P}$ be the permissible relabeling for the initial block labeling.  %Recall the notion of a \emph{degree profile} in \julia{same comment about journal version}\cite[Definition 3.1]{Gaudio2024}. As a generalization, 
We define the degree profile for $G$ and $G^\prime$ as follows.
  \begin{definition}[Degree profile]
      Given $G\sim \text{GSBM}(\lambda, n, \pi, P, d)$ and $G^\prime$ after monotone changes, the \emph{degree profiles} of a vertex $u \in V$ for a reference set $S \subset V$, a labeling $x \colon S \to \{-1, 1\}$, and $r\in\{-1, 1\}$ is given by, 
      \$
      &d_{r}^+(u, x, S; G) = |\{v \in S \colon \omega^\star\circ x(v) = r, (u,v) \in E \}|, \\
      &d_{r}^+(u, x, S; G^\prime) = |\{v \in S \colon \omega^\star\circ x(v) = r, (u,v) \in E^\prime \}|.
      \$
  \end{definition}
Recall that $a=\min\{a_1, a_{-1}\}$. Without loss of generality, we assume $a_1 = a_{-1}$ in this subsection, since $G^\prime$ can be simulated by a semi-random GSBM$(\lambda, n, \pi, \widehat P, d)$ with $\widehat P_{11}=\text{Bern}(a)$ and $\widehat P_{-1,-1}=\text{Bern}(a)$. Fix a block $i \in V^{\dagger} \setminus \{i_1\}$.
Let $J_i$ be the larger community labeled by $\widehat x$ in $V_{p(i)}$:
\begin{align}
    J_i = \argmax_{r\in \{-1, 1\}} |\{v\in V_{p(i)}\colon \widehat x(v)=r\}|. \label{eqn:J_i-monotone}
\end{align}
Then Algorithm \ref{alg:propagation-monotone} labels a vertex $u\in V_i$ by thresholding the degree profile relative to the larger community: 
 %\julia{you can write this as $\mathbbm{1}\{\text{condition}\}$}
\$
\widehat x(u) = J_i \cdot \left(2\mathds{1}\left\{d_{J_i}^+(u, \widehat x, V_{p(i)}; G^\prime) \ge \frac{(a+b)}{2} \cdot |v\in V_{p(i)} \colon \widehat x(v) = J_i|\right\}-1\right).
\$
Let the \emph{configuration} of a block be a vector $z = (z(1,1), z(1,-1), z(-1,-1), z(-1,1))\in Z_{+}^4$, where each entry represents the count of vertices labeled as $1$ or $-1$ by $\omega^\star\circ x^\star$ and $\widehat{x}$. For example, $z(1,-1)$ is the number of vertices labeled as $1$ by $\omega^\star\circ x^\star$  but labeled as $-1$ by $\widehat{x}$. %\julia{For example, $z(1,-1)$ is ...} 
For $i\in V^\dagger$, the event $\cC_{i}(z)$ signifies that the occupied block $V_{i}$ possesses a configuration $z$ such that for $j,r\in \{-1,1\}$,
  \begin{align*}
  &|\{u\in V_{i}, \omega^\star\circ x^\star(u) = j, \widehat x(u) = r\}| = z(j,r).
  \end{align*}
We define positive constants $\delta_1$ and $c_2$ as follows:
\#\label{eq:define-c2}
0<\delta_1 \le \min\Big\{\frac{(a-b)\delta}{8a},  \frac{(a-b)\delta}{8(1-b)}\Big\}, \quad c_2 = \frac{(a-b)^2\delta^2}{32\Delta}.
\#
Let $\cA_{i}^{\delta_1}$ be the event that $\widehat x$ makes at most $\delta_1\log n$ mistakes on $V_{i}$:
\$
  \cA_{i}^{\delta_1} = \big\{ |\{u \in V_{i} : \widehat{x}(u) \neq \omega^\star \circ x^\star(u)\} | \leq \delta_1 \log n\big\}.
  \$
The following lemma bounds the probability of misclassifying a given vertex using Algorithm \ref{alg:propagation-monotone}. 
  \begin{lemma}\label{lem:degree-profile-monotone}
  Suppose that Assumption \ref{ass:monotone} holds. Choose constants $\delta_1$ and $c_2$ from \eqref{eq:define-c2}. Fix $z \in \mathbb{Z}_+^4$ such that $z(1,1) + z(1,-1) + z(-1,-1) + z(-1,1) = m_{p(i)}$ and $z(1,-1)  +z(-1,1) \leq \delta_1 \log n$ (so that $\cC_{p(i)}(z) \subset \cA_{p(i)}^{\delta_1}$). Then for any $u\in V_i$, we have \$
  \pr_m\big(\widehat x(u)\neq\omega^\star\circ x^\star(u)\biggiven \cC_{p(i)}(z)\big) \le n^{-c_2}.
  \$
  \end{lemma}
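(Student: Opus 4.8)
The plan is to reduce the event of misclassifying $u$ to a concentration statement on the degree profile $d_{J_i}^+(u,\widehat x, V_{p(i)}; G')$. First I would condition on $\{J_i = j\}$ for a fixed $j \in \{-1,1\}$ and on $\{\omega^\star \circ x^\star(u) = r\}$ for a fixed $r \in \{-1,1\}$; since the bound will be uniform in $j$ and $r$, the law of total probability finishes the argument. By the pigeonhole principle, $J_i = j$ forces $|\{v \in V_{p(i)}: \widehat x(v) = j\}| = z(j,j) + z(-j,j) \geq m_{p(i)}/2 \geq \delta \log n / 2$. Let $S_j = \{v \in V_{p(i)} : \widehat x(v) = j\}$ and $N_j = |S_j|$. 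Algorithm \ref{alg:propagation-monotone} sets $\widehat x(u) = j$ iff $d_j^+(u,\widehat x, V_{p(i)}; G') \geq (a+b)N_j/2$. I would analyze the two error cases ($r = j$ but the threshold fails; $r = -j$ but the threshold is met) separately.

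The key step is to decompose $S_j$ into the $z(j,j)$ correctly labeled vertices and the $z(-j,j) \leq \delta_1 \log n$ incorrectly labeled ones, and to lower/upper bound $d_j^+(u,\widehat x,V_{p(i)};G')$ using monotonicity. When $r = j$: for the $z(j,j)$ vertices $v$ with $\omega^\star \circ x^\star(v) = j = \omega^\star \circ x^\star(u)$, the edges $(u,v)$ in $G$ are present with probability $a$ (here I use the reduction $a_1 = a_{-1} = a$), and monotone addition of intra-community edges only increases $d_j^+$; so $d_j^+(u,\widehat x,V_{p(i)};G') \geq \mathrm{Bin}(z(j,j), a)$ stochastically, ignoring the contribution of the at most $\delta_1\log n$ wrongly labeled vertices. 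A Chernoff bound (Lemma \ref{lem:Chernoff-binomial-lower}) shows this is at least $(a - \epsilon)z(j,j) \geq (a+b)N_j/2$ with failure probability $\exp(-\Omega(\log n))$, once $\delta_1$ is small enough that the gap $a - (a+b)/2 = (a-b)/2$ dominates the deficit caused by the $\delta_1 \log n$ mislabeled vertices and the slack in $N_j$ versus $z(j,j)$; the constraints in \eqref{eq:define-c2} ($\delta_1 \leq (a-b)\delta/(8a)$) are exactly what make $(a-b)\delta/4$ beat $2a\delta_1 + \dots$. When $r = -j$: the $z(j,j)$ vertices now lie in the opposite community from $u$, so edges $(u,v)$ appear with probability $b$ in $G$, and monotone \emph{removal} of inter-community edges only decreases $d_j^+$; so $d_j^+(u,\widehat x,V_{p(i)};G') \preceq \mathrm{Bin}(N_j, \text{something} \leq b) + (\text{at most } \delta_1\log n)$, and a Chernoff upper-tail bound gives that it stays below $(a+b)N_j/2$ with the same failure probability, using $\delta_1 \leq (a-b)\delta/(8(1-b))$ to absorb the mislabeled count. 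In both cases the exponent works out to $-c_2 = -(a-b)^2\delta^2/(32\Delta)$ after plugging in $N_j \leq m_{p(i)} \leq \Delta\log n$, matching \eqref{eq:define-c2}. A union bound over the (single) alternative label completes the estimate.

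The main obstacle I anticipate is bookkeeping the monotone corruptions together with the at most $\delta_1\log n$ label errors \emph{simultaneously}: one must be careful that the stochastic domination arguments use the \emph{correct} direction of the adversarial perturbation in each of the two error cases, and that the ``bad'' $\delta_1 \log n$ vertices are charged in the pessimistic direction (subtracted when we want a lower bound on $d_j^+$, added when we want an upper bound). A secondary subtlety is that $z(j,j)$, not $N_j$, is the quantity with a clean binomial law conditioned on $\cC_{p(i)}(z)$ — the wrongly labeled vertices in $S_j$ are \emph{not} independent Bernoulli with a known parameter, so they must be handled purely by the crude bound $z(-j,j) \leq \delta_1 \log n$ rather than by concentration. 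Once the inequalities among $\delta$, $\delta_1$, $a$, $b$, $\Delta$ are verified to be consistent with \eqref{eq:define-c2}, the Chernoff bounds are routine, and the rest of Phase I (the analogue of Lemma \ref{prop:phase1-occupied} for the monotone case) follows by the same stochastic-domination-by-a-binomial argument as before.
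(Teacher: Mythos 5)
Your proposal is correct and follows essentially the same route as the paper's proof: condition on the configuration $\cC_{p(i)}(z)$ and on $J_i$ and the true label of $u$, split $S_j$ into the $z(j,j)$ correctly labeled and $z(-j,j)\le\delta_1\log n$ mislabeled vertices, use the monotone coupling in the favorable direction for the former and charge the latter crudely (dropped for the lower-tail case, counted in full for the upper-tail case), then conclude with a concentration bound and the constraints in \eqref{eq:define-c2}. The only cosmetic difference is that the paper uses Hoeffding's (additive) inequality rather than the multiplicative Chernoff bound to land exactly on the exponent $c_2=(a-b)^2\delta^2/(32\Delta)$, which your sketch implicitly matches anyway.
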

\begin{proof} 
  We first study the case when $J_i = 1$ for $J_i$ in \eqref{eqn:J_i-monotone}, where Algorithm \ref{alg:propagation-monotone} uses $d_{1}^+(u, \widehat x, V_{p(i)}; G^\prime)$ to label a fixed $u\in V_i$. Conditioned on any $\cC_{p(i)}(z)$, we have $|\{v \in V_{p(i)} \colon \widehat{x}(v) = 1 \}| = z(1,1) + z(-1,1)$. Among these vertices $v\in V_{p(i)}$ with $\widehat{x}(v)=1$, $z(1,1)$ vertices have ground truth label $\omega^\star\circ x^\star(u) = 1$ and $z(-1,1)$ of them have $\omega^\star\circ x^\star(u) = -1$. We now bound the probability of making a mistake, namely $\widehat x(u)\neq\omega^\star\circ x^\star(u)$.

  Recall the degree profile $d_1^+(u, \widehat x, V_{p(i)}; G) = |\{v \in V_{p(i)} : \widehat x(v) = 1, \{u,v\} \in E\}|$ for $G$. Conditioned on $\{\cC_{p(i)}(z), \omega^\star\circ x^\star(u) = 1\}$, $d_1^+(u, \widehat x, V_{p(i)}; G)$ has the same distribution as $Z=\sum_{r=1}^{z(1,1)}X_r + \sum_{r=1}^{z(-1,1)}Y_r$, where random variables $\{X_r\}_{r=1}^{z(1,1)}$ and $\{Y_r\}_{r=1}^{z(-1,1)}$ are independent, $X_r\sim\text{Bern}(a)$, and $Y_r\sim\text{Bern}(b)$. Then after the monotone changes, $d_1^+(u, \widehat x, V_{p(i)}; G^\prime)$ has the same distribution as $Z^\prime=\sum_{r=1}^{z(1,1)}X_r^\prime + \sum_{r=1}^{z(-1,1)}Y_r^\prime$, where $\{X_r^\prime\}_{r=1}^{z(1,1)}$ and $\{Y_r^\prime\}_{r=1}^{z(-1,1)}$ are independent with values on $\{0,1\}$, $X_r^\prime\ge X_r$, and $Y_r^\prime \le Y_r$. In particular, $X_r^\prime = X_r$ and $Y_r^\prime = Y_r$ for those edges in $E$, $X_r^\prime = 1$ for the added inter-community edges, and $Y_r^\prime = 0$ for the removed intra-community ones. We note that $Z^\prime\ge \sum_{r=1}^{z(1,1)}X_r^\prime \ge \sum_{r=1}^{z(1,1)}X_r$, and then obtain
\$
  &\pr_m\big(\widehat x(u)\neq\omega^\star\circ x^\star(u)\biggiven \cC_{p(i)}(z), J_i=1, \omega^\star\circ x^\star(u) = 1\big) \\
  & \qquad = \pr_m\Big(d_{1}^+(u, \widehat{x}, V_{p(i)}; G^\prime)< (a+b)|\{v\in V_{p(i)}: \widehat x(v) = 1\}|/2 \\
  & \qquad \qquad \qquad\Biggiven  \cC_{p(i)}(z), J_i=1, \omega^\star\circ x^\star(u) = 1\Big)  \\
  & \qquad = \pr_m\Big(Z^\prime < (a +b)\big(z(1,1) + z(-1,1)\big)/2 \Big) \\
  & \qquad \le \pr_m\Big(\sum_{r=1}^{z(1,1)}X_r < (a +b)\big(z(1,1) + z(-1,1)\big)/2\Big)\\
  & \qquad = \pr_m\Big(\sum_{r=1}^{z(1,1)}X_r -az(1,1)< -(a -b)z(1,1)/2 + (a +b)z(-1,1)/2\Big). 
  \$
Recall that $z(1,-1) + z(-1,1) \leq \delta_1\log n$, and $J_{i}= 1$ implies $|\{v \in V_{p(i)} : \widehat x(v) = 1 \}|\ge|V_{p(i)}|/2\ge\delta\log n/2$. It follows that $z(-1,1) \leq \delta_1 \log n$, $z(1,1) + z(-1,1)\ge\delta\log n/2$, and $z(1,1) \geq (\delta/2 - \delta_1)\log n$. 
Thus, the above gives %\julia{The guarantee in the initial block is not that strong, instead guaranteeing at most $\delta_1 \log n$ errors for arbitrary $\delta_1 > 0$} 
\#\label{eq:propagation-mistake-monotone-1}
  &\pr_m\big(\widehat x(u)\neq\omega^\star\circ x^\star(u)\biggiven \cC_{p(i)}(z), J_i=1, \omega^\star\circ x^\star(u) = 1\big) \notag\\
  & \quad \le \pr_m\Big(\sum_{r=1}^{z(1,1)}X_r -az(1,1)< -(a -b)z(1,1)/2 + (a +b)z(-1,1)/2\Big) \notag\\
  & \quad \le \pr_m\big(\sum_{r=1}^{z(1,1)}X_r - az(1,1) < -(a-b)(\delta/2 - \delta_1)\log n/2 + (a +b)\delta_1 \log n/2\big) \notag\\
  & \quad \le \pr_m\big(\sum_{r=1}^{z(1,1)}X_r - az(1,1) < -(a-b)\delta \log n/8\big) \notag\\
  & \quad \le \exp\big(- \frac{(a-b)^2\delta^2\log^2n}{32z(1,1)}\big) \notag\\
  & \quad \le \exp\big(- \frac{(a-b)^2\delta^2\log n}{32\Delta}\big) = n^{-c_2 },
  \#
  where the third inequality holds when taking $\delta_1$ small enough so that $\delta_1 \le (a-b)\delta/(8a)$, the fourth one follows from Hoeffding's inequality, and the last one holds since $z(1,1)\le m_{i}\le \Delta\log n$.
  
  Similarly, conditioned on $\{\cC_{p(i)}(z), \omega^\star\circ x^\star(u) = -1\}$, $d_1^+(u, \widehat x, V_{p(i)}; G)\equiv\sum_{r=1}^{z(1,1)}Y_r + \sum_{r=1}^{z(-1,1)}X_r$, where $\{X_r\}_{r=1}^{z(-1,1)}$ and $\{Y_r\}_{r=1}^{z(1,1)}$ are independent, $X_r\sim\text{Bern}(a)$, and $Y_r\sim\text{Bern}(b)$. After the monotone changes, $d_1^+(u, \widehat x, V_{p(i)}; G^\prime)$ has the same distribution as $Z^\prime=\sum_{r=1}^{z(1,1)}Y_r^\prime + \sum_{r=1}^{z(-1,1)}X_r^\prime$, where $\{X_r^\prime\}_{r=1}^{z(-1,1)}$ and $\{Y_r^\prime\}_{r=1}^{z(1,1)}$ are independent on $\{0,1\}$, $X_r^\prime\ge X_r$, and $Y_r^\prime \le Y_r$. Thus, we observe $Z^\prime \le \sum_{r=1}^{z(1,1)}Y_r + z(-1,1)$ and have
  \$
  &\pr_m\big(\widehat x(u)\neq\omega^\star\circ x^\star(u)\biggiven \cC_{p(i)}(z), J_i=1, \omega^\star\circ x^\star(u) = -1\big) \\
  & \qquad = \pr_m\Big(d_{1}^+(u, \widehat{x}, V_{p(i)}; G^\prime) \ge (a+b)|\{v\in V_{p(i)}: \widehat x(v) = 1\}|/2 \\
  & \qquad \qquad \qquad \Biggiven  \cC_{p(i)}(z), J_i=1, \omega^\star\circ x^\star(u) = -1\Big)  \\
  & \qquad = \pr_m\Big(Z^\prime \ge (a +b)\big(z(1,1) + z(-1,1)\big)/2 \Big) \\
  & \qquad \le \pr_m\Big(\sum_{r=1}^{z(1,1)}Y_r + z(-1, 1) \ge (a +b)\big(z(1,1) + z(-1,1)\big)/2\Big)\\
  & \qquad \le \pr_m\Big(\sum_{r=1}^{z(1,1)}Y_r - bz(1,1) \ge (a -b)z(1,1)/2 - [1-(a +b)/2]z(-1,1)\Big). 
  \$

Since $z(-1,1) \leq \delta_1 \log n$ and $z(1,1) \geq (\delta/2 - \delta_1)\log n$, the above gives 
\#\label{eq:propagation-mistake-monotone--1}
  &\pr_m\big(\widehat x(u)\neq\omega^\star\circ x^\star(u)\biggiven \cC_{p(i)}(z), J_i=1, \omega^\star\circ x^\star(u) = -1\big) \notag\\
  & \quad \le \pr_m\Big(\sum_{r=1}^{z(1,1)}Y_r - bz(1,1) \ge (a -b)z(1,1)/2 + [(a +b)/2 - 1]z(-1,1)\Big) \notag\\
  & \quad \le \pr_m\Big(\sum_{r=1}^{z(1,1)}Y_r - bz(1,1) \ge (a -b)(\delta/2 - \delta_1)\log n/2 - [1-(a +b)/2]\delta_1\log n\Big) \notag\\
  & \quad \le \pr_m\Big(\sum_{r=1}^{z(1,1)}Y_r - bz(1,1) \ge (a -b)\delta\log n/8\Big) \notag\\
  & \quad \le \exp\big(- \frac{(a-b)^2\delta^2\log^2n}{32z(1,1)}\big) \le n^{-c_2 },
  \#
where the third inequality holds when taking $\delta_1$ small enough so that $\delta_1 \le (a-b)\delta/[8(1-b)]$, %\julia{I got that $\delta_1 \leq \frac{(a-b)\delta}{8}$ works}, 
the fourth one follows from Hoeffding's inequality, and the last one holds since $z(1,1)\le \Delta\log n$. The bounds \eqref{eq:propagation-mistake-monotone-1} and \eqref{eq:propagation-mistake-monotone--1} together imply
\$
\pr_m\big(\widehat x(u)\neq\omega^\star\circ x^\star(u)\biggiven \cC_{p(i)}(z), J_i=1\big) \le n^{-c_2}.
\$
We can derive symmetric analysis for the case when $J_i = -1$, where Algorithm \ref{alg:propagation-monotone} uses $d_{-1}^+(u, \widehat x, V_{p(i)}; G^\prime)$ to label $u\in V_i$. This completes the proof of the lemma.
\end{proof} 

The above lemma bounds the probability of misclassifying a single vertex at $n^{-c_2}$, the same rate as Lemma \ref{lem:degree-profile-error}. Note that $\cA_{i}^{\delta_1}$ allows more errors in each block and thus holds with higher probability than $\cA_{i}$ defined in \eqref{eq:define-A}. Thus, it is easy to check that Lemma \ref{prop:phase1-occupied} holds by replacing $\cA_{i}$ with $\cA_{i}^{\delta_1}$, so does the following statement that with high probability $\widehat x$ makes at most $\delta_1\log n$ mistakes on every occupied block,
\#\label{eq:almost-exact-monotone}
\pr\Big(\bigcap_{i \in V^{\dagger}} \big\{|\{v \in V_i\colon \widehat x(v) \neq \omega^\star \circ x^\star(v)\}| \leq \delta_1\log n\big\} \Big) = 1-o(1).
\#
Note that $\delta_1<\delta$ from \eqref{eq:define-c2}. Then the proof of Theorem \ref{thm:phase1-summary} works here following from \eqref{eq:almost-exact-monotone}. Thus, $\widehat x$ achieves almost exact recovery with dispersed errors, as stated in \eqref{eq:almost-exact-recovery} and \eqref{eq:dispersed-errors}.

\subsection{Phase II: Refining the Labels}
Finally, we prove Theorem \ref{thm:monotone} that Phase II achieves exact recovery by refining the almost-exact labeling $\widehat{x}$ obtained from Phase I. 
We define the likelihood function of class $i$ with reference labeling $x$ at a vertex $u\in V$ as
\$
\ell_i(u, x; G^\prime) = \sum_{v\in V\setminus\{u\}, v\sim u}\log p_{i, x(v)} (y_{u v}^\prime).
\$
Our refinement procedure assigns, for any $u\in V$,
\$
\widetilde x(u) = \argmax_{i\in \{-1,1 \}}\ell_i(u,\widehat x; G^\prime).
\$
The following lemma states a similar result as Lemma \ref{lem:exact-bnd}.
\begin{lemma}\label{lem:exact-bnd-monotone}
    If $\lambda\nu_d D_+(\theta_{-1},\theta_1)>1$, then for a fixed $0<\epsilon \le (\lambda \nu_d D_+(\theta_{-1},\theta_1) - 1)/2$, given any $u$ with $x^\star(u)=i \in \{\pm 1\}$, it holds that 
    \$
    \pr\big(\ell_i(u, x^\star; G^\prime)-\ell_{-i}(u, x^\star; G^\prime) \le \epsilon\log n \given x^\star(u)=i \big) = n^{-(1 + \Omega(1))}.
    \$
\end{lemma}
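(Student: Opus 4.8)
The plan is to reduce Lemma~\ref{lem:exact-bnd-monotone} to Lemma~\ref{lem:exact-bnd} by showing that, at the true labeling, the monotone corruptions only \emph{help} the genie test. Fix $u$ with $x^\star(u)=i\in\{\pm1\}$ and write the gap as a sum over visible neighbors,
\[
\ell_i(u,x^\star;G')-\ell_{-i}(u,x^\star;G')\;=\;\sum_{v\in V\setminus\{u\},\,v\sim u}\log\frac{p_{i,x^\star(v)}(y_{uv}')}{p_{-i,x^\star(v)}(y_{uv}')}.
\]
First I would analyze each summand as a function of the binary observation $y_{uv}'\in\{0,1\}$. If $x^\star(v)=i$, so that $\{u,v\}$ is an intra-community pair, the summand equals $\log(a_i/b)>0$ when $y_{uv}'=1$ and $\log\bigl((1-a_i)/(1-b)\bigr)<0$ when $y_{uv}'=0$, where the signs use $a_i>b$ from Assumption~\ref{ass:monotone}; since the adversary may only \emph{add} intra-community edges, it can only send $y_{uv}'$ from $0$ to $1$, which raises the summand. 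If $x^\star(v)=-i$, so that $\{u,v\}$ is an inter-community pair, the summand equals $\log(b/a_{-i})<0$ when $y_{uv}'=1$ and $\log\bigl((1-b)/(1-a_{-i})\bigr)>0$ when $y_{uv}'=0$; since the adversary may only \emph{remove} inter-community edges, it can only send $y_{uv}'$ from $1$ to $0$, which again raises the summand. The cases $i=1$ and $i=-1$ are symmetric and do not require $a_1=a_{-1}$.

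Summing these termwise comparisons over $v\sim u$ yields the pathwise domination
\[
\ell_i(u,x^\star;G')-\ell_{-i}(u,x^\star;G')\;\ge\;\ell_i(u,x^\star;G)-\ell_{-i}(u,x^\star;G),
\]
valid for \emph{every} realization of the clean GSBM graph $G$ and every action of the monotone adversary; in particular the adversary may be adaptive and observe $G$ before acting, since the inequality holds realization by realization. Hence, conditioned on $\{x^\star(u)=i\}$, the event $\{\ell_i(u,x^\star;G')-\ell_{-i}(u,x^\star;G')\le\epsilon\log n\}$ is contained in $\{\ell_i(u,x^\star;G)-\ell_{-i}(u,x^\star;G)\le\epsilon\log n\}$, so its probability is no larger. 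Since $k=2$, we have $\min_{r\neq s}D_+(\theta_r,\theta_s)=D_+(\theta_{-1},\theta_1)$, and Lemma~\ref{lem:exact-bnd} (applied to $G$ with the pair $i,\,-i$) bounds the probability of the latter event by $n^{-(1+\Omega(1))}$ whenever $\epsilon\le(\lambda\nu_d D_+(\theta_{-1},\theta_1)-1)/2$, which is exactly the claim.

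There is no substantive obstacle here: the proof is essentially the monotonicity observation above, and the only points requiring care are bookkeeping — checking the sign of each summand uniformly across the intra/inter and $i=\pm1$ cases, and noting that the comparison is legitimate because the adversary modifies only visible pairs, so the summation index set $\{v:v\sim u\}$ is identical in $G$ and $G'$ and no term is created or destroyed. As in the proof of Theorem~\ref{thm:exact-reovery-result}, this lemma is the ingredient that will subsequently be combined with the robust-testing argument, replacing $x^\star$ by the Phase~I labeling $\widehat x$ (whose errors in $\mathcal{N}(u)$ are few by \eqref{eq:dispersed-errors} and by \eqref{eq:almost-exact-monotone}), to conclude exact recovery of $\widetilde x$ under a monotone adversary.
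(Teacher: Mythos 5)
Your proposal is correct and follows essentially the same route as the paper: a realization-wise monotonicity argument showing that adding intra-community edges and removing inter-community edges can only increase $\ell_i(u,x^\star;G')-\ell_{-i}(u,x^\star;G')$ relative to the clean graph $G$, so the failure event in $G'$ is contained in the failure event in $G$, and Lemma \ref{lem:exact-bnd} (with $k=2$, where the minimum CH divergence is $D_+(\theta_{-1},\theta_1)$) gives the $n^{-(1+\Omega(1))}$ bound. The paper organizes the same comparison through degree profiles $D_r^{\pm}$ rather than your termwise sign check, but the substance is identical.
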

\begin{proof}
    Fix any $u$ with $x^\star(u)=i$. For any $v\in V\setminus\{u\}$ with $v\sim u$, we recall that in graph $G$, $Y_{uv}\sim \text{Bern}(a_i)$ if $x^\star(v)=i$, and $Y_{uv}\sim \text{Bern}(b)$ if $x^\star(v)=-i$. For $r\in\{-1, 1\}$, we define degree profiles in $G$ as %\julia{need sets inside $|\{\dots \}|$}
    \$
    & D_r^{+} = |\{v\in V\colon v\neq u, v\sim u, x^\star(v) = r, Y_{uv} = 1\}|, \\
    & D_r^{-} = |\{v\in V\colon v\neq u, v\sim u, x^\star(v) = r, Y_{uv} = 0\}|.
    \$
    Let $Y_{uv}^\prime$, $(D_r^{+})^\prime$, and $(D_r^{-})^\prime$ be the random variables with respect to $G^\prime$, after the monotone changes. %We differentiate whether a variable is relative to $G$ or $G^\prime$ by the apostrophe mark.
    The monotonicity yields that $Y_{uv}^\prime \ge Y_{uv}$ if $x^\star(v)=i$, and $Y_{uv}^\prime \le Y_{uv}$ if $x^\star(v)=-i$. In addition, we have $(D_i^{+})^\prime \ge D_{i}^+$, $(D_i^{-})^\prime \le D_{i}^-$, $(D_{-i}^{+})^\prime \le D_{-i}^+$, and $(D_{-i}^{-})^\prime \ge D_{i}^-$.
    Now we consider the random variable $S^\prime$ defined as follows,
    \$
    S^\prime & := \ell_{-i}(u, x^\star; G^\prime)-\ell_i(u, x^\star; G^\prime) = \sum_{v\in V\setminus\{u\}, v\sim u} \log \Big(\frac{p_{-i,x^\star(v)} (Y_{uv}^\prime)}{p_{i,x^\star(v)} (Y_{uv}^\prime)}\Big) \\
    & = \sum_{v\in V\setminus\{u\}, v\sim u, x^\star(v)=i} \log \Big(\frac{p_{-i,i} (Y_{uv}^\prime)}{p_{ii} (Y_{uv}^\prime)}\Big) +  \sum_{v\in V\setminus\{u\}, v\sim u, x^\star(v)=-i} \log \Big(\frac{p_{-i,-i} (Y_{uv}^\prime)}{p_{i,-i} (Y_{uv}^\prime)}\Big) \\
    & = -\log\Big(\frac{a_i}{b} \Big)(D_{i}^+)^\prime -\log\Big(\frac{1-a_i}{1-b} \Big)(D_{i}^-)^\prime + \log\Big(\frac{a_{-i}}{b} \Big)(D_{-i}^+)^\prime +\log\Big(\frac{1-a_{-i}}{1-b} \Big)(D_{-i}^-)^\prime \\
    & \le -\log\Big(\frac{a_i}{b} \Big) D_{i}^+ -\log\Big(\frac{1-a_i}{1-b} \Big) D_{i}^- + \log\Big(\frac{a_{-i}}{b} \Big) D_{-i}^+ +\log\Big(\frac{1-a_{-i}}{1-b} \Big) D_{-i}^-,
    \$
    %\julia{please use left and right scaled parentheses for things like $\log\left(\frac{a_i}{b}\right)$}
    where the last two lines follow from the definition and properties of the degree profiles. On the other hand, when studying the original graph $G$, we have
    \$
    S & := \ell_j(u, x^\star)-\ell_i(u, x^\star) = \sum_{v\in V\setminus\{u\}, v\sim u} \log \Big(\frac{p_{j,x^\star(v)} (Y_{uv})}{p_{i,x^\star(v)} (Y_{uv})}\Big) \\
    & = \sum_{v\in V\setminus\{u\}, v\sim u, x^\star(v)=i} \log \Big(\frac{p_{-i,i} (Y_{uv})}{p_{ii} (Y_{uv})}\Big) +  \sum_{v\in V\setminus\{u\}, v\sim u, x^\star(v)=-i} \log \Big(\frac{p_{-i,-i} (Y_{uv})}{p_{i,-i} (Y_{uv})}\Big) \\
    & = -\log\Big(\frac{a_i}{b} \Big) D_{i}^+ -\log\Big(\frac{1-a_i}{1-b} \Big) D_{i}^- + \log\Big(\frac{a_{-i}}{b} \Big) D_{-i}^+ +\log\Big(\frac{1-a_{-i}}{1-b} \Big) D_{-i}^-.
    \$
Combining the preceding statements, we conclude that $S^\prime \le S$ and obtain
\$
& \pr\big(\ell_i(u, x^\star; G^\prime)-\ell_{-i}(u, x^\star; G^\prime) \le \epsilon\log n \given x^\star(u)=i \big) \\
&\qquad = \pr(S^\prime\ge -\epsilon \log n) \\
&\qquad \le \pr(S\ge -\epsilon \log n) \notag\\
    &\qquad = \pr\big(\ell_i(u, x^\star)-\ell_{-i}(u, x^\star) \le \epsilon\log n \given x^\star(u)=i \big).
    \$
We now apply Lemma \ref{lem:exact-bnd} and complete the proof of the lemma.
\end{proof}
Finally, by replacing $\ell_i(u, x^\star)$ with $\ell_i(u, x^\star;G^\prime)$ and using Lemma \ref{lem:exact-bnd-monotone} in the proof of Theorem \ref{thm:exact-reovery-result}, we complete the proof of Theorem \ref{thm:monotone}. %\julia{Please add a theorem for the monotone adversary setting to the main results section, and then formally prove it here.}

\section{Exact Recovery for Gaussian models}
\label{sec:gaussian_proofs}

We prove a stronger claim than Proposition \ref{cor:gsync} by showing our algorithm achieves exact recovery for any Gaussian GHCM with an arbitrary number of communities, as long as the distinctness assumption is satisfied. For clarity and ease of computation, we let all distributions to have the same variance, though the case with different variances can be handled using similar techniques.

\begin{proposition}
    Consider $\textnormal{GHCM}(\lambda, n, \pi, P, d)$ with $Z=[k]$, for some fixed $k$, and  $P_{ij} = \text{N}(\mu_{ij}, 1)$. If $P_{ij}$'s satisfy Assumption \ref{ass:distinguishable},
    \begin{enumerate}
    \item any estimator fails at exact recovery with high probability whenever \[ \lambda \nu_d \min_{i \neq j}  D_+(\theta_i, \theta_j) < 1,\]
    or whenever $d=1$, $\lambda < 1$, and $|\Omega_{\pi, P}|\ge 2$.
    \item there exists a polynomial-time algorithm achieving exact recovery whenever
    \[ \lambda \nu_d \min_{i \neq j}  D_+(\theta_i, \theta_j) > 1, \]
    and either (1) $d\geq 2$; or (2) $d=1$ and $\lambda > 1$. %; or(3) $ d=1, \pi_{-1}=\pi_{1}.$ \xn{Here (3) cannot happen since Assumption \ref{ass:distinguishable-strong} does not hold? Not all distributions are different.} \julia{yes, we aren't able to say anything about the balanced symmetric case with $\lambda \leq 1$ with current techniques}
\end{enumerate}
\end{proposition}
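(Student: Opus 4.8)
The plan is to handle the two directions separately, in each case reducing to results already established in the excerpt and substituting a Gaussian concentration estimate for the one place each argument invokes the bounded-log-likelihood-ratio hypothesis (Assumption~\ref{ass:bounded-ratio}), which the Gaussian model violates but which the distinctness hypothesis (Assumption~\ref{ass:distinguishable}), assumed here, replaces wherever it is actually needed.

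For the impossibility direction there is essentially nothing to do beyond quoting Theorem~\ref{theorem:impossibility}: its statement is exactly the claim, and its proof uses neither Assumption~\ref{ass:bounded-ratio} nor Assumption~\ref{ass:distinguishable}. The only detail to verify is that the large-deviations step (Lemma~\ref{cond1_pf}) remains legitimate for $P_{ij}=\mathrm{N}(\mu_{ij},1)$. This is immediate: the summands $\log(p_{js}(Y)/p_{is}(Y))$ with $Y\sim P_{is}$ are affine functions of a Gaussian, hence Gaussian, so they have finite cumulant generating function everywhere and Cram\'er's theorem applies verbatim; likewise $\phi_t(P_{ij},P_{kl})=\exp(-t(1-t)(\mu_{ij}-\mu_{kl})^2/2)<1$ on distinct pairs for $t\in(0,1)$, so the divergences $D_+(\theta_i,\theta_j)$ are finite and the final bound $n^{-\lambda\nu_d D_+(\theta_i,\theta_j)+\varepsilon}$ is unchanged. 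The $d=1$, $\lambda<1$, $|\Omega_{\pi,P}|\ge2$ case is purely geometric (disconnectedness of the segment visibility graph) and carries over without comment.

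For achievability I would run Algorithm~\ref{alg:almost-exact} (and its $d=1$ specialization when $\lambda>1$). The visibility-graph connectivity analysis of Appendix~\ref{sec:connectivity} depends only on the Poisson point process, so Propositions~\ref{lem:visibility-d1-small-lambda} and~\ref{lemma:connectivity} are untouched. The three analyses that invoke Assumption~\ref{ass:bounded-ratio} are (i) labeling the initial block, (ii) propagation, and (iii) refinement, and I would revisit each. In (i), the high-discrepancy estimate (Proposition~\ref{prop:high-discrepancy}) uses only $\phi_t<1$ on distinct pairs and goes through; in the low-discrepancy estimate (Proposition~\ref{prop:low-discrepancy}) the first-order terms $A(x)=B_1(x)$ are controlled exactly as before, while the quadratic term $B_2(x)$ — formerly bounded deterministically by $2\eta\,d_H(x,x^\star)^2$ — is instead a sum of $\binom{d_H}{2}$ independent random variables, each affine in an independent Gaussian $y_{uv}$, hence sub-Gaussian with mean and variance-proxy of order $d_H^2$ (constants depending only on $\{\mu_{ij}\}$); for $d_H<c\log n$ with $c$ a sufficiently small constant, $B_2(x)\ge 2c_1 d_H\log n$ then has probability $\exp(-\Omega(\log^2 n))$, which still survives the union bound over the $k^{\varepsilon_0\log n}$ labelings. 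In (ii), in Lemma~\ref{lem:degree-profile-error} the factor $e^{\eta}$ attached to each of the at most $M$ mislabeled reference vertices is replaced by $\kappa_0:=\max\{\mathbb{E}_{Y\sim\mathrm{N}(\mu,1)}[(p_{js}(Y)/p_{jr}(Y))^t]\}$, a finite constant since the means lie in a fixed finite set and $t$ is fixed; the bound $\Phi_t^{z(j,+)}e^{\eta z(j,-)}$ becomes $\Phi_t^{z(j,+)}\kappa_0^{z(j,-)}\le\kappa_0^M\Phi_t^{\delta\log n/k-M}$, so the per-vertex misclassification rate stays $n^{-c_2}$ and the rest of Phase~I (Theorem~\ref{thm:phase1-summary}) is unchanged up to constants.

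For (iii) I would replace the deterministic inequality $|\ell_i(u,\omega^\star\circ x^\star)-\ell_i(u,x)|\le\eta\beta\log n$ (used in Theorem~\ref{thm:exact-reovery-result} over all $x\in W'(v;\beta)$) by the observation that this difference is a sum of at most $\beta\log n$ affine functions of independent Gaussians, hence with probability $1-\exp(-\Omega(\beta\log^2 n))$ it is at most $C\beta\log n$ for a constant $C=C(\{\mu_{ij}\})$, and, by a union bound over the $n^{O(1)}$ labelings in $W'(v;\beta)$, this holds simultaneously for all of them. On that event, failure of likelihood testing at $v$ forces $\ell_i(v,\omega^\star\circ x^\star)-\ell_j(v,\omega^\star\circ x^\star)\le C\beta\log n$; Lemma~\ref{lem:exact-bnd} is valid for the Gaussian model, since its proof only needs the finite cumulant generating function $\Lambda_r(t)=-\lambda\nu_d\pi_r[1-\phi_t(p_{jr},p_{ir})]\log n$, so this event has probability $n^{-(1+\Omega(1))}$ provided $\beta\le(\lambda\nu_d\min_{i\neq j}D_+(\theta_i,\theta_j)-1)/(2C)$. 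A union bound over the $O(n)$ vertices and the $k-1$ competing labels then gives exact recovery, with polynomial runtime exactly as in the excerpt's analysis of Algorithm~\ref{alg:almost-exact}. The step I expect to require the most care is the low-discrepancy control of $B_2(x)$ in (i): the original proof exploits the pointwise log-likelihood-ratio bound to dominate $B_2$ over \emph{all} $k^{\varepsilon_0\log n}$ candidate labelings at once, and the substitute must show that the Gaussian fluctuations of $B_2(x)$ sit at scale $o(d_H\log n)$ with a tail — of order $\exp(-\Omega(\log^2 n))$ — strong enough to absorb that union bound.
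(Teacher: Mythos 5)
Your overall strategy is the same as the paper's: quote Theorem \ref{theorem:impossibility} for impossibility, and for achievability rerun Algorithm \ref{alg:almost-exact}, replacing the three uses of Assumption \ref{ass:bounded-ratio} (low-discrepancy initial-block bound, propagation, refinement) by Gaussian moment/tail estimates. Your treatments of impossibility, of the low-discrepancy term $B_2(x)$ (mean and standard deviation both $O(d_H^2)$ and $O(d_H)$ respectively against a threshold of order $d_H\log n$, giving a tail $\exp(-\Omega(\log^2 n))$), and of propagation (replacing $e^{\eta}$ by the finite Gaussian MGF constant, as in Lemma \ref{lem:gaussian-degree-profile-error}) match the paper's Appendix \ref{sec:gaussian_proofs} in substance.

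The refinement step (iii), however, has a genuine quantitative gap. The quantity $\Delta_i(v,x)-\Delta_j(v,x)$ is Gaussian with mean at most $a\beta\log n$ and variance at most $b\beta\log n$ (it is a sum of at most $\beta\log n$ affine functions of independent Gaussians), so for a threshold $C\beta\log n$ with $C$ a fixed constant the tail is
\[
\pr\big(\Delta_i(v,x)-\Delta_j(v,x) > C\beta\log n\big) \;\le\; \exp\Big(-\frac{\big((C-a)\beta\log n\big)^2}{2b\beta\log n}\Big) \;=\; n^{-(C-a)^2\beta/(2b)},
\]
not $\exp(-\Omega(\beta\log^2 n))$ as you claim; you appear to have carried over the scaling from step (i), where the standard deviation is $O(d_H)$ rather than $O(\sqrt{\beta\log n})$. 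Since the CH-divergence constraint forces $\beta$ small (you need $C\beta$ below the slack to invoke Lemma \ref{lem:exact-bnd}), the exponent $(C-a)^2\beta/(2b)$ is then a small constant, and this tail cannot absorb the union bound over $W'(v;\beta)$ (of size $n^{\Theta(\beta\log(1/\beta))}$), let alone the further union over $\Theta(n)$ vertices, which requires a per-labeling probability of $n^{-(1+\Omega(1))}$. The paper resolves exactly this tension in Theorem \ref{thm:exact-reovery-result-gaussian}: the threshold is taken as $(a\beta + c'\sqrt{b\beta})\log n$, i.e., the mean plus $c'\sqrt{\log n}$ standard deviations, so the per-labeling tail is about $n^{-c'^2/2}$ with exponent independent of $\beta$; one first fixes $c'>\sqrt{2}$ to beat both union bounds (condition \eqref{eq:gaus_cond2_refine}) and then shrinks $\beta$ so that $a\beta + c'\sqrt{b\beta}$ fits under the divergence slack (condition \eqref{eq:gaus_cond1_refine}). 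In short, your constant $C$ must effectively scale like $1/\sqrt{\beta}$ (the threshold's dominant term is $\sqrt{\beta}\log n$, not $\beta\log n$); with that rebalancing your argument goes through, but as written the refinement bound fails.
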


Observe the impossibility result is a direct consequence of Theorem \ref{theorem:impossibility}, which applies without either Assumption \ref{ass:bounded-ratio} or Assumption \ref{ass:distinguishable}. For achievability, we relax Assumption \ref{ass:bounded-ratio} by allowing Gaussian distributions. The log-likelihood ratio bound $\eta$ of Assumption \ref{ass:bounded-ratio} in Assumption was used in three parts:

\begin{enumerate}
    \item In Proposition \ref{prop:low-discrepancy}, which shows low-discrepancy labels yield lower likelihoods than $x^\star$ with high probability.
    \item In Lemma \ref{lem:degree-profile-error}, which bounds the error probability of labeling one vertex of the propagation procedure.
    \item In Theorem \ref{thm:exact-reovery-result}, which shows the refinement procedure achieves exact recovery.
\end{enumerate}

We modify the proofs of above results to show that the desired results still hold when the distributions are Gaussian, with different constant factors. We use the Gaussian tail bound throughout the analysis of this section.
\begin{lemma}
    \label{lem:gaussian_tail_bound}
    For $X \sim N(0, 1)$, $\pr(X > t) \leq  \frac{1}{t}e^{-t^2/2}$.
\end{lemma}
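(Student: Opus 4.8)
The statement is the standard Gaussian (Mills-ratio) tail bound. The plan is to prove it by a simple domination argument inside the defining integral. Write $\pr(X > t) = \frac{1}{\sqrt{2\pi}}\int_t^\infty e^{-x^2/2}\,dx$. On the range of integration we have $x \ge t$, so $x/t \ge 1$; multiplying the integrand by this quantity only increases it. Hence
\[
\pr(X > t) \le \frac{1}{\sqrt{2\pi}}\int_t^\infty \frac{x}{t}\, e^{-x^2/2}\,dx.
\]

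Next I would evaluate the right-hand integral in closed form. Since $\frac{d}{dx}\big(-e^{-x^2/2}\big) = x\,e^{-x^2/2}$, we get $\int_t^\infty x\, e^{-x^2/2}\,dx = e^{-t^2/2}$, so the bound becomes $\pr(X > t) \le \frac{1}{t\sqrt{2\pi}}\,e^{-t^2/2}$. This is in fact slightly stronger than the claimed $\frac{1}{t}e^{-t^2/2}$ (by the factor $1/\sqrt{2\pi} < 1$), which immediately yields the stated inequality; I would just note that one discards the $1/\sqrt{2\pi}$ factor to match the form used later in the section. One should also remark that the argument requires $t > 0$ for the bound $x/t \ge 1$ and for the final expression to be meaningful, which is the implicit assumption in all invocations of the lemma.

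There is essentially no obstacle here — the only thing to be careful about is the direction of the domination ($x \ge t \Rightarrow x/t \ge 1$, not the reverse) and the sign in the antiderivative of $x e^{-x^2/2}$. Since the lemma is quoted and used purely as a black-box concentration tool in the Gaussian GHCM proofs, a two-line proof of this form suffices, and no further refinement (e.g.\ the sharper two-sided Mills ratio bounds) is needed.
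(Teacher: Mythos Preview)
Your proof is correct and is the standard Mills-ratio argument; the paper itself states this lemma without proof as a well-known Gaussian tail bound, so there is no approach to compare against. Your remark that $t>0$ is implicitly required is apt and matches how the lemma is invoked.
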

Moreover, the analysis for the 1-dimensional case can be handled using similar techniques. As a result, the almost exact recovery result of Phase I and exact recovery result of Phase II hold for the Gaussian case. 

\subsection{Initial block labeling}
\label{subsec:gaus_initial}
We denote the following constants:

\begin{align}
&a = \max_{i, j, k, l \in Z} \bigg\{ (\mu_{ij} - \mu_{il} + \mu_{kl} - \mu_{kj})\mu_{ij} - \frac{1}{2} (\mu_{ij}^2 - \mu_{il}^2 + \mu_{kl}^2 - \mu_{kj}^2) \bigg\}, \label{eq:gaussian_a} \\
&a_+ = \max(a, 0), \quad b = \max_{i, j, k, l \in Z} (\mu_{ij} - \mu_{il} + \mu_{kl} - \mu_{kj})^2, \quad c_1 \le {\varepsilon_0\pi_{\min}\log (1/ \rho)}/{3} \label{eq:gaussian_b} \\
& c_1 > \frac{a_+ c +\sqrt{2bc}}{2}, \quad    \varepsilon \le \frac{1}{3}\min\Big\{ \pi_{\min}, \min_{i,j\in Z, \pi_i \neq \pi_j}\{ |\pi_i - \pi_j| \}, \frac{c}{(k-1)\varepsilon_0 } \Big\}. \label{eq:epsilon_gaussian}
\end{align}

\begin{proposition}[Low discrepancy]\label{prop:low-discrepancy-gaussian}
Suppose $P_{ij} \sim N(\mu_{ij}, 1)$ satisfying Assumption \ref{ass:distinguishable}. Then, with $\varepsilon$ defined in \eqref{eq:epsilon_gaussian}, there exists $c\in(0, 1)$ satisfying the condition in \eqref{eq:epsilon_gaussian} such that  with high
probability
\$
\forall x\colon V_{i_0}\to Z \text{ such that } 0<\textup{DISC}(x, x^\star) <c\log n, \text{ it holds that } \ell_0(G, x) < \ell_0(G, x^\star).
\$
\end{proposition}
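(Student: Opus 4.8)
The plan is to mirror the structure of the proof of Proposition~\ref{prop:low-discrepancy} exactly, but with the Chernoff/moment-generating-function computation for bounded log-likelihood ratios replaced by an explicit Gaussian moment-generating-function computation. As in the non-Gaussian case, I first fix $x\colon V_{i_0}\to Z$ with $\text{DISC}(x,x^\star)>0$, choose a permissible relabeling $\omega$ attaining the discrepancy, and assume without loss of generality that $\omega$ is the identity so that $d_H(x,x^\star)=\text{DISC}(x,x^\star)$. Then I decompose $\ell_0(G,x)-\ell_0(G,x^\star)$ exactly as in \eqref{eq:diff-phi} into the two terms $A(x)$ and $B(x)$, and split $B(x)=B_1(x)+B_2(x)$ with $B_1(x)=A(x)$, so the whole estimate reduces to (i) an upper tail bound on $A(x)$ of the form $A(x)\le -d_H(x,x^\star)c_1\log n$ with high probability, and (ii) a deterministic bound on $B_2(x)$.

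For step (i), I consider the contribution of a fixed $u\in V_{i_0}$ with $x^\star(u)=a$, $x(u)=b$, $a\neq b$, namely $\sum_{v\neq u}\log(p_{b,x^\star(v)}(y_{uv})/p_{a,x^\star(v)}(y_{uv}))$. For Gaussian densities $p_{b r}=N(\mu_{br},1)$ and $p_{ar}=N(\mu_{ar},1)$, the summand when $y_{uv}\sim N(\mu_{ar},1)$ is $(\mu_{ar}-\mu_{br})y_{uv}-\tfrac12(\mu_{ar}^2-\mu_{br}^2)$, a Gaussian random variable, and its moment-generating function at any $t\in(0,1)$ has an explicit closed form; evaluating at the relevant $t_{ab}$ gives $\phi_{t_{ab}}(p_{br},p_{ar})=\exp(-\tfrac{t_{ab}(1-t_{ab})}{2}(\mu_{ar}-\mu_{br})^2)\le\rho<1$ for $r=r_{ab}$, using the distinctness assumption and the definition of $\rho$ in \eqref{eq:rho}. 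Running the same Chernoff argument as in the proof of Proposition~\ref{prop:low-discrepancy} — applying Markov to $\exp(t_{ab}\cdot(\text{sum}))$, factoring the expectation over $v$, lower-bounding the count of $v$ with $x^\star(v)=r_{ab}$ by $(\pi_{r_{ab}}-\varepsilon)\varepsilon_0\log n$ using $x^\star\in X_0^\star(\varepsilon)$ — yields $\Pr_{x^\star}(\sum_{v\neq u}\log(p_{b,x^\star(v)}(y_{uv})/p_{a,x^\star(v)}(y_{uv}))\ge -c_1\log n)\le n^{\varepsilon_0\pi_{\min}\log\rho/3}$, provided $c_1\le\varepsilon_0\pi_{\min}\log(1/\rho)/3$, which is exactly the first constraint in \eqref{eq:gaussian_b}. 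A union bound over $u\in V_{i_0}$ and $b\neq x^\star(u)$, followed by the law of total probability to remove the conditioning on $x^\star\in X_0^\star(\varepsilon)$ (justified by Lemma~\ref{eq:x^star-set}), gives $\Pr(\bigcap_x\{A(x)\le -d_H(x,x^\star)c_1\log n\})=1-o(1)$ and likewise for $B_1(x)$.

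For step (ii), I need a replacement for the crude bound $B_2(x)\le 2\eta d_H(x,x^\star)^2$ that was available under Assumption~\ref{ass:bounded-ratio}. Here $B_2(x)$ is a sum of $O(d_H(x,x^\star)^2)$ terms, each of the form $\log(p_{x^\star(u),x^\star(v)}(y_{uv})/p_{x^\star(u),x(v)}(y_{uv}))+\log(p_{x(u),x(v)}(y_{uv})/p_{x(u),x^\star(v)}(y_{uv}))$ over pairs $u,v$ both of which are misclassified. Each such term, conditioned on $x^\star$, is a Gaussian random variable with mean of the form $(\mu_{ij}-\mu_{il}+\mu_{kl}-\mu_{kj})\mu_{ij}-\tfrac12(\mu_{ij}^2-\mu_{il}^2+\mu_{kl}^2-\mu_{kj}^2)$ (bounded above by $a$ in \eqref{eq:gaussian_a}) and variance $(\mu_{ij}-\mu_{il}+\mu_{kl}-\mu_{kj})^2$ (bounded above by $b$ in \eqref{eq:gaussian_b}). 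So $B_2(x)$ is a sum of at most $d_H(x,x^\star)^2$ Gaussians; I would bound its mean by $a_+d_H(x,x^\star)^2$ and then use a Gaussian concentration inequality (Lemma~\ref{lem:gaussian_tail_bound} applied to the centered sum, whose variance is at most $b\,d_H(x,x^\star)^2$) together with a union bound over all $x$ with a given value of $d_H=d_H(x,x^\star)$ — there are at most $k^{d_H}$ of them — to conclude that with high probability $B_2(x)\le (a_+d_H+\sqrt{2b\,d_H\cdot\log(k^{d_H})}+\text{lower order})d_H\le (a_+ + \sqrt{2b})d_H(x,x^\star)^2$, say, for all $x$ simultaneously, absorbing the $\sqrt{\log k}$ factor. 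Combining, $\ell_0(G,x)-\ell_0(G,x^\star)=A(x)+B_1(x)+B_2(x)\le -2c_1d_H(x,x^\star)\log n+(a_+c+\sqrt{2bc})d_H(x,x^\star)\log n<0$ for $d_H(x,x^\star)<c\log n$, which is negative precisely because $c_1>(a_+c+\sqrt{2bc})/2$ as required in \eqref{eq:epsilon_gaussian}. The main obstacle I anticipate is step (ii): unlike the $\eta$-bounded case where $B_2(x)$ was deterministically $O(d_H^2)$, here each term is unbounded, so I must get a concentration bound on the sum of $d_H^2$ dependent Gaussians (they share the noise variables $y_{uv}$ across terms) that is uniform over the exponentially many labelings $x$ — this requires care in organizing the union bound by $d_H$ and checking the total variance scales like $d_H^2$ rather than worse, so that the $\sqrt{\log n}$-type fluctuation is lower order compared to the $\Theta(d_H\log n)$ gain from $A(x)+B_1(x)$.
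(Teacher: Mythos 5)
Your proposal is correct and follows essentially the same route as the paper: the same $A(x)+B_1(x)+B_2(x)$ decomposition with the Chernoff/$\phi_t$ argument for $A$ and $B_1$ (which, as you note, goes through verbatim for Gaussians), and Gaussian concentration of $B_2(x)$ plus a union bound over low-discrepancy labelings, with negativity following from $c_1>(a_+c+\sqrt{2bc})/2$. The only difference is bookkeeping: the paper writes $B_2(x)$ as a single Gaussian with explicit mean $a_x$ and variance $b_x\le d_H(d_H-1)b$ and unions over all low-discrepancy labelings with one threshold of order $\sqrt{b_x}\log n$, whereas you stratify the union bound by $d_H$ — both work, and your concern about the shared $y_{uv}$'s only affects the variance bound by a constant factor, which the choice of $c$ absorbs.
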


\begin{proof}
    Consider the same decomposition of $\ell_0(G, x) - \ell(G, x^\star) = A(x) + B_1(x) + B_2(x)$ as done in the proof of Proposition \ref{prop:low-discrepancy}. The bounds on $A(x)$ and $B_1(x)$ still hold, meaning $A(x)=B_1(x) \leq -d_H(x, x^\star)c_1\log n$ with high probability. We seek to bound 
    \begin{align*}
        B_2(x) &:= \sum_{\substack{v\in V_{i_0}\\ v\colon x(v)\neq x^\star(v)}}\sum_{\substack{u\neq v\in V_{i_0} \\ u\colon x(u)\neq x^\star(u) }} \bigg[\log \Big(\frac{p_{x^\star(u), x^\star(v)}(y_{uv})}{p_{x^\star(u), x(v)}(y_{uv})}\Big) + \log \Big( \frac{p_{x(u), x(v)}(y_{uv})}{p_{x(u), x^\star(v)}(y_{uv})}\Big) \bigg] \\
        &= -\frac{1}{2} \sum_{\substack{v\in V_{i_0}\\ v\colon x(v)\neq x^\star(v)}}\sum_{\substack{u\neq v\in V_{i_0} \\ u\colon x(u)\neq x^\star(u) }} \bigg[  \Big(y_{uv} - \mu_{x^\star(u), x^\star(v)}\Big)^2
        - \Big(y_{uv} - \mu_{x^\star(u), x(v)}\Big)^2 \\
        &\quad\quad + \Big(y_{uv} - \mu_{x(u), x(v)}\Big)^2 - \Big(y_{uv} - \mu_{x(u), x^\star(v)}\Big)^2  \bigg] \\
        &=-\frac{1}{2} \sum_{\substack{v\in V_{i_0}\\ v\colon x(v)\neq x^\star(v)}}\sum_{\substack{u\neq v\in V_{i_0} \\ u\colon x(u)\neq x^\star(u) }} \bigg[ -2a_{uv} y_{uv} +  b_{uv}   \bigg] \\
       &= \sum_{\substack{v\in V_{i_0}\\ v\colon x(v)\neq x^\star(v)}}\sum_{\substack{u\neq v\in V_{i_0} \\ u\colon x(u)\neq x^\star(u) }} \bigg[ a_{uv} y_{uv} -  \frac{1}{2}b_{uv}   \bigg] ,
    \end{align*}
where $a_{uv} = \mu_{x^\star(u), x^\star(v)} - \mu_{x^\star(u), x(v)} +\mu_{x(u), x(v)} - \mu_{x(u), x^\star(v)}$ and $b_{uv} =  \mu_{x^\star(u), x^\star(v)}^2 - \mu_{x^\star(u), x(v)}^2 +\mu_{x(u), x(v)}^2 - \mu_{x(u), x^\star(v)}^2$. 

Observe, since $y_{uv} \sim N(\mu_{x^\star(u), x^\star(v)}, 1)$, we have that $B_2(x) \sim N(a_x, b_x^2)$, where
\begin{align*}
    a_x &=  \sum_{\substack{v\in V_{i_0}\\ v\colon x(v)\neq x^\star(v)}}\sum_{\substack{u\neq v\in V_{i_0} \\ u\colon x(u)\neq x^\star(u) }} a_{uv}\mu_{x^\star(u), x^\star(v)} -\frac{1}{2}b_{uv}, \\
    b_x &=  \sum_{\substack{v\in V_{i_0}\\ v\colon x(v)\neq x^\star(v)}}\sum_{\substack{u\neq v\in V_{i_0} \\ u\colon x(u)\neq x^\star(u) }} a_{uv}^2.
\end{align*}

By Lemma \ref{lem:gaussian_tail_bound}, we have that
\begin{align}
    \pr\bigg( \frac{B_2(x) - a_x}{\sqrt{b_x}} > t \bigg) \leq \frac{1}{t} e^{-t^2/2}. \nonumber
\end{align}
Fix $t=2\sqrt{c}\log n$. As a result, $B_2(x) > 2\sqrt{b_xc} \log n + a_x$ holds with probability at most $\frac{1}{2\sqrt{c}\log n}n^{-2c\log n}.$ Denote $d_H$ as shorthand for $d_H(x, x^\star)$.
Then, note $b_x \leq d_H(d_H-1)b$ and $a_x \leq d_H (d_H-1)a$ for any $x$, as defined in \eqref{eq:gaussian_a} and \eqref{eq:gaussian_b}. Therefore, union bounding over all low-discrepancy $x$ yields that, for any low-discrepancy $x$, $B_2(x) > \sqrt{d_H (d_H-1)b} \sqrt{2c}\log n + d_H (d_H-1)a$ holds with probability at most 
\begin{align*}
    &c\log n \binom{n}{c \log n}(k-1)^{c\log n}\frac{1}{2\sqrt{c}\log n}n^{-2c\log n}\\
    &\qquad \leq c\log n \big(\frac{en}{c\log n} \big)^{c\log n} (k-1)^{c\log n} \frac{1}{2\sqrt{c}\log n}n^{-2c\log n} \\ 
    &\qquad = o(1),
\end{align*}
where the factor of $c\log n$ is due to $x$ having at most $c\log n$ mistakes and the $\binom{n}{c \log n}(k-1)^{c\log n}$ factor is an upper bound on the total number of distinct labels with some fixed $k$ mistakes, for $0 < k \leq c\log n$. Therefore, with high probability, 
\begin{align}
    \ell_0(G, x) - \ell(G, x^\star) &= A(x) + B_1(x) + B_2(x) \nonumber \\
    &\leq -2 d_H c_1 \log n + \sqrt{d_H (d_H-1)b} \sqrt{2c}\log n + d_H (d_H-1)a \nonumber \\
    &\leq -2 d_H c_1 \log n + d_H\sqrt{b} \sqrt{2c}\log n + d_H^2 a_+ \label{eq:tail}
\end{align}
Since $d_H \leq c\log n$, \eqref{eq:tail} is further bounded by
\begin{equation}
    \log^2 n (-2cc_1+c\sqrt{2bc} + a_+ c^2). \label{eq:tail2}
\end{equation}

The condition \eqref{eq:epsilon_gaussian} ensures \eqref{eq:tail2} is negative, which ensures $\ell_0(G, x) - \ell(G, x^\star) < 0$ with high probability.
\end{proof}

\subsection{Propagation}
\label{subsec:gaus_prop}
Recall, from Appendix \ref{subsec:propagation}, the events $\cV(m)$, $H$, and $z$, which indicate the number of vertices in all blocks, connectivity of the visibility graph, and the configuration of all blocks, respectively. Further recall the constants $\phi_t$, $\Phi_t$, $c_2$, and $M$ in \eqref{eq:define-phi-t}, \eqref{eqn:Phi_t}, and \eqref{eqn:eta_2}. We re-define $\eta_2$ such that
\begin{align}
    & a = \max_{\substack{i, j, r, s\in Z, \\ i\neq j, r\neq s}} 2|\mu_{ir}| \cdot |\mu_{js}-\mu_{jr}|, \quad b = \max_{\substack{j, r, s\in Z, \\  r\neq s}} |\mu_{js}^2 - \mu_{jr}^2|, \label{eq:ab_gaussian} \\
    &c = \max_{\substack{j, r, s\in Z, \\  r\neq s}}2(\mu_{js}-\mu_{jr})^2, \quad \eta' = a+b+c, \quad \eta_2 = k(e^{\eta'}/\Phi_t)^M. \label{eq:ceta_gaussian} 
\end{align}

It follows that the misclassification probability of a vertex using Algorithm $\ref{alg:propagation}$ follows a similar upper bound in the Gaussian case as Lemma \ref{lem:degree-profile-error}. As a result, combining Proposition \ref{prop:low-discrepancy-gaussian} and Lemma \ref{lem:gaussian-degree-profile-error} with the remaining analysis in Appendix \ref{alg:almost-exact} implies the labeling produced by Algorithm \ref{alg:almost-exact-1d} achieves almost exact recovery in the Gaussian GHCM.

  \begin{lemma}\label{lem:gaussian-degree-profile-error}
    Suppose $P_{ij} \sim N(\mu_{ij}, 1)$ satisfy Assumption \ref{ass:distinguishable}. Fix $i\in V^\dagger\setminus\{i_1\}$. Fix $z \in \mathbb{Z}_+^{2k}$ such that $\sum_{j\in Z}(z(j,+) + z(j,-)) = m_{p(i)}$ and $\sum_{j\in Z}z(j,-) \leq M$. Then for any $u\in V_i$, we have \$
    \pr_m\big(\widehat x(u) \neq \omega^\star \circ x^\star(u)\biggiven \cC_{p(i)}(z)\big) \le \eta_2n^{-c_2}.
    \$
    \end{lemma}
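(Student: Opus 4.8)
The plan is to mimic the proof of Lemma~\ref{lem:degree-profile-error} line by line, replacing the single appeal to Assumption~\ref{ass:bounded-ratio} --- the bound $\E[(p_{js}(y_{uv})/p_{jr}(y_{uv}))^t]\le e^{\eta}$ for neighbours $v$ misclassified by $\widehat x$ --- by an explicit computation of the Gaussian moment generating function. Concretely, I would fix $i\in V^\dagger\setminus\{i_1\}$, a configuration $z$ as in the statement, a vertex $u\in V_i$, a constant $t\in(0,1)$, and condition on $\cC_{p(i)}(z)$, on $J_i=j$ (for $J_i$ in \eqref{eqn:J_i}), and on $\omega^\star\circ x^\star(u)=r$. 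For each $s\ne r$ I would run the same Chernoff/independence argument as in Lemma~\ref{lem:degree-profile-error}, factoring $\E_{m,z,j,r}[\exp(t\sum_{v:\widehat x(v)=j}\log(p_{js}(y_{uv})/p_{jr}(y_{uv})))]$ into a product over the neighbours $v\in V_{p(i)}$ with $\widehat x(v)=j$, split according to whether $\omega^\star\circ x^\star(v)=j$ or $\omega^\star\circ x^\star(v)=j'\ne j$. Since $\omega^\star$ is permissible, a correctly-labelled neighbour contributes $y_{uv}\sim N(\mu_{jr},1)$ while a misclassified one contributes $y_{uv}\sim N(\mu_{r,j'},1)$.

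For the correctly-labelled neighbours I would use $\log(p_{js}(y)/p_{jr}(y))=(\mu_{js}-\mu_{jr})y-\tfrac12(\mu_{js}^2-\mu_{jr}^2)$, so that a one-line Gaussian integral gives $\E[(p_{js}(y_{uv})/p_{jr}(y_{uv}))^t]=\exp(-\tfrac{t(1-t)}{2}(\mu_{jr}-\mu_{js})^2)$, which is strictly less than $1$ because Assumption~\ref{ass:distinguishable} forces $\mu_{jr}\ne\mu_{js}$; in particular $\Phi_t<1$ continues to hold for Gaussian models, so $c_2=\delta\log(1/\Phi_t)/k>0$ remains well defined. For the misclassified neighbours I would apply $\E[e^{\alpha Y}]=e^{\alpha\mu+\alpha^2/2}$ for $Y\sim N(\mu,1)$ to obtain
\[
\E\big[(p_{js}(y_{uv})/p_{jr}(y_{uv}))^t\big]=\exp\!\Big(t(\mu_{js}-\mu_{jr})\mu_{r,j'}+\tfrac{t^2}{2}(\mu_{js}-\mu_{jr})^2-\tfrac{t}{2}(\mu_{js}^2-\mu_{jr}^2)\Big),
\]
and then bound the exponent, for every $t\in(0,1)$, by $|\mu_{js}-\mu_{jr}|\,|\mu_{r,j'}|+\tfrac12(\mu_{js}-\mu_{jr})^2+\tfrac12|\mu_{js}^2-\mu_{jr}^2|\le \tfrac a2+\tfrac c4+\tfrac b2\le a+b+c=\eta'$, using the constants $a,b,c,\eta'$ introduced in \eqref{eq:ab_gaussian}--\eqref{eq:ceta_gaussian}; here $j'\ne j$ (automatic for a misclassified neighbour of the largest community $j$) together with the symmetry $\mu_{r,j'}=\mu_{j',r}$ is what matches the $i\ne j$ constraint in the definition of $a$. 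Thus each misclassified neighbour contributes a factor at most $e^{\eta'}$, the exact analogue of the $e^{\eta}$ bound in Lemma~\ref{lem:degree-profile-error}.

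From here the proof is verbatim that of Lemma~\ref{lem:degree-profile-error}: using $z(j,-)\le M$ and $z(j,+)\ge m_{p(i)}/k-M\ge \delta\log n/k-M$ (which follow from the hypotheses on $z$ and $J_i=j$, exactly as there), the product is at most $\Phi_t^{z(j,+)}e^{\eta' z(j,-)}\le (e^{\eta'}/\Phi_t)^M n^{-\delta\log(1/\Phi_t)/k}=(\eta_2/k)\,n^{-c_2}$ with the re-defined $\eta_2=k(e^{\eta'}/\Phi_t)^M$, and a union bound over $s\ne r$ together with the law of total probability over $j$ and $r$ yields $\pr_m(\widehat x(u)\ne\omega^\star\circ x^\star(u)\given\cC_{p(i)}(z))\le\eta_2 n^{-c_2}$. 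With this lemma in hand, Lemma~\ref{prop:phase1-occupied} and Theorem~\ref{thm:phase1-summary} go through unchanged, so Phase~I achieves almost exact recovery in the Gaussian GHCM. I expect no real difficulty here: the only non-mechanical ingredient is the MGF identity, and the main obstacle is just bookkeeping --- checking that $a,b,c$ are chosen large enough to dominate the exponent uniformly over all admissible index tuples $(j,r,s,j')$ and all $t\in(0,1)$, in particular that the sign-indefinite term $-\tfrac t2(\mu_{js}^2-\mu_{jr}^2)$ is absorbed by $b$ and the cross term by $a$.
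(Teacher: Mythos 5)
Your proof is correct and follows the paper's argument essentially verbatim: the same Chernoff/factorization scheme inherited from Lemma \ref{lem:degree-profile-error}, with the $e^{\eta}$ factor for misclassified neighbours replaced by an explicit Gaussian MGF bound whose exponent is absorbed into $\eta' = a+b+c$, followed by the identical conclusion $\Phi_t^{z(j,+)}e^{\eta' z(j,-)} \le (\eta_2/k)n^{-c_2}$ and a union bound over $s \neq r$. The only (immaterial) difference is the normalization of the Gaussian log-likelihood ratio --- your $y(\mu_{js}-\mu_{jr})-\tfrac{1}{2}(\mu_{js}^2-\mu_{jr}^2)$ versus the paper's factor-of-two variant --- and both expressions are dominated by the constants $a,b,c$ of \eqref{eq:ab_gaussian}--\eqref{eq:ceta_gaussian}.
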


\begin{proof}
    Fix $i\in V^\dagger\setminus\{i_1\}$ and recall that $J_i$ in \eqref{eqn:J_i} is the largest community labeled by $\widehat{x}$ in $V_{p(i)}$. The proof of Lemma \ref{lem:degree-profile-error} yields the bounds  $z(j,-) \le M$ and $z(j, +) \ge m_{p(i)}/k - M \ge \delta\log n/k - M$ for some $j \in Z$. Let $J_i=j$.

    We fix any $u\in V_i$ and recall $\mathbb{P}_{m, z, j, r}(\cdot) = \mathbb{P}( \cdot \given \cV(m), \cC_{p(i)}(z), J_i = j, \omega^\star \circ x^\star(u) = r)$. From Lemma \ref{lem:degree-profile-error}, for any $t\in (0, 1)$ an any $s\neq r$, we have
    \begin{align}
        & \pr_{m,z,j,r}\big(\ell(u, \widehat x, V_{p(i)}; j,s) \ge \ell(u, \widehat x, V_{p(i)}; j,r)\big) \nonumber \\
    & \qquad \le \Phi_t^{z(j,+)} \cdot  \prod_{\substack{v\in V_{p(i)}\\ \widehat x(v)=j \\  \omega^\star \circ x^\star(v) \neq j}}   \E_{m,z,j,r}\bigg[\exp\bigg(t  \log \Big(\frac{p_{j s}(y_{uv})}{p_{j r}(y_{uv})}\Big)\bigg) \bigg]. \label{eq:mgf_gaus}
    \end{align}

    Next, we bound the expectation term of \eqref{eq:mgf_gaus}. For each $v$ in the product, observe there is a $j'\neq j$ with $\omega^\star \circ x^\star(v) = j'$ such that $y_{uv} \sim N(\mu_{j'r}, 1)$, which implies
    \begin{align*}
         \log \Big(\frac{p_{j s}(y_{uv})}{p_{j r}(y_{uv})}\Big) &= 2 y_{uv}(\mu_{js}-\mu_{jr}) - (\mu_{js}^2 - \mu_{jr}^2) \\
         &\sim N(2\mu_{j'r}(\mu_{js}-\mu_{jr})- (\mu_{js}^2 - \mu_{jr}^2), 4(\mu_{js}-\mu_{jr})^2).
    \end{align*}
    Each expectation is, therefore, the MGF of a Gaussian with mean $2\mu_{j'r}(\mu_{js}-\mu_{jr})- (\mu_{js}^2 - \mu_{jr}^2)$ and variance $ 4(\mu_{js}-\mu_{jr})^2$. Recalling the MGF of $N(\mu, \sigma^2)$ is $\exp(\mu t + \sigma^2t^2 /2)$ and using the constants defined in \eqref{eq:ab_gaussian} and \eqref{eq:ceta_gaussian}, we have
    \begin{align*}
        & \E_{m,z,j,r}\bigg[\exp\bigg(t  \log \Big(\frac{p_{j s}(y_{uv})}{p_{j r}(y_{uv})}\Big)\bigg) \bigg] \\
        &\qquad = \exp\bigg( t\big(2\mu_{j'r}(\mu_{js}-\mu_{jr}) - (\mu_{js}^2 - \mu_{jr}^2)\big) + 2t^2 (\mu_{js}-\mu_{jr})^2 \bigg) \\
        &\qquad \leq \exp(t(a + b)+t^2c) \\
        &\qquad \leq \exp(\eta').
    \end{align*}
    It follows that, by the same argument in Lemma \ref{lem:degree-profile-error},
    \begin{align*}
        & \pr_{m,z,j,r}\big(\ell(u, \widehat x, V_{p(i)}; j,s) \ge \ell(u, \widehat x, V_{p(i)}; j,r)\big) \le \Phi_t^{z(j,+)} e^{\eta' z(j, -)} \leq (\eta_2 /k ) \cdot n^{-c_2}.
    \end{align*}
Union bounding over all $\neq r$ concludes the proof.
\end{proof}

\subsection{Refinement} In this section, we prove the \texttt{Refine} procedure achieves exact recovery in the Gaussian GHCM.

\begin{theorem}\label{thm:exact-reovery-result-gaussian}
    Suppose $P_{ij}\sim N(\mu_{ij}, 1)$ satisfy Assumption \ref{ass:distinguishable}. If $\lambda\nu_d \min_{i\neq j} D_+(\theta_i, \theta_j)>1$, then $\widetilde x$, the output of Algorithm \ref{alg:almost-exact}, achieves exact recovery.
\end{theorem}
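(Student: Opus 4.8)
\textbf{Proof proposal for Theorem~\ref{thm:exact-reovery-result-gaussian}.}

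The plan is to mirror the proof of Theorem~\ref{thm:exact-reovery-result} essentially verbatim, replacing only the single ingredient that used Assumption~\ref{ass:bounded-ratio}, namely Lemma~\ref{lem:exact-bnd}, by a Gaussian analogue. Recall that the argument in Theorem~\ref{thm:exact-reovery-result} had three moving parts: (i) the event $\cE_0 = \{|V| < cn\}$ holds with high probability; (ii) the Phase~I output $\widehat x$ makes at most $\beta\log n$ mistakes in every visible neighborhood (the event $\cE_1$), which by the Gaussian versions Propositions~\ref{prop:low-discrepancy-gaussian} and Lemma~\ref{lem:gaussian-degree-profile-error} (and the unchanged high-discrepancy argument, which never used Assumption~\ref{ass:bounded-ratio}) still holds with probability $1-o(1)$; and (iii) a union bound over vertices $v\in[cn]$ reducing $\pr(\cE_2(v)^c\cap\cE_1)$ to a robust likelihood-testing bound. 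The only place Assumption~\ref{ass:bounded-ratio} entered (iii) was through the deterministic inequality $|\ell_i(u,\omega^\star\circ x^\star)-\ell_i(u,x)|\le \eta\beta\log n$ for $x\in W'(v;\beta)$, and through the final appeal to Lemma~\ref{lem:exact-bnd}. I would handle the first of these by a high-probability bound rather than a deterministic one, and the second by a Gaussian large-deviation computation.

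Concretely, first I would prove the Gaussian replacement for Lemma~\ref{lem:exact-bnd}: if $\lambda\nu_d\min_{i\ne j}D_+(\theta_i,\theta_j)>1$, then for $\epsilon$ small enough and any $u$ with $x^\star(u)=i$ and any $j\ne i$,
\[
\pr\bigl(\ell_i(u,x^\star)-\ell_j(u,x^\star)\le \epsilon\log n \bigm| x^\star(u)=i\bigr) = n^{-(1+\Omega(1))}.
\]
The proof is the same Chernoff/Cram\'er scheme as in Lemma~\ref{lem:exact-bnd}: write $S=\ell_j(u,x^\star)-\ell_i(u,x^\star)=\sum_{r\in Z}\sum_{v\sim u,\,x^\star(v)=r}\log(p_{jr}(Y_{uv})/p_{ir}(Y_{uv}))$, use that the per-class vertex counts are $\mathrm{Poisson}(\lambda\nu_d\pi_r\log n)$, apply Lemma~\ref{lem:cumulant-generating} to get $\E[\exp(tS)]=n^{-\lambda\nu_d[1-\sum_r\pi_r\phi_t(p_{jr},p_{ir})]}$, optimize over $t$, and recognize the exponent as $-\lambda\nu_d D_+(\theta_i,\theta_j)$ via \eqref{eq:CH-phi}. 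Crucially this computation is distribution-agnostic — it only needs $\phi_t(p_{jr},p_{ir})<\infty$ for $t\in(0,1]$, which holds for Gaussians with common variance since $\phi_t(\mathrm{N}(\mu_1,1),\mathrm{N}(\mu_2,1))=\exp(-t(1-t)(\mu_1-\mu_2)^2/2)$ is finite for all $t$. So this lemma actually goes through with no new idea; I would just note that Assumption~\ref{ass:bounded-ratio} was never used in Lemma~\ref{lem:exact-bnd} itself.

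The genuinely new step, and the one I expect to be the main obstacle, is controlling $\ell_i(u,\omega^\star\circ x^\star)-\ell_i(u,x)$ for the worst-case $x\in W'(v;\beta)$ when the log-likelihood-ratio increments are unbounded Gaussians. The deterministic bound $\eta\beta\log n$ is no longer available. Instead I would argue: for a fixed $x$ differing from $\omega^\star\circ x^\star$ on a fixed set of at most $\beta\log n$ vertices, the difference $\ell_i(u,\omega^\star\circ x^\star)-\ell_i(u,x)=\sum_{v:\text{differ}}[\log p_{i,\omega^\star\circ x^\star(v)}(Y_{uv})-\log p_{i,x(v)}(Y_{uv})]$ is a sum of at most $\beta\log n$ independent random variables, each of which — conditioned on the true labels — is Gaussian with bounded mean and bounded variance (the mean and variance are $O(1)$ functions of the relevant $\mu$'s, exactly as in the Propagation section \eqref{eq:ab_gaussian}–\eqref{eq:ceta_gaussian}). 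Hence each such difference is, after centering, subgaussian with variance proxy $O(\beta\log n)$, so by a Gaussian tail bound it exceeds $t\log n$ with probability at most $\exp(-\Omega(t^2\log^2 n/(\beta\log n)))=\exp(-\Omega(t^2\log n/\beta))=n^{-\Omega(t^2/\beta)}$. Union-bounding over the at most $\binom{|\mathcal N(v)|}{\le\beta\log n} k^{\beta\log n} = n^{O(\beta\log\log n)}$ choices of $x\in W'(v;\beta)$ (using $|\mathcal N(v)| = O(\log n)$ on the high-probability event that neighborhoods are not too large, which follows from the Poisson Chernoff bound, Lemma~\ref{lem:Chernoff-poisson}), I get that $\sup_{x\in W'(v;\beta)}|\ell_i(u,\omega^\star\circ x^\star)-\ell_i(u,x)|\le t\log n$ fails with probability at most $n^{-\Omega(t^2/\beta)+O(\beta\log\log n)}$, which for $\beta$ small and $t$ a suitable constant is $o(1/n)$. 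Intersecting with this event, the robust testing event $\cE_v$ is contained in the union over $j\ne i$ of $\{\ell_i(v,\omega^\star\circ x^\star)-\ell_j(v,\omega^\star\circ x^\star)\le 2t\log n\}$, which by the Gaussian Lemma~\ref{lem:exact-bnd} (with $\epsilon=2t$) has probability $n^{-(1+\Omega(1))}$; a final union bound over $v\in[cn]$ (as in \eqref{eq:union-bound}) gives $\pr(\cE_2^c)=o(1)$. The delicate point to get right is keeping the two small parameters consistent — $\beta$ from Phase~I (which must be $<(\lambda\nu_d\min_{i\ne j}D_+(\theta_i,\theta_j)-1)/(\text{const})$) and the testing slack $t$ — so that both the union bound over perturbations $x$ and the Cram\'er bound simultaneously beat $1/n$; this is exactly the same balancing act as in the bounded case, just with $\eta\beta$ replaced by a subgaussian deviation scale, so it should close with room to spare.
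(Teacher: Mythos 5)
Your overall strategy is exactly the paper's: the Cram\'er/Chernoff bound of Lemma \ref{lem:exact-bnd} is indeed distribution-agnostic (only finiteness of $\phi_t$ is needed, which holds for equal-variance Gaussians), and the only real work is to replace the deterministic bound $|\ell_i(u,\omega^\star\circ x^\star)-\ell_i(u,x)|\le\eta\beta\log n$ by a Gaussian concentration bound, uniform over $x\in W'(v;\beta)$, followed by the same two-parameter balancing of $\beta$ against the testing slack. The paper does precisely this, the only cosmetic difference being that it bounds the single Gaussian $\Delta_i(v,x)-\Delta_j(v,x)$ (mean at most $a\beta\log n$, variance at most $b\beta\log n$) rather than each $\Delta_i$ separately, and then union bounds over the labelings with at most $\beta\log n$ mistakes.

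One quantitative step in your write-up does not close as stated. You count the perturbations as $\binom{|\cN(v)|}{\le\beta\log n}k^{\beta\log n}=n^{O(\beta\log\log n)}$ and conclude that $n^{-\Omega(t^2/\beta)+O(\beta\log\log n)}=o(1/n)$ for fixed constants $t,\beta$; but with that crude count the exponent $O(\beta\log\log n)$ diverges as $n\to\infty$, so for any fixed $t,\beta$ the bound eventually exceeds any negative power of $n$ and the union bound fails. The fix is the standard estimate $\binom{N}{m}\le(eN/m)^m$ with $N=O(\log n)$ and $m\le\beta\log n$, which gives at most $n^{O(\beta\log(k/\beta))}$ labelings — a constant exponent that can be made arbitrarily small by taking $\beta$ small — and this is exactly the count the paper uses (its condition $c'^2/2>1+\beta+\beta\log(c(k-1)/\beta)$ reflects it). With that replacement your argument matches the paper's proof.
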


\begin{proof}
    Recall from the proof of Theorem \ref{thm:exact-reovery-result} that we condition on the event $|V|\leq cn$ and that $\beta>0$ is a sufficiently small constant, to be determined, such that $\cE_1$ is the event that the Phase I label $\widehat x$ makes at most $\beta \log n$ mistakes in the neighborhood for all vertices. We seek to show the probability of misclassifying a vertex $v$ in Phase II is $o(1/n)$, so that the misclassification probability of any vertex is $o(1)$. Recall that $W'(v; \beta)$ is the set of all of estimators that differ from $w^\star \circ x^\star$ on at most $\beta \log n$ vertices in $\cN(v)$ and that it is sufficient to show $\mathbb{P}(\cE_v) = o(1)$, where $\cE_v$ is the event that there exists $x\in W'(v; \beta)$ such that the likelihood testing with respect to $x$ fails on $v$.

     The analysis of Theorem \ref{thm:exact-reovery-result} shows 
    \begin{align*}
        \mathbb{P}(\cE_v) &= \sum_{i\in Z} \pi_i \mathbb{P}(\cE_v \given \omega^{\star}\circ x^\star(v) = i) \\
        &\leq \sum_{i\in Z} \pi_i \sum_{j\in Z\colon j\neq i} \pr \Big(\bigcup_{x \in W'(v; \beta)} \big\{\ell_i (v,  x) \leq \ell_j(v, x) \big\} \Biggiven \omega^{\star}\circ x^\star(v) = i \Big).
    \end{align*}
    Next, we uniformly bound the above probability. Denote $\Delta_i(u, x):=\ell_i(u, \omega^\star\circ x^\star) - \ell_i(u, x)$. Observe
  \begin{align}
      & \pr \Big(\bigcup_{x \in W'(v; \beta)} \big\{\ell_i (v,  x) \leq \ell_j(v, x) \big\} \Biggiven \omega^{\star}\circ x^\star(v) = i \Big)  \label{eq:gaus_cond_prob} \\ 
& \qquad = \pr \Big(\bigcup_{x \in W'(v; \beta)} \big\{\ell_i (v, \omega^{\star}\circ x^\star) - \ell_j(v, \omega^{\star}\circ x^\star) \leq \Delta_i(v, x) - \Delta_j(v, x) \big\} \Biggiven \omega^{\star}\circ x^\star(v) = i \Big). \nonumber
  \end{align}

   For any $i,j\in Z$ where $i\neq j$ and any $x\in W'(v; \beta)$, we have
    \begin{align}
        \Delta_i(v, x) - \Delta_j(v, x)  &= \sum_{v\in V\setminus\{u\}, v\sim u}\log \frac{p_{i, \omega^\star\circ x^\star(v)} (y_{u v})}{p_{i, x(v)} (y_{u v})} - \log \frac{p_{j, \omega^\star\circ x^\star(v)} (y_{u v})}{p_{j, x(v)} (y_{u v})} \label{eq:delta_gaus} \\
        &= \sum_{v\in V\setminus\{u\}, v\sim u} 2y_{uv}(\mu_{i, \omega^\star\circ x^\star(v)} - \mu_{i, x(v)} - \mu_{j, \omega^\star\circ x^\star(v)} + \mu_{j, x(v)}) \nonumber \\
        &\qquad -(\mu_{i, \omega^\star\circ x^\star(v)}^2 - \mu_{i, x(v)}^2 -\mu_{j, \omega^\star\circ x^\star(v)}^2 + \mu_{j, x(v)}^2) \nonumber \\
        &\sim N( a_{x, ij}, b_{x, ij}),  \nonumber
    \end{align}
    where
    \begin{align*}
        &a_{x, ij} = \sum_{v\in V\setminus\{u\}, v\sim u} 2\mu_{x^\star(u), x^\star(v)}(\mu_{i, \omega^\star\circ x^\star(v)} - \mu_{i, x(v)} - \mu_{j, \omega^\star\circ x^\star(v)} + \mu_{j, x(v)}) \\
        &\qquad \qquad -(\mu_{i, \omega^\star\circ x^\star(v)}^2 - \mu_{i, x(v)}^2-\mu_{j, \omega^\star\circ x^\star(v)}^2 + \mu_{j, x(v)}^2), \\
        &b_{x, ij} =  \sum_{v\in V\setminus\{u\}, v\sim u}4(\mu_{i, \omega^\star\circ x^\star(v)} - \mu_{i, x(v)} - \mu_{j, \omega^\star\circ x^\star(v)} + \mu_{j, x(v)})^2.
    \end{align*}

By Lemma \ref{lem:gaussian_tail_bound}, we have that \[ \frac{\Delta_i(v, x) - \Delta_j(v, x) - a_{x, ij}}{\sqrt{b_{x, ij}}} > t \]
    holds with probability at most $e^{-t^2/2}/t$ for any $t>0.$ We set $t = c' \sqrt{\log(n)}$, where $c'>0$ is a constant, and define the constants
    \begin{align*}
        &a = \max_{i, j, k, l, m \in Z, i
        \neq j}\bigg\{ 2\big|\mu_{kl}\big|\cdot \big|(\mu_{i, \omega^\star(l)} - \mu_{im} - \mu_{j, \omega^\star(l)} + \mu_{jm}\big| + \big|\mu_{i, \omega^\star(l)}^2 - \mu_{im}^2-\mu_{j, \omega^\star(l)}^2 + \mu_{jm}^2\big| \bigg\}, \\
        &b =  \max_{i, j, k, l \in Z, i\neq j}4(\mu_{ik} - \mu_{il} - \mu_{jk} + \mu_{jl})^2.
    \end{align*}
    Observe $a_{x, ij} \leq a \beta \log n $ and $b_{x, ij} \leq b \beta \log n $ for any $x$ with at most $\beta \log n $ mistakes and any $i, j\in Z, i\neq j$. Therefore, we have that
    \begin{equation}
        \mathbb{P}(\Delta_i(v, x) - \Delta_j(v, x) > (a \beta  + c' \sqrt{b \beta }) \log n) \leq \frac{1}{c'\sqrt{\log(n)}}e^{-c'^2\log(n)/2}. \label{eq:X_x_tail}
    \end{equation} 

   For any fixed $j\neq i$, we upper bound \eqref{eq:gaus_cond_prob} using the law of total probability to yield

  \begin{align}
      &\pr \Big(\bigcup_{x \in W'(v; \beta)} \big\{\ell_i (v, \omega^{\star}\circ x^\star) - \ell_j(v, \omega^{\star}\circ x^\star) \leq \Delta_i(v, x) - \Delta_j(v, x) \big\} \Biggiven \omega^{\star}\circ x^\star(v) = i \Big) \nonumber \\
      &\qquad \leq \pr \Big( \ell_i (v, \omega^{\star}\circ x^\star) - \ell_j(v, \omega^{\star}\circ x^\star) \leq (a \beta  + c' \sqrt{b \beta }) \log n \Biggiven \omega^{\star}\circ x^\star(v) = i \Big)  \label{eq:gaus_test1} \\
      &\qquad \qquad + \pr\Big(\bigcup_{x \in W'(v; \beta)} \big\{\Delta_i(v, x) - \Delta_j(v, x) > (a \beta  + c' \sqrt{b \beta }) \log n\big\}\Big). \label{eq:gaus_test2} 
  \end{align}

We bound \eqref{eq:gaus_test1} using Lemma \ref{lem:exact-bnd}.
Recall that $\ell_{\omega(i)}(u, \omega\circ x) = \ell_i(u, x)$ for the definition in \eqref{eq:ell_global} for any $\omega\in \Omega_{\pi, P}$, $i\in Z$, $u\in V$, and $x\colon V\to Z$. Letting $\omega^{-1}\colon Z\to Z$ be the inverse mapping of $\omega^\star$, \eqref{eq:gaus_test1} reduces to 
\begin{align*}
     &\pr \Big( \ell_i (v, \omega^{\star}\circ x^\star) - \ell_j(v, \omega^{\star}\circ x^\star) \leq (a \beta  + c' \sqrt{b \beta }) \log n  \Biggiven \omega^{\star}\circ x^\star(v) = i \Big)  \\
     &\qquad = \pr \Big( \ell_{\omega^{-1}(i)} (v, x^\star) - \ell_{\omega^{-1}(j)}(v, x^\star) \leq (a \beta  + c' \sqrt{b \beta }) \log n \Biggiven x^\star(v) = \omega^{-1}(i) \Big) \\
& \qquad = n^{-(1+\Omega(1))},
\end{align*}
where the last equality follows from Lemma \ref{lem:exact-bnd} by taking \[ a \beta  + c' \sqrt{b \beta } < (\lambda \nu_d \min_{i\neq j} D_+(\theta_i, \theta_j) - 1)/2.\]

We bound \eqref{eq:gaus_test2} using \eqref{eq:X_x_tail} and union bounding over $x$. By similar reasoning as in Appendix \ref{subsec:gaus_initial}, the number of labels with at most $\beta \log n$ mistakes is upper bounded by
\[ \beta \log n \cdot \binom{c\log n}{\beta \log n}(k-1)^{\beta \log n} \leq \beta \log n \bigg( \frac{ec(k-1)}{\beta} \bigg)^{\beta\log n}\]

Union bounding yields
    \begin{align}
        &\pr\Big(\bigcup_{x \in W'(v; \beta)} \big\{\Delta_i(v, x) - \Delta_j(v, x) > (a \beta  + c' \sqrt{b \beta }) \log n \big\}\Big) \nonumber \\
        &\qquad \leq \frac{\beta \sqrt{\log n}}{c'}\exp\big( -c'^2\log(n) / 2 + \beta\log(n) + \beta \log(c(k-1)/\beta) \log(n)\big). \label{eq:gaus_refine_bound}\end{align} 

    Therefore, setting 
    \[ \frac{c'^2}{2} > 1 + \beta +\beta \log(c(k-1)/\beta), \]
    ensures \eqref{eq:gaus_refine_bound} is $n^{-(1+\Omega(1))}$.

    Combining the two bounds, we choose sufficiently small $\beta$ and a corresponding constant $c'$ such that
    \begin{align}
        a \beta  + c' \sqrt{b \beta } < \frac{\lambda \nu_d \min_{i\neq j} D_+(\theta_i, \theta_j) - 1}{2} \label{eq:gaus_cond1_refine} \\
         \frac{c'^2}{2} > 1 + \beta +\beta \log(c(k-1)/\beta). \label{eq:gaus_cond2_refine}
    \end{align}
    Taking $c'$ to be any constant greater than $\sqrt{2}$ allows $\beta$ to be driven sufficiently small to satisfy \eqref{eq:gaus_cond1_refine} and \eqref{eq:gaus_cond2_refine}. Then, we have $\pr(\cE_v) = o(n^{-(1+\Omega(1))})$, which concludes the proof.
\end{proof}

\end{document}